\let\accentvec\vec
\let\vec\accentvec
\newcommand{\nb}{\nobreakdash-\hspace{0pt}}
\newcommand{\opvec}{\operatorname{vec}}
\newcommand{\VEC}[1]{\operatorname{vec(#1)}\xspace}
\newcommand{\N}{\ensuremath{\mathbb N}{}}
\newcommand{\R}{\ensuremath{\mathbb R}}
\newcommand{\C}{\ensuremath{\mathbb C}}
\newcommand{\unity}{\ensuremath{{\rm 1 \negthickspace l}{}}}
\newcommand{\ord}{\ensuremath{{\rm ord}{}}}
\newcommand{\ket}[1]{\ensuremath{| #1 \rangle}{}}
\newcommand{\ketbra}[2]{\ensuremath{| #1 \rangle \langle #2 |}{}}
\newcommand{\expt}[1]{\ensuremath{\langle #1 \rangle}{}}
\newcommand{\adr}{\operatorname{ad}}
\newcommand{\rep}{\ensuremath{\overset{\rm rep}{=}}\xspace}
\newcommand{\tr}{\operatorname{tr}}
\newcommand{\rank}{\operatorname{rank}{}}
\newcommand{\reach}{\mathfrak{reach}{}}
\newcommand{\hoplus}{\ensuremath{\, {\widehat{\oplus}} \,}{}}
\newcommand{\spanR}{\operatorname{span_{\mathbb R}}{}}
\newcommand{\uu}{\mathfrak{u}}
\newcommand{\su}{\mathfrak{su}}
\newcommand{\SU}{\operatorname{SU}}
\newcommand{\oo}{\mathfrak{o}}
\newcommand{\so}{\mathfrak{so}}
\newcommand{\spp}{\mathfrak{sp}}
\newcommand{\usp}{\mathfrak{sp}}
\newcommand{\sll}{\mathfrak{sl}}
\newcommand{\gl}{\mathfrak{gl}}
\newcommand{\GL}{\mathrm{GL}}
\newcommand{\e}{{\rm e}}
\newcommand{\fa}{\mathfrak{a}}
\newcommand{\fb}{\mathfrak{b}}
\newcommand{\fc}{\mathfrak{c}}
\newcommand{\fd}{\mathfrak{d}}
\newcommand{\fe}{\mathfrak{e}}
\newcommand{\ff}{\mathfrak{f}}
\newcommand{\fg}{\mathfrak{g}}
\newcommand{\fh}{\mathfrak{h}}
\newcommand{\fk}{\mathfrak{k}}
\newcommand{\fm}{\mathfrak{m}}
\newcommand{\fp}{\mathfrak{p}}
\newcommand{\fs}{\mathfrak{s}}
\newcommand{\fz}{\mathfrak{z}}
\newcommand{\bG}{\mathbf{G}}
\newcommand{\bK}{\mathbf{K}}
\newcommand{\Alt}{\mathrm{Alt}}
\newcommand{\Sym}{\mathrm{Sym}}
\newcommand{\XX}{{\sf XX}}
\newcommand{\XXX}{{\sf XXX}}
\newcommand{\XY}{{\sf XY}}
\newcommand{\ZZ}{{\sf ZZ}}
\newcommand{\ZZZ}{{\sf ZZZ}}
\newcommand{\xy}{\ensuremath{\left\{\begin{smallmatrix} x\\y\end{smallmatrix}\right\}}\xspace }
\newcommand{\ueber}[2]{\genfrac{}{}{0pt}{}{#1}{#2}}
\newcommand{\uebertwo}[2]{\genfrac{}{}{0pt}{0}{#1}{#2}}
\begin{document}
\title{Symmetry Principles in Quantum Systems Theory}
\subtitle{}
\author{
R.~Zeier
\and 
T.~Schulte-Herbr{\"u}ggen
}                     
%
\institute{Department of Chemistry, Technical University of Munich (TUM)\\ 
Lichtenbergstrasse 4, 85747 Garching, Germany\\
\email{robert.zeier@ch.tum.de; tosh@ch.tum.de}
}
%
\date{Date: July 21, 2011}
%
%
\maketitle

\begin{abstract}
General dynamic properties like controllability and simulability of spin systems, 
fermionic and bosonic systems are investigated in terms of symmetry. Symmetries may be
due to the interaction topology or due to the structure and representation of
the system and control Hamiltonians. 
In either case, they obviously entail constants of motion.
Conversely, the absence of symmetry implies irreducibility and
provides a convenient necessary condition for full controllability
much easier to assess than the well-established Lie-algebra rank condition.
We give a complete lattice of irreducible simple subalgebras of $\su(2^n)$
for up to $n=15$ qubits. It complements the symmetry condition by allowing for
easy tests solving homogeneous linear equations to filter irreducible
unitary representations of other candidate algebras of classical type 
as well as of exceptional types. ---
The lattice of irreducible simple subalgebras given also determines mutual 
simulability of dynamic systems of spin or fermionic or bosonic nature.
We illustrate how controlled quadratic fermionic (and bosonic) systems can be 
simulated by spin systems and in certain cases also vice versa.
\end{abstract}

{\small
\tableofcontents
}

\section{Introduction}
Experimental control over quantum dynamics of manageable systems
is paramount to exploiting the great potential of quantum systems.
Both in simulation and computation the complexity of a problem
may reduce upon going from a classical to a quantum setting \cite{Fey82,Fey96,Kit02}.
On the computational end, where quantum algorithms 
efficiently solving hidden subgroup problems \cite{EHK04}
have established themselves, the demands for accuracy (\/`error-correction threshold\/')
may seem daunting at the moment. In contrast, the quantum simulation end is by far less sensitive.
Thus simulating quantum systems \cite{Lloyd96}---in particular at phase-transitions 
\cite{Sachdev99,QPT10}---has recently shifted into focus \cite{BCL+02,DNB+02,JC03,ZGB,PC04}.
In view of experimental progress in cold atoms in optical lattice potentials \cite{GMEHB02,BDZ08} 
as well as in trapped ions \cite{LBM+03,BW08}, Kraus {\em et al.} have explored whether
target quantum systems can be universally simulated on translationally invariant lattices
of bosonic, fermionic, and spin systems \cite{kraus-pra71}. In some respect, their work can also be seen as a
follow-up on a study by Schirmer {\em et al.} \cite{SchiPuSol01} 
(see also recent work by Wang~et~al. \cite{WPRS10})
specifically addressing
controllability of systems with degenerate transition frequencies.
Many experimental tasks are engineering problems that profit
from quantum systems theory as a framework and optimal control algorithms
for solving the actual problem.

As compared to an abstract point of view \cite{PrimMH78}, the flavour of quantum systems theory pursued here
is meant to be very pragmatic: it takes the causal formulation of dynamic systems \cite{KFA69} and
does not care about specifics of the quantum measurement problem beyond the basic notions
\cite{Helstrom76} and some recent developments \cite{HeinWolf10}.
Yet it is for these reasons that quantum systems and control has quite generally been recognised
as a key generic tool \cite{DowMil03,dAless08,WisMil09} needed for advances in experimentally exploiting
quantum systems for simulation or computation and even more so in future quantum technology.
It paves the way for constructively optimising strategies
for experimental implementions in realistic settings.
Moreover, since such realistic quantum systems are mostly beyond analytical
tractability, numerical methods are often indispensable.
To this end, gradient flows can be implemented on the control amplitudes thus
iterating an initial guess into an optimised pulse scheme \cite{Rabitz87,Krotov,GRAPE}.
This approach has proven useful in spin systems \cite{PRA05} as well as in solid-state
systems \cite{PRA07}. Moreover, it has recently been generalised from closed systems to
open ones \cite{PRL_decoh}, which are known to be a challenge to control \cite{VioLloyd01}, where the Markovian setting can also be used as embedding of
explicitly non-Markovian subsystems \cite{PRL_decoh2}.

However, in closed systems, the numerical tools 
usually require the system is universal or fully operator controllable \cite{RaRa95,AA03}.
For a plethora of systems with symmetry constraints we have recently determined explicit dynamic system
algebras \cite{SS09} (as subalgebras of $\su(N)$), and conversely, we have derived design rules for the experimenter 
as guidelines ensuring universality of quantum architecture. While extending earlier work on
branching diagrams of simple subalgebras of $\su(N)$ \cite{McKay81,PST09},
here we focus on {\em complete necessary and sufficient conditions} for full controllability
(mostly) confining ourselves to arguments easy to check by inspection or to decide by computationally cheap  
algorithms such as solving a system of homogeneous linear equations.

In view of applications, we illustrate our findings by a comprehensive set of worked
examples on spin chains. Actually Ising-$\ZZ$ coupled $n$\nb{}spin\nb$\tfrac{1}{2}$ chains with
mostly {\em collective controls} or Heisenberg\nb$\XX$ chains with {\em one single local control}
suffice to get {\em exponential growth} of dynamic degrees of freedom (in the sense their
respective dynamic system algebras are $\usp(2^{n-1})$ or $\so(2^n)$). Our work thus adds
to the recent spin-chain literature (see, e.g., \cite{Bose07,Burg08,SPP08,SP09,Kay08,Kay09,BP09,KPR09,PK10,WPRS10} and compare
\cite{KG02,YGK07,YZK08}) 
and---on a more general scale---it is anticipated to have significant impact on quantum simulation
as well as distributed quantum computing (see, e.g., \cite{CB97,CEHM99,YF09}).

\section{Overview and Main Results}
More precisely, the first main part develops, starting from the basic notions of controllability 
(Sec.~\ref{sec:control-basics}) in terms of coupling graphs (Sec.~\ref{sec:tensor})
and their symmetries (Sec.~\ref{sec:sym_cons}), {\em a single
necessary and sufficient symmetry condition for full controllability}
(Sec.~\ref{sec:control-suff}). To this end and in view of practical applications, Sec.~\ref{sec:irred-sub-su}
gives branching diagrams of {\em all irreducible simple subalgebras of the unitary algebras}
$\su(N)$ with $N\leq 2^{15}$. Concomitantly, 
we provide a set of efficient computational {\em algorithms for assessing controllability}
by merely solving systems of homogeneous linear equations. 

The second part focusses on simulability (Sec.~\ref{sec:simulability}) 
in terms of dynamic system algebras. A plethora of worked examples is discussed in Sec.~\ref{sec:w-examples}
including {\em four full series of qubit chains coupled by pair interactions} 
such that their dynamic system algebras for the first three cases are $\so(2n+1)$, $\so(2n+2)$, 
and $\usp(2^{n-1})$, respectively.  
Most remarkably, for $n\geq 4$,
the fourth series results in dynamic system algebras
$\so(2^n)$ if $(n \bmod{4})\in\{0,1\}$ and $\usp(2^{n-1})$ else. 
The findings also interrelate spin systems, fermionic systems (Sec.~\ref{sec:fermionic})
and bosonic systems (Sec.~\ref{sec:bosonic}).
The algebraic conditions for simulability given are sufficient to ensure the existence of
solutions to the actual task
of quantum simulation of closed systems formulated as an observed optimal control problem 
in the outlook (Sec.~\ref{sec:outlook}).

\section{Controllability}\label{sec:control-basics}
Consider the controlled Schr{\"o}dinger equation lifted to unitary maps (quantum gates)
\begin{equation}
        \label{eqn:bilinear_contr}
        {\dot U(t)} = -i\big(H_d + \sum_{j=1}^m u_j(t) H_j\big) \;{U(t)}\;.
\end{equation}
Here the system Hamiltonian $H_d$ denotes a non-switchable
drift term and the control Hamiltonians $H_j$ can be steered by
(piece-wise constant) control amplitudes $u_j(t)\in\R$ taken to
be unbounded henceforth.
The equation of motion governs the evolution of a unitary map of an entire basis set
of vectors representing pure states. Using the short-hand notations
$H:= H_d + \sum_{j=1}^m u_j(t) H_j$ and $\adr_H(\opvec{A}):=[H,A]$,
the Liouville equation $\dot{\rho}(t) = -i [H,\rho(t)]$
can be rewritten 
\begin{equation}\label{eqn:master}
\opvec{\dot{\rho}(t)} = -i \adr_{H}\opvec{\rho(t)}.
\end{equation}
Both equations of motion take the form of a standard
{\em bilinear control system} $(\Sigma)$ known in classical systems and control theory \cite{Elliott09}
\begin{equation}\label{eqn:bilinear_contr2}
        \dot X(t) = \big(A + \sum_{j=1}^m u_j(t) B_j\big) \; X(t)
\end{equation}
with \/`state\/' $X(t)\in\C^N$, drift $A\in \gl(N,\C)$, controls $B_j\in \gl(N,\C)$,
and control amplitudes $u_j\in\R$, where $\gl(N,\C)$ denotes the set of complex $N\times N$ matrices. 
Since all the control systems considered henceforth
are bilinear, we often drop the specification bilinear for short.
Now lifting the (bilinear) control system $(\Sigma)$ to group manifolds \cite{Bro72,Jurdjevic97}
by $X(t) \in \GL(N,\C)$, i.e. the set of non-singular complex $N\times N$ matrices, 
under the action of a compact connected Lie group $\mathbf K$ with Lie algebra $\fk$
while keeping $A,B_j\in \gl(N,\C)$,
the condition for full controllability turns into the
{\em Lie algebra rank condition} \cite{SJ72,JS72,Jurdjevic97}
\begin{equation}
        \langle A, B_j \,|\,j=1,2,\dots,m\rangle_{\rm Lie} = \fk,
\end{equation}
where $\langle \cdot \rangle_{\rm Lie}$ denotes (the linear span over)
the {\em Lie closure} obtained by repeatedly taking mutual commutator brackets.
{\bf Algorithm~1} gives an explicit method to compute the Lie closure, see also~\cite{SchiFuSol01}.

\begin{table}[Ht!]
\label{alg:1} 
\begin{center}
\begin{tabular}{ll}
\hline\hline\\[-1mm]
\multicolumn{2}{l}{{\bf Algorithm 1}: Determine system algebra via Lie closure}\\[1mm]
\hline\\[-1mm]
&{\em Input:} Hamiltonians $I:=\{i H_d; i H_1,\dots, i H_m\} \subseteq \su(N)$\\[1mm]
&1. $B :=$ maximal linearly independent subset of $I$\\[1mm]
&2. $\mathrm{num} := \#B$\\[1mm] 
&3. If $\mathrm{num}=N$ then $O :=B$ else $O := \{\,\}$\\[1mm]
&4. If $\mathrm{num}=N$ or $\#B=0$ then terminate\\[1mm]
&5. $C := [O,B] \cup [B,B]$, where\\
&\phantom{5.}  $[S_1,S_2]=\{[s_1,s_2]\,|\, s_1 \in S_1,s_2 \in S_2\}$\\[1mm]
&6. $O := O \cup B$\\[1mm]
&7. $B :=$ max.\ linear independent extension of $O$\\
&\phantom{7.} with elements from $C$\\[1mm]
&8. $\mathrm{num} := \mathrm{num} + \# B$; Go to 4\\[1mm]
& {\em Output:} basis $O$ of the generated Lie algebra and\\
& \phantom{Output:} its dimension $\mathrm{num}$\\[1mm]
\hline\\[-2mm]
& The complexity is roughly $\mathcal{O}(N^6\cdot N^2)$, as about $N^2$ times \\ 
& a rank-revealing $QR$ decomposition has to be performed in\\
& Liouville space (with dimension $N^2$). For $n$ qubits, $N:=2^n$.\\[1mm]
\hline\hline
\end{tabular}
\end{center}
\end{table}

Transferring the classical result \cite{JS72} to the quantum domain \cite{RSD+95,TOSH-Diss,AA03},
the bilinear system of Eqn.~\eqref{eqn:bilinear_contr} is {\em fully (operator) 
controllable} 
if and only if the drift and controls are a generating set of the special unitary algebra $\su(N)$:
\begin{equation}
\langle{i H_d, i H_j} \,|\,j=1,2,\dots,m\rangle_{\rm Lie} = \fk = \su(N).
\end{equation}
In fully controllable systems, to every initial state $\rho_0$ the {\em reachable set} is
the entire unitary orbit $\mathcal O_{\rm U}(\rho_0):=\{U\rho_0 U^\dagger\;|\; U\in \SU(N)\}$.
With density operators being Hermitian
this means any final state $\rho(t)$ can be reached from any initial state $\rho_0$
as long as both of them share the same spectrum of eigenvalues. Thus reachable sets
and isospectral sets coincide.

In contrast, in systems with restricted controllability the Hamiltonians
generate but a proper subalgebra of the full unitary algebra
\begin{equation}
\langle{i H_d, i H_j} \,|\,j=1,2,\dots,m\rangle_{\rm Lie} = \fk \subsetneq \su(N).
\end{equation}
Then the dynamic group $\bK:=\exp\fk$ is but a proper subgroup $\bK\subsetneq\SU(N)$
of the full unitary group. Therefore the corresponding {\em reachable sets} take the form
of subgroup orbits of initial states 
\begin{equation}\label{eqn:reach-k-orbit}
\reach(\rho_0) = \mathcal O_\bK(\rho_0):=\{K\rho_0 K^\dagger\;|\; K\in \bK\subsetneq\SU(N)\}.
\end{equation}

\section[Tensor-Product Structure and Coupling Graphs in Systems with %
	Pair Interactions]{Natural Tensor-Product Structure and Coupling Graphs in Qubit Systems with %
	Pair Interactions\label{sec:tensor}}

We start out with the case of qubit systems coupled by pair interactions.
Yet quantum simulation of effective many-body interactions in
multi-level systems requires more refined notions, see Appendix~\ref{sec:tensor2}
and~\ref{sec:simple}. Thus
we choose a line-of-thought allowing for the extensions needed later in a natural way
while trying to keep the overhead minimal here. Finally it should be stressed that the results in 
Secs.~\ref{sec:sym_cons}--\ref{sec:control-suff} are valid in full generality of 
Appendix~\ref{sec:tensor2} and~\ref{sec:simple}.

To fix the basic terminology,
observe that the abstract {\em direct sum} of Lie algebras has a matrix representation 
as the {\em Kronecker sum}, e.g.,
$\su(d_1) \hoplus \su(d_2):=\su(d_1)\otimes\unity_{d_2} + \unity_{d_1}\otimes\su(d_2)$
and that it generates a group isomorphic to the {\em Kronecker product} 
(i.e. tensor product) $\bG=\SU(d_1)\otimes \SU(d_2)$.
The abstract direct sum of two algebras $\fh_1$ and $\fh_2$ (each given in an irreducible
representation) has itself an irreducible representation as a single Kronecker
sum $\fh_1 \hoplus \fh_2$ (Thm.~11.6.II of Ref.~\cite{Cornwell:1984}). 
Such an irreducible direct sum representation always exists for every semi-simple 
Lie algebra which is not simple.

Control systems consisting of $n$ qubits are usually embedded in $\su(N)$ with $N:=2^n$. 
Their natural intrinsic \emph{tensor-product structure} takes the form
of the $n$-fold Kronecker sum
$\su(2) \hoplus \su(2) \hoplus \cdots \hoplus \su(2)$.
An $N^2$-$1$ dimensional skew-Hermitian
tensor basis with respect to this tensor-product structure can be
given in terms of the Pauli matrices
\begin{equation}
\mathrm{I}:=\unity_2=
\begin{pmatrix}
1 & 0 \\
0 & 1
\end{pmatrix},
\mathrm{X}:=\sigma_x=
\begin{pmatrix}
0 & 1 \\
1 & 0
\end{pmatrix},
\mathrm{Y}:=\sigma_y=
\begin{pmatrix}
0 & -i \\
i & 0
\end{pmatrix}, \text{ and }
\mathrm{Z}:=\sigma_z=
\begin{pmatrix}
1 & 0 \\
0 & -1
\end{pmatrix}\\[2mm]
\end{equation}
by defining the elements 
$- \tfrac{i}{2}\mathrm{H}_1 \mathrm{H}_2 \cdots \mathrm{H}_n$, where
$\mathrm{H}_1 \mathrm{H}_2 \cdots \mathrm{H}_n
:= \mathrm{H}_1 \otimes \mathrm{H}_2 \otimes  \cdots \otimes \mathrm{H}_n$
and $\mathrm{H}_j \in \{\mathrm{I}, \mathrm{X}, \mathrm{Y}, \mathrm{Z} \}$. 
The element $\mathrm{H}_1 = \mathrm{H}_2 =  \cdots = \mathrm{H}_n = \mathrm{I}$ is 
not traceless and hence cannot occur in $\su(2^n)$.
In terms of this tensor basis,
we write Hamiltonians as linear combinations ($c_k \in \R$) 
\begin{equation}\label{Hsum}
H = \sum_{k=1}^m c_k \mathcal{H}_k
\end{equation}
of elements $\mathcal{H}_k=-\tfrac{i}{2} (\mathcal{H}_{k,1} \otimes 
\mathcal{H}_{k,2} \otimes \cdots \otimes \mathcal{H}_{k,n})$ with $\mathcal{H}_{k,j} \in \{\mathrm{I}, \mathrm{X}, \mathrm{Y}, \mathrm{Z} \}$.
Considering local controls and pairwise coupling interactions
the orders of the constituents are confined, i.e.\
\begin{equation*}
\ord(\mathcal{H}_k):=
\# \{ \ell\, \colon \, \mathcal{H}_{k,\ell} \neq \unity_2 \} \in\{1,2\}.
\end{equation*}
Usually, the control Hamiltonians $H_j$ are local, i.e.\ all terms in Eqn.~\eqref{Hsum} (for $H=H_j$) are of order one,
while the corresponding terms in Eqn.~\eqref{Hsum} for the drift Hamiltonian $H_0\, (= H)$ are of order two comprising the
non-switchable {\em pairwise} coupling terms.
 
Now, in a \emph{coupling graph} the vertices representing the local subsystems
are connected by edges, where each edge stands for a pairwise coupling term
occuring in the drift Hamiltonian $H_d$. An example of a {\em connected} coupling
graph is shown in Figure~\ref{fig:gen-topo-graph}. --- Connected coupling
graphs are essential for full controllability as elucidated by the following theorem.

\begin{theorem}\label{thm:bilinear}
Consider a bilinear control system with pair interactions on $\su(2^n)$, where
all the local subsystems $\su(2)$ are independently fully controllable
so the dynamic algebra $\fk \supseteq \su(2) \hoplus \su(2) \hoplus \cdots \hoplus \su(2)$. 
Then the system is fully controllable, i.e.\ $\fk=\su(2^n)$,
if and only if its coupling graph is connected. In particular, $\fk=\su(2^n)$ is simple.
\end{theorem}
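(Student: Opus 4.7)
The plan is to handle the two directions separately; necessity is a short book-keeping argument while sufficiency carries the real content.

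For necessity, I argue by contraposition: suppose the coupling graph decomposes into components $V_1,\dots,V_r$ with $r\ge 2$. Every drift term (a pair coupling within one component) and every local control then lies in the subalgebra $\bigoplus_{s=1}^r \su(2^{|V_s|})$, embedded as a Kronecker sum with respect to the factorisation $\C^{2^n}\simeq \bigotimes_s \C^{2^{|V_s|}}$. Consequently the Lie closure $\fk$ is contained in this subalgebra, whose dimension $\sum_s(4^{|V_s|}-1)$ is strictly smaller than $\dim\su(2^n)=4^n-1$ whenever $r\ge 2$. Thus the system fails to be fully controllable.

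For sufficiency (connectedness implies full controllability), I proceed in three stages. \emph{Stage 1} (per-edge normalisation): for any edge $(i,j)$, the drift contributes some bilinear $\sigma_a^{(i)}\sigma_b^{(j)}$ with $a,b\in\{X,Y,Z\}$; because $\su(2)^{(i)},\su(2)^{(j)}\subset\fk$, iterated brackets $[\sigma_c^{(i)},\sigma_a^{(i)}\sigma_b^{(j)}]\propto\sigma_p^{(i)}\sigma_b^{(j)}$ and their analogues on site $j$ cycle the Pauli indices on each factor independently, placing all nine bilinears $\sigma_p^{(i)}\sigma_q^{(j)}$ ($p,q\in\{X,Y,Z\}$) inside $\fk$. \emph{Stage 2} (spanning-tree induction): pick a spanning tree $T$ of the coupling graph and label vertices $v_1,\dots,v_n$ so that each $v_{k+1}$ is attached via $T$ to some earlier $v_\ell$. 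By induction on $k$, I claim $\fk$ contains the full $\su(2^k)$ acting on $\{v_1,\dots,v_k\}$ and trivially elsewhere. The base case $k=1$ is the hypothesised local $\su(2)^{(v_1)}$. For the inductive step, Stage 1 applied to $(v_\ell,v_{k+1})$ yields every bilinear $\sigma_p^{(v_\ell)}\sigma_q^{(v_{k+1})}$; commuting such a bilinear with a Pauli string $R$ on $\{v_1,\dots,v_k\}$, available by induction, produces $[R,\sigma_p^{(v_\ell)}]\otimes\sigma_q^{(v_{k+1})}$, and by choosing $R$ to anti-commute with $\sigma_p^{(v_\ell)}$ any desired $P\otimes\sigma_q^{(v_{k+1})}$ with $P$ non-trivial on $v_\ell$ is realised. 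Switching $p$ (licit by Stage 1) covers the remaining $P$ that act trivially on $v_\ell$. Combining with $\su(2)^{(v_{k+1})}\subset\fk$ completes a Pauli basis of $\su(2^{k+1})$ on $\{v_1,\dots,v_{k+1}\}$. \emph{Stage 3}: setting $k=n$ gives $\fk=\su(2^n)$, whose simplicity is classical.

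The main obstacle lies in the bookkeeping of Stage 2. The key claim is that iterated commutators of the bilinears $\sigma_p^{(v_\ell)}\sigma_q^{(v_{k+1})}$ with Pauli strings on $\{v_1,\dots,v_k\}$ really do produce \emph{every} $P\otimes\sigma_q^{(v_{k+1})}$, not just some proper subspace. This reduces to the elementary Pauli identity $RS-SR=2RS$ for anti-commuting Paulis together with the freedom to choose $p$ afforded by Stage 1; once these cases are verified, the remaining generation of $\su(2^{k+1})$ via the local $\su(2)^{(v_{k+1})}$ is routine.
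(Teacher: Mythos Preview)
Your argument is correct in outline and, unlike the paper itself (which for this theorem simply cites Albertini--D'Alessandro), you actually supply a proof. The paper's own route to the underlying induction, developed in its Appendix for the more general qudit case (Theorem~\ref{thm:connected}), avoids your explicit Pauli bookkeeping by invoking Dynkin's result that $\su(d_1)\hoplus\su(d_2)$ is a maximal subalgebra of $\su(d_1d_2)$: once one knows $\su(2^k)\hoplus\su(2)\subset\fk$ together with any single element of $\su(2^{k+1})\setminus\big(\su(2^k)\hoplus\su(2)\big)$, maximality forces $\su(2^{k+1})\subset\fk$ immediately. That replaces all of your Stage~2 combinatorics with one sentence, and is the payoff of the more abstract approach; your explicit Pauli-string argument has the complementary virtue of being entirely self-contained.

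Two points in your write-up need tightening. First, in Stage~1 you treat $\sigma_a^{(i)}\sigma_b^{(j)}$ as though it were already in $\fk$, but only the full drift $iH_d$---a sum over \emph{all} edges---is a generator; you must first isolate the $(i,j)$ contribution, e.g.\ via the double bracket $[\sigma_c^{(i)},[\sigma_d^{(j)},H_d]]$, which annihilates every term not supported on both $i$ and $j$ and is nonzero for a suitable choice of $c,d$. Second, your ``switching $p$'' sentence does not actually produce $P\otimes\sigma_q^{(v_{k+1})}$ when $P$ acts as the identity on $v_\ell$: the bracket $[R,\sigma_p^{(v_\ell)}]$ is always non-trivial at site $v_\ell$ whenever it is nonzero, regardless of $p$. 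To reach such $P$ you need one further commutator, e.g.\ of $P_1\otimes\sigma_q$ (with $P_1$ non-trivial at $v_\ell$) against an element of $\su(2^k)$ that matches $P_1$ at $v_\ell$ but anticommutes with it elsewhere---or you can bypass the issue entirely via the maximality shortcut above.
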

\begin{proof}
A proof is given in Ref.~\cite{AlbAll02}
(see Thm.~2, Remark 5.1, and Thm.~4), see also Ref.~\cite{TOSH-Diss}.
\hfill$\blacksquare$
\end{proof}

\begin{figure}[Ht!]
\begin{center}
\includegraphics{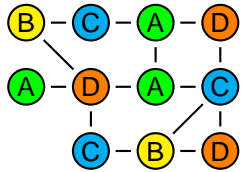}
\end{center}
\caption{General coupling topology represented by a connected graph. The vertices denote the spin-$\tfrac{1}{2}$
qubits, while the edges represent pairwise couplings (e.g.\ of Heisenberg or Ising type).
Qubits of the same colour and letter are taken to be affected by joint local
unitary operations (or none if the color is white), while qubits of different kind can be
controlled independently. For a system to show an outer symmetry brought about by
permutations within subsets of qubits of the same type,
both the graph as well as the system plus all control Hamiltonians have to remain invariant.\label{fig:gen-topo-graph}}
\end{figure}

\section{Symmetry-Constrained Controllability\label{sec:sym_cons}}

A Hamiltonian quantum system is said to have
a symmetry expressed by the skew-Hermitian {\em symmetry operator} $s \in \su(N)$, if
\begin{equation}
[s, H_\nu] = 0 \quad\text{for all}\quad \nu \in \{d; 1,2,\dots,m\}.
\end{equation}
More precisely, we use the term {\em outer symmetry} if $s$
generates a SWAP operation permuting a subset of qubits 
of the same
type (cp.~Fig.~\ref{fig:gen-topo-graph})
such that the coupling graph and all Hamiltonians $\{H_\nu\}$ are left invariant.
Now subsets of 
qubits are termed {\em indistinguishable} if and only if they
can be interchanged by an outer symmetry, i.e.~a SWAP operation that is a symmetry 
of the system; otherwise they are {\em distinguishable}.
In contrast, an {\em inner symmetry} relates to elements $s$
not generating a SWAP operation in the symmetric group of all qubit permutations.

In either case, a symmetry operator is an element of the {\em centraliser}
\begin{equation}
\{H_\nu\}':= \mathcal{Z}_{\su(N)}(\{H_\nu\})=\big\{s\in\su(N)\,|\, [s, H_\nu] = 0 \quad \forall \nu \in \{d; 1,2,\dots,m\}\big\},
\end{equation}
recalling that the centraliser  
of a given subset $\mathfrak m\subseteq \su(N)$ with respect to a Lie algebra $\su(N)$
consists of all elements in $\su(N)$ that commute with all elements in $\mathfrak m$.
Jacobi's identity \mbox{$\big[[a,b],s\big]+\big[[b,s],a\big]+\big[[s,a],b\big]=0$}
gives two useful facts:
(1) an element $s$ that commutes with the Hamiltonians $\{iH_\nu\}$ also commutes
with their Lie closure $\fk$. For the dynamic Lie algebra $\fk$ we have
\begin{equation}
\fk' := \mathcal{Z}_{\su(N)}(\fk) = \{s \in \su(N)\;  |\; [s,k] = 0 \quad\forall k \in \fk\}
\end{equation}
and hence $\{iH_\nu\}'\equiv \fk'$.
Thus in practice it is (most) convenient to just evaluate the centraliser
for a (minimal) generating set $\{iH_\nu\}$
of $\fk$ 
since the overall symmetry properties can be read from the local symmetries
of the constituent Hamiltonians.
Fact (2) means the centraliser $\fk'$ forms itself an invariant 
Lie subalgebra (or ideal) to $\su(N)$ collecting {\em all symmetries}.
In summary, we obtain the following straightforward, yet important result:
\begin{theorem}
Lack of symmetry in the sense of a trivial centraliser is a necessary
condition for full controllability.
\end{theorem}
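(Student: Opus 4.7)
The plan is to prove the contrapositive: if the dynamic system algebra $\fk$ equals the full $\su(N)$, then its centraliser $\fk'$ must be trivial. Since the theorem statement is logically equivalent to the implication ``$\fk=\su(N) \;\Rightarrow\; \fk'=\{0\}$'', the proof should be quite short and essentially structural, relying only on standard properties of $\su(N)$.

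First, I would recall that the centraliser $\fk'$ of $\fk$ in $\su(N)$ is, by definition, the set of $s\in\su(N)$ commuting with every element of $\fk$. If $\fk=\su(N)$, then
\begin{equation*}
\fk' \;=\; \mathcal{Z}_{\su(N)}(\su(N)) \;=\; Z\bigl(\su(N)\bigr),
\end{equation*}
i.e.\ the centre of $\su(N)$ viewed as a Lie algebra. Next, I would invoke the well-known fact that $\su(N)$ is a simple Lie algebra for all $N\geq 2$, so its centre is trivial, $Z(\su(N))=\{0\}$. Hence full controllability forces $\fk'=\{0\}$, and contrapositively any non-trivial symmetry rules out full controllability.

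For completeness, the intermediate step that links the symmetries of the generating Hamiltonians to those of the Lie closure has already been established in the excerpt: by Jacobi's identity (fact (1) above), any $s$ that commutes with $\{iH_d, iH_1,\dots,iH_m\}$ automatically commutes with all iterated brackets, hence with the entire Lie closure $\fk$. Thus $\{iH_\nu\}' = \fk'$, and it suffices to analyse $\fk'$.

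The main (and only) obstacle I anticipate is purely expository: making sure the reader sees why the centraliser of $\su(N)$ inside itself coincides with its centre, and why that centre vanishes. The first is immediate from definitions, while the second follows because a non-zero centre would be a proper non-zero ideal, contradicting simplicity of $\su(N)$ for $N\geq 2$. Since the proof involves no nontrivial computation, I would keep it to three or four lines in the final write-up.
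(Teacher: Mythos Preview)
Your proof is correct and is essentially equivalent to the paper's own argument. The paper phrases the contrapositive in group-theoretic language---a non-trivial $s\in\fk'$ generates a one-parameter subgroup of $\SU(N)$ not contained in $\bK=\exp\fk$---which, once unpacked, amounts to precisely your observation that $\fk=\su(N)$ forces $\fk'=Z(\su(N))=\{0\}$ by simplicity; your Lie-algebraic formulation is arguably the cleaner of the two.
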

\begin{proof}
Any non-trivial element in the centraliser
would generate a one-parameter subgroup in \mbox{$\bK'\subset \SU(N)$} 
that is {\em not} in $\bK=\exp \fk$.
$\hfill \blacksquare$
\end{proof}

\begin{table}[Ht!]
\begin{center}
\begin{tabular}{llll}
\hline\hline\\[-1mm]
\multicolumn{2}{l}{{\bf Algorithm 2}:
Determine centraliser (resp.\ commutant)}\\
\multicolumn{2}{l}{\phantom{{\bf Algorithm 2}:}
 to system algebra $\fk$}\\[1mm]
\hline\\[-1mm]
&{\em Input:} Hamiltonians $I:=\{i H_d; i H_1,\dots, i H_m\} \subseteq M$\\[1mm]
&1. For each $H \in I$ solve the homogeneous linear eqn.\\
&\phantom{1.} $\mathcal S_{H}:=\{ s \in M | (\unity\otimes H - H^t\otimes\unity) \opvec(s) = 0\}$\\[1mm]
&2. $R  := \bigcap_{H{\in}I} \mathcal S_{H}$. \\[1mm]
&{\em Output:}\\
&\phantom{Output:} $R = 
\begin{cases}
\text{centraliser $\fk'$} & \text{if $M=\su(N)$}\\
\text{commutant of $\fk$} & \text{if $M=\gl(N,\C)$}\\
\end{cases}$\\[5mm]
\hline\\[-1mm]
& The complexity is roughly $\mathcal{O}(N^6)$, as in Liouville space \\ 
& $N^2$ equations have to be solved by $LU$ decomposition. \\
& For $n$ qubits, $N:=2^n$.\\[1mm]
\hline\hline
\end{tabular}
\end{center}
\end{table}

\medskip
Throughout this paper, we consider 
finite-dimensional complex matrix representations of Lie algebras,
a representation being a map from a given Lie algebra to the set
complex square matrices of appropriate (and finite) dimension.
The matrix
entries are given by complex polynomial (or equivalently holomorphic) functions.
In the following, we will usually not consider the trivial representation,
which maps any element to $1\in \C$.
One particular important example for a representation of a Lie algebra
is the standard representation, which is the lowest-dimensional (non-trivial) representation
(with some exceptions, see the Appendix~\ref{app:repr}) and which is typically used to
define the corresponding Lie algebra in its matrix form.
In analogy to the centraliser, one can define the {\em commutant} relative
to a representation $\phi$ of dimension $\dim(\phi)$
\begin{equation}
\mathrm{comm}_{\phi}(\fm):=\big\{g \in\gl(\dim(\phi),\C)\,|\, [g, \phi(m)]=0\quad \forall m \in\fm\big\}
\end{equation}
for a subset $\fm\subset\fg$ of a Lie algebra $\fg$. Now it is natural to ask
how the notions of centraliser and commutant relate to irreducible representations.

\begin{lemma}\label{lem:centraliser}
Let $\Phi$ denote the standard representation
of $\su(N)$. If $\fk \subseteq \su(N)$, then the following statements are 
equivalent:\\[1mm]
(1) The centraliser 
$\fk'=\mathcal{Z}_{\su(N)}(\fk)$
of $\fk$ in $\su(N)$ is trivial, i.e.\ zero.\\
(2) The restriction of $\Phi$ from $\su(N)$ to $\fk$ is irreducible.\\
(3) The commutant $\mathrm{comm}_{\Phi}(\fk)$ 
of $\fk$ w.r.t.\ $\Phi$ is trivial, i.e.\ $= \{ c\cdot\unity_{N}\; | c\in\C \}$.
\end{lemma}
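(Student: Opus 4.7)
The overall plan is to establish the cycle by first proving $(2) \Leftrightarrow (3)$, which is essentially a version of Schur's lemma, and then proving $(1) \Leftrightarrow (3)$, which amounts to relating the real, traceless skew-Hermitian centraliser inside $\su(N)$ to the full complex commutant inside $\gl(N,\C)$. Throughout, the fact that $\fk$ consists of skew-Hermitian matrices will do most of the work.

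For $(2) \Leftrightarrow (3)$, the key point is that the restriction of $\Phi$ to $\fk$ is a \emph{unitary} representation (since elements of $\fk$ are skew-Hermitian). Hence if $V\subseteq \C^N$ is $\fk$-invariant, so is $V^\perp$, giving complete reducibility. I would then argue: if $\Phi|_\fk$ is irreducible, Schur's lemma applied to any $g \in \mathrm{comm}_\Phi(\fk)$ (using that $g$ has an eigenvalue $\lambda$ and $g-\lambda\unity$ has a nonzero kernel which is $\fk$-invariant) forces $g=\lambda\unity$. Conversely, if $\Phi|_\fk$ were reducible with decomposition $\C^N=V_1\oplus V_2$ into invariant subspaces, the orthogonal projection onto $V_1$ would be a nonscalar element of $\mathrm{comm}_\Phi(\fk)$, contradicting $(3)$.

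For $(1) \Leftrightarrow (3)$, the direction $(3)\Rightarrow (1)$ is immediate: any $s\in\fk'\subseteq\su(N)$ also lies in $\mathrm{comm}_\Phi(\fk)$, so by $(3)$ it is a scalar $c\unity_N$; but $s$ is traceless, forcing $c=0$. For $(1)\Rightarrow (3)$, given $g\in \mathrm{comm}_\Phi(\fk)$, I would first note that taking adjoints of $[g,k]=0$ and using $k^\dagger=-k$ for $k\in\fk$ shows $g^\dagger$ also commutes with $\fk$. Hence the Hermitian components $H_1:=\tfrac{1}{2}(g+g^\dagger)$ and $H_2:=\tfrac{1}{2i}(g-g^\dagger)$ each commute with $\fk$. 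Then $iH_1$ and $iH_2$ are skew-Hermitian, and writing each as a sum of its traceless part (which lies in $\fk'$) and a scalar multiple of $\unity_N$, assumption $(1)$ kills the traceless parts. Thus $H_1$ and $H_2$ are scalars, and so is $g=H_1+iH_2$.

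The main obstacle will be handling the real-versus-complex bookkeeping cleanly, since $\fk$ is a real Lie algebra while the commutant lives in $\gl(N,\C)$; the Hermitian/skew-Hermitian splitting above is precisely what bridges the two, and one must verify that it interacts correctly with the trace/traceless decomposition and with the commutation relations. The Schur-lemma step is standard but still requires the unitarity of the representation to guarantee complete reducibility, which again rests on $\fk\subseteq\su(N)$; without this hypothesis, $(2)$ and $(3)$ can decouple.
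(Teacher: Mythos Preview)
Your proof is correct and complete; the Hermitian/skew-Hermitian splitting together with the traceless/scalar decomposition does exactly the job for $(1)\Leftrightarrow(3)$, and the Schur-lemma argument for $(2)\Leftrightarrow(3)$ is standard and valid since $\fk\subseteq\su(N)$ guarantees unitarity and hence complete reducibility.

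The paper takes a somewhat different, more structural route: it passes to the complexifications $\fk_\C\subseteq\sll(N,\C)$ and the complexified representation $\Phi_\C$, invoking results from Zelobenko and Knapp to transfer irreducibility and complete reducibility back and forth between the real and complex settings, and then reads off the equivalences from the faithful, completely reducible complex picture. Your argument stays entirely in the concrete matrix setting and never complexifies abstractly; instead, the passage from the real centraliser in $\su(N)$ to the complex commutant in $\gl(N,\C)$ is handled by the explicit Hermitian decomposition $g=H_1+iH_2$. This is more elementary and self-contained, and it makes transparent exactly where the skew-Hermitian and traceless conditions enter. The paper's approach, by contrast, situates the lemma within the general representation theory of compact versus complex Lie algebras, which is perhaps more in keeping with the later use of highest-weight theory but is heavier on external references for a statement that, as you show, admits a direct two-paragraph proof.
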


\begin{proof}
As $\su(N)$ is compact, it follows that $\Phi$ and its restriction to $\fk$ 
are completely reducible in the sense of being a direct sum of irreducible representations
(see Cor.~2.17 of \cite{Sepanski07}).
The representation $\Phi$ is even irreducible and faithful, i.e.\ injective.
Hereafter, we will consider the complexification $\fk_{\C}$ of $\fk$ and
$\su(N)_{\C}=\sll(N,\C)$ as complexification of $\su(N)$.
The representation $\Phi$ has a unique extension $\Phi_{\C}$ to $\sll(N,\C)$, 
which is also irreducible and faithful.
In addition, $\Phi_{\C}$ and its restriction to $\fk_{\C}$ are
completely reducible. These facts can be deduced from
Thm.~1, pp.~111--112 of 
\cite{Zelobenko73} and Prop.~7.5 of \cite{Knapp02}.

Now it follows that (1) is equivalent to $\mathcal{Z}_{\sll(N)}(\fk_{\C})=\{0\}$.
As $\Phi_{\C}$ is faithful, this holds if and only if $\mathrm{comm}_{\Phi_{\C}}(\fk_{\C})$ is trivial.
Relying on the fact that $\Phi_{\C}$ is completely reducible, $\mathrm{comm}_{\Phi_{\C}}(\fk_{\C})$ is trivial if and only if the restriction of $\Phi_{\C}$ from $\sll(N,\C)$ to
$\fk_{\C}$ is irreducible. Using Thm.~1, pp.~111--112 of 
\cite{Zelobenko73}, this is equivalent to (2). As $\Phi$ is completely reducible,
(2) and (3) are equivalent.
\hfill$\blacksquare$
\end{proof}

As a second consequence of a trivial centraliser the corresponding subalgebra
$\fk$ of $\su(N)$ has to be simple or semi-simple:

\begin{lemma}\label{lem:semi-simplicity}
Let $\fk\subseteq\su(N)$ be a subalgebra
to the Lie algebra $\su(N)$. If its centraliser $\fk'$ in $\su(N)$
is trivial, then $\fk$ is simple or semi-simple.
\end{lemma}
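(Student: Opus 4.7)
The plan is to combine Lemma~\ref{lem:centraliser} with the structure theory of compact Lie algebras, applying Schur's lemma to rule out a nontrivial center.

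First I would invoke Lemma~\ref{lem:centraliser}: triviality of $\fk'$ is equivalent to irreducibility of the standard representation $\Phi\colon \su(N)\to\gl(N,\C)$ restricted to $\fk$. This is the key input that lets us bring representation-theoretic tools to bear on the structure of $\fk$.

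Next I would exploit compactness. Since $\fk$ is a subalgebra of the compact real Lie algebra $\su(N)$, it is itself a compact real Lie algebra, and hence admits the standard orthogonal decomposition $\fk = \fz(\fk)\oplus[\fk,\fk]$, where $\fz(\fk)$ is the center and $[\fk,\fk]$ is semisimple (see, e.g., Knapp \cite{Knapp02}, Cor.~4.25). The radical of a compact Lie algebra coincides with its center, so showing $\fz(\fk)=0$ will be enough to conclude semisimplicity (which includes the simple case as a special instance).

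To kill the center, I would take an arbitrary $z\in\fz(\fk)$ and note that $\Phi(z)$ commutes with $\Phi(k)$ for every $k\in\fk$. Since $\Phi|_\fk$ is irreducible by step~1, Schur's lemma (applied to the complexified representation on $\C^N$) forces $\Phi(z)=\lambda\,\unity_N$ for some $\lambda\in\C$. But $z\in\su(N)$ is skew-Hermitian and traceless, so $\lambda=0$ and hence $z=0$. Therefore $\fz(\fk)=\{0\}$ and $\fk=[\fk,\fk]$ is semisimple.

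I do not expect any serious obstacle: the only subtle point is justifying the use of Schur's lemma for a real representation, which is handled by passing to the complexification $\fk_{\C}$ on $\C^N$ (exactly as done in the proof of Lemma~\ref{lem:centraliser}) and using that complete reducibility plus irreducibility of $\Phi|_\fk$ lifts to $\Phi_{\C}|_{\fk_\C}$.
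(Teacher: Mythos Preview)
Your argument is correct, but it takes a longer route than the paper's. The paper uses the same compact decomposition $\fk=\fz_\fk\oplus\fs$ from Knapp, but then kills the center in one line by observing $\fz_\fk=\fk'\cap\fk$: any element of the center of $\fk$ commutes with all of $\fk$ and hence lies in the centraliser $\fk'$, which is assumed trivial. This is a purely set-theoretic containment and avoids invoking Lemma~\ref{lem:centraliser}, irreducibility, complexification, and Schur's lemma altogether. Your detour through irreducibility is harmless and perfectly valid, but it imports machinery that the statement does not need; the paper's version is strictly more elementary and shows that the lemma is really independent of the representation-theoretic content of Lemma~\ref{lem:centraliser}.
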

\begin{proof}
By compactness, $\fk = \fz_\fk \oplus \fs$ decomposes into its centre $\fz_\fk$ 
and a semi-simple part $\fs$ (see, e.g., Cor.~IV.4.25 of Ref.~\cite{Knapp02}). 
As the centre
$\fz_\fk = \fk' \cap \fk$ is trivial, 
$\fk$ can only be {\em semi-simple} or {\em simple}. 
\hfill$\blacksquare$
\end{proof}

Note that the centraliser is \/`exponentially\/' easier to come by than the Lie closure 
in the sense of comparing the asymptotic complexity $\mathcal O(N^6\cdot N^2)$ (with $N:=2^n$ for $n$ qubits) of
{\bf Algorithm~1} for the Lie closure with the asymptotic complexity $\mathcal O(N^6)$ of
{\bf Algorithm~2} for the centraliser tabulated above. ---
Therefore one would like to fill the gap between lack of symmetry as a necessary 
condition and sufficient conditions for full 
controllability in systems with a connected coupling topology.
For pure-state controllability, this was analysed in \cite{AlbAll02},
for operator controllability the issue has been raised in \cite{dAless08}, {\em inter alia}
following the lines of \cite{TR01,TR03}, however, without a full answer.
Further results in the case of pure-state controllability
can be found in \cite{PST09}.

We have proven that the lack of symmetry is necessary for a control system to be fully controllable.
Yet in turn, a control system without symmetry need not be fully controllable, as the following
elementary (and pathological) example shows:
\begin{example}\label{ex1}
Assume we have a bilinear control system on two qubits, where 
the dynamic Lie algebra 
$\fk=\langle i \mathrm{XI}, i \mathrm{YI}, i \mathrm{ZI}, i \mathrm{IX}, i \mathrm{IY}, i \mathrm{IZ} \rangle_{\rm Lie}=\su(2) \hoplus \su(2)$
is not simple. 
Although it has no symmetry and its centraliser $\fk'$ is in fact trivial, the system
is not fully controllable: all pair terms like $i\mathrm{ZZ}$ cannot be generated,
since its pathological \/`coupling graph\/'
\begin{center}
\includegraphics{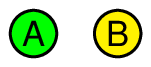}
\end{center}
is clearly not connected.
\end{example}

Nevertheless, the somewhat trivial example is illuminating. While in the context
of $C^*$-algebras, von Neumann's double-commutant theorem recovers the original
algebra from the commutant of its commutant \cite{Dix81,Sak71}, a similar theorem does not
extend to Lie algebras \cite{GW09}. Rather, if the dynamic algebra $\fk\subseteq\su(N)$ 
has a trivial centraliser $\fk'$, then the double centraliser $\fk''$,
i.e.\ the centraliser of the centraliser in $\su(N)$, 
of all compact semi-simple and simple irreducible proper and improper subalgebras $\fk$
of $\su(N)$ is given by $\su(N)$ in line with Lemma~\ref{lem:semi-simplicity}. 
However, if one considers
the associative matrix algebra (with identity) generated by the basis elements 
(including the identity matrix) of a Lie algebra
via its standard representation, then von Neumann's double commutant theorem
still holds, see Thm.~(3.5.D) of Ref.~\cite{Weyl}. --- In the next step, we
will thus add a criterion to single out the simple subalgebras.

Motivated by Example~\ref{ex1} 
one might conjecture that the dynamic algebra $\fk$ is simple 
if $\fk$ acts irreducibly and the coupling graph of the control system is connected.
This is true for control systems in qubits with pairwise coupling interactions:
\begin{theorem}\label{pair_connected}
Consider a bilinear control system with pair interactions on $\su(2^n)$. Assume that
the tensor-product structure is given by $\su(2) \hoplus \su(2) \hoplus \cdots \hoplus \su(2)$
and that the centraliser $\fk'$ of the dynamic algebra $\fk$ is trivial. The dynamic algebra $\fk$
is simple if and only if the coupling graph of the control system is connected.
\end{theorem}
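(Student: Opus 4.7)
The plan is to prove each direction separately. For the ``only if'' direction ($\fk$ simple $\Rightarrow$ graph connected) I would argue by contrapositive: if the coupling graph splits into $c\ge 2$ connected components $V_1,\ldots,V_c$, then every drift pair term and every local control is supported on qubits inside a single component, so each Hamiltonian decomposes as a sum over the components of pieces acting on disjoint qubit sets. Operators on disjoint qubit sets commute, so this block structure is preserved under Lie brackets. The Lie closure therefore splits as $\fk = \fk^{(1)}\hoplus\cdots\hoplus\fk^{(c)}$, a Kronecker sum of at least two nontrivial ideals, and so cannot be simple.

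For the ``if'' direction, Lemma~\ref{lem:semi-simplicity} already reduces the task to ruling out that $\fk$ is semi-simple but not simple. Suppose for contradiction $\fk = \fk_1 \oplus \cdots \oplus \fk_r$ with $r\ge 2$ simple ideals. By Lemma~\ref{lem:centraliser} the standard representation restricts irreducibly to $\fk$, and by the representation theory of semi-simple Lie algebras this irrep factors as an outer tensor product $\C^{2^n}\cong V_1 \otimes \cdots \otimes V_r$ with each $V_s$ an irreducible $\fk_s$-module; each $\dim V_s$ is a power of~$2$, and each $\fk_s$ acts as a traceless subalgebra of $\su(V_s)$ on its own factor and trivially on the others.

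The key step uses that $\tilde X_i := iX_i \in \fk$ (exploiting $\fk \supseteq \su(2)\hoplus\cdots\hoplus\su(2)$) is a Pauli operator satisfying $\tilde X_i^2 = -\unity$ on $\C^{2^n}$. Writing $\tilde X_i = \sum_s A_s^{(i)}$ as the Kronecker sum of its projections onto the ideals (so $A_s^{(i)}$ acts on $V_s$ only), squaring gives
\begin{equation*}
\Bigl(\sum_s A_s^{(i)}\Bigr)^{\!2}
= \sum_s \bigl(A_s^{(i)}\bigr)^2 \;+\; 2\sum_{s<t} A_s^{(i)} \otimes A_t^{(i)}
= -\unity \, .
\end{equation*}
Decomposing both sides according to how many tensor factors are acted on nontrivially forces every cross-factor tensor $A_s^{(i)}\otimes A_t^{(i)}$ with $s\neq t$ to vanish, so at most one $A_s^{(i)}$ is nonzero. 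This assigns each qubit $i$ a unique ideal-index $S_i \in \{1,\ldots,r\}$ into which its local $\su(2)_i$ embeds. Next I would establish $S_i = S_j$ whenever $i$ and $j$ are pair-coupled in the drift: together with the local controls on $i,j$ the two-qubit Lie closure places $X_iX_j\in\fk$, and in the $V_1\otimes\cdots\otimes V_r$ basis this element reduces to a rank-one tensor $\hat X_i \otimes \hat X_j$ across the factors $V_{S_i}$ and $V_{S_j}$. A partial-trace argument using $\tr(\hat X_i)=\tr(\hat X_j)=0$ shows that a genuine two-factor rank-one tensor cannot lie in $\bigoplus_s \fk_s$, forcing $S_i = S_j$.

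Propagating this equality along the connected coupling graph yields a common index $s^\ast$ with $S_i = s^\ast$ for every qubit. But then every generator of $\fk$ (local controls and pair interactions) projects trivially to $\fk_t$ for each $t\neq s^\ast$, whence $\fk_t = 0$, contradicting $r\ge 2$. The main technical obstacle is the algebraic bookkeeping in the two partial-trace steps: both require consistently tracking the interplay between the natural qubit tensor structure on $(\C^2)^{\otimes n}$ and the representation-theoretic tensor structure $V_1\otimes\cdots\otimes V_r$, which need not be aligned.
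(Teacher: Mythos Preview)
Your ``only if'' direction is essentially correct (the nontriviality of each $\fk^{(j)}$ should be justified by the trivial-centraliser hypothesis, but that is a small patch).

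The ``if'' direction, however, has a genuine gap. You write ``$\tilde X_i := iX_i \in \fk$ (exploiting $\fk \supseteq \su(2)\hoplus\cdots\hoplus\su(2)$)'' and later ``together with the local controls on $i,j$ the two-qubit Lie closure places $X_iX_j\in\fk$''. But the theorem does \emph{not} assume that every qubit carries full local control; it assumes only pair interactions and a trivial centraliser. The phrase ``the tensor-product structure is given by $\su(2)\hoplus\cdots\hoplus\su(2)$'' merely specifies the ambient qubit decomposition, not an inclusion for $\fk$. Example~\ref{ex:counter2} in the paper is a concrete instance: there the coupling graph is connected and $\fk'=\{0\}$, yet $\fk=\so(7)$ contains local operators only on the first qubit --- in particular $iX_2\notin\fk$. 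Your whole mechanism of assigning to each qubit a unique ideal-index $S_i$ via $\tilde X_i^2=-\unity$ therefore does not get off the ground. (Note also that if $\fk\supseteq\su(2)^{\hoplus n}$ \emph{were} assumed, you would be in the setting of Theorem~\ref{thm:bilinear} and could conclude $\fk=\su(2^n)$ directly, making the argument redundant.)

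The paper's route, via Corollary~\ref{cor:connected}(2) and Theorem~\ref{lem:simplicity} in Appendix~\ref{sec:simple}, avoids this problem by not attempting to assign qubits to ideals. Instead it invokes Dynkin's theorem (Thm.~2.1 of \cite{Dynkin57}): an irreducible semi-simple $\fk=\fk_1\oplus\cdots\oplus\fk_m$ is automatically contained in some Kronecker-sum $\su(d_1)\hoplus\cdots\hoplus\su(d_m)$ with $\prod d_j=2^n$. The contradiction then comes from the abstract notion of a \emph{connected} control system (Lemma~\ref{lem:refine}): since $2^n$ has only the prime $2$ in its factorisation, the qubit structure is the common refinement of all tensor-product structures, and connectedness of the coupling graph is equivalent to $\fk$ not being contained in any $\su(e_1)\hoplus\su(e_2)$. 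This argument uses only that the drift and control Hamiltonians themselves lie in $\fk$, not any individual local Pauli.
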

\begin{proof}
See Corollary~\ref{cor:connected}(2) in Appendix~\ref{sec:simple}.
\hfill$\blacksquare$
\end{proof}

The general case beyond pair interactions (and qubit systems) is discussed in Appendix~\ref{sec:simple}.
In the case of pair interactions, we say 
a control system is \emph{connected} if its
coupling graph is connected. 
This definition of a connected control system is a particular case of 
the general definition (see Appendix~\ref{sec:simple}) applicable to control systems which do not have a natural coupling graph.

\section{Irreducible Simple Subalgebras of $\su(N)$\label{sec:irred-sub-su}}

Starting from the knowledge that for a fully controllable system the dynamic algebra $\fk$ 
has to be simple and given in an irreducible representation (see, e.g., Appendix~\ref{sec:simple}), 
it is natural to ask for a classification of all these cases. Following the work of 
Killing, {\'E}lie~Cartan~\cite{Cartan94} classified all
simple (complex) Lie algebras (see, e.g., \cite{Bourb08a,Bourb08b}). 
The corresponding compact real forms (\cite{Bourb08b,Helgason78}) are 
the compact simple  Lie algebras of classical type 
(assuming $\ell \in \N\setminus\{0\}$ henceforth): 
\begin{align*}
\fa_{\ell}: &\quad\su(\ell + 1),\\
\fb_{\ell}: &\quad\so(2\ell + 1),\\ 
\fc_{\ell}: &\quad\usp(\ell):=\spp(2\ell,\C)\cap\uu(2\ell,\C),\\
\fd_{\ell}: &\quad\so(2\ell),
\end{align*}
and of exceptional type  $\fe_6$, $\fe_7$, $\fe_8$, $\ff_4$, $\fg_2$.
Note also that for 
$\fa_{\ell}$ ($\ell\geq 1$), $\fb_{\ell}$ ($\ell\geq 2$), 
	$\fc_{\ell}$ ($\ell \geq 3$) and $\fd_{\ell}$ ($\ell\geq 4$) 
the following isomorphisms (see, e.g., Thm.~X.3.12 in \cite{Helgason78}) 
$\su(2) \cong \so(3) \cong \usp(1)$, $\so(5) \cong \usp(2)$, and 
$\su(4) \cong \so(6)$ are no longer of concern. The same holds 
for the abelian case
$\so(2)$ as well as for the semi-simple one $\so(4) \cong \su(2) \hoplus \su(2)$.

\medskip
\noindent
{\em Pro memoria.}
The classical, compact simple Lie algebras and some forms of their standard matrix representations
may be given as follows:
\begin{center}
\begin{tabular}{l l l l}
\hline\hline\\[-3mm]
algebra & definition and block forms && Lie dimension \\
\hline\\[-2mm]
\multicolumn{3}{l}{$\su(N):=\{a\in\C^{N\times N}\; |\; a^\dagger = - a,\; \tr a=0\}$}  & $N^2 -1$ \\[1mm]
\hline\\[-1mm]
\multicolumn{3}{l}{$\so(N):=\{ a = U\tilde aU^\dagger \in\C^{N\times N}\; |\; \tilde a^t = - \tilde a,\; \tr \tilde a=0\}$} &  $\tfrac{1}{2}N(N-1)$ \\[1mm]
                & \multicolumn{3}{l}{with $U\in SU(N)$} \\[4mm]
\quad $N=2\ell$:         & \multicolumn{3}{l}{$a'=\begin{pmatrix} A & \phantom{-}B\phantom{^t} \\
                        C & -A^t \end{pmatrix}\;$} \\[3mm]
			& \multicolumn{2}{l}{with $ A,B,C\in\C^{\ell\times\ell}, B^t=-B, C^t=-C$} \\[4mm]
\quad $N=2\ell+1$: & \multicolumn{3}{l}{$ a'=\begin{pmatrix} A & \phantom{-}B\phantom{^t} & u\\
                        C & -A^t &v \\ -u^t & -v^t & 0 \end{pmatrix}\;$} \\[5mm]
			& \multicolumn{2}{l}{with $ A,B,C\; \text{as above and}\; u,v\in\C^\ell$} \\[1mm]
			 \multicolumn{3}{l}{\small NB: the representations $\tilde a$ and $a'$ above need not be equal.} \\[2mm]
\hline\\[-2mm]
\multicolumn{3}{l}{$\usp(N/2):=\{a=U\tilde a U^\dagger \in\C^{N\times N}\; |\; J\tilde a = - \tilde a^tJ\}$}
                &  $\tfrac{1}{2}N(N+1)$ \\[1.5mm]
                & \multicolumn{3}{l}{with $J:= \left(\begin{smallmatrix}0\phantom{_\ell} & -\unity_\ell\\ \unity_\ell & \phantom{-}0\phantom{_\ell}\end{smallmatrix}\right), U\in SU(N)$} \\[4mm]
\quad $N=2\ell$: & \multicolumn{3}{l}{$\tilde a=\begin{pmatrix} A & \phantom{-}B\phantom{^t} \\
                        C & -A^t \end{pmatrix}$} \\[4mm]
                        & \multicolumn{2}{l}{with $A,B,C\in\C^{\ell\times\ell}, B=B^t, C=C^t$} \\[4mm]
\hline\hline
\end{tabular}
\end{center}

\medskip
\noindent
By completeness\footnote{
NB: The list of {\em algebras} is indeed complete -- note that in particular spin and pin {\em groups} are
also generated by the algebras $\so(N)$ and $\oo(N)$, respectively \cite{LawMich:1989}.
}
of Cartan's classification above we may summarise as follows:

\begin{corollary}[Candidate List]
Consider a bilinear control system, where the drift and control Hamiltonians
$\{iH_\nu\}$ generate the dynamic system Lie algebra
$\fk \subseteq \su(N)$ in an irreducible representation (\/$\fk'$ trivial\/)
with the additional promise that $\fk$ is simple (e.g.\ due to a connected control system).
Then, being a simple subalgebra of $\su(N)$, the system algebra $\fk$
has to be one of the  candidate compact simple  Lie algebras:
$\su(\ell + 1)$, $\so(2\ell + 1)$,
$\usp(\ell)$, $\so(2\ell)$, $\fe_6$, $\fe_7$, $\fe_8$, $\ff_4$, or $\fg_2$. 
\hfill$\blacksquare$
\end{corollary}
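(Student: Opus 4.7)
The plan is to reduce the statement to a direct invocation of Cartan's classification together with the compactness argument already sketched in the preamble. There is essentially no new mathematics to do; the task is to assemble the hypotheses correctly and quote the right result.

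First I would observe that $\fk \subseteq \su(N)$ inherits compactness from $\su(N)$, because any Lie subalgebra of a compact real Lie algebra is compact (it is a closed subalgebra of a compact Killing form setup). Combined with the promise that $\fk$ is simple, this places $\fk$ among the compact real forms of the simple complex Lie algebras. Then I would pass to the complexification $\fk_{\C}$: this is a simple complex Lie algebra, and Cartan's classification (as cited in the paper via \cite{Bourb08a,Bourb08b,Cartan94}) asserts that $\fk_{\C}$ is isomorphic to exactly one member of the list $\fa_\ell$, $\fb_\ell$, $\fc_\ell$, $\fd_\ell$, $\fe_6$, $\fe_7$, $\fe_8$, $\ff_4$, $\fg_2$. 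Compact real forms of simple complex Lie algebras are unique up to isomorphism (Helgason, Theorem~\textsection{}III.6.1 or as cited via \cite{Helgason78}), so $\fk$ itself is isomorphic to the compact real form in the preceding displayed list: $\su(\ell+1)$, $\so(2\ell+1)$, $\usp(\ell)$, $\so(2\ell)$, $\fe_6$, $\fe_7$, $\fe_8$, $\ff_4$, or $\fg_2$.

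Second, I would note that the remaining hypotheses (irreducibility of the embedding $\fk \hookrightarrow \su(N)$ via trivial centraliser, and the connectedness that ensures simplicity) are only used to justify that $\fk$ is indeed simple and acts irreducibly, which was already established by Lemma~\ref{lem:centraliser}, Lemma~\ref{lem:semi-simplicity}, and Theorem~\ref{pair_connected}. No further input from these hypotheses is needed for the corollary itself; the irreducible embedding into $\su(N)$ serves only to situate $\fk$ as a subalgebra of some specific $\su(N)$, not to restrict its isomorphism type beyond what Cartan already dictates.

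Finally, I would point out the small redundancy removal: the overlaps listed in the preamble ($\su(2)\cong\so(3)\cong\usp(1)$, $\so(5)\cong\usp(2)$, $\su(4)\cong\so(6)$, $\so(2)$ abelian, $\so(4)$ not simple) are what justify the ranges $\ell\geq 1,2,3,4$ on the four classical families, so that the list of candidates is irredundant when read with those ranges. The main (and only) obstacle is philosophical rather than technical: one must be careful that the corollary quotes Cartan's classification at the level of real compact Lie algebras, not complex ones, so the invocation of uniqueness of the compact real form must be made explicit. The proof is then a one-line appeal to Cartan's classification plus this real-form uniqueness statement.
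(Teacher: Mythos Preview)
Your proposal is correct and matches the paper's approach: the paper gives no separate proof for this corollary at all, simply prefacing it with ``By completeness of Cartan's classification above we may summarise as follows'' and closing the statement with a $\blacksquare$. Your write-up is a slightly more explicit version of exactly this---making the passage to the complexification and the uniqueness of the compact real form explicit---which is a reasonable elaboration but not a different argument.
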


\noindent
For illustration of the Lie algebras of exceptional type, consider
the dimensions of their standard representations (see, e.g., 
p.~218 of Ref.~\cite{Bourb08b}, Ref.~\cite{Min06}, Ref.~\cite{Baez01})
$\fe_6\subset \su(27)$, $\fe_7\subset\usp(28)$,
$\fe_8\subset\so(248)$,
$\ff_4 \subset \so(26)$, and
$\fg_2\subset \so(7)$.
As a final remark on exceptional Lie algebras suffice it to add
that---with the single exception of $\fg_2$---they all
fail to generate groups acting transitively
on the sphere or on $\R^N \setminus \{0\}$. This has been shown in \cite{DiHeGAMM08}
building upon results in \cite{Kra03} to fill gaps in earlier work \cite{Bro73,BW79}.

\medskip
Having listed all the candidates for proper simple subalgebras of $\su(N)$, we now focus
on the set of possible irreducible representations. To this end, in this chapter 
we describe the main results, while all the details shall be explained in 
the Appendix~\ref{app:repr}.
The irreducible representations
of simple (complex) Lie algebras were  already determined by {\'E}lie~Cartan~\cite{Cartan13}. 
This classification is equivalent for the compact simple Lie algebras (or the compact 
connected simple Lie groups), see, e.g., \cite{Bourb08b}. The irreducible simple subalgebras 
of $\su(N)$ are found by enumerating  for all simple Lie algebras all their irreducible 
representations of dimension~$N$. The dimensions of the irreducible representations can 
be  efficiently computed using computer algebra systems such as {\sf LiE}~\cite{LIE222} 
and {\sf MAGMA}~\cite{MAGMA} via Weyl's dimension formula. Following the work of 
Dynkin~\cite{Dynkin57} \nocite{Dynkin2000} (see App.~\ref{incl_rel} and Chap.~6, Sec.~3.2 of Ref.~\cite{GOV94}), 
one can determine the inclusion relations between 
irreducible simple subalgebras of $\su(N)$.
We obtained {\em all} the irreducible simple subalgebras of $\su(N)$ for 
$N\leq 2^{15}=32768$. This significantly extends previous work \cite{McKay81,PST09} 
for $N\leq 9$. The results for $N\leq 16$ are given in Tab.~\ref{tab:su_subalg},
those for $N=2^n$ and $1\leq n\leq 15$ are relegated to Tab.~\ref{tab:two}.
A complete list with all the results for $N\leq 2^{15}$ is attached as Supplementary Material
\cite{anc}.
\begin{table*}[Ht!]
\caption{\label{tab:su_subalg}
The Irreducible Simple Subalgebras of $\su(N)$ for $N\leq 16$} 
\begin{center}
\begin{tabular}{c}
\hline\hline\\[-1mm]
\includegraphics{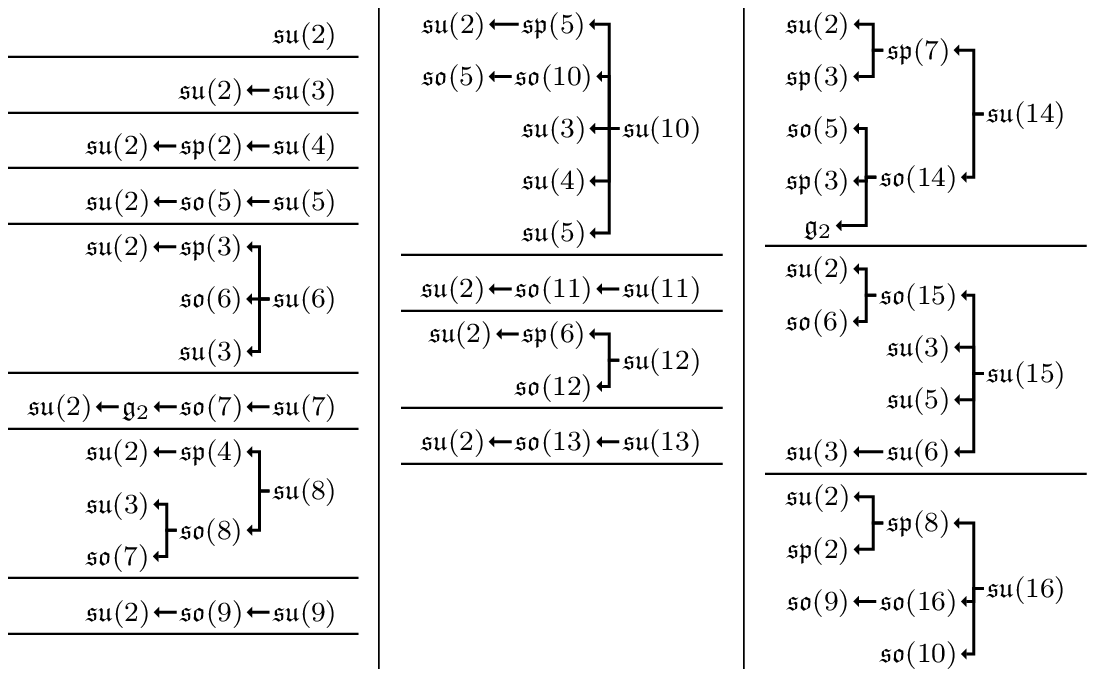}
\\[1mm]
\hline\hline\\[-1mm]
NB: 
$\so(3) \cong \su(2) \cong \usp(1)$, $\so(5) \cong \usp(2)$, and
$\so(6) \cong \su(4)$. 
\end{tabular}
\end{center}
\end{table*}
\begin{remark}
With regard to Tables~\ref{tab:su_subalg} and \ref{tab:two},
the occurrence of $\su(2)$ as an {\em irreducible} simple subalgebra to any $\su(N)$ with $N\geq 2$
is natural from the point of view of spin physics. We identify $\su(N)=\su(2j+1)$, 
where the (non-vanishing) half-integer and integer spin-quantum numbers may take the values 
$j\in\{\tfrac{1}{2}, 1, \tfrac{3}{2}, 2, \dots \}$.
Now to any such $j$ there is an irreducible spin-$j$ representation of the three Pauli
matrices generating $\su(2)$.   
For instance, in $\su(4)$ there is an irreducible spin-$\tfrac{3}{2}$
representation of $\su(2)$ as a proper {\em irreducible} subalgebra ${\su(2)}
\subsetneq\su(4)$.
--- In contrast, the Gell-Mann basis to $\su(2j+1)$ comprises 
a {\em reducible} representation of $\su(2)$ as a subalgebra.
Clearly, the two types of representations are {\em in}\/equivalent.
\end{remark}

\begin{table*}[Hp!]
\caption{\label{tab:two}
The Irreducible Simple Subalgebras of $\su(2^n)$ for $1\leq n\leq 15$} 
\begin{center}
\begin{tabular}{c}
\hline\hline\\[-1mm] 
\includegraphics{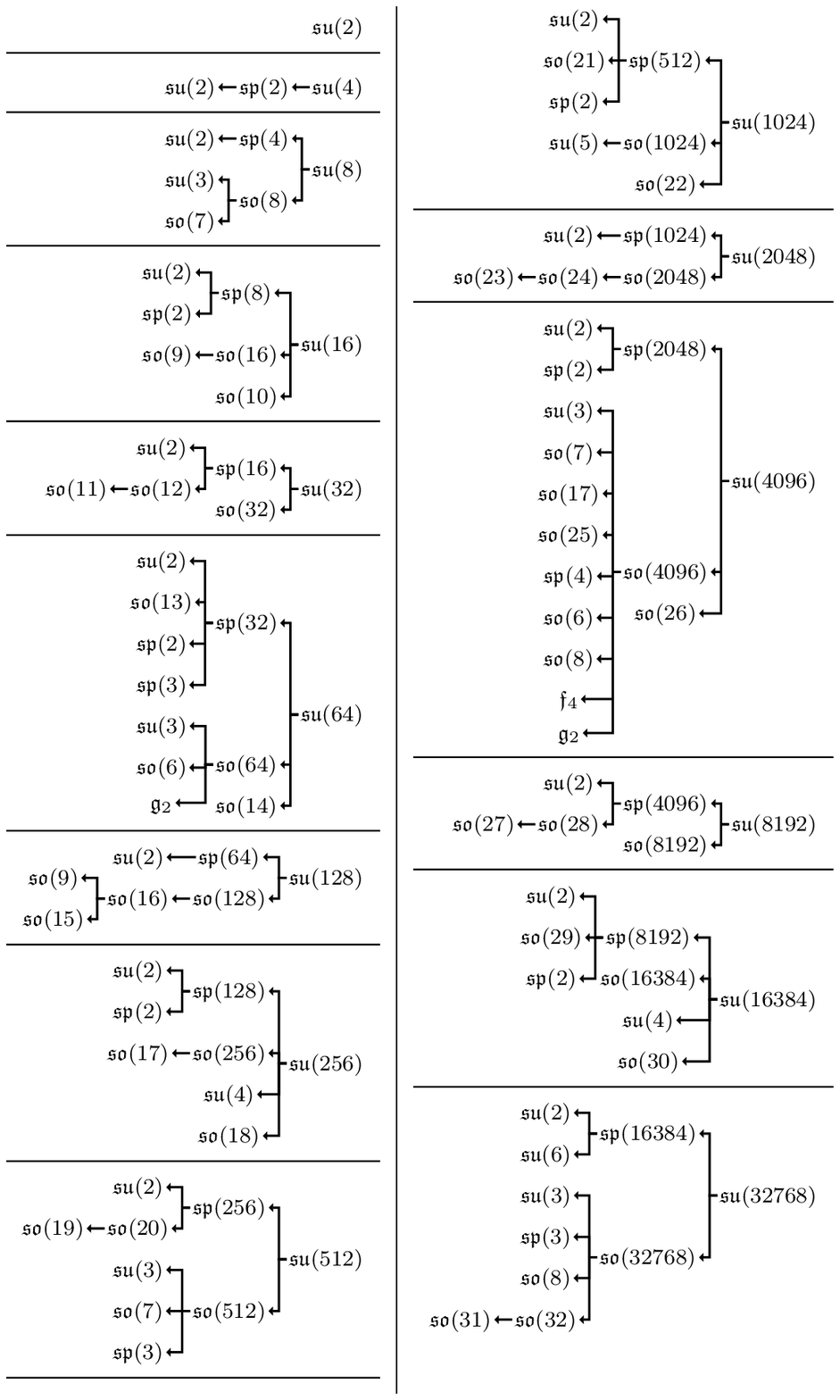}\\[1mm] 
\hline\hline
\end{tabular}
\end{center}
\end{table*}

In the set of irreducible simple subalgebras of $\su(N)$, the subalgebras $\usp(N/2)$ 
with $N$ even and $\so(N)$ play a particularly important role. 
For $N\geq 5$, we discuss the irreducible simple subalgebras of 
$\su(N)$ for $N$ 
even and odd.
If $N\geq 5$ is even, then $\su(N)$ has both
$\usp(N/2)$ and $\so(N)$ as irreducible simple subalgebras.
In addition, $\su(2)\subset\usp(N/2)$ occurs as irreducible
simple subalgebra. We consider two types of trivial cass.
First, if $N\geq 5$ is even and if $\usp(N/2)$, $\so(N)$, and 
$\su(2)\subset\usp(N/2)$ are the only proper irreducible simple subalgebras,
then we say the case is trivial. A trivial example is given by 
$\su(12)$ in Tab.~\ref{tab:su_subalg}. 
If $N\geq 5$ is odd, then $\so(N)$ is an irreducible simple subalgebra 
of $\su(N)$ but $\usp(N/2)$ is not (as $N/2$ is not an integer).
Moreover, $\su(2)\subset\so(N)$ occurs as irreducible
simple subalgebra. 
Second, if $N\geq 5$ is odd and if $\so(N)$ as well as
$\su(2)\subset\so(N)$ are the only proper irreducible simple subalgebras,
then we say the case is trivial.
Examples of such trivial cases are given by 
$\su(5)$, $\su(9)$, $\su(11)$, and $\su(13)$ in Tab.~\ref{tab:su_subalg}.
The irreducible subalgebras $\usp(N/2)$ and $\so(N)$
correspond to the symmetric spaces 
 $\mathrm{SU}(N)/\mathrm{Sp}(N/2)$ and
$\mathrm{SU}(N)/\mathrm{SO}(N)$. These are two of three possible 
symmetric spaces~\cite{Helgason78}
of $\mathrm{SU}(N)$, where the third type does not correspond to a semi-simple subalgebra of $\su(N)$.

We call a representation $\phi$ of a subalgebra $\fk$ symplectic [orthogonal] 
if the subalgebra $\fk$ given in the representation $\phi$ is conjugate to a subalgebra 
of $\usp(N/2)$ [$\so(N)$]. If the representation is neither symplectic nor orthogonal, 
we term it unitary. In abuse of notation, we call
also the subalgebra $\fk$ (w.r.t.\ some fixed but unspecified representation $\phi$) 
symplectic, orthogonal, or unitary, if the respective representation $\phi$
is symplectic, orthogonal, or unitary.\footnote{
	This notation is motivated by the classification of representations 
	and subalgebras as symplectic and orthogonal in Ref.~\cite{Malcev50} and 
	in Chap.~VIII, Sec.~7.5, Def.~2 of Ref.~\cite{Bourb08b}. 
	Classifying representations and subalgebras as unitary appears to be non-standard 
	notation. Unfortunately, the respective representations are also said to be of 
	quaternionic, real, or complex type.}
We emphasise that the classification of a subalgebra
depends on the representations considered,
see also Chap.~IX, App.~II.2, Prop.~3 of Ref.~\cite{Bourb08b}.

The property of a representation to be symplectic [resp.\ orthogonal]
corresponds to the existence of an invariant (non-degenerate)
skew-symmetric [resp.\ symmetric] bilinear form on the space of 
the representation. For irreducible representations in the compact case [e.g.\ for subgroups of 
$\SU(N)$], this correspondence is an equivalence and a proof can be found in Sec.~3.11, Thm.~H of Ref.~\cite{Samelson90}. As an invariant (non-degenerate) bilinear form can either be skew-symmetric or symmetric, it follows that
the same holds for the classification of (irreducible)
symplectic, orthogonal, or unitary representations
(Chap.~IX, Sec.~7.2, Prop.~1 of Ref.~\cite{Bourb08b}):
\begin{lemma}\label{exclusive}
An irreducible representation $\phi(\fk)$ can either be symplectic, or orthogonal, or unitary.
\end{lemma}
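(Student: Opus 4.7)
The plan is to show that the three labels are mutually exclusive. Since the paper defines \emph{unitary} as ``neither symplectic nor orthogonal,'' it suffices to verify that the symplectic and orthogonal cases cannot both occur for one and the same irreducible $\phi(\fk)$. To this end I would first invoke the equivalence stated immediately before the lemma: in the compact, irreducible setting, $\phi$ is symplectic (resp.\ orthogonal) precisely when the representation space $V$ admits an invariant non-degenerate skew-symmetric (resp.\ symmetric) bilinear form. So the problem reduces to showing that $V$ cannot carry a non-zero invariant symmetric bilinear form and a non-zero invariant skew-symmetric bilinear form at the same time.

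The core step is a direct application of Schur's lemma to the space of invariant bilinear forms. Any bilinear form $B\colon V\times V\to\C$ corresponds to a linear map $\tilde B\colon V\to V^{*}$ via $\tilde B(v)(w)=B(v,w)$, and $B$ is invariant under $\phi(\fk)$ precisely when $\tilde B$ is an intertwiner between $\phi$ and its dual representation $\phi^{*}$. Passing to the complexification $\fk_{\C}$ as in the proof of Lemma~\ref{lem:centraliser}, both $\phi$ and $\phi^{*}$ remain irreducible, so Schur's lemma bounds the space of intertwiners $V\to V^{*}$ by one. Consequently, the space of invariant bilinear forms on $V$ is at most one-dimensional.

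Now suppose, for contradiction, that non-zero invariant forms $B_s$ (symmetric) and $B_a$ (skew-symmetric) both existed. By the dimension bound we would have $B_a=\lambda B_s$ for some $\lambda\in\C$; comparing $B_s(v,w)=B_s(w,v)$ with $B_a(v,w)=-B_a(w,v)$ gives $\lambda=-\lambda$, hence $\lambda=0$ and $B_a\equiv 0$, a contradiction. This rules out the simultaneous occurrence of the symplectic and orthogonal labels, which together with the definitional disjointness of \emph{unitary} yields the claim. The only subtle point I anticipate is the passage to the complex setting: Schur's lemma in its sharp one-dimensional form requires an algebraically closed ground field, so I would be careful to complexify both $\fk$ and the representation before quoting it, exactly as was done in Lemma~\ref{lem:centraliser}; the rest of the argument is then routine.
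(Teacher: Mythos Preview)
Your argument is correct and follows the same route as the paper: the paper's proof is the paragraph immediately preceding the lemma, which reduces the claim to the bilinear-form characterisation (citing Samelson) and then invokes Bourbaki for the fact that an irreducible representation admits at most one invariant non-degenerate bilinear form, necessarily symmetric or skew-symmetric. You unpack that last citation explicitly via Schur's lemma applied to $\operatorname{Hom}_{\fk}(V,V^{*})$, which is exactly the standard proof of the Bourbaki proposition; so the approaches coincide, with yours supplying the detail the paper outsources.
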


\section{From Necessary to Sufficient Conditions for Controllability\label{sec:control-suff}}

While the ramification of {\em mathematically} admissible irreducible simple candidate subalgebras 
may seem daunting, in the following we will eliminate candidates by simple means.
More precisely, we arrive at the following.

\medskip
\medskip
\begin{corollary}[Task List] 
One way of showing full controllability amounts to excluding other candidates of 
irreducible simple subalgebras, which can be\\[1mm]
(1) symplectic, i.e.\ conjugate to a subalgebra of $\usp(N/2)$,\\
(2) orthogonal, i.e.\ conjugate to a subalgebra of $\so(N)$,\\ 
(3) or unitary in the remaining cases.\\[1mm]
In particular, one has to exclude cases like the exceptional ones 
$\fe_6$, $\fe_7$, $\fe_8$ $\ff_4$, $\fg_2$. The unitary, irreducible simple subalgebras
can occur in the cases $\su(\ell+1)\subsetneq\su(N)$ ($\ell\geq 2$), 
$\so(4\ell+2)$, and $\fe_6$.
$\hfill\blacksquare$
\end{corollary}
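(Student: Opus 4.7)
The plan is to combine the Candidate List with Lemma~\ref{exclusive} to obtain the trichotomy~(1)--(3), and then to identify which of the candidate simple algebras can realise the unitary slot. First, the Candidate List restricts $\fk$ to the classical families $\su(\ell+1), \so(2\ell+1), \usp(\ell), \so(2\ell)$ and the five exceptional types. Lemma~\ref{exclusive} says that any irreducible embedding $\phi\colon\fk \hookrightarrow \su(N)$ preserves at most one non-degenerate invariant bilinear form, which is either skew-symmetric (symplectic), symmetric (orthogonal), or absent (unitary). In the first two cases the matrix condition encoding the invariance is exactly the defining condition of $\usp(N/2)$ (which forces $N$ even) and of $\so(N)$, respectively; so excluding full controllability via a proper irreducible simple subalgebra means excluding a symplectic embedding into $\usp(N/2) \subsetneq \su(N)$, an orthogonal embedding into $\so(N) \subsetneq \su(N)$, or a unitary embedding conjugate to neither.

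The unitary case requires $\phi$ to fail to be self-dual, i.e.\ $\phi \not\simeq \phi^*$, which by the standard Frobenius--Schur computation is governed by the outer automorphism group of the Dynkin diagram of $\fk$. The diagrams of $\so(2\ell+1)$, $\usp(\ell)$, $\so(4\ell)$ (with $\ell \geq 2$), $\fe_7, \fe_8, \ff_4, \fg_2$ admit only the trivial automorphism, hence every one of their irreducible representations is self-dual and therefore falls into slot~(1) or~(2). The diagrams of $\su(\ell+1)$ with $\ell \geq 2$, of $\so(4\ell+2)$, and of $\fe_6$ each carry a non-trivial involution and so admit non-self-dual irreducibles; this yields exactly the three families listed in the corollary as possible hosts of unitary irreducible proper subalgebras. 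The exceptional types $\fe_7, \fe_8, \ff_4, \fg_2$ therefore never appear in slot~(3), and in particular $\fg_2$ must always be excluded via an orthogonal check.

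The main obstacle is not the trichotomy, which is immediate from Lemma~\ref{exclusive}, but verifying that the three families $\su(\ell+1)$ ($\ell \geq 2$), $\so(4\ell+2)$, $\fe_6$ indeed admit \emph{proper} unitary irreducible embeddings into $\su(N)$ for the values of $N$ at hand, rather than merely non-self-dual irreducibles in the abstract. One provides the examples explicitly---the standard representation of $\su(\ell+1)$ for $\ell \geq 2$, the $27$-dimensional representation of $\fe_6$, and any highest-weight representation of $\so(4\ell+2)$ whose highest weight is not fixed by the diagram involution---and uses the Weyl dimension formula (computed e.g.\ via \textsf{LiE}~\cite{LIE222} or \textsf{MAGMA}~\cite{MAGMA}, as already invoked in Sec.~\ref{sec:irred-sub-su}) to match dimensions. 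Separating orthogonal from symplectic among the self-dual candidates reduces to the sign of the Frobenius--Schur indicator, for which the highest-weight tables of Chap.~VIII, Sec.~7.5 and Chap.~IX, App.~II of Ref.~\cite{Bourb08b} give the answer, completing the task list.
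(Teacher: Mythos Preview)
Your overall strategy---reduce to self-duality via the trichotomy of Lemma~\ref{exclusive}, then sort simple algebras by whether all their irreducibles are self-dual---is sound and is essentially what underlies Malcev's classification. But there is a factual slip in your key step: you claim that the Dynkin diagrams of $\so(4\ell)$ (i.e.\ type $D_{2\ell}$) ``admit only the trivial automorphism''. This is false. Every $D_n$ with $n\geq 4$ has a nontrivial $\mathbb{Z}/2$ diagram automorphism swapping the two spin nodes, and $D_4$ even has automorphism group $S_3$. The correct statement is that for $B_\ell$, $C_\ell$, $D_{2\ell}$, $E_7$, $E_8$, $F_4$, $G_2$ the longest Weyl element satisfies $-w_0=\mathrm{id}$ on the weight lattice, so that the duality involution $\lambda\mapsto -w_0(\lambda)$ is trivial and every irreducible is self-dual. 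Self-duality is governed by $-w_0$, not by the full outer automorphism group; these coincide only for $A_\ell$, $D_{2\ell+1}$, and $E_6$. With this correction your argument goes through.

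Your final paragraph also slightly misreads the task. The corollary is a \emph{restriction}: unitary irreducible simple subalgebras, \emph{if} they occur, must be of the three listed types. You do not need to exhibit explicit proper embeddings for each $N$; that is the content of the branching tables in Sec.~\ref{sec:irred-sub-su}, not of this corollary.

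For comparison, the paper does not argue via $-w_0$ at all. It simply invokes Malcev's classification (stated in full as Theorem~\ref{Malcev} in Appendix~\ref{incl_rel}), which gives explicit highest-weight criteria for each simple type to be symplectic, orthogonal, or unitary; from that table one reads off directly that only $\su(\ell+1)$ with $\ell\geq 2$, $\so(4\ell+2)$, and $\fe_6$ admit unitary irreducibles. Your route is more conceptual and avoids the case-by-case table, at the cost of needing the (standard but nontrivial) identification of $-w_0$ for each type.
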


\medskip
In what follows, the plan is to make use of the fact that in Tabs.~\ref{tab:su_subalg} 
and \ref{tab:two}, most of the irreducible subalgebras are symplectic or orthogonal.
The symplectic and orthogonal ones (including their nested subalgebras!)
will be excluded by merely solving simultaneous systems of linear homogeneous equations, 
which will also exclude the exceptional algebras $\fe_7$, $\fe_8$, $\ff_4$, and 
$\fg_2$, just leaving $\fe_6$.
It appears that for systems of dimension $2^n$, irreducible representations 
of $\fe_6$ cannot be an irreducible simple subalgebra
of $\su(2^n)$ without being a subalgebra to 
an intermediate orthogonal or symplectic algebra.

In principle, the task of identifying the dynamic Lie algebra can also be solved
by algorithms \cite{deGraaf00,Roozemond} available in the computer algebra system  
{\sf MAGMA}~\cite{MAGMA}. 
Yet, here we focus on exploiting algorithms that are more efficient, since they
boil down to solving systems of homogeneous linear equations, which currently
can---in general---be carried to matrix sizes of about $6\cdot 10^4 \times 6\cdot 10^4$ 
(and in extreme cases to $10^5 \times 10^5$)\cite{Huckle}. 
So the algorithms presented here aim at the more specific
task of distinguishing $\su(N)$ from its proper irreducible simple subalgebras, a
task our algorithms are more efficient in.

\subsection{Symplectic and Orthogonal Subalgebras\label{sec:sympl_orth}}
In order to decide on conjugation to irreducible subalgebras which are
symplectic and orthogonal, we need more detail. 
To this end\footnote{Preliminary results were given in the conference papers \cite{SHSZ:2010,ZSSH:2010}.},
recalling the following Lemma will prove useful to apply 
the lines of~\cite{Obata58} in streamlined form leading to an
explicit algorithm.

\begin{lemma}\label{lem:ssym}
(1) Every unitary symmetric matrix $S=S^t\in U(N,\C)$ is unitarily $t$-congruent to the identity, 
i.e.~$S=T^t \unity\, T$ with $T$ unitary.\\[1mm]
(2) Every unitary skew-symmetric matrix $S=-S^t\in U(N,\C)$ with $N$ even
is unitarily $t$-congruent to $J$, i.e.~$S=T^t J T$ with $T$ unitary
and 
\begin{equation}\label{eqn:def-J}
J:= \begin{pmatrix}
0 &-\unity_{N/2}\\
\unity_{N/2} & 0
\end{pmatrix}.
\end{equation}
\end{lemma}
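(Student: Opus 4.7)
The plan is to reduce both parts to a simultaneous real (block-)diagonalization of commuting real matrices obtained by splitting $S = A + iB$ into its real and imaginary parts. \emph{For part (1)}, complex symmetry makes $A, B$ real symmetric, and using $S^\dagger = \overline{S^t} = \bar S$, unitarity reads $S\bar S = \unity$; separating real and imaginary parts of $(A+iB)(A-iB) = \unity$ gives $A^2 + B^2 = \unity$ together with $[A,B]=0$. Commuting real symmetric matrices admit a common real orthogonal diagonalization, so there is a real orthogonal $O$ with $\Lambda := O^t S O$ complex diagonal; unitarity forces entries $\Lambda_{jj} = e^{i\theta_j}$. Setting $D := \diag(e^{i\theta_j/2})$ one has $D = D^t$ and $D^2 = \Lambda$, whence $S = O D^t D O^t = T^t T$ for $T := D O^t$ unitary, which is the desired $t$-congruence to $\unity$.

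\emph{Part (2)} proceeds along the same lines. Now $S^t = -S$ makes $A, B$ real skew-symmetric, and $S^\dagger = -\bar S$ combined with $S^\dagger S = \unity$ yields $S \bar S = -\unity$, hence $A^2 + B^2 = -\unity$ and $[A,B]=0$. Commuting real skew-symmetric matrices admit a common real orthogonal block-diagonalization with aligned $2 \times 2$ skew blocks (e.g.\ by simultaneously diagonalizing the commuting Hermitian matrices $iA, iB$ over $\C$ and pairing conjugate eigenspaces compatibly with the real structure). Hence there is a real orthogonal $O$ for which $O^t A O$ and $O^t B O$ are block-diagonal with blocks $\bigl(\begin{smallmatrix} 0 & a_j \\ -a_j & 0 \end{smallmatrix}\bigr)$ and $\bigl(\begin{smallmatrix} 0 & b_j \\ -b_j & 0 \end{smallmatrix}\bigr)$, respectively; unitarity of $S$ gives $a_j^2 + b_j^2 = 1$, so $c_j := a_j + ib_j = e^{i\phi_j}$. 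With $D := \bigoplus_j e^{i\phi_j/2}\,\unity_2$ and $J' := \bigoplus_j \bigl(\begin{smallmatrix} 0 & 1 \\ -1 & 0 \end{smallmatrix}\bigr)$, a block-by-block check (using $D^t = D$ and that scalar factors commute through the blocks) gives $O^t S O = D^t J' D$. Finally, the permutation $\pi(i) = 2i$ for $1 \leq i \leq N/2$ and $\pi(i) = 2(i - N/2) - 1$ otherwise has associated matrix $P$ satisfying $P^t J' P = J$ by direct entry-wise comparison, so that $T := P^t D O^t$ is unitary and delivers $S = T^t J T$.

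The main obstacle I anticipate is the simultaneous real block-diagonalization of the commuting real skew-symmetric pair $(A,B)$. While a standard consequence of their being normal and commuting and of the real structure being respectable on conjugate eigenspaces, a fully self-contained alternative would be to invoke Youla's (equivalently Hua's) normal form for complex skew-symmetric matrices directly on $S$: this produces the block form $J'$ at once (with nonzero entries forced to unit modulus by unitarity), reducing the whole of part (2) to the diagonal unitary rescaling by $D$ and the permutation $P$ above.
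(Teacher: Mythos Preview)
Your argument is correct, and it is considerably more detailed than the paper's own treatment: the paper simply remarks that both parts follow by singular-value decomposition and cites Hua's 1944 results (Theorems~5 and~7 there). You instead give a self-contained derivation via the real--imaginary split $S=A+iB$, reducing everything to simultaneous real (block-)diagonalization of commuting real symmetric respectively skew-symmetric matrices. For part~(1) this is essentially a direct proof of the Autonne--Takagi factorization in the special unitary case; for part~(2) it amounts to a hands-on derivation of the Youla/Hua normal form for unitary skew-symmetric matrices.

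The only genuinely nontrivial step---which you already flag---is the simultaneous \emph{aligned} $2\times 2$ block-diagonalization of the commuting real skew-symmetric pair $(A,B)$. Your sketch (diagonalize the commuting Hermitian pair $iA,iB$ over $\C$, then pair each joint eigenvector $v$ with its conjugate $\bar v$) is the right idea; it is worth noting explicitly that unitarity of $S$ forces $(a_j,b_j)\neq(0,0)$ in every block, so that $v$ and $\bar v$ are always linearly independent and the pairing is well defined even in the presence of degeneracies. Your fallback---invoking Youla's normal form directly on $S$---is exactly what the paper's citation of Hua does, so either route closes the argument. The main trade-off: the paper's one-line appeal to Hua/SVD is shorter and standard, while your approach is elementary and makes the unitary structure of $T$ fully explicit, at the cost of the block-diagonalization lemma.
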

\begin{proof} 
(1) Follows by singular-value decomposition and goes back to Hua (\cite{Hua44}, Thm.~5).
(2) Follows likewise from the same source ({\em ibid.}, Thm.~7).
\hfill$\blacksquare$
\end{proof}

\medskip
\begin{lemma}\label{lem:ssym2}
Suppose $\fk\subset\su(N)$ is simple and $J$ is defined as in Eqn.~\eqref{eqn:def-J}.
Then the element $iH\in\fk$\\[1mm]
 (1) is unitarily conjugate to $i \widetilde H\in \so(N)$,
        where $\widetilde H^t = - \widetilde H$,
	if and only if there exists a symmetric unitary $S$ (so $S\bar{S}=+\unity_N$) satisfying
	$SH + H^t S =0$;\\[1mm]
(2) is unitarily conjugate to $i \widetilde H\in\usp({N}/{2})$ (with $N$ even),
        where $J \widetilde H  = - \widetilde H^t J$,
	if and only if there is a skew-symmetric unitary $S$ (so $S\bar{S}=-\unity_N$) satisfying
	$SH + H^t S =0$.
\end{lemma}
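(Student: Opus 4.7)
The plan is to prove (1) and (2) in parallel by a symmetric two-step argument: the forward implication constructs the bilinear form $S$ explicitly from the conjugating unitary, while the converse inverts the construction by invoking the factorisations supplied by Lemma~\ref{lem:ssym}.

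First I would handle the forward direction. Assuming $UHU^\dagger = \widetilde H$ for a unitary $U$ and $\widetilde H$ of the required type, I would set $S := U^t U$ for case (1) and $S := U^t J U$ for case (2). Using only that $U$ is unitary (so $U^t \bar U = \unity_N$ and $\bar U^t U = \unity_N$) together with $J^t = -J$ and $J^2 = -\unity_N$, a direct manipulation gives $S^t = \pm S$, $S\bar S = \pm\unity_N$ with the advertised signs, and $SS^\dagger = \unity_N$. The key intertwining $SH + H^t S = 0$ then falls out of transposing the conjugation relation $UHU^\dagger = \widetilde H$ and applying either the skew-symmetry of $\widetilde H$ or the symplectic condition $J\widetilde H = -\widetilde H^t J$.

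For the converse, Lemma~\ref{lem:ssym} provides exactly the needed factorisation: in case (1) a symmetric unitary $S$ with $S\bar S = \unity_N$ can be written as $S = T^t T$ for a unitary $T$, and in case (2) a skew-symmetric unitary $S$ with $S\bar S = -\unity_N$ can be written as $S = T^t J T$. Defining $\widetilde H := T H T^\dagger$ and substituting the factorisation into $SH + H^t S = 0$, I would multiply by $\bar T = (T^t)^{-1}$ on the left and by $T^\dagger$ on the right to read off $\widetilde H^t = -\widetilde H$ in case (1), respectively $J\widetilde H = -\widetilde H^t J$ in case (2).

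The last step is to verify that $i\widetilde H$ genuinely sits in the compact real form: conjugation by a unitary $T$ preserves Hermiticity and tracelessness of $H$, so in case (1) skew-symmetry together with Hermiticity forces $i\widetilde H$ to be real skew-symmetric (hence in $\so(N)$), and in case (2) the symplectic identity together with Hermiticity places $i\widetilde H$ in $\usp(N/2)$. I do not anticipate a serious obstacle: the proof is disciplined bookkeeping of transposes and complex conjugates, with Lemma~\ref{lem:ssym} doing the only non-trivial work. The main pitfall is keeping the sign conventions $S\bar S = \pm\unity_N$ consistent throughout and confirming that the resulting $\widetilde H$ really lies in the compact real form rather than merely in its complexification.
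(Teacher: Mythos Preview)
Your proposal is correct and follows essentially the same route as the paper: construct $S=U^tU$ (resp.\ $S=U^tJU$) for the forward direction, and invoke Lemma~\ref{lem:ssym} to factorise $S$ and recover the conjugating unitary for the converse. Your closing remark that $i\widetilde H$ lands in the compact real form (not merely its complexification) is a worthwhile addition that the paper leaves implicit.
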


\begin{proof}
First observe that whenever there is a unitary $T$ such that
$THT^\dagger=:\widetilde H$ with $L\widetilde H = - \widetilde H^t L$, 
this is equivalent to 
$$LTHT^\dagger = -(THT^\dagger)^tL \Leftrightarrow
LTH = -\bar{T} H^t T^t L T \Leftrightarrow
(\underbrace{T^tLT}_{S})H = -H^t (\underbrace{T^tLT}_{S}).$$
Now it is easy to establish that the conditions are sufficient (``$\Rightarrow$''):\\[1mm]
(1) Setting $L:=\unity_N$ and $S:=T^tT$ gives $S \bar{S}=T^tT T^\dagger \bar{T} = +\unity_N$.
Thus $S=S^t$ is unitary, complex {\em symmetric} and satifies $SH=-H^tS$.\\[1mm]
(2) Setting $L:=J$ and $S:=T^tJT$ gives 
	$S\bar{S}=T^tJT T^\dagger J \bar{T} = -\unity_N$ by $J^2=-\unity_N$.
	Thus $S=-S^t$ is unitary, {\em skew-symmetric} and satifies $SH=-H^tS$.

\smallskip
\noindent
Moreover the conditions are also necessary (``$\Leftarrow$'') 
by Lemma~\ref{lem:ssym}, because with appropriate respective unitaries $T$\\[1mm]
(1) for $L=\unity_N$ any symmetric unitary matrix $S$ 
can be written as $S=T^tT$; \\[1mm]
(2) for $L=J$ any skew-symmetric unitary matrix $S$ 
can be written as $S=T^tJT$.\\
\phantom{XX}\qquad\hfill$\blacksquare$
\end{proof}

\begin{table}[Ht!]
\begin{center}
\begin{tabular}{llll}
\hline\hline\\[-1mm]
\multicolumn{4}{l}{{\bf Algorithm 3}:
Check conjugation to subalgebras of $\so(N)$ or $\usp(N/2)$}\\[1mm]
\hline\\[-1mm]
&{\em Input:} Hamiltonians $I:=\{i H_d; i H_1,\dots, i H_m\} \subseteq \su(N)$\\[1mm]
&1. For each Hamiltonian $H \in I$ determine all non-singular\\
&\phantom{1.}  solutions to the homogeneous linear equation\\
&\phantom{1.} $\mathcal S_H:=\{ S \in SL(N) | S H + H^t S = 0\}$\\[0mm]
&\phantom{1. $\mathcal S_H:$}$= \{ S \in SL(N) | (H^t\otimes\unity + \unity\otimes H^t)\VEC S  = 0\}$\\[1mm]
&2. $\mathcal S  := \bigcap_{H{\in}I} \mathcal S_H$\\[1mm]
&{\em Output:} (a) $ \exists S \in \mathcal{S} \text{ s.t.\ } S \Bar S = +\unity$ $\Leftrightarrow$ $\fk \subseteq \so(N)$\\
&\phantom{\em Output:} (b) $ \exists S \in \mathcal{S} \text{ s.t.\ } S \Bar S = -\unity$ $\Leftrightarrow$ $\fk \subseteq \usp(N/2)$\\
&\phantom{\em Output:} (c) \hspace{1pt}$ \nexists S \in \mathcal{S}$ $\Rightarrow$ $\fk \not\subseteq \so(N)$ and $\fk \not\subseteq \usp(N/2)$\\
&\phantom{Output:} The cases (a) and (b) are mutually exclusive\\
&\phantom{Output:} if the centraliser of 
$I$ is trivial.\\[1mm] 
\hline\\[-1mm]
& The complexity is roughly $\mathcal{O}(N^6)$, as in Liouville space \\ 
& $N^2$ equations have to be solved by $LU$ decomposition ($N:=2^n$).\\[1mm]
\hline\hline
\end{tabular}
\end{center}
\end{table}

\medskip
In the context of filtering simple subalgebras, 
Lemma~\ref{lem:ssym2} can be turned into the powerful {\bf Algorithm~3}.
It boils down to checking a system of homogeneous linear equations for
solutions $S$ satisfying $SH_\nu=-H_\nu^tS$ for {\em all $iH_\nu\in\fk$ simultaneously}:
if $S$ is a solution with $S \bar{S}=+\unity$, the subalgebra $\fk$ of $\su(N)$ generated by
the $\{i H_{\nu}\}$ is conjugate to a subalgebra of $\so(N)$,
while in case of $S \bar{S}=-\unity$, $\fk$ is conjugate to a subalgebra of $\usp({N}/{2})$.

\begin{remark}
By irreducibility of $\fk$ (via {\bf Algorithm~2}), those subgroups generated by
$\fk\subset\su(2^n)$ with a unitary representation equivalent to its complex conjugate
are limited to orthogonal and  symplectic ones: it 
follows from Schur's Lemma that $S \bar{S}=\pm \unity$ are in fact {\em the only types of solutions}
for $SH=-H^tS$ with $iH\in\fk$,
as nicely explained in Lem.~3 of Ref.~\cite{Obata58}.
Lemma~\ref{exclusive} (of this paper) explains why these solutions
are mutually exclusive.
Due to the irreducibility, the matrix $S$ is unique up to a scalar factor
$c \in \C$ with $c\bar{c}=1$.
\end{remark}

Conjugation to the  symplectic algebras has also been treated in Ref.~\cite{BW79}
by solving a system of linear equations, while Ref.~\cite{SchiSoLea02} resorted to 
determining eigenvalues for discerning the unitary case from conjugate symplectic or orthogonal
subalgebras. --- The results can be summarised and extended as follows:
\begin{theorem}[Candidate Filter I] \label{thm:filter1}
Consider a set of Hamiltonians $\{i H_\nu\}$ generating the dynamic algebra
$\fk \subseteq \su(N)$ with the promise 
(by {\bf Algorithm~2} and e.g.\ due to a connected control system) that
$\fk\subseteq \su(N)$ is given in an irreducible representation 
and $\fk$ is simple.
If in addition {\bf Algorithm~3} has but an empty set of solutions, then 
$\fk$ is neither conjugate to a simple subalgebra 
of $\usp({N}/{2})$ nor of $\so(N)$.
In particular, $\fk$ is none of the following simple Lie algebras:
 $\fe_7$, $\fe_8$, $\ff_4$, or $\fg_2$.
\end{theorem}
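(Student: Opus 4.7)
The proof naturally splits into two pieces: the conjugation-exclusion statement, and the more specific claim about the exceptional algebras.

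First I would verify the conjugation part by combining Lemma~\ref{lem:ssym2} with the structure of \textbf{Algorithm~3}. For any fixed Hamiltonian $iH_\nu$, the condition $SH_\nu+H_\nu^t S=0$ is linear in $S$; vectorising gives exactly the homogeneous system $(H_\nu^t\otimes\unity+\unity\otimes H_\nu^t)\opvec(S)=0$ that Algorithm~3 solves. Intersecting these solution spaces over $\nu$ produces $\mathcal{S}$, so $\mathcal{S}=\emptyset$ says there is no matrix (singular or not) that anti-intertwines every generator of $\fk$. By Lemma~\ref{lem:ssym2}(1), conjugacy of $\fk$ into $\so(N)$ requires a unitary symmetric solution, and by part~(2), conjugacy into $\usp(N/2)$ requires a unitary skew-symmetric one. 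Both prerequisites are contained in $\mathcal{S}\neq\emptyset$, so emptiness of $\mathcal{S}$ immediately excludes either embedding. (Note that passing to unitary representatives of a putative solution is harmless: irreducibility of $\fk$ together with Schur's lemma forces any nonzero solution to be a scalar multiple of a unitary matrix, so if $\mathcal{S}=\emptyset$ then there is no unitary solution either.)

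The second half, excluding $\fe_7,\fe_8,\ff_4,\fg_2$, reduces to the representation-theoretic fact that each non-trivial irreducible representation of these four exceptional algebras is either of real (orthogonal) type or of quaternionic (symplectic) type -- equivalently, they admit no unitary irreducible representations in the sense of Lemma~\ref{exclusive}. This is a standard consequence of the Frobenius--Schur indicator computed from the highest-weight lattice (for instance via Bourbaki's Planches in \cite{Bourb08b}, or equivalently by the tables in the Appendix~\ref{app:repr} of the present paper). Granted this fact, any irreducible embedding of one of $\fe_7,\fe_8,\ff_4,\fg_2$ into $\su(N)$ would automatically conjugate into either $\so(N)$ or $\usp(N/2)$. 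But that conclusion is now ruled out by the first half, so $\fk$ cannot be isomorphic to any of these four algebras.

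The main obstacle is the second step, specifically justifying that $\fg_2,\ff_4,\fe_7,\fe_8$ really have no unitary-type irreducible representations at all. The first step is essentially bookkeeping on top of Lemma~\ref{lem:ssym2}; the exceptional-algebra claim is a separate representation-theoretic input. My plan is to discharge it by pointing at the appendix's classification of irreducible representations of simple Lie algebras by Frobenius--Schur type, where this property of $\fg_2,\ff_4,\fe_7,\fe_8$ is recorded explicitly. Once that classification is cited, the theorem is a direct combination of it with the Algorithm~3 / Lemma~\ref{lem:ssym2} correspondence. Note that $\fe_6$ is deliberately absent from the conclusion: $\fe_6$ does possess unitary-type irreducible representations (its $27$-dimensional standard representation and its dual), so this filter alone cannot exclude it, and a separate argument (taken up later in the paper) is required.
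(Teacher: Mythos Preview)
Your proposal is correct and follows essentially the same approach as the paper's own proof: invoke Lemma~\ref{lem:ssym2} to handle the $\so(N)$/$\usp(N/2)$ exclusion from emptiness of $\mathcal{S}$, then cite the Malcev-type classification (recorded in the paper as Theorem~\ref{Malcev} in Appendix~\ref{incl_rel}) to conclude that every irreducible representation of $\fe_8,\ff_4,\fg_2$ is orthogonal and every one of $\fe_7$ is orthogonal or symplectic. Your added remarks---spelling out the vectorised linear system, invoking Schur's lemma to explain why any nonzero solution is automatically invertible, and noting why $\fe_6$ escapes the filter---are helpful elaborations but do not change the logical route.
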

\begin{proof}
The cases $\so(N)$ and $\usp(N)$ are settled by Lemma~\ref{lem:ssym2}.
The cases $\fe_7$, $\fe_8$, $\ff_4$, and $\fg_2$ follow from the elaborate classification
of Malcev~\cite{Malcev50} (see also, e.g., \cite{Dynkin57,BP70,McKay81} and 
Theorem~\ref{Malcev} in Appendix~\ref{incl_rel}), as
an irreducible representation of $\fe_8$, $\ff_4$, or $\fg_2$ is always conjugate to
a subalgebra of $\so(N)$, while an irreducible representation of $\fe_7$ is conjugate
either to a subalgebra of $\so(N)$ or of $\usp(N/2)$.
\hfill$\blacksquare$
\end{proof}

\subsection{Unitary Subalgebras\label{complex}}

It also follows from Malcev~\cite{Malcev50} 
(again, see also \cite{Dynkin57,BP70,McKay81} and Theorem~\ref{Malcev} in Appendix~\ref{incl_rel})
that only the subalgebras
$\su(\ell+1)$ ($\ell\geq 2$), $\so(4\ell+2)$, and  $\fe_6$ can have
unitary representations. 
One can immediately deduce
from the Tables~\ref{tab:su_subalg} and \ref{tab:two}
the following
\begin{corollary}
The Lie algebras $\su(2^n)$ do not possess (proper) unitary, irreducible simple subalgebras if
$n\in\{1,2,3,5,7,9,11,13,15\}$. In these cases ($n\neq 1$) and
under the conditions of Theorem~\ref{thm:filter1},
{\bf Algorithm~3} provides a necessary and sufficient
criterion for full controllability. \hfill$\blacksquare$
\end{corollary}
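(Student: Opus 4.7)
The plan is to split the corollary into two assertions and handle them in sequence. First I would establish the structural fact that for each of the listed values of $n$, the algebra $\su(2^n)$ admits no proper unitary irreducible simple subalgebra. Second, combining this with Theorem~\ref{thm:filter1}, I would deduce that Algorithm~3 detects full controllability in a necessary and sufficient manner whenever $n \neq 1$.

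For the first assertion I would invoke Malcev's theorem, already used in the proof of Theorem~\ref{thm:filter1} and detailed in Appendix~\ref{incl_rel}: the only compact simple Lie algebras admitting unitary (i.e.\ non\nb{}self\nb{}dual) irreducible representations are $\su(\ell+1)$ with $\ell \geq 2$, $\so(4\ell+2)$, and $\fe_6$. Any proper unitary irreducible simple subalgebra of $\su(2^n)$ must therefore fall into one of these three families, and inspection of Tables~\ref{tab:su_subalg} and \ref{tab:two} shows that none occurs properly in $\su(2^n)$ for $n \in \{1,2,3,5,7,9,11,13,15\}$. Behind this inspection lies a dimension count via Weyl's formula: the half\nb{}spin representations of $\so(4\ell+2)$ have dimension $2^{2\ell}$ and hence force $n = 2\ell$ to be even, ruling out all odd $n$; $\fe_6$'s smallest non\nb{}trivial irreducible representation has dimension $27$, and no irrep of $\fe_6$ has dimension equal to $2^n$ in the relevant range; and the non\nb{}self\nb{}dual irreducible representations of $\su(\ell+1)$ likewise miss every $2^n$ on the listed values (this is the exhaustive content of the tables).

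For the second assertion, under the hypotheses of Theorem~\ref{thm:filter1} the dynamic algebra $\fk \subseteq \su(2^n)$ is simple and irreducibly represented (by Lemmas~\ref{lem:centraliser}, \ref{lem:semi-simplicity} and Theorem~\ref{pair_connected}). If Algorithm~3 returns the empty solution set, then by Lemma~\ref{lem:ssym2} $\fk$ is neither orthogonal nor symplectic, while Theorem~\ref{thm:filter1} rules out the exceptional cases $\fe_7$, $\fe_8$, $\ff_4$, $\fg_2$; hence $\fk$ is unitary, and the first assertion forces $\fk = \su(2^n)$, i.e.\ full controllability. Conversely, for $n \geq 2$ the algebra $\su(2^n)$ in its standard representation is unitary, since $\dim \su(2^n) = 2^{2n}-1$ exceeds both $\dim \so(2^n)$ and $\dim \usp(2^{n-1})$, so Lemma~\ref{exclusive} prevents $\su(2^n)$ from being conjugate to any subalgebra of them; hence $\fk = \su(2^n)$ implies that Algorithm~3 returns the empty set. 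The main obstacle is the exhaustive verification at $n=15$, where one must trust the tabulated enumeration carried out via Dynkin's classification and Weyl dimension computations with {\sf LiE} and {\sf MAGMA}. The exclusion of $n=1$ is essential: $\su(2) \cong \so(3) \cong \usp(1)$ is simultaneously orthogonal and symplectic in its standard $2$\nb{}dimensional representation, so Algorithm~3 would return a non\nb{}trivial $S$ even in the fully controllable case.
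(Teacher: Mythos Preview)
Your proposal is essentially correct and mirrors the paper's approach: the paper itself simply states that the corollary is ``immediately deduced from the Tables~\ref{tab:su_subalg} and \ref{tab:two}'', and you do the same, supplementing the table inspection with an explicit unpacking of why Algorithm~3 becomes both necessary and sufficient once proper unitary subalgebras are excluded.

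Two minor inaccuracies in your explanatory remarks are worth flagging. First, your aside that ``the half-spin representations of $\so(4\ell+2)$ have dimension $2^{2\ell}$ and hence force $n=2\ell$ to be even'' is incomplete as a standalone justification: by Theorem~\ref{Malcev}, \emph{any} irreducible representation of $\so(4\ell+2)$ with $x_{2\ell}\neq x_{2\ell+1}$ is unitary, not only the half-spin ones, so other dimensions are a priori possible. The actual exclusion rests on the exhaustive enumeration in the tables, as you correctly acknowledge. Second, your closing claim that the standard $2$-dimensional representation of $\su(2)$ is ``simultaneously orthogonal and symplectic'' contradicts Lemma~\ref{exclusive}: that representation is symplectic only (this is the content of $\su(2)\cong\usp(1)$; the isomorphism with $\so(3)$ refers to the $3$-dimensional adjoint representation). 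Your conclusion that $n=1$ must be excluded remains correct---Algorithm~3 returns a nontrivial $S$ with $S\bar S=-\unity$ even though $\fk=\su(2)$ is the full algebra---but the reason is simply that the standard representation of $\su(2)$ is symplectic, not that it is both.
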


\medskip
We checked by explicit computations that $\fe_6$ does not occur
as a unitary, irreducible simple subalgebra of $\su(2^n)$ for $n\leq 100$, i.e.\ for qubit systems
with up to $100$ qubits. Thus one might conjecture that $\fe_6$ does not occur as a unitary, irreducible simple subalgebra for qubit systems in general.

We present an example of a control system whose dynamic algebra is 
a (proper) unitary subalgebra of $\su(2^4)$:

\begin{example}\label{ex:counter1}
Consider a bilinear control system on $\su(16)$ with four subsystems given by 
$\su(2) \hoplus \su(2) \hoplus \su(2) \hoplus \su(2)$. The local dynamic  algebra
is given by $$\langle i\mathrm{XIII}, i\mathrm{YIII}, i\mathrm{ZIII}, i\mathrm{IIIX}, i\mathrm{IIIY}, i\mathrm{IIIZ}
\rangle_{\rm Lie}.$$
In addition, we have a drift Hamiltonian 
$H_d=\mathrm{XXII}+\mathrm{YYII}+\mathrm{IXXI}+\mathrm{IYYI}+\mathrm{IIXX}+\mathrm{IIYY}$
(Heisenberg-{\XX} interaction).
The control system 
\begin{center}
\includegraphics{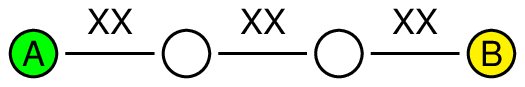}
\end{center}
is connected and  acts irreducibly.
The dynamic algebra $\fk=\so(10)$ is simple and
a (proper) unitary subalgebra of $\su(16)$. 
\end{example}

\subsection{System Algebras Comprising Local Actions $\su(2)^{\oplus n}$}

We now discuss the set of local unitary transformations  
$\mathrm{SU}(2)^{\otimes n}\subseteq \mathrm{SU}(2^n)$
and its Lie algebra $\su(2)^{\oplus n}\subseteq \su(2^n)$ 
where both are given in their respective standard representation, i.e.~as
$n$-fold Kronecker product and $n$-fold Kronecker sum (see Sec.~\ref{sec:tensor})
$$\su(2)\hoplus\su(2)\hoplus\cdots\hoplus\su(2).$$
What is the classification of $\su(2)^{\oplus n}$ w.r.t.\ symplectic, orthogonal, and 
unitary subalgebras? 
We obtain from Thm.~3 of Ref.~\cite{ZGB} (see also~\cite{BB04,BDMB}):
\begin{lemma}\label{lem:loc}
For the algebra $\su(2)^{\oplus n}$ given in its (irreducible) standard representation there are
two cases: 
   (1) if $n$ is odd, it is a symplectic  subalgebra of $\su(2^n)$ in the sense of being
       conjugate to a subalgebra of $\usp(2^{n-1})$, and 
   (2) if $n$ is even, it is an orthogonal subalgebra  in the sense of being
       conjugate to a subalgebra of $\so(2^n)$.
\end{lemma}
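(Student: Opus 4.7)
The plan is to produce an explicit unitary matrix $S$ on $(\C^2)^{\otimes n}$ that verifies the hypothesis of Lemma~\ref{lem:ssym2} simultaneously for every generator of $\su(2)^{\oplus n}$ in the standard representation, and then simply read off the sign of $S\bar S$ as a function of the parity of $n$. The starting observation is that the fundamental representation of a single $\su(2)$ is symplectic (pseudoreal): the matrix $J_2:=i\sigma_y=\left(\begin{smallmatrix}0 & -1\\ 1 & \phantom{-}0\end{smallmatrix}\right)$ is unitary, skew-symmetric, and satisfies the single-factor identity
\begin{equation*}
J_2 h + h^t J_2 = 0 \quad\text{for every } h\in\su(2),
\end{equation*}
as is immediate from checking it on the three Pauli generators. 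Moreover $J_2\bar J_2 = J_2^2 = -\unity_2$.

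Next I would propagate this to the $n$-fold tensor power by setting $S:=J_2^{\otimes n}$. Any element of the standard representation of $\su(2)^{\oplus n}$ has the form $H=\sum_{k=1}^{n}\unity_2^{\otimes(k-1)}\otimes h_k\otimes \unity_2^{\otimes(n-k)}$ with $h_k\in\su(2)$. Using $(\unity^{\otimes(k-1)}\otimes h_k\otimes\unity^{\otimes(n-k)})^t=\unity^{\otimes(k-1)}\otimes h_k^t\otimes\unity^{\otimes(n-k)}$ together with the multiplicativity of the tensor product, the summand indexed by $k$ contributes
\begin{equation*}
J_2^{\otimes(k-1)}\otimes(J_2 h_k+h_k^t J_2)\otimes J_2^{\otimes(n-k)},
\end{equation*}
which vanishes by the single-factor identity. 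Summing over $k$ yields $SH+H^t S=0$ for every $H$ in the image of the representation, so $S\in\mathcal{S}$ in the sense of Algorithm~3. $S$ is manifestly unitary as the tensor power of a unitary, and its multiplication with its complex conjugate is
\begin{equation*}
S\bar S = (J_2\bar J_2)^{\otimes n} = (-\unity_2)^{\otimes n} = (-1)^n\unity_{2^n}.
\end{equation*}

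To conclude, I would invoke Lemma~\ref{lem:ssym2}: for $n$ odd, $S=J_2^{\otimes n}$ is skew-symmetric ($S^t=(-J_2)^{\otimes n}=-S$) and $S\bar S=-\unity$, placing $\su(2)^{\oplus n}$ unitarily conjugate to a subalgebra of $\usp(2^{n-1})$; for $n$ even, $S$ is symmetric ($S^t=S$) and $S\bar S=+\unity$, placing it unitarily conjugate to a subalgebra of $\so(2^n)$. Irreducibility of the standard representation of $\su(2)^{\oplus n}$ — ensuring the two alternatives are mutually exclusive via Lemma~\ref{exclusive} — is already given. There is no real obstacle here; the whole argument is driven by the observation that the pseudoreal structure of the $\su(2)$ fundamental multiplies under tensor products, so the only bookkeeping step is the parity of $n$ in $(-1)^n$.
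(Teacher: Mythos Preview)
Your argument is correct and self-contained. Rather than following the paper, which invokes the Frobenius--Schur indicator $\int_G \tr[\phi^2(g)]\,dg$ and then cites an external computation (Thm.~3 of Ref.~\cite{ZGB}) showing that this integral equals $(-1)^n$ for $\SU(2)^{\otimes n}$, you exhibit the invariant bilinear form explicitly as $S=J_2^{\otimes n}$ and verify the hypotheses of Lemma~\ref{lem:ssym2} by hand. Your route is more elementary and more in keeping with the constructive spirit of {\bf Algorithm~3}; it also produces the conjugating matrix $T$ (via Lemma~\ref{lem:ssym}) essentially for free, whereas the indicator argument is purely existential. The paper's approach, on the other hand, situates the result in the standard representation-theoretic framework and makes transparent why no unitary (complex-type) case can occur. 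One cosmetic slip: with the paper's convention $\sigma_y=\left(\begin{smallmatrix}0&-i\\ i&0\end{smallmatrix}\right)$ one has $i\sigma_y=\left(\begin{smallmatrix}0&1\\ -1&0\end{smallmatrix}\right)$, not $\left(\begin{smallmatrix}0&-1\\ 1&0\end{smallmatrix}\right)$; this sign is immaterial to the proof since either choice is real, unitary, skew-symmetric, and intertwines every $h\in\su(2)$ with $-h^t$.
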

\begin{proof}
Let $\phi$ denote an irreducible representation of a compact Lie group $\mathrm{G}$.
Then for the Frobenius-Schur indicator (Chap.~IX, App.~II.2, Prop.~4 of Ref.~\cite{Bourb08b}) 
one finds
\begin{equation*}
\int_G \mathrm{Tr}[\phi^2(g)]\; dg =
\begin{cases}
-1 & \Leftrightarrow \quad\text{$\phi$ is a {\em symplectic} representation}\\
+1 & \Leftrightarrow \quad\text{$\phi$ is an {\em orthogonal} representation}\\
\phantom{+}0 & \Leftrightarrow \quad\text{$\phi$ is a {\em unitary} representation}\\
\end{cases}
\end{equation*}
Let $\psi$ denote the standard representation of the Lie group $\mathrm{H}=\SU(2)^{\otimes n}$.
Ref.~\cite{ZGB} proves that
$\int_H \mathrm{Tr}[\psi^2(h)]\; dh=(-1)^n$. 
\hfill$\blacksquare$
\end{proof}

If the subsystems of a control system  are independently fully controllable then
it follows from Lemma~\ref{lem:loc} that some cases can be excluded:

\begin{lemma}\label{lem:loc2}
Assume that the dynamic algebra $\fk \subseteq \su(2^n)$ is irreducible and simple, and that
the subsystems $\su(2)$ are independently fully controllable 
[i.e.\ $\fk \supseteq \su(2) \hoplus \su(2) \hoplus \cdots \hoplus \su(2)$].
If $n$ is odd [resp.\ even] then $\fk$ 
is not an orthogonal [resp.\ symplectic]
subalgebra. 
\end{lemma}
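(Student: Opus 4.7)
The plan is to combine the existence/uniqueness of invariant bilinear forms with the Frobenius--Schur classification already supplied by Lemma~\ref{lem:loc}. Being symplectic (resp.\ orthogonal) means the representation admits an invariant non-degenerate skew-symmetric (resp.\ symmetric) bilinear form on $\C^{2^n}$, and by Lemma~\ref{exclusive} an irreducible representation admits at most one of these two types (up to scalar).

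First I would exploit that the standard representation of $\su(2)^{\oplus n}$ on $\C^{2^n}$, being the tensor product of $n$ copies of the defining $\su(2)$-representation on $\C^2$, is irreducible. Any bilinear form $B$ on $\C^{2^n}$ that is invariant under $\fk$ in the sense $B(k\,v,w)+B(v,k\,w)=0$ for every $k\in\fk$ is, a fortiori, invariant under the subalgebra $\su(2)\hoplus\cdots\hoplus\su(2)\subseteq\fk$. Hence the symmetry type of any $\fk$-invariant non-degenerate bilinear form is forced to coincide with the symmetry type of the essentially unique invariant bilinear form for $\su(2)^{\oplus n}$.

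Next I would invoke Lemma~\ref{lem:loc}: the standard representation of $\su(2)^{\oplus n}$ is symplectic if $n$ is odd and orthogonal if $n$ is even. Equivalently, the (up to scalar) unique non-degenerate invariant bilinear form on $\C^{2^n}$ under $\su(2)^{\oplus n}$ is skew-symmetric for $n$ odd and symmetric for $n$ even. Combining the two steps: if $\fk$ were orthogonal with $n$ odd, its $\fk$-invariant symmetric form would restrict to a non-zero symmetric invariant form for $\su(2)^{\oplus n}$, contradicting that the unique such form is skew-symmetric; and symmetrically if $\fk$ were symplectic with $n$ even. This gives precisely the two exclusions claimed.

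The only delicate point—really the one place where care is needed—is asserting that a $\fk$-invariant non-degenerate bilinear form actually exists when $\fk$ is orthogonal or symplectic, and that this form restricts to a non-zero invariant form on the smaller subalgebra (i.e.\ it does not degenerate trivially). Non-degeneracy of the restriction, however, is automatic: by Schur's lemma applied to the irreducible subalgebra $\su(2)^{\oplus n}$, the restricted form is either zero or non-degenerate, and it cannot be zero because the ambient form is non-degenerate on $\C^{2^n}$ and its symmetry type is intrinsic to the vector-space bilinear form itself, not to which algebra preserves it. Once this is spelled out, the remainder is a direct contradiction argument and no further computation is required.
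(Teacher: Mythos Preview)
Your argument is correct and is essentially the same as the paper's: both use that a $\fk$-invariant non-degenerate bilinear form is automatically $\su(2)^{\oplus n}$-invariant, and then appeal to Lemma~\ref{lem:loc} together with Lemma~\ref{exclusive} to derive a contradiction. One small comment: the ``delicate point'' you flag is less delicate than you suggest---there is no restriction of the form taking place, since the bilinear form lives on $\C^{2^n}$ throughout and its non-degeneracy and (skew-)symmetry are intrinsic properties of the form, independent of which subalgebra is checked for invariance; so the Schur-lemma step you add is not actually needed.
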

\begin{proof}
We remark that $\fh=\su(2) \hoplus \su(2) \hoplus \cdots \hoplus \su(2)$ is given in an irreducible representation.
It follows from 
the discussion prior to Lemma~\ref{exclusive} that
$\fh$ has an invariant (non-degenerate) skew-symmetric bilinear form
and no invariant (non-degenerate) symmetric bilinear form if $n$ is odd.
Therefore, the dynamic algebra $\fk$ cannot have an invariant (non-degenerate) 
symmetric bilinear form and the Lemma follows for odd $n$. The case of even $n$ is similar.
\hfill$\blacksquare$
\end{proof}

Unfortunately\footnote{In Thm.~3 and 4 of the conference paper Ref.~\cite{ZSSH:2010} we 
  incorrectly gave more general results for dynamic algebras which contain 
  a non-zero subset of the local operations. But in light of Examples~\ref{ex:counter1} 
  and \ref{ex:counter2} the more general results in Ref.~\cite{ZSSH:2010} are not correct, 
  as the non-zero subset is in general not given in an \emph{irreducible} representation.}, 
Lemma~\ref{lem:loc2} is no longer true if the dynamic algebra contains only
a non-zero subset of the local operations:
\begin{example}\label{ex:counter2}
Consider a bilinear control system on $\su(8)$ with three subsystems given by 
$\su(2) \hoplus \su(2) \hoplus \su(2)$. The local dynamic  algebra is 
$\langle i\mathrm{XII}, i\mathrm{YII}, i\mathrm{ZII}
\rangle_{\rm Lie}$.
In addition, we have a drift Hamiltonian $ H_d=\mathrm{XXI}+\mathrm{YYI}+\mathrm{IXX}+\mathrm{IYY}$.
The control system 
\begin{center}
\includegraphics{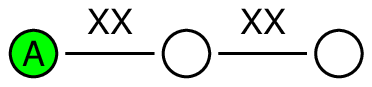}
\end{center}
is  
connected and  acts irreducibly.
The dynamic algebra $\fk=\so(7)$ is simple and
an orthogonal subalgebra. We emphasise that as a consequence of Lemma~\ref{lem:loc2} this
would have not been possible if $\fk \supseteq \su(2) \hoplus \su(2) \hoplus \su(2)$.
\end{example}

\subsection{A Necessary and Sufficient Symmetry Condition\label{sufficient}}
In this subsection we present a necessary and sufficient symmetry criterion
for full controllability of control systems contained in $\su(N)$. 
To this end,
we introduce some additional notation: Assume that $\phi$ is a representation
of a compact Lie algebra of dimension $N$. The tensor square $\phi^{\otimes 2}:=
\phi\otimes \unity_N + \unity_N \otimes \phi$ decomposes as
$\phi^{\otimes 2}=\Alt^2 \phi \oplus \Sym^2 \phi$,
where the alternating square $\Alt^2 \phi$ and the symmetric square $\Sym^2 \phi$ are the restrictions
of $\phi^{\otimes 2}$ to the antisymmetric and the symmetric subspace, respectively.
More details on this notation is given in Appendix~\ref{AltSym}.
We arrive at

\begin{theorem}\label{tensor_sq}
Assume that $\fk$ is a subalgebra of $\su(N)$ and denote
by $\Phi$ the standard representation of $\su(N)$.
Then, the following are equivalent:\\
(1) $\fk = \su(N)$.\\
(2) The restrictions of $\Alt^2 \Phi$ and $\Sym^2 \Phi$ to the subalgebra $\fk$ are both irreducible.\\
(3) The commutant $\mathrm{comm}_{\Phi^{\otimes 2}}(\fk)$ 
w.r.t.\ the tensor square $\Phi^{\otimes 2}$ has dimension two.
\end{theorem}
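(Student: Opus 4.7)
The plan is to prove the cyclic implication $(1) \Rightarrow (2) \Rightarrow (3) \Rightarrow (1)$; the first two steps are relatively routine, while the last carries the real content. For $(1)\Rightarrow(2)$, the symmetric and alternating squares $\Sym^2 \Phi$ and $\Alt^2 \Phi$ are the irreducible $\su(N)$-representations of highest weights $2\omega_1$ and $\omega_2$ respectively, so they are irreducible when $\fk = \su(N)$. For $(2)\Rightarrow(3)$, the decomposition $\Phi^{\otimes 2} = \Sym^2 \Phi \oplus \Alt^2 \Phi$ is $\fk$-invariant, and the two summands have distinct dimensions $\binom{N+1}{2} \neq \binom{N}{2}$ for $N\geq 2$, so they are inequivalent as $\fk$-modules. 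Applying Lemma~\ref{lem:centraliser} to each summand (equivalently, Schur's lemma) shows that the commutant of each summand is one-dimensional, and so the total commutant has dimension $1+1=2$.

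The heart of the proof is $(3) \Rightarrow (1)$. Let $V = \C^N$. I plan to recast the commutant dimension using the $\fk$-module isomorphism $V \otimes V^* \cong \C\unity \oplus \sll(N,\C)$, where $\fk$ acts on $\mathrm{End}(V)$ by commutator and the trace splits off scalars, together with the self-duality $\sll(N,\C)^* \cong \sll(N,\C)$ via the trace form $\tr(XY)$. Expanding,
\begin{equation*}
\mathrm{comm}_{\Phi^{\otimes 2}}(\fk) \;\cong\; \bigl( (V \otimes V^*)^{\otimes 2} \bigr)^{\fk} \;\cong\; \C \,\oplus\, 2\cdot\sll(N,\C)^{\fk} \,\oplus\, \mathrm{End}_{\fk}\bigl(\sll(N,\C)\bigr),
\end{equation*}
so its dimension equals $1 + 2\dim \sll(N,\C)^{\fk} + \dim \mathrm{End}_{\fk}(\sll(N,\C))$. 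Setting this equal to $2$ and using non-negativity forces both $\sll(N,\C)^{\fk} = 0$ and $\dim \mathrm{End}_{\fk}(\sll(N,\C)) = 1$. By Lemma~\ref{lem:centraliser} the first yields that $V|_{\fk}$ is irreducible, and by Schur's lemma over $\C$ the second yields that $\sll(N,\C)|_{\fk}$ is an irreducible $\fk$-module.

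To conclude, the complexification $\fk_{\C}$ sits as a $\fk_{\C}$-invariant subspace of $\sll(N,\C)$ under the adjoint action, and is nonzero---for if $\fk = 0$ then the commutant would have dimension $N^4 > 2$ whenever $N \geq 2$, while the case $N = 1$ is vacuous as $\su(1) = 0$. Irreducibility of $\sll(N,\C)|_{\fk_{\C}}$ then forces $\fk_{\C} = \sll(N,\C)$, and comparing real dimensions gives $\fk = \su(N)$. I expect the main technical hurdle to be the careful bookkeeping of the module identifications---in particular, verifying that the adjoint action on $V \otimes V^* \cong \mathrm{End}(V)$ and the trace-form self-duality $\sll(N,\C)^* \cong \sll(N,\C)$ are genuinely $\fk$-equivariant---but these are all standard once set up cleanly.
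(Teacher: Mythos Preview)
Your proof is correct and takes a genuinely different route from the paper for the substantive implication. The paper proves $(2)\Rightarrow(1)$ by first reducing to $\fk$ semisimple (via Lemmas~\ref{lem:centraliser} and~\ref{lem:semi-simplicity}) and then invoking Theorem~\ref{thm_altsym}, i.e.\ Dynkin's classification of all irreducible representations of semisimple Lie algebras whose alternating square is irreducible---a heavy structural result. Your argument for $(3)\Rightarrow(1)$ bypasses this entirely: the decomposition $V\otimes V^*\cong \C\oplus\sll(N,\C)$ turns the commutant-dimension condition into the statement that $\sll(N,\C)$ is irreducible as a $\fk$-module under the adjoint action, and since $\fk_{\C}$ is a nonzero $\fk$-submodule you conclude $\fk_{\C}=\sll(N,\C)$ directly. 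This is more elementary and self-contained for the theorem at hand; the paper's route, in exchange, yields the full classification table (Theorem~\ref{alttable}) as a byproduct, which it exploits elsewhere.

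One point deserves a sentence of justification. You write that $\dim\mathrm{End}_{\fk}(\sll(N,\C))=1$ implies $\sll(N,\C)|_{\fk}$ is irreducible ``by Schur's lemma over $\C$''. Schur's lemma gives only the forward direction; the converse requires complete reducibility, which can fail for general Lie-algebra modules. Here it holds because $\fk\subset\su(N)$ makes $\ad_H$ skew-Hermitian with respect to the Hilbert--Schmidt inner product $\langle A,B\rangle=\tr(A^\dagger B)$ on $\sll(N,\C)$, so the representation is unitary and hence completely reducible. You should state this explicitly; it is exactly the mechanism the paper also relies on (cf.\ the proof of Lemma~\ref{lem:centraliser}).
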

\begin{proof}
(1) $\Rightarrow$ (2) follows by Theorem~\ref{thm_altsym} in Appendix~D. We prove (2) $\Rightarrow$ (1).
As the restriction $(\Alt^2 \Phi)|_{\fk}$ of $\Alt^2 \Phi$ to 
$\fk$ is irreducible, we get that the restriction $\Phi|_{\fk}$ of $\Phi$ to $\fk$ is also irreducible. Otherwise, 
$\Phi|_{\fk}=\phi_1 \oplus \phi_2$ would be reducible and, as a consequence,
$(\Alt^2 \Phi)|_{\fk}=\Alt^2(\phi_1 \oplus \phi_2)=\Alt^2 \phi_1 \oplus (\phi_1 \otimes \phi_2) \oplus \Alt^2 \phi_2 $
would also be reducible (which is impossible). Lemma~\ref{lem:centraliser} implies the
centraliser $\fk'$ of $\fk$ in $\su(N)$ is trivial, thus
by Lemma~\ref{lem:semi-simplicity} $\fk$ is semisimple.
Now (1) follows by Theorem~\ref{thm_altsym} in Appendix~D.
Moreover Thm.~1.5 of Ref.~\cite{Ledermann} says that the dimension of the commutant of a representation
$\phi$ is given by $\sum_i m_i^2$ where the $m_i$ are the multiplicities of the irreducible components of $\phi$.
As we consider the representation 
$(\Phi^{\otimes 2})|_{\fk}=(\Alt^2 \Phi)|_{\fk} \oplus (\Sym^2 \Phi)|_{\fk}$, 
the equivalence of (2) and (3) readily follows.
\hfill$\blacksquare$
\end{proof}

We now show that condition (3) of Theorem~\ref{tensor_sq} can be easily tested
using a set of Hamiltonians $\{i H_\nu\}$ generating the dynamic algebra 
$\fk \subseteq \su(N)$. Therefore, we prove that the commutant of 
$\{ (iH_\nu)\otimes \unity_N + \unity_N \otimes (iH_\nu)\}$ 
is equal to $\mathrm{comm}_{\Phi^{\otimes 2}}(\fk)$. 
Obviously, the latter commutant is contained in the former.
Let $s\in \gl(N^2,\C)$ be an element of the former commutant. Then by
Jacobi's identity $\big[[a,b],s\big]+\big[[b,s],a\big]+\big[[s,a],b\big]=0$,
$s$ commutes with all commutators
\begin{gather*}
[(iH_{\nu})\otimes \unity_N + \unity_N \otimes (iH_{\nu}),
(iH_{\mu})\otimes \unity_N + \unity_N \otimes (iH_{\mu})]\\
=[iH_{\nu},iH_{\mu}]\otimes \unity_N + \unity_N \otimes [iH_{\nu},iH_{\mu}],
\end{gather*}
and by induction, $s$ is also contained in the latter commutant.

\medskip
Together with Theorem~\ref{tensor_sq},
we thus obtain a {\em necessary and sufficient symmetry condition} for full controllability
as a theoretical main result:

\begin{corollary}\label{cor_tensor}
Consider a set of Hamiltonians $\{i H_\nu\,|\,\nu=d,1,2,\dots\}$ generating the dynamic algebra 
$\fk \subseteq \su(N)$. The corresponding control system is fully controllable in the sense $\fk=\su(N)$,
if and only if the joint commutant of $\{ (iH_\nu)\otimes \unity_N + \unity_N \otimes (iH_\nu)\,|\,\nu=d,1,2,\dots\}$
has dimension two. \hfill$\blacksquare$
\end{corollary}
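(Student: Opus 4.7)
The plan is to reduce the statement to condition~(3) of Theorem~\ref{tensor_sq} by identifying the joint commutant appearing in the corollary with $\mathrm{comm}_{\Phi^{\otimes 2}}(\fk)$, where $\Phi$ is the standard representation of $\su(N)$. Once this identification is in hand, the equivalence (1)$\Leftrightarrow$(3) of Theorem~\ref{tensor_sq} immediately yields the claim, since $\fk=\su(N)$ if and only if the commutant in question has dimension two.

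So the only substantive step is to show that the joint commutant $C$ of the elements $A_\nu := (iH_\nu)\otimes\unity_N + \unity_N\otimes (iH_\nu)$ coincides with $\mathrm{comm}_{\Phi^{\otimes 2}}(\fk)$. One inclusion is immediate: every element of $\mathrm{comm}_{\Phi^{\otimes 2}}(\fk)$ commutes in particular with the generating elements $A_\nu$, so it lies in $C$. For the reverse inclusion, I would use the Jacobi identity exactly as sketched in the paragraph following Theorem~\ref{tensor_sq}: if $s\in C$ commutes with every $A_\nu$, then by Jacobi $s$ commutes with every bracket $[A_\nu,A_\mu]$, and a direct computation shows
\begin{equation*}
[A_\nu,A_\mu] = [iH_\nu,iH_\mu]\otimes\unity_N + \unity_N \otimes [iH_\nu,iH_\mu],
\end{equation*}
which is again of the ``Kronecker-sum'' form associated with an element of $\fk$.

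The main obstacle, mild as it is, consists in iterating this step cleanly: one must verify that the map $X\mapsto X\otimes\unity_N + \unity_N\otimes X$ is a Lie-algebra homomorphism, so that repeated commutators of the $A_\nu$ sweep out Kronecker-sum images of \emph{all} elements of the Lie closure $\fk$, not merely of the linear span of the initial generators. A short induction on bracket length does the job. Combining this with linearity shows that $s$ commutes with $X\otimes\unity_N + \unity_N\otimes X$ for every $X\in\fk$, so that $C\subseteq\mathrm{comm}_{\Phi^{\otimes 2}}(\fk)$. Invoking Theorem~\ref{tensor_sq}(3) then closes the argument, since $\dim C=2$ becomes equivalent to $\fk=\su(N)$.
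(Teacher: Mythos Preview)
Your argument is correct and follows essentially the same route as the paper: the paragraph preceding the corollary establishes precisely that the joint commutant of the generators $A_\nu$ equals $\mathrm{comm}_{\Phi^{\otimes 2}}(\fk)$ via the Jacobi identity and induction on bracket length, and then invokes Theorem~\ref{tensor_sq}(3). Your explicit remark that $X\mapsto X\otimes\unity_N+\unity_N\otimes X$ is a Lie-algebra homomorphism is a slightly cleaner way of phrasing the same inductive step.
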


In spite of the beauty of simplicity of this result,
from an algorithmic point of view the above symmetry condition is
currently not appealing:
In Corollary~\ref{cor_tensor} one would have to compute the commutant of $N^2\times N^2$ matrices
as compared to $N\times N$ matrices in the test for the lack of symmetry in
{\bf Algorithm~2}. Thus the complexity of testing for Corollary~\ref{cor_tensor} 
would be the square of the complexity of {\bf Algorithm 2}.
Even in moderately-sized examples one has to save computer memory 
by methods of sparse matrices due to the larger matrices. 
In larger examples, testing for Corollary~\ref{cor_tensor}
gets impractical. Yet compared with potential conditions involving even higher
tensor powers, one should consider Corollary~\ref{cor_tensor} as a fortunate incidence.

In order to characterise the commutant of Corollary~\ref{cor_tensor}
in further detail, we introduce an $N^2\times N^2$ permutation matrix $K_{N,N}$ also known as
{\em commutation matrix} \cite{HJ2,HS81}. Let $e^a$ denote the vector such that
$(e^a)_b = \delta_{a,b}$ with $a,b \in \{1,\ldots,N^2\}$. We define $K_{N,N}$ by the permutation
$K_{N,N}\cdot e^a = e^{\pi(b)}$ where one has 
$\pi(N\cdot i+j+1)=j\cdot N+i+1$ and $i,j \in \{0,\ldots,N-1\}$. 
The commutation matrix operates on the vec-representation \cite{HJ2} of 
an $N\times N$ matrix $A$ as the transposition operator: $K_{N,N}\cdot \opvec{(A)} = \opvec{(A^t)}$.

\begin{lemma}
The commutant of $\{ (iH_\nu)\otimes \unity_N + \unity_N \otimes (iH_\nu)\}$
always contains the elements $\unity_{N^2}$ and $K_{N,N}$.
\end{lemma}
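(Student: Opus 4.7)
The plan is very short because the claim is an immediate consequence of the defining intertwining property of the commutation matrix $K_{N,N}$.

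First, the case of $\unity_{N^2}$ is trivial: the identity matrix commutes with every element of $\gl(N^2,\C)$, hence in particular with $(iH_\nu)\otimes\unity_N + \unity_N\otimes(iH_\nu)$ for all $\nu$.

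Second, for $K_{N,N}$ I would invoke the fundamental property of the commutation matrix (see Ref.~\cite{HS81} or Ref.~\cite{HJ2}), namely that it swaps the two factors of a Kronecker product,
\begin{equation*}
K_{N,N}\,(A\otimes B) = (B\otimes A)\,K_{N,N}
\quad\text{for all } A,B\in\gl(N,\C).
\end{equation*}
This relation is the matrix avatar of the vec-transposition identity $K_{N,N}\opvec(A)=\opvec(A^t)$ stated just above the lemma; one verifies it by checking the action on elementary tensors $e^i\otimes e^j$. Applying it twice, once with $(A,B)=(iH_\nu,\unity_N)$ and once with $(A,B)=(\unity_N,iH_\nu)$, yields
\begin{align*}
K_{N,N}\,\bigl((iH_\nu)\otimes\unity_N\bigr) &= \bigl(\unity_N\otimes(iH_\nu)\bigr)\,K_{N,N}, \\
K_{N,N}\,\bigl(\unity_N\otimes(iH_\nu)\bigr) &= \bigl((iH_\nu)\otimes\unity_N\bigr)\,K_{N,N}.
\end{align*}
Summing the two equalities, the right-hand side reassembles into $\bigl((iH_\nu)\otimes\unity_N + \unity_N\otimes(iH_\nu)\bigr)K_{N,N}$, so $K_{N,N}$ commutes with each generator and therefore lies in the joint commutant.

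There is essentially no obstacle: the only technical ingredient is the swap identity for $K_{N,N}$, which is entirely standard and can be taken from the cited matrix-analysis references. Conceptually the statement simply records that the Kronecker-sum action $X\mapsto H_\nu X + X H_\nu^t$ (after reshaping via $\opvec$) is symmetric under the exchange of the two tensor slots, which is precisely the symmetry encoded by $K_{N,N}$.
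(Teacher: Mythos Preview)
Your proof is correct and follows essentially the same approach as the paper: both invoke the swap identity $K_{N,N}(A\otimes B)=(B\otimes A)K_{N,N}$ from Ref.~\cite{HS81} and specialise it to the pair $(A,B)=(iH_\nu,\unity_N)$ to conclude that $K_{N,N}$ commutes with each Kronecker-sum generator. The only cosmetic difference is that the paper writes the symmetrised identity $K_{N,N}(A\otimes B+B\otimes A)=(A\otimes B+B\otimes A)K_{N,N}$ in one step, whereas you apply the swap twice and add; the content is identical.
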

\begin{proof}
As the identity matrix $\unity_{N^2}$ always commutes, we have only to prove that
$K_{N,N}$ is contained in the commutant. Sec.~3 of Ref.~\cite{HS81}
says that $K_{N,N} (A\otimes B) = (B\otimes A) K_{N,N}$ for all $N\times N$ matrices
$A$ and $B$ and thereby $K_{N,N} (A\otimes B+B\otimes A) = (A\otimes B+B\otimes A) K_{N,N}$. 
In particular one finds
$K_{N,N} (A\otimes \unity_N+\unity_N\otimes A) = (A\otimes \unity_N+\unity_N\otimes A) K_{N,N}$ 
and the Lemma is proven.
\hfill$\blacksquare$
\end{proof}

The operator $K_{N,N}$ has two eigenspaces (see Sec.~4.2 of Ref.~\cite{HS81}):
The first one is given by the symmetric subspace (i.e.\ `bosons') and 
has the eigenvalue $+1$ with multiplicity $N(N+1)/2$. For even $N$, the permutation-symmetric subspace
is equivalent to the Lie algebra $\usp(N/2)$. The second one is given by the antisymmetric
subspace (i.e.\ `fermions') and has the eigenvalue $-1$ with multiplicity $N(N-1)/2$.
The permutation-antisymmetric subspace is equivalent to the Lie algebra $\so(N)$.

\medskip
The methods of this subsection thus shed new light on the symplectic and orthogonal subalgebras
(see Subsection~\ref{sec:sympl_orth}). Prop.~3.5 of Ref.~\cite{KW96}
(see also p.~446 of Ref.~\cite{FH91})
says that an irreducible representation $\phi$ of a compact simple Lie algebra $\fg$ is either symplectic or orthogonal if and only
if its tensor square $\phi^{\otimes 2}$ contains the trivial representation of $\fg$
exactly once.
In particular, the irreducible representation $\phi$ is symplectic (resp.\ orthogonal) if the trivial representation
occurs exactly once in $\Alt^2 \phi$ (resp.\ $\Sym^2 \phi$).  A similar condition is given by
Prop.~4.2 of Ref.~\cite{KW96}: An irreducible representation $\phi$ of a compact simple Lie algebra
$\fg$ is either symplectic or orthogonal if and only if 
its tensor square $\phi^{\otimes 2}$ contains the (irreducible) adjoint representation of $\fg$ at least once.


\section{Simulability}\label{sec:simulability}
Simulating quantum systems \cite{Fey82,Lloyd96,BN09} is a promising mid-term perspective,
because the accuracy demands are easier to come by than the \/`error-correction threshold\/'
for actual quantum computing. Another practical advantage lies in the fact that
sometimes the simulating systems allow for separating control parameters in the analogue 
that in the original (be it classical or quantum) cannot be tuned independently.

This section exploits that the dynamical algebra captures all the
key properties of the dynamical system to be studied. More precisely, the question whether
(and to which extent) one quantum system can simulate another one can be answered 
by analysing the Lie-subalgebra structure of systems with a given dimension.
Recently Kraus {\em et al.}~have explored whether
target quantum systems can be universally simulated on translationally invariant lattices
of bosonic, fermionic, and spin systems \cite{kraus-pra71}. Based on the branching 
diagrams of simple subalgebras to $\su(N)$, here we take a more general
approach pursuing the question which type of quantum system can simulate a given one
with least overhead in state-space dimension. In particular, we also allow for 
effective many-body interactions to be simulated by pair-interactions. --- To this end,
the reader may wish to resort to the more general notion of tensor-product structures 
in Appendix~\ref{sec:tensor2} first.

In quantum simulation, one of the first natural questions to ask is whether and under 
which conditions a controlled quantum dynamical system $\Sigma_a$ can simulate another
(controlled or uncontrolled) dynamical system $\Sigma_b$ given as bilinear control systems
with $\mu=a,b$ on density matrices $\rho_\mu$
\begin{equation}\label{eqn:bilinear_contr3}
        \dot \rho_\mu(t) = -i\Big[\big(H_0^\mu + \sum_{j=1}^m u^\mu_j(t) H^\mu_j\big) \;,\; \rho_\mu(t) \Big]
			\quad\text{with}\quad \rho_\mu(0)=\rho_{\mu o}\quad.
\end{equation}
The dynamic Lie algebras $\fk_a$ and $\fk_b$ are given by the respective Lie closures as
\begin{equation}
\fk_\mu := \expt{iH_0^\mu, iH^\mu_j\, |\, j=1,2,\dots,m}_{\rm Lie}
\end{equation}
thus entailing the reachable sets take the form of $\bK_\nu$-subgroup orbits 
as in Eqn.~\eqref{eqn:reach-k-orbit} 
\begin{align}
\reach(\rho_{ao}) &:= \{K_a \rho_{ao} K_a^\dagger\,|\, K_a\in \bK_a := \exp \fk_a\}\quad\text{and}\\
\reach(\rho_{bo}) &:= \{K_b \rho_{bo} K_b^\dagger\,|\, K_b\in \bK_b := \exp \fk_b\}\;.
\end{align}

An obvious requirement is that for any initial state $\rho_{bo}$ of system $\Sigma_b$ 
leading to the dynamics $\rho_b(t)\in\reach(\rho_{bo})$ there is an initial state 
$\rho_{ao}$ of system $\Sigma_a$ such that under the dynamics of $\Sigma_a$ one has
\begin{equation}\label{eqn:sim1}
\rho_b(t)\in\reach(\rho_{ao})\quad\forall\; t\geq 0\;.
\end{equation}
This requirement is obviously fulfilled by the following sufficient condition:
\begin{proposition}\label{prop:sim-gen}
A dynamic bilinear control system $\Sigma_a$ with dynamical algebra $\fk_a$ can simulate 
another dynamic system $\Sigma_b$ with dynamical algebra $\fk_b$ if $\fk_a\supseteq\fk_b$.
\end{proposition}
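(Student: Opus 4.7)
The plan is to reduce the claim to the elementary subgroup–orbit inclusion $\bK_b\subseteq\bK_a$ and then to appeal, for the actual reachability from controls, to the Lie-algebra rank condition of Sec.~\ref{sec:control-basics}. First I would pick $\rho_{ao}:=\rho_{bo}$. From the hypothesis $\fk_b\subseteq\fk_a$ I would invoke the standard fact that the associated connected Lie subgroups nest accordingly, $\bK_b=\exp\fk_b\subseteq\exp\fk_a=\bK_a$. The $\Sigma_b$-trajectory starting at $\rho_{bo}$ has the form $\rho_b(t)=K_b(t)\,\rho_{bo}\,K_b(t)^\dagger$ with time-ordered propagator $K_b(t)\in\bK_b$ generated by the $\fk_b$-valued Hamiltonian $H_0^b+\sum_j u_j^b(t) H_j^b$. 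Since $K_b(t)$ lies a fortiori in $\bK_a$, it follows that $\rho_b(t)\in\reach(\rho_{ao})$ for every $t\geq 0$, which is exactly Eqn.~\eqref{eqn:sim1}.

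The one substantive step hidden in this reduction is the identification of the group-theoretic orbit $\{K\rho_{ao}K^\dagger\mid K\in\bK_a\}$ with the set of states physically attainable from $\rho_{ao}$ under the piecewise-constant unbounded controls of $\Sigma_a$. This is precisely the content of the Lie-algebra rank condition recalled in Sec.~\ref{sec:control-basics}: under unbounded controls, the time-ordered propagators of the bilinear system \eqref{eqn:bilinear_contr2} densely fill the connected subgroup $\exp\fk_a$, so the specific propagator $K_b(t)$ can be synthesised (to any prescribed tolerance) by a Trotter-type switching sequence of the generators $\{iH_0^a,iH_j^a\}$ of $\fk_a$.

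I do not anticipate a serious obstacle here, because the statement is of a sufficient-condition flavour and the hypothesis is a clean algebraic inclusion. The mildly subtle point worth flagging is that the simulation reproduces $\rho_b(t)$ not by implementing the ``same'' Hamiltonian $H_0^b+\sum_j u_j^b(t)H_j^b$—which in general does not lie in the $\R$-span of the $\Sigma_a$-Hamiltonians but only in their Lie closure—so the simulating control schedule typically has to be reconstructed via iterated commutator (Trotter) products of elements of $\{iH_0^a,iH_j^a\}$. This is harmless for a sufficient condition but should be stated explicitly when writing the proof out in full rigour.
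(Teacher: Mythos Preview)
Your argument is correct and follows essentially the same route as the paper's proof: from $\fk_a\supseteq\fk_b$ conclude $\bK_a\supseteq\bK_b$ and hence $\reach(\rho_{ao})\supseteq\reach(\rho_{bo})$ (with $\rho_{ao}=\rho_{bo}$), which gives Eqn.~\eqref{eqn:sim1}. The paper dispatches this in one line, whereas you add the (useful but not strictly required here) remarks on the Lie-algebra rank condition and Trotter synthesis of the simulating controls.
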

\begin{proof}
Clearly $\fk_a\supseteq\fk_b$ implies $\bK_a\supseteq\bK_b$ and thus 
$\reach(\rho_{ao})\supseteq\reach(\rho_{bo})$, which in turn ensures that
Eqn.~\eqref{eqn:sim1} is fulfilled for any choice of initial states. 
\hfill$\blacksquare$
\end{proof}
In particular, if system $\Sigma_b$ is uncontrolled it can be simulated if
its drift Hamiltonian $H_0^b$ can be simulated, i.e.\ provided $iH_0^b \in \fk_a$.

Two dynamic bilinear control systems $\Sigma_a$ and $\Sigma_b$ are said to be 
{\em dynamically equivalent} independent of the respective initial states $\rho_{\mu 0}$
if and only if they can mutually simulate one another, 
i.e.\ if $\fk_a\supseteq\fk_b$ and $\fk_b\supseteq\fk_a$ so $\fk_a=\fk_b$
(up to isomorphism). 

\begin{remark}
It is important to note that in the {\em special case of pure states}, where by construction 
$\rho(t)=\rho^2(t)$, it suffices that, e.g., a system $\Sigma_a$ has 
the dynamic Lie algebra $\fk_a=\usp(N/2)$ in order to simulate 
system $\Sigma_b$ with $\fk_b=\su(N)$, because the unitary orbit of
any pure state $\rho_0=\ketbra \psi \psi$ coincides with its symplectic orbit
for $N$ even
\begin{equation}
\mathcal O_{\SU(N)}(\ketbra \psi \psi) =  \mathcal O_{Sp(N/2)}(\ketbra \psi \psi)
				\quad\forall\; \ket\psi\in\mathcal H.
\end{equation}
This is equivalent to a well-known result stating that for $N$ even, a system
is {\em pure-state controllable} as soon as its system algebra encapsulates
the symplectic one \cite{AA03}. --- Since we are interested in general results beyond
pure states, the notion of full controllability maintained in this work is 
full operator controllability unless specified otherwise. Also for simulability
we do not confine the state space to pure states henceforth. 
\end{remark}

\begin{proposition}\label{prop:sim-least}
Consider two dynamic systems $\Sigma_a$ and $\Sigma_b$ whose respective dynamic Lie algebras 
$\fk_a$ and\/ $\fk_b$ shall be irreducible over a given Hilbert space $\mathcal H$. 
Then $\Sigma_a$ simulates $\Sigma_b$ irreducibly and with least overhead in the
very $\mathcal H$ given, if any interlacing system $\Sigma_i$ with irreducible algebra $\fk_i$ fulfilling
\begin{equation}
\fk_a\supseteq\fk_i\supseteq\fk_b
\end{equation}
enforces (up to isomorphism) $\fk_i=\fk_a$ or $\fk_i=\fk_b$ or trivially both.
\end{proposition}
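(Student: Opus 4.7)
The plan is to reduce Proposition~\ref{prop:sim-least} to two applications of Proposition~\ref{prop:sim-gen} together with an interpretation of what ``least overhead in the very $\mathcal H$ given'' should mean. Since the Hilbert space itself is fixed as $\mathcal H$, the only degree of freedom is the size of the dynamic algebra that does the simulating: the smaller $\fk_a$ is (subject to still containing $\fk_b$ and still acting irreducibly on $\mathcal H$), the fewer independent control directions the simulator has to muster, and hence the less its overhead.

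First, I would verify that $\Sigma_a$ actually simulates $\Sigma_b$. This is immediate from Proposition~\ref{prop:sim-gen} once one notes that the hypothesis implicitly forces $\fk_a \supseteq \fk_b$ (taking $\fk_i = \fk_b$ in the interlacing condition shows $\fk_b$ can be placed between $\fk_a$ and $\fk_b$, which requires the containment in the first place). Irreducibility of both $\fk_a$ and $\fk_b$ over $\mathcal H$ ensures that neither algebra leaves a proper subspace of $\mathcal H$ invariant, so the simulation cannot be trivially restricted to some smaller carrier space; in this sense it is an \emph{irreducible} simulation on $\mathcal H$.

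Second, I would establish least overhead by a direct contradiction argument. Suppose, toward contradiction, that there were a bilinear control system $\Sigma_i$ on the same $\mathcal H$ whose dynamic algebra $\fk_i$ is irreducible and satisfies $\fk_a \supsetneq \fk_i \supsetneq \fk_b$ up to isomorphism. Applying Proposition~\ref{prop:sim-gen} twice yields that $\Sigma_a$ simulates $\Sigma_i$ and that $\Sigma_i$ in turn simulates $\Sigma_b$, with $\fk_i$ strictly smaller than $\fk_a$. Thus $\Sigma_i$ would be a strictly cheaper irreducible simulator of $\Sigma_b$ on $\mathcal H$, contradicting the hypothesis that any such $\fk_i$ must coincide (up to isomorphism) with $\fk_a$ or $\fk_b$. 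Hence no proper intermediate simulator exists, and $\Sigma_a$ is minimal in the lattice of irreducible subalgebras of $\su(\dim \mathcal H)$ that contain $\fk_b$.

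The only subtle step is making precise the notion of ``least overhead,'' which is a definitional rather than a technical difficulty: once one agrees that, with $\mathcal H$ fixed, overhead is measured by the position of the simulating algebra in the subalgebra lattice (as displayed, for instance, in Tables~\ref{tab:su_subalg} and \ref{tab:two}), the proposition becomes a tautology restating minimality in that lattice. Consequently I do not anticipate any genuine obstacle beyond fixing this interpretation; the proof itself is a two-line consequence of Proposition~\ref{prop:sim-gen}.
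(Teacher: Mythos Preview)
Your reading is correct: the proposition is essentially a definition of ``simulates with least overhead in $\mathcal H$'' rephrased as a minimality condition in the lattice of irreducible subalgebras, and the paper accordingly gives no proof at all---it states the proposition, follows it immediately with a \emph{Caveat} clarifying the term, and moves on. Your two-step argument (simulability via Proposition~\ref{prop:sim-gen}, then no proper intermediate by contraposition) is exactly the unpacking one would write if a proof were demanded, and your closing remark that the content is definitional rather than technical matches the paper's treatment.
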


\noindent
{\em Caveat.}
Note that the term \/`with least overhead\/' crucially depends on the Hilbert space
{\em given a priori}: Thus there may be extreme
realisations.  For instance, in a fully controllable system of say $14$ qubits 
with dynamic algebra $\su(16\, 384)$ there is an {\em irreducible} way 
to simulate a fully controllable $\su(4)$-system of two qubits 
(or just a single spin-$\tfrac{3}{2}$ with control over all multipole moments) 
with \/`least overhead\/' in $\su(2^{14})$,  
see 
the penultimate entry in Tab.~\ref{tab:two}. Realisations of this type may 
not be very useful in practice, yet relate to the context of code spaces.

\medskip
Here, we have dealt with quantum simulation of {\em unobserved control systems}. 
Now we illustrate the above findings by examples.
Later, in Sec.~\ref{sec:outlook}, we will give an outlook on 
a weaker notion of quantum simulation of {\em  observed control systems} 
with respect to expectation values by given sets of observables. 

\section{Worked Examples}\label{sec:w-examples}

\subsection{Dynamic Systems with Orthogonal Algebras}
\begin{figure}[Ht!]
\begin{center}
\includegraphics[height=7mm]{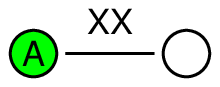}\hspace{9mm}
\includegraphics[height=7mm]{ex4}\hspace{9mm}
\includegraphics[height=7mm]{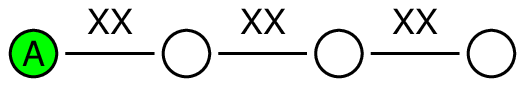}\hspace{2mm}\\[2mm]
\includegraphics[height=7mm]{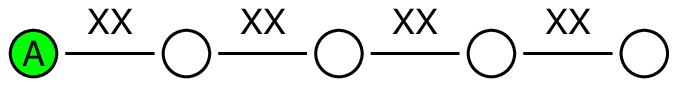}\hspace{2mm}
\end{center}
\caption{Heisenberg-{\XX} spin chains with $n$ spins-$\tfrac{1}{2}$ and odd-order orthogonal system algebras 
$\so(2n+1)$ require one locally controllable qubit at the end. A full series can
be constructed, the first examples being $\so(5)\cong\usp(4/2)$, $\so(7)$, $\so(9)$, and $\so(11)$.
For $n=1$ one gets $\so(3)\cong\su(2)$.
\label{fig:sys-so2k+1}}
\end{figure}

Take spin chains of \mbox{$n$ spins-$\tfrac{1}{2}$} with Heisenberg-{\XX} (and \XY) interactions 
and a {\em single} locally controllable
qubit at one end. These instances serve as convenient topologies to simulate a full series of {\em odd-order} orthogonal 
algebras $\so(2n+1)$ for $n$ qubits. The first instances are shown in Fig.~\ref{fig:sys-so2k+1}.
\begin{proposition}\label{prop:so2n+1}
Heisenberg-{\XX} chains of $n$ spin-$\tfrac{1}{2}$ qubits ($n\geq 1$) and a {\em single} locally controllable
qubit at one end give rise to the dynamic system algebras $\so(2n+1)$ as irreducible 
subalgebras embedded in $\su(2^n)$.
\end{proposition}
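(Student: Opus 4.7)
The plan is to pass to Majorana fermions via the Jordan--Wigner transformation and to identify the dynamic algebra with the spinor realisation of $\so(2n+1)$. Setting $c_{2j-1} := \mathrm{Z}_1\cdots \mathrm{Z}_{j-1}\mathrm{X}_j$ and $c_{2j} := \mathrm{Z}_1\cdots \mathrm{Z}_{j-1}\mathrm{Y}_j$ on $n$ qubits, one obtains $2n$ Hermitian Majoranas satisfying the Clifford relations $\{c_a,c_b\}=2\delta_{ab}\unity$. A direct computation translates the generators into Majorana language: $i\mathrm{X}_1 = ic_1$, $i\mathrm{Y}_1 = ic_2$, $i\mathrm{Z}_1 = c_1 c_2$, and $i(\mathrm{X}_j\mathrm{X}_{j+1}+\mathrm{Y}_j\mathrm{Y}_{j+1}) = c_{2j}c_{2j+1} - c_{2j-1}c_{2j+2}$. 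Each generator is therefore either a single Majorana (times $i$) or a Majorana bilinear.

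Next I would establish that the real span $\fg := \spanR\{\,ic_a,\,c_a c_b \mid 1\leq a<b\leq 2n\,\}$ is a Lie subalgebra of $\su(2^n)$ isomorphic to $\so(2n+1)$. Closure under brackets follows from the three identities $[c_a,c_b c_c] = 2\delta_{ab}c_c - 2\delta_{ac}c_b$, $[c_a c_b,c_c c_d] = 2(\delta_{bc}c_a c_d - \delta_{ac}c_b c_d - \delta_{bd}c_a c_c + \delta_{ad}c_b c_c)$, and $[ic_a,ic_b]=-2c_a c_b$. A dimension count gives $\dim\fg = 2n+\binom{2n}{2} = n(2n+1) = \dim\so(2n+1)$, and the explicit assignment $ic_a\leftrightarrow E_{0a}-E_{a0}$, $c_a c_b\leftrightarrow E_{ab}-E_{ba}$ on indices $0,1,\ldots,2n$ exhibits the Lie-algebra isomorphism and, simultaneously, identifies the representation of $\fg$ on $\C^{2^n}$ with the standard spinor representation of $\so(2n+1)$. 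This already supplies the inclusion $\fk\subseteq\so(2n+1)$.

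For the reverse inclusion I would propagate Majoranas along the chain. From the seed $ic_1, ic_2$, the brackets with the first coupling yield $[ic_1, c_2 c_3 - c_1 c_4] = -2 ic_4$ and $[ic_2, c_2 c_3 - c_1 c_4] = 2ic_3$; iterating with the next coupling $c_4 c_5 - c_3 c_6$ produces $ic_5, ic_6$, and so on until all $ic_a$ with $1\leq a\leq 2n$ lie in $\fk$. Then $[ic_a,ic_b]=-2c_a c_b$ supplies every quadratic generator, so $\fk=\so(2n+1)$ as a subalgebra of $\su(2^n)$.

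Finally, irreducibility of the resulting representation: I would invoke Lemma~\ref{lem:centraliser} by computing the centraliser. Any element $s$ commuting with $i\mathrm{X}_1, i\mathrm{Y}_1, i\mathrm{Z}_1$ factors as $\unity_2\otimes s'$; commuting additionally with the XX-couplings then propagates the constraint along the connected coupling chain (cf.\ Theorem~\ref{thm:bilinear} and Theorem~\ref{pair_connected}) and forces $s=0$. Alternatively, one may simply appeal to the fact that the $2^n$-dimensional spinor representation of $\so(2n+1)$ constructed above is well known to be irreducible. The main technical obstacle is the bookkeeping needed to check the isomorphism $\fg\cong\so(2n+1)$ and the inductive propagation step; neither is conceptually deep, but both require careful index management to align the Clifford-algebra computation with the standard labelling of $\so(2n+1)$.
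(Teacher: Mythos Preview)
Your proof is correct and takes a genuinely different route from the paper's own argument. The paper proceeds by a direct induction in the Pauli basis: it verifies $n=2$ by hand, shows how the $(n{-}1)$-qubit algebra is recovered inside the $n$-qubit system by twice commuting with $z_1$, writes out an explicit Pauli-string basis of the Lie closure, and then counts $2n^2+n=\dim\so(2n{+}1)$ elements; the identification with $\so(2n{+}1)$ is clinched a posteriori via {\bf Algorithm~3}. Your approach instead passes through the Jordan--Wigner/Majorana picture and recognises the dynamic algebra a priori as the spinor realisation of $\so(2n{+}1)$: the linear Majoranas $ic_a$ together with the bilinears $c_ac_b$ span an $\so(2n{+}1)$ inside $\su(2^n)$, the generators visibly lie there, and the propagation of $ic_1,ic_2$ along the chain reaches all $ic_a$ and hence all of $\so(2n{+}1)$. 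This is in fact the viewpoint the paper itself adopts later in Sec.~\ref{sec:fermionic} (see Theorem~\ref{thm:gen_quad_lin} and the paragraph before it), where quadratic-plus-linear fermionic Hamiltonians are shown to generate $\so(2d{+}1)$ and the cross-reference back to Proposition~\ref{prop:so2n+1} is made explicit. Your argument is thus more conceptual and immediately explains both the algebra type and irreducibility (the $2^n$-dimensional spinor representation of $\so(2n{+}1)$), while the paper's proof has the advantage of producing an explicit Pauli basis and a Cartan decomposition that are used downstream.

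One small point to tighten: in your propagation step you commute with ``the next coupling $c_4c_5-c_3c_6$'' as if the individual nearest-neighbour terms were available, but only the full drift $iH_0=\sum_j(c_{2j}c_{2j+1}-c_{2j-1}c_{2j+2})$ is. Commuting, say, $ic_3$ with $iH_0$ produces $-2ic_2-2ic_6$ rather than $-2ic_6$ alone; you must subtract the already-obtained $ic_2$ to isolate $ic_6$. The inductive step therefore reads: given $ic_1,\ldots,ic_{2k}$, the commutators $[ic_{2k-1},iH_0]$ and $[ic_{2k},iH_0]$ each return a known Majorana plus one new one, whence $ic_{2k+1},ic_{2k+2}\in\fk$. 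This is a bookkeeping fix, not a gap in the strategy. Also, in your explicit isomorphism $ic_a\leftrightarrow E_{0a}-E_{a0}$, $c_ac_b\leftrightarrow E_{ab}-E_{ba}$ the structure constants differ by a factor of $2$; rescaling the bilinears by $\tfrac12$ (i.e.\ the usual $\tfrac14[c_a,c_b]$) makes the map a genuine Lie isomorphism.
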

\begin{proof}
In view of later applications, the proof is kept constructive. 
For better readability, let $x$, $y$, and $z$ denote Pauli matrices.

First, as a foundation for induction, the case $n=2$ can be settled by
direct calculation to verify 
\begin{equation}\label{eqn:so5}
i\expt{x1,y1,(xx+yy)}_{\rm Lie}=i\{x1,y1,z1,xx,yy,xy,yx,zx,zy,1z\}\rep\so(5)\;,
\end{equation}
where the final identity can be corroborated by {\bf Algorithm~3}
as will be illustrated in Eqn.~\eqref{eqn:so5-S} below.

Second, for the induction from $(n-1)$ to $n$, where the drift Hamiltonian 
is extended by the final Heisenberg coupling between the qubit pair $(n-1),n$ 
to take the form $H_0:= \sum_{k=1}^{n-1} x_kx_{k+1} + y_ky_{k+1}$,
observe that all the algebra elements for $n-1$ qubits re-occur. 
Upon twice commuting with $z1\cdots 1$ arising at the controlled end, 
the first pair coupling term $x_1x_2 + y_1y_2$ can be recovered: 
$
\adr^2_{i\,z_1}(i\sum_{k=1}^{n-1} x_kx_{k+1} + y_ky_{k+1})
	= \adr_{i\,z_1}(-i(y_1x_2-x_1y_2))
	= -i(x_1x_2+y_1y_2)
$
and then by virtue of Eqn.~\eqref{eqn:so5} also $1z1\cdots 1$ and thus recursively
all the terms in the Lie closure at the stage $n-1$.

Third, once having embedded the $(n{-}1)$-qubit algebra into the $n$-qubit system, 
the induction boils down to including the coupling term $x_{n-1}x_n + y_{n-1}y_n$,
which takes $1\cdots 1z_{n-1}1$ to $1\cdots 1z_n$. 
Writing braces $\xy$ whenever one has the choices $\{x,y\}$ one gets
the following complete list:\\[3mm]
\begin{tabular}{l l l}
\multicolumn{2}{l}{The Pauli-basis elements for $\so(2n+1)$}\\
\hline\hline\\[-3mm]
$2$ terms & $\xy 1 \cdots 1$ &\\[1mm]
$n$ terms & $z 1 \cdots 1$ & $1 z 1 \cdots 1$  etc \\
\hline\\[-3mm]
$2$ terms & $z \xy 1 \cdots 1$ &\\[1mm]
$4(n-1)$ terms & $\xy \xy 1 \cdots 1$ & $ 1 \xy \xy 1 \cdots 1$  etc \\[1mm]
\hline
$\vdots$ & $\vdots$ & $\vdots$\\
$2$ terms & $ z z \cdots z \xy_k 1 \cdots 1$ &\\[1mm]
$4(n-k+1)$ terms &$ \xy z \cdots z \xy_k 1 \cdots 1$ & $1 \xy_2 z \cdots z \xy_{k+1} 1 \cdots 1$  etc\\
$\vdots$ & $\vdots$ & $\vdots$\\
\hline\\[-3mm]
$2$ terms & $ z z \cdots z \xy_n$\\[1mm]
$4$ terms & $ \xy z \cdots z \xy_n$\\[1mm]
\hline\hline
\end{tabular}\\[2mm]
Finally counting terms gives a total of $2n + n + 4\sum_{k=1}^{n-1} (n-k) = 3n + 4\sum_{k=1}^{n-1} k 
	= 3n + 2n(n-1) = 2n^2 + n = \dim \so(2n+1) = n(2n+1)$ elements to span the basis
of the Hamiltonians $H_\nu$ generating $\expt{iH_\nu}_{\rm Lie} = \so(2n+1)$.
So for all \mbox{$n$-spin-$\tfrac{1}{2}$} Heisenberg-$\XX$ chains controlled locally at one end 
we have obtained a constructive scheme to determine  
irreducible representations of their respective dynamic Lie algebras $\so(2n+1)$ 
in terms of Pauli bases.
\hfill$\blacksquare$
\end{proof}

In contrast, $n$-spin-$\tfrac{1}{2}$ chains with Heisenberg-{\XX} interactions and {\em two} independently
controllable qubits, one at each end, provide a realisation of a series of {\em even-oder} orthogonal
algebras $\so(2n+2)$ for $n$ qubits, the first examples being shown in Fig.~\ref{fig:sys-so2k}.
\begin{figure}[Ht!]
\begin{center}
\includegraphics[height=7mm]{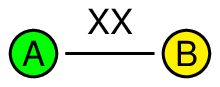}\hspace{9mm}
\includegraphics[height=7mm]{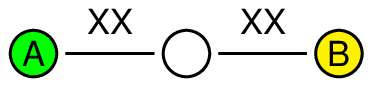}\hspace{9mm}
\includegraphics[height=7mm]{ex3}\\[2mm]
\includegraphics[height=7mm]{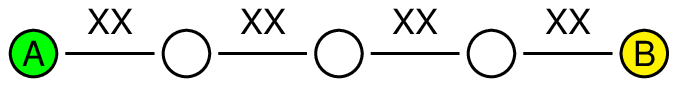}\hspace{2mm}
\end{center}
\caption{$n$-Spin-$\tfrac{1}{2}$ Heisenberg-{\XX} chains with even-order orthogonal system algebras 
\mbox{$\so(2n+2)$} result by allowing just two locally controllable qubits at the ends. A full series can
be constructed, the first examples of which are shown $\so(6)\cong\su(4)$, $\so(8)$, $\so(10)$, and $\so(12)$.
For $n=2$ one gets $\so(6)\cong\su(4)$ as a fully controllable two-qubit system.
\label{fig:sys-so2k}}
\end{figure}

\begin{proposition}\label{prop:so2n+2}
Heisenberg-{\XX} chains of $n$ spin-$\tfrac{1}{2}$ qubits ($n\geq 2$) and {\em two} individually 
locally controllable qubits, one at each end, give rise to the dynamic system algebras $\so(2n+2)$
as irreducible subalgebras embedded in $\su(2^n)$.
\end{proposition}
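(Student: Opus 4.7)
\emph{Approach.} My plan is to mirror the constructive induction used in the proof of Proposition~\ref{prop:so2n+1} while exploiting the observation that adding a second locally controlled qubit at the far end should extend the dynamic algebra from $\so(2n+1)$ to $\so(2n+2)$. The base case $n=2$ is dispatched directly: two spin-$\tfrac{1}{2}$ qubits with local controls on both sites and a single Heisenberg-$\XX$ coupling form a fully operator controllable two-qubit system with dynamic algebra $\fk=\su(4)\cong\so(6)$, in agreement with $\so(2n+2)|_{n=2}$.

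\emph{Constructive enumeration.} For $n\geq 3$, the generating set of the system at hand properly contains that of Proposition~\ref{prop:so2n+1}, so by that proposition $\fk\supseteq\so(2n+1)$. The new local generators at the opposite end, namely $1\cdots 1 x$ and $1\cdots 1 y$, together with their iterated commutators against the already-present basis, supply precisely
\begin{equation*}
\dim\so(2n+2)-\dim\so(2n+1)=(n+1)(2n+1)-n(2n+1)=2n+1
\end{equation*}
linearly independent new elements. Concretely, commuting $1\cdots 1 x$ and $1\cdots 1 y$ with the ``tail'' strings of the form $z\cdots z\,\xy_k\,1\cdots 1$ produced by the $\so(2n+1)$ construction, and then commuting the results back against the Heisenberg-$\XX$ couplings, recovers all missing Pauli strings terminating non-trivially on the last qubit. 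I would tabulate them in the same style as in the proof of Proposition~\ref{prop:so2n+1} and verify that the total list contains exactly $(n+1)(2n+1)=\dim\so(2n+2)$ elements and is closed under commutators.

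\emph{Identifying the algebra.} With a Lie-closed basis of the predicted cardinality in hand, I would invoke Theorem~\ref{thm:filter1} (\textbf{Algorithm~3}) to confirm that $\fk$ is an orthogonal subalgebra of $\su(2^n)$: it suffices to exhibit a single symmetric unitary $S$ (with $S\bar S=+\unity$) satisfying $SH_\nu+H_\nu^t S=0$ simultaneously for every generator $H_\nu$, which is natural since Heisenberg-$\XX$ couplings and single-qubit $x,y,z$ all admit a common real (Majorana-like) skew-adjoint form. Irreducibility and triviality of the centraliser $\fk'$ follow from the connected coupling graph and \textbf{Algorithm~2}. Combined with the dimension count and the list of irreducible simple subalgebras in Table~\ref{tab:two}, the only orthogonal candidate of dimension $(n+1)(2n+1)$ irreducibly embedded in $\su(2^n)$ is $\so(2n+2)$, giving $\fk\cong\so(2n+2)$.

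\emph{Main obstacle.} The tedious part is the combinatorial bookkeeping that shows closure of the enlarged basis without double counting. A conceptually much cleaner route, and the one I would prefer to present, uses the Jordan-Wigner picture anticipated in Section~\ref{sec:fermionic}: the Heisenberg-$\XX$ chain becomes nearest-neighbour free-fermion hopping on $n$ sites, a local control at an end exposes the two Majorana modes at that site, and local controls at both ends yield $2n+2$ Majorana operators whose skew-Hermitian quadratic forms span precisely $\so(2n+2)$. This reformulation makes the closure, the dimension count, and the orthogonal character of the representation transparent, and reduces the verification to the familiar identification of the quadratic Majorana algebra on $2n+2$ modes.
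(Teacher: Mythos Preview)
Your constructive enumeration is essentially the paper's approach: extend the $\so(2n+1)$ basis by the ``mirror-image'' strings coming from local control at the far end, add the single element $z_1z_2\cdots z_n$, and count $(n+1)(2n+1)=\dim\so(2n+2)$. That part is fine and matches the paper.

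The identification step, however, contains a genuine error. You propose to confirm via \textbf{Algorithm~3} that $\fk$ is an \emph{orthogonal} subalgebra by exhibiting a symmetric unitary $S$ with $S\bar S=+\unity$. This fails in general: the half-spinor representation of $\so(2n+2)$ in $\su(2^n)$ is orthogonal only for $n\equiv 3\pmod 4$; it is symplectic for $n\equiv 1\pmod 4$ and \emph{unitary} for $n$ even. The case $n=4$ is precisely Example~\ref{ex:counter1} in the paper, where $\so(10)\subset\su(16)$ is explicitly exhibited as a proper unitary subalgebra---so for even $n$ \textbf{Algorithm~3} returns an empty solution set and your argument collapses. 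The paper avoids this by not invoking \textbf{Algorithm~3} at all in the proof; it simply lists the explicit Pauli basis and counts.

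Your Jordan--Wigner shortcut is also misstated. On $n$ sites there are only $2n$ Majorana modes, not $2n+2$. Local control at the \emph{first} site indeed exposes $c_1=x_1$ and $c_{n+1}=y_1$ (linear Majoranas), but local control at the \emph{last} site gives $x_n$ and $y_n$, which under the standard Jordan--Wigner map are degree-$(2n{-}1)$ Majorana monomials, not new Majorana modes. The correct fermionic picture (developed later in Sec.~\ref{sec:fermionic}) is that $\so(2n+2)=\so(2d)$ with $d=n+1$ levels, realised on $n$ qubits via the half-spinor representation of dimension $2^{d-1}=2^n$; this is the content of the Corollary following Theorem~\ref{thm:gen_quad}, and it is not as immediate as ``count Majoranas''.
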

\begin{proof}
The constructive proof follows in entire analogy to the one of Proposition~\ref{prop:so2n+1}: 
however, the local controls at the second end 
imply that the Lie closure comprises each term occuring in the above list also read
from right to left thus duplicating the first line in each category from two terms to 
four terms. Since the second lines in each category already comprise the reverse
terms, one obtains the following complete list of elements:\\[3mm]
\begin{tabular}{l l l}
\multicolumn{2}{l}{The Pauli-basis elements for $\so(2n+2)$}\\
\hline\hline\\[-3mm]
$4$ terms & $\xy 1 \cdots 1$ & $1\cdots1\xy$\\[1mm]
$n$ terms & $z 1 \cdots 1$ & $1 z 1 \cdots 1$  etc \\
\hline\\[-3mm]
$4$ terms & $z \xy 1 \cdots 1$ & $1 \cdots 1 \xy z$\\[1mm]
$4(n-1)$ terms & $\xy \xy 1 \cdots 1$ & $ 1 \xy \xy 1 \cdots 1$  etc \\[1mm]
\hline
$\vdots$ & $\vdots$ & $\vdots$\\
$4$ terms & $ z  \cdots z \xy_k 1 \cdots 1$ & $ 1 \cdots 1 \xy_{n-k} z \cdots  z $\\[1mm]
$4(n-k+1)$ terms &$ \xy z \cdots z \xy_k 1 \cdots 1$ & $1 \xy_2 z \cdots z \xy_{k+1} 1 \cdots 1$  etc\\
$\vdots$ & $\vdots$ & $\vdots$\\
\hline\\[-3mm]
$4$ terms & $ z  \cdots z \xy_n$ & $\xy_1 z \cdots  z $ \\[1mm]
$4$ terms & $ \xy z \cdots z \xy_n$\\[1mm]
$1$ term  & $ z z \cdots z z$\\[1mm]
\hline\hline
\end{tabular}\\[8mm]
Finally, by the commutator $[(z \cdots z \xy_k 1\cdots 1),(1\cdots 1\xy_{n-k'}z\cdots z)]$
with $k=n-k'$, the longitudinal spin-order term $z_1z_2\cdots z_n$ listed last arises.
Counting terms, one arrives at a total of
$4n + n + 1 + 4\sum_{j=1}^{n-1} (n-j) = 5n + 1 + 4\sum_{j=1}^{n-1} j 
= 5n + 1 + 2n(n-1) = 2n^2 + 3n + 1 = \dim \so(2n+2)$ elements.
Thus also for all \mbox{$n$-spin-$\tfrac{1}{2}$} Heisenberg-$\XX$ chains individually 
controlled locally at the two ends we have provided a constructive scheme to determine  
irreducible representations of their respective dynamic Lie algebras $\so(2n+2)$ 
in terms of Pauli bases.
\hfill$\blacksquare$
\end{proof}

In both instances of
Heisenberg-$\XX$ chains controlled locally at one end [Fig.~\ref{fig:sys-so2k+1} with $\so(2n+1)$] 
or at two ends [Fig.~\ref{fig:sys-so2k} with $\so(2n+2)$] 
there are convenient Cartan decompositions $\fg=\fk\oplus\fp$:
the $\fk$-parts consist of per-{\em anti}symmetric matrices, while the $\fp$-parts comprise the
per-{\em symmetric} matrices, recalling that per-symmetry relates to reflection at the minor diagonal.
In both of the above listings, the respective subalgebras $\fk$ to $\so(2n+1)$ or $\so(2n+2)$
encompass the Hamiltonians with {\em odd} numbers of $z$-terms, while the respective 
subspaces $\fp$ contain the elements with {\em even} numbers of $z$-terms (including zero $z$-terms).

\medskip
For illustration, in the first example, i.e. the two-qubit Heisenberg-$\XX$ chain 
of Fig.~\ref{fig:sys-so2k+1},  
the transformation matrix $S$ satisfying $SH+H^tS=0$ according 
to {\bf Algorithm~3} is given by
\begin{equation}\label{eqn:so5-S}
S_1=\left(\begin{smallmatrix} 0 &0 &0 &+1\\ 0 &0 &+1 &0\\ 
0 &-1 &0 &0\\ -1 &0 &0 &0 \end{smallmatrix}\right) \equiv J_2\;.
\end{equation}
Here $S_1\bar{S}_1=-\unity$ reconfirms $\so(5)\cong\usp(4/2)$. 

\medskip
As a second example, for both of the three-qubit cases in Fig.~\ref{fig:sys-so2k+1} 
and Fig.~\ref{fig:sys-so2k}  
corresponding to $\so(7)$ and $\so(8)$, {\bf Algorithm~3} provides 
\begin{equation}
S_2=\left(\begin{smallmatrix} 
0 &0 &0 &0 &0 &0 &0 &+1  \\
0 &0 &0 &0 &0 &0 &-1 &0  \\
0 &0 &0 &0 &0 &+1 &0 &0  \\
0 &0 &0 &0 &-1 &0 &0 &0  \\
0 &0 &0 &-1 &0 &0 &0 &0  \\
0 &0 &+1 &0 &0 &0 &0 &0  \\
0 &-1 &0 &0 &0 &0 &0 &0  \\
+1 &0 &0 &0 &0 &0 &0 &0  \\
\end{smallmatrix}\right)\;,
\end{equation}
where $S_2\bar{S}_2=+\unity$ shows the orthogonal type of the respective irreducible representations.

\subsection{Dynamic Systems with Symplectic Algebras}\label{sec:symplectic}
Based on the smallest examples of qubit systems with Ising-{\ZZ}
interactions shown in Fig.~\ref{fig:sys-sp4-16}, even on the basis of {\em collective controls} one may construct
a full sequence of $n$ spin-$\tfrac{1}{2}$ chains with $n$ odd, the dynamic system algebras of which 
are the symplectic ones $\usp(2^{n-1})$. Note again that the bilinear control systems with
symplectic system algebras are {\em pure-state controllable} \cite{AA03}, whereas they fail
to be fully operator controllable.\\
\begin{figure}[Ht!]
\begin{center}
\begin{tabular}{c@{\hspace{20mm}}c}
{\sf (a)}  & {\sf (b) }\\[3mm]
\includegraphics[height=7mm]{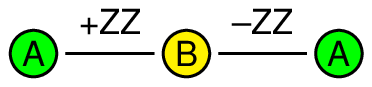} &
\includegraphics[height=7mm]{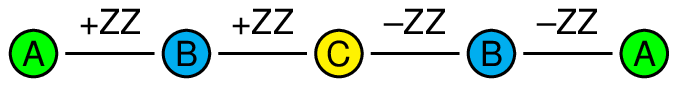}\\[4mm]
\includegraphics[height=7mm]{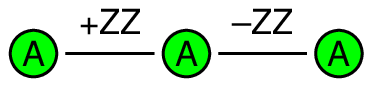} &
\includegraphics[height=7mm]{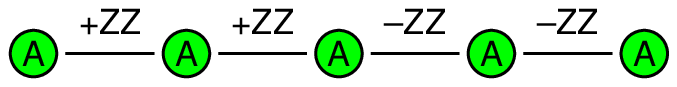}
\end{tabular}
\end{center}
\caption{Quantum systems with dynamic Lie algebras $\usp(4)$ [see (a)] and $\usp(16)$ [see (b)] as examples
of a series of Ising chains of $n=2k+1$ qubits with positive $\ZZ$ coupling terms on one
branch and negative couplings on the other. They give rise to the dynamic algebras $\usp(2^{n-1})$
irreducibly embedded in $\su(2^{n})$, respectively. The limiting case $k=0$ gives 
$\usp(1)\cong\su(2)$ as a single fully controllable qubit.
\label{fig:sys-sp4-16}}
\end{figure}
\begin{proposition}\label{prop:sp-n}
Ising-{\ZZ} chains of $n=2k+1$ spin-$\tfrac{1}{2}$ qubits ($k\geq 1$) including $k$ pairs of 
qubits which can be controlled simultaneously and one qubit in the middle of the chain which can 
be controlled independently as in the first row of Fig.~\ref{fig:sys-sp4-16}
give rise to the dynamic system algebras $\usp(2^{n-1})=\usp(2^{2k})$
as irreducible subalgebras embedded in $\su(2^n)=\su(2^{2k+1})$.
We obtain the same dynamic algebras when all qubits can only be controlled simultaneously
as in the second row of Fig.~\ref{fig:sys-sp4-16}.
\end{proposition}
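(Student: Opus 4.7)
The plan is to adapt the constructive scheme used in Propositions~\ref{prop:so2n+1} and~\ref{prop:so2n+2} to the symplectic setting. Specifically, I would first check that the dynamic algebra $\fk$ is a simple irreducible subalgebra of $\su(2^n)$, then establish the two inclusions $\fk\subseteq\usp(2^{n-1})$ and $\usp(2^{n-1})\subseteq\fk$.

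For the structural step, the coupling graph of the chain is the path on $n$ vertices and hence is connected, so by Theorem~\ref{pair_connected} it suffices to verify via {\bf Algorithm~2} that the centraliser of the generating Hamiltonians in $\su(2^n)$ is trivial. The sign asymmetry of the drift $H_d=\sum_{j=1}^{k}Z_jZ_{j+1}-\sum_{j=k+1}^{2k}Z_jZ_{j+1}$ is essential here: the chain-reversing SWAP flips the sign of $H_d$ and so cannot lie in the centraliser, and the remaining candidate symmetries are killed by the middle-qubit controls in case~(a), respectively by the commutators of the collective controls with $H_d$ in case~(b).

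For the containment $\fk\subseteq\usp(2^{n-1})$, the oddness of $n=2k+1$ together with Lemma~\ref{lem:loc} already makes the symplectic character plausible. Concretely, one applies {\bf Algorithm~3} (Lemma~\ref{lem:ssym2}) and exhibits a skew-symmetric unitary $S$ with $S\bar S=-\unity$ satisfying $SH_\nu+H_\nu^tS=0$ for every generator $H_\nu$. A natural ansatz is a Pauli tensor carrying a $Y$ factor on the middle qubit and sign-twist factors on the two branches matching the sign pattern of $H_d$; such an $S$ can be verified directly for the base case $n=3$ (where the Lie closure turns out to be $36$-dimensional, i.e.\ equal to $\dim\usp(4)$) and extended branch-wise for larger $k$.

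For the reverse inclusion, once $\fk$ is simple, irreducible, and contained in $\usp(2^{n-1})$, it must equal $\usp(2^{n-1})$ or a strict simple irreducible subalgebra thereof. The strict candidates can be read off Tables~\ref{tab:su_subalg} and~\ref{tab:two} and are all of substantially smaller dimension; they are excluded by exhibiting inside $\fk$ a spanning set of dimension $2^{n-1}(2^n+1)$, produced by iterated commutators of $H_d$ with the middle-qubit (or effective middle-qubit) generators to cover Pauli strings of every admissible weight. Case~(b) then reduces to case~(a) once one verifies that commutators of the collective controls with the sign-asymmetric drift generate effective individual actions on the middle qubit. The step I expect to be most delicate is the explicit construction of $S$ together with the exclusion of strict simple irreducible subalgebras, since both hinge on the sign-alternation of $H_d$ being threaded consistently through the twist factors in $S$ and through the generated basis.
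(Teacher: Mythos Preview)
Your overall architecture---trivial centraliser $\Rightarrow$ simple and irreducible via Theorem~\ref{pair_connected}, then $\fk\subseteq\usp(2^{n-1})$ via an explicit skew-symmetric $S$, then equality by excluding proper subalgebras---is sound in outline, and the first two steps are close to what the paper does (the paper in fact gives $S$ explicitly, after a qubit reordering, as a tensor product of a fixed $4{\times}4$ block). Your argument for triviality of the centraliser is incomplete, though: ruling out the chain-reversal SWAP eliminates one outer symmetry, but the centraliser in $\su(2^n)$ could a priori contain arbitrary skew-Hermitian elements, not just permutation generators. The paper handles irreducibility inductively: assuming $\fk_{k-1}$ acts irreducibly on the inner $2k{-}1$ qubits, the centraliser of its embedding into the $(2k{+}1)$-qubit system is confined to operators on the two outer qubits, and one checks directly that no such operator commutes with all three generators.

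The more serious gap is the reverse inclusion. Reading off Tables~\ref{tab:su_subalg} and~\ref{tab:two} only covers $n\leq 15$, so it cannot establish the result for arbitrary $k$; and ``exhibiting a spanning set of dimension $2^{n-1}(2^n{+}1)$ by iterated commutators'' is precisely computing the Lie closure (Algorithm~1), which is not a proof for general $k$ unless you supply an explicit inductive Pauli-string pattern analogous to those in Propositions~\ref{prop:so2n+1}--\ref{prop:so2n+2}. No such pattern is evident here, and the paper does not attempt one. Instead, the paper first establishes $\fk_{k-1}\subset\fk_k$ by a sequence of explicit commutator computations, then shows that $\fk_k$ contains enough additional elements to force $\usp(2^{2k-4})\otimes\so(2^4)\subseteq\fk_k\subseteq\usp(2^{2k})$. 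The punchline is Dynkin's Theorem~1.4: $\usp(2^{2k-4})\otimes\so(2^4)$ is a \emph{maximal} subalgebra of $\usp(2^{2k})$, and since the drift $f_3\in\fk_k$ is not of product form, the inclusion must be equality. This maximality argument is the key idea missing from your proposal.
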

\begin{proof}
We focus on the dynamic algebra $\fk_k$
corresponding to the case when all $2k+1$ qubits can only be controlled simultaneously
as in the second row of Fig.~\ref{fig:sys-sp4-16}. We denote by
$\bar{\fk}_k$ the dynamic algebra corresponding to the first row of Fig.~\ref{fig:sys-sp4-16}. We use the notation 
\begin{equation}
\mathrm{X}_j:= \underbrace{\mathrm{I}\cdots\mathrm{I}}_{j-1}\mathrm{X}\underbrace{\mathrm{I}\cdots \mathrm{I}}_{n-j},\quad \mathrm{Y}_j:= \underbrace{\mathrm{I}\cdots\mathrm{I}}_{j-1}\mathrm{Y}\underbrace{\mathrm{I}\cdots \mathrm{I}}_{n-j},\quad \text{ and }\quad \mathrm{Z}_j:= \underbrace{\mathrm{I}\cdots\mathrm{I}}_{j-1}\mathrm{Z}\underbrace{\mathrm{I}\cdots \mathrm{I}}_{n-j}\label{eqn:notation}
\end{equation}
to denote the operators which act, respectively, as $\mathrm{X}$, $\mathrm{Y}$, and $\mathrm{Z}$
on the $j$-th qubit and as the identity on all other qubits. We remark that the statements
of the Theorem can be directly verified for $k\in \{0,1\}$. We organize the proof in steps:
first we prove that $\fk_{k-1} \subseteq \fk_{k}$, second
we prove that 
$\bar{\fk}_k=\fk_k$, later we show
that $\fk_k$ is given in an irreducible (third step) and symplectic (fourth step) representation, and
in the end we prove that $\fk_{k}$ is not a proper subalgebra of
$\usp(2^{n-1})=\usp(2^{2k})$. 
Recall, that $\fk_k$ is generated by the operators
\begin{gather*}
f_1=-\tfrac{i}{2}\sum_{j=1}^{2k+1} \mathrm{X}_j,\;
f_2=-\tfrac{i}{2}\sum_{j=1}^{2k+1} \mathrm{Y}_j,\;
f_3=-\tfrac{i}{2} \left(
\sum_{j=1}^{k} \mathrm{Z}_j \mathrm{Z}_{j+1} 
  -  \sum_{j=k+1}^{2k} \mathrm{Z}_j \mathrm{Z}_{j+1}\right).
\end{gather*}
The corresponding algebra $\fk_{k-1}$ on $2k-1$ qubits can be embedded into $2k+1$ qubits 
using the operators
\begin{gather*}
g_1=-\tfrac{i}{2}\sum_{j=2}^{2k} \mathrm{X}_j,\;
g_2=-\tfrac{i}{2}\sum_{j=2}^{2k} \mathrm{Y}_j,\;
g_3=-\tfrac{i}{2}\left(\sum_{j=2}^{k} \mathrm{Z}_j \mathrm{Z}_{j+1} 
  -  \sum_{j=k+1}^{2k-1} \mathrm{Z}_j \mathrm{Z}_{j+1}\right).
\end{gather*}
We compute repeated commutators of $f_3$ with $f_1$. 
In the first two iterations, we get $f_4=[f_3,f_1]=-\tfrac{i}{2}(
\sum_{j=1}^{k} \left[\mathrm{Y}_j \mathrm{Z}_{j+1} + \mathrm{Z}_{j}\mathrm{Y}_{j+1}\right]
- \sum_{j=k+1}^{2k}\left[ \mathrm{Y}_j \mathrm{Z}_{j+1} + \mathrm{Z}_{j}\mathrm{Y}_{j+1} \right])
$ and 
$f_5=[f_3,f_4]=-\tfrac{i}{2}[
- \mathrm{X}_1 
 - 2\sum_{j=2}^{2k} \mathrm{X}_j 
- \mathrm{X}_{2k+1}
- 2( \sum_{j=2}^k \mathrm{Z}_{j-1}\mathrm{X}_j\mathrm{Z}_{j+1}
-  \mathrm{Z}_{k}\mathrm{X}_{k+1}\mathrm{Z}_{k+2}+ \sum_{j=k+2}^{2k} \mathrm{Z}_{j-1}\mathrm{X}_j\mathrm{Z}_{j+1})]
$.
Repeating this process, we obtain the element
$f_6=[f_3,f_5]=-\tfrac{i}{2}[
- \mathrm{Y}_1\mathrm{Z}_2 + \mathrm{Z}_{2k}\mathrm{Y}_{2k+1}
- 4( \sum_{j=2}^{k} \mathrm{Y}_j\mathrm{Z}_{j+1}
- \sum_{j=k+1}^{2k} \mathrm{Y}_j\mathrm{Z}_{j+1} 
+ \sum_{j=1}^{k} \mathrm{Z}_j\mathrm{Y}_{j+1}
- \sum_{j=k+1}^{2k-1} \mathrm{Z}_j\mathrm{Y}_{j+1})]
$. Finally, we compute the next element
$f_7=[f_3,f_6]=-\tfrac{i}{2}(
 \mathrm{X}_1 
+ 8  \sum_{j=2}^{2k} \mathrm{X}_j
+ \mathrm{X}_{2k+1}
+ 8 \sum_{j=2}^k \mathrm{Z}_{j-1}\mathrm{X}_j\mathrm{Z}_{j+1}
- 8 \mathrm{Z}_{k}\mathrm{X}_{k+1}\mathrm{Z}_{k+2}
+ 8 \sum_{j=k+2}^{2k} \mathrm{Z}_{j-1}\mathrm{X}_j\mathrm{Z}_{j+1})
$.
We obtain that $f_8=-(4f_5+f_7)/3=-\tfrac{i}{2}(\mathrm{X}_1+\mathrm{X}_{2k+1})$ and
$g_1=f_1-f_8$. The proof for $f_9=-\tfrac{i}{2}(\mathrm{Y}_1+\mathrm{Y}_{2k+1})$ and $g_2=f_2-f_9$ is similar.
We compute a few more commutators: First, we set
$f_{10}=[f_3,g_1]=-\tfrac{i}{2}( \sum_{j=2}^{k} \left[ \mathrm{Y}_j \mathrm{Z}_{j+1} + 
\mathrm{Z}_j\mathrm{Y}_{j+1} \right]-\sum_{j=k+1}^{2k-1} \left[ \mathrm{Y}_j \mathrm{Z}_{j+1} + 
\mathrm{Z}_j\mathrm{Y}_{j+1} \right]
+\mathrm{Z}_1\mathrm{Y}_2
-\mathrm{Y}_{2k} \mathrm{Z}_{2k+1})$.
The other commutators are
$f_{11}=[f_8,f_{10}]=-\tfrac{i}{2}(-\mathrm{Y}_1 \mathrm{Y}_2 + \mathrm{Y}_{2k} \mathrm{Y}_{2k+1})$,
$f_{12}=[f_1,f_{11}]=-\tfrac{i}{2}(-\mathrm{Z}_1\mathrm{Y}_2-\mathrm{Y}_1\mathrm{Z}_2
+\mathrm{Z}_{2k}\mathrm{Y}_{2k+1}+\mathrm{Y}_{2k}\mathrm{Z}_{2k+1})$, and
$f_{13}=[f_1,f_{12}]=-\tfrac{i}{2}(2 \mathrm{Y}_1\mathrm{Y}_2 - 2 \mathrm{Z}_1\mathrm{Z}_2 
-2\mathrm{Y}_{2k}\mathrm{Y}_{2k+1} + 2\mathrm{Z}_{2k}\mathrm{Z}_{2k+1})$.
It follows that
$f_{14}=-\tfrac{1}{2}f_{13}-f_{11}=-\tfrac{i}{2}(\mathrm{Z}_1\mathrm{Z}_2-\mathrm{Z}_{2k}\mathrm{Z}_{2k+1})$
and $g_3=f_3-f_{14}$. We obtain $\fk_{k-1} \subseteq \fk_{k}$ completing
the first step of the proof. Relying on the form of $f_8$ and $f_9$ we can prove by 
induction that $\bar{\fk}_k = \fk_k$ (second step). Assuming by induction that
$\fk_{k-1}$ is irreducibly embedded on $2k-1$ qubits, we obtain that the centralizer
of $\fk_{k-1}$ (embedded on $2k+1$ qubits) is given by all operators $O$ which operate only on the 
two outer qubits. But the generators $f_1$, $f_2$, and $f_3$ of $\fk_{k}$ do not simultaneously commute with operators $O$. Therefore, $\fk_{k}$ is irreducibly embedded on $2k+1$ qubits
(third step).
We switch to a new basis by reordering the qubits
according to the numbers in the figure:
\begin{center}
\includegraphics[height=7mm]{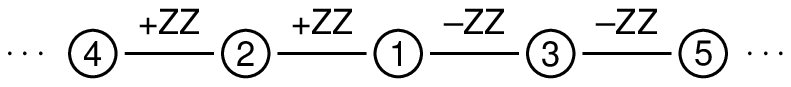}
\end{center}
In this basis, we can provide 
a matrix 
$$S :=
\begin{pmatrix}
0 & -1\\
1 & 0
\end{pmatrix}
\otimes M^{\otimes k}
=\begin{pmatrix}
0 & -1\\
1 & 0
\end{pmatrix}
\otimes
\begin{pmatrix}
0 & 0 & 0 & 1\\
0 & -1 & 0 & 0\\
0 & 0 & -1 & 0\\
1 & 0 & 0 & 0
\end{pmatrix}^{\otimes k}
$$ 
which satisfies $SH+H^{t}S=0$ for all elements $iH$ of $\fk_k$.
In particular, we have $S\bar{S}=-\unity_{2^n}$.
This can be readily verified on the generators for $k \in \{0,1\}$.  
Using our commutator computations we obtain that   $\fk_k=
\langle f_1,f_2,f_3 \rangle_{\rm Lie}$  is equal to $\langle g_1,g_2,g_3,f_8,f_9,f_{14} \rangle_{\rm Lie}$. 
Thus we can prove $SH+H^{t}S=0$ by induction on
$k$: Assuming the equation holds for $g_1$,$g_2$, and $g_3$
(i.e.\ for $k-1$), we need to prove that it also holds for $f_8$,$f_9$, and $f_{14}$
which are respectively given in the new basis by
$-\tfrac{i}{2}(\mathrm{X}_{2k}+\mathrm{X}_{2k+1})$, $-\tfrac{i}{2}(\mathrm{Y}_{2k}+\mathrm{Y}_{2k+1})$, 
and $-\tfrac{i}{2}(\mathrm{Z}_{2k-2}\mathrm{Z}_{2k}-\mathrm{Z}_{2k-1}\mathrm{Z}_{2k+1})$.
But this can be directly checked on the four outer qubits using 
$S_2=M\otimes M$. As $\fk_k$ is given in an irreducible representation,
the matrix $S$ is unique up to a scalar factor.
 This shows that $\fk_k$  is given in a symplectic representation 
and that $\fk_k \subseteq \usp(2^{2k})$ (fourth step).
Staying in our new basis, we prove  that $\fk_k$ 
contains the elements $P_j:=-\tfrac{i}{2}(\mathrm{X}_j \mathrm{Z}_{j+1}
-\mathrm{Z}_j\mathrm{X}_{j+1})$ and $Q_j:=-\tfrac{i}{2}(\mathrm{X}_j \mathrm{Y}_{j+1}
-\mathrm{Y}_j \mathrm{X}_{j+1})$ for all even $j\in \{2,\ldots,2k\}$
by induction on $j$.
This can be readily verified for $j=2$ considering $\fk_1\subseteq \fk_k$. Assuming that
$\fk_k$ contains the elements $P_{j-2}$ and $Q_{j-2}$ for $j\leq k$, we show that
it also contains the elements $P_{j}$ and $Q_{j}$. Recall that
$\fk_k$ contains the elements $f_8$, $f_9$, and $f_{14}$. In addition, the elements $v_1=-\tfrac{i}{2}(\mathrm{X}_{2k-2}+\mathrm{X}_{2k-1})$ and $v_2=-\tfrac{i}{2}(\mathrm{Y}_{2k-2}+\mathrm{Y}_{2k-1})$
are contained in $\fk_k$. But one can directly check on the four outer qubits
that $P_j$ and $Q_j$ are contained in the algebra 
$\fm=\langle 
f_8, f_9, f_{14}, P_{j-2}, Q_{j-2}, v_1, v_2
\rangle_{\rm Lie}
=\so(2^4)$.
Assuming that $\fk_j=\usp(2^{2j})$ holds for $j<k$, it follows
that $\usp(2^{2k-4}) \otimes \so(2^4) \subseteq \fk_k \subseteq \usp(2^{2k})$.
As $\usp(2^{2k-4}) \otimes \so(2^4)$ is a maximal subalgebra of 
$\usp(2^{2k})$ (see Thm.~1.4 of Ref.~\cite{Dynkin57}) and $f_3\in \fk_k$ is not of product form,
we obtain by induction that $\fk = \usp(2^{2k})$. 
\hfill$\blacksquare$
\end{proof}

Note the Cartan decomposition in the antisymmetric Ising chains of Fig.~\ref{fig:sys-sp4-16} 
can be taken with respect to the joint permutation of the qubits in the two branches with positive
and negative $\ZZ$ couplings: the $\fk$-part consists of all terms with {\em odd}
numbers of Pauli operators deviating from the identity, while the $\fp$-part collects
the ones with {\em even} numbers.

\begin{figure}[Ht!]
\begin{center}
\begin{tabular}{c@{\hspace{8mm}}c}
{\sf (a)}  & {\sf (b) }\\[3mm]
\raisebox{1.5mm}{\includegraphics[width=.45\linewidth]{antisym1}}
& 
\includegraphics[width=.45\linewidth]{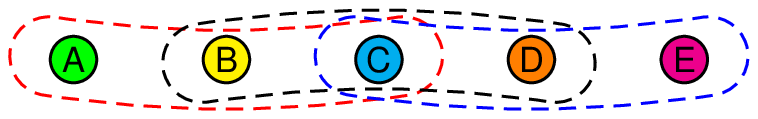}\\[2mm]
\raisebox{1.5mm}{\includegraphics[width=.45\linewidth]{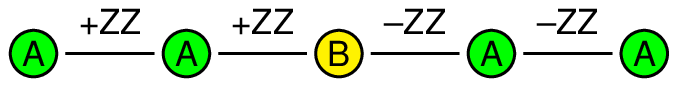}}
& 
\includegraphics[width=.45\linewidth]{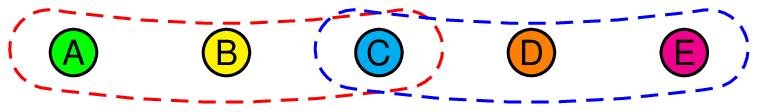}\\[2mm]
\raisebox{1.5mm}{\includegraphics[width=.45\linewidth]{antisym3}}
& 
\includegraphics[width=.45\linewidth]{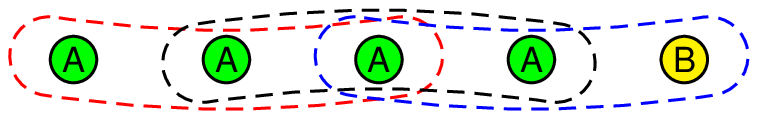}\\[2mm]
& 
\includegraphics[width=.45\linewidth]{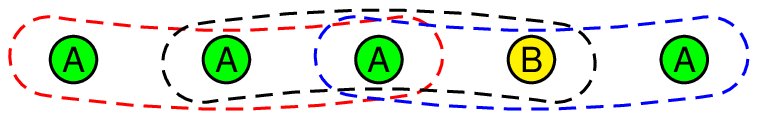}
\end{tabular}
\end{center}
\caption{Quantum systems with dynamic Lie algebra $\usp(16)$.
(a) Examples with pairwise Ising-{\ZZ} couplings and
(b) examples with three-body {\ZZZ}-interactions.
\label{fig:sys-sp16}}
\end{figure}

\medskip
As a third example, consider the first Ising chain in Fig.~\ref{fig:sys-sp4-16} corresponding to
$\usp(8/2)$. Here {\bf Algorithm~3} gives 
\begin{equation}
S_3=\left(\begin{smallmatrix} 
0 &0 &0 &0 &0 &0 &0 &-1  \\
0 &0 &0 &+1 &0 &0 &0 &0  \\
0 &0 &0 &0 &0 &+1 &0 &0  \\
0 &-1 &0 &0 &0 &0 &0 &0  \\
0 &0 &0 &0 &0 &0 &+1 &0  \\
0 &0 &-1 &0 &0 &0 &0 &0  \\
0 &0 &0 &0 &-1 &0 &0 &0  \\
+1 &0 &0 &0 &0 &0 &0 &0  \\
\end{smallmatrix}\right)\quad
\end{equation}
with $S_3\bar{S}_3=-\unity$ underscoring the irreducible representation is symplectic.

Moreover since all the dynamic systems in Figures~\ref{fig:sys-sp16}(a) and \ref{fig:sys-sp16}(b) share the same system
algebra $\usp(16)$, any two can mutually simulate eachother by Proposition~\ref{prop:sim-gen}. 
So remarkably enough, the spin chains in Fig.~\ref{fig:sys-sp16}(a) can simulate the 
effective three-qubit {\ZZZ}-interactions shown in Fig.~\ref{fig:sys-sp16}(b). In particular, note the 
lowest instance in Fig.~\ref{fig:sys-sp16}(a): even only the {\em collective local controls} on all the qubits
suffice to generate the three-body interactions with full local control shown at the
top of Fig.~\ref{fig:sys-sp16}(b). In turn, it may be astonishing at first sight that the system on
top of Fig.~\ref{fig:sys-sp16}(b) does not provide more dynamic degrees of freedom than the collective
system at the bottom of Fig.~\ref{fig:sys-sp16}(a), where the simulating power roots in the opposite
signs of the couplings.

\subsection{Dynamic Systems with Alternating Orthogonal and Symplectic Algebras}\label{sec:alternating}
Based on the smallest examples of Heisenberg-{\XX} chains with one single local control on the second qubit as 
shown in Fig.~\ref{fig:sys-sosp}, one may construct
a full sequence of $n$ spin-$\tfrac{1}{2}$ chains, whose dynamic system algebras 
are orthogonal or symplectic depending on the value of $n\not\in\{1,3\}$.
Again, observe symplectic system algebras ensure {\em pure-state controllability} \cite{AA03} 
without full operator controllability.

Quite remarkably, full local control on a single qubit suffices to get a dynamic algebra,
where the number of dynamic degrees of freedom scales exponentially with number of qubits,
a finding described only for full isotropic Heisenberg\nb$\XXX$ coupling up to now \cite{Burg08}. 
More precisely, one arrives at the following:

\begin{figure}[Ht!]
\begin{center}
\includegraphics[height=7mm]{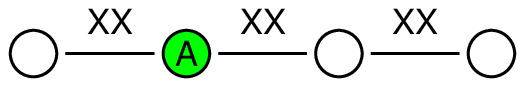}\hspace{9mm}
\includegraphics[height=7mm]{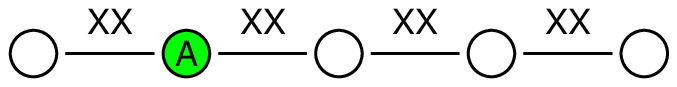}\\[2mm]
\includegraphics[height=7mm]{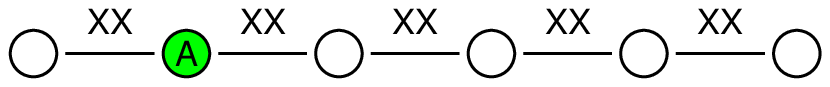}\hspace{9mm}
\includegraphics[height=7mm]{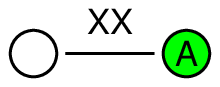}\hspace{2mm}
\end{center}
\caption{$n$-Spin-$\tfrac{1}{2}$ Heisenberg-{\XX} chains with $n\not\in\{1,3\}$ and 
only {\em one locally controllable qubit at the second position}
have orthogonal system algebras $\so(2^n)$  if $(n\bmod{4}) \in \{0,1\}$
and symplectic system algebras $\usp(2^{n-1})$ otherwise.
A full series can
be constructed for $n>3$, and
the examples shown for $n\in\{4,5,6,2\}$
correspond to
$\so(16)$, $\so(32)$, $\usp(32)$, $\so(5)\cong\usp(4/2)$.
In the single case of $n=3$, central symmetry arises, which makes the respective algebra {\em reducible}.
\label{fig:sys-sosp}}
\end{figure}

\begin{proposition}\label{prop:sosp}
Heisenberg-{\XX} chains of $n\not\in\{1,3\}$ spin-$\tfrac{1}{2}$ qubits
with only one locally controllable qubit at the second position
give rise to the dynamic algebras
$$
\fk_n=
\begin{cases}
\so(2^n) & \text{if $(n\bmod{4}) \in \{0,1\}$},\\
\usp(2^{n-1}) & \text{if $(n\bmod{4}) \in \{2,3\}$}
\end{cases}
$$
which are irreducibly embedded in $\su(2^n)$.
\end{proposition}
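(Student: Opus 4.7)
The plan is to adapt the strategy already used for Propositions~\ref{prop:so2n+1}, \ref{prop:so2n+2}, and~\ref{prop:sp-n}: establish a handful of base cases by direct Lie-closure computation, then ascend by induction on $n$ while certifying irreducibility, symplectic/orthogonal type, and dimensional exhaustion in separate steps.

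First I would dispose of the small $n$. For $n\in\{2,4,5,6,7\}$ Algorithm~1 applied to the generating set $\{iH_d,\,i\mathrm{X}_2,\,i\mathrm{Y}_2,\,i\mathrm{Z}_2\}$ with $H_d=-\tfrac{1}{2}\sum_{k=1}^{n-1}(\mathrm{X}_k\mathrm{X}_{k+1}+\mathrm{Y}_k\mathrm{Y}_{k+1})$ yields $\dim\fk_n=2^{n-1}(2^n-1)$ or $2^{n-2}(2^{n-1}+1)$ according to the claimed pattern, and Algorithm~3 furnishes an explicit $S_n$ with $S_n\bar S_n=\pm\unity$ confirming the orthogonal or symplectic type. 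The anomalous case $n=3$ is explained by a reflection symmetry exchanging qubits~$1$ and~$3$ that leaves every generator invariant; by Lemma~\ref{lem:centraliser} the corresponding non-trivial centralizer element precludes irreducibility, which is why this case is excluded from the statement.

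For general $n\geq 4$, irreducibility follows by a short centralizer calculation: any $s\in\su(2^n)$ commuting with $i\mathrm{Z}_2$ is block-diagonal with respect to the second qubit, commuting with $i\mathrm{X}_2$ and $i\mathrm{Y}_2$ forces $s$ to act trivially on that qubit, and iterated commutation with $iH_d$ propagates this triviality along the chain, so $\mathcal{Z}_{\su(2^n)}(\fk_n)=\{0\}$. Simplicity is then immediate from Theorem~\ref{pair_connected} since the coupling graph is a connected path. For the symplectic/orthogonal dichotomy I would apply Algorithm~3 and construct $S_n$ recursively, seeded by the $n=2$ matrix $S_2=J_2$ with $S_2\bar S_2=-\unity_4$. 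Up to a conjugation absorbing the chain geometry, $S_n$ admits a tensor-product form in blocks of two qubits, so $S_n\bar S_n=(-\unity_4)^{\lfloor n/2\rfloor}$ (with an extra factor of $\unity_2$ for $n$ odd) reproduces the sign pattern $+,+,-,-$ as $n\bmod 4$ runs through $0,1,2,3$; Lemma~\ref{exclusive} guarantees the assignment is unambiguous. Equivalently, the type can be read off from the Frobenius--Schur indicator of the ambient representation in the spirit of Lemma~\ref{lem:loc}.

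What remains, and what I expect to be the main obstacle, is showing that $\fk_n$ is not a \emph{proper} irreducible simple subalgebra of the identified $\so(2^n)$ or $\usp(2^{n-1})$. I would induct by adding qubits in blocks of four, chosen to preserve the type: in the spirit of the $f_1,\ldots,f_{14}$ computation in the proof of Proposition~\ref{prop:sp-n}, explicit iterated commutators of $iH_d$ with $i\mathrm{X}_2,i\mathrm{Y}_2,i\mathrm{Z}_2$ should manufacture the local operations on all other qubits, yielding an embedding $\fk_{n-4}\otimes\fh\subseteq\fk_n$ for an appropriate four-qubit block algebra $\fh$ of the correct type. Combined with Dynkin's classification of maximal subalgebras of $\so$ and $\usp$ (Thm.~1.4 of Ref.~\cite{Dynkin57}) and the observation that $iH_d$ is not of Kronecker-product form across any bipartition, this forces equality with the maximal candidate at every induction step. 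The delicate point is that only qubit~$2$ carries a local control, so every missing local operation must be synthesized purely from commutators with the drift; verifying that this synthesis closes neatly at each step and that no intermediate irreducible simple proper subalgebra from the branching diagrams of Tab.~\ref{tab:two} intervenes is where the bulk of the technical work would lie.
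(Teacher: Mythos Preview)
Your overall architecture---base cases by direct computation, irreducibility, simplicity via Theorem~\ref{pair_connected}, type via an explicit recursive $S_n$, and exhaustion via Dynkin's maximal-subalgebra theorem---matches the paper's proof. The paper inducts in steps of two rather than four: it shows $\fk_n\supsetneq\fk_2\hoplus\fk_{n-2}$ by a chain of commutators $f_4,\ldots,f_{13}$ that isolate local controls on qubit~4 only (giving a copy of $\fk_{n-2}$ on qubits $3,\ldots,n$) and a copy of $\fk_2$ on qubits $1,2$, bridged by a single cross term $f_8$. Since $\fk_2=\usp(2)$ is symplectic, the Kronecker sum $\fk_2\hoplus\fk_{n-2}$ automatically has the right type to be maximal (Dynkin, Thm.~1.4) in whichever of $\so(2^n)$ or $\usp(2^{n-1})$ the sign of $S_n\bar S_n$ selects, and the mod-4 alternation falls out for free. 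Your step-of-four scheme is viable but needs more base cases and a separate identification of the four-qubit block algebra; the paper's step-of-two is cleaner. The $S_n$ are exactly as you guess: $S_{2k}=(S_2)^{\otimes k}$, $S_{2k+1}=(S_2)^{\otimes(k-1)}\otimes S_3$, verified on generators with only the bridge term requiring a fresh check. Irreducibility in the paper also comes inductively from that of $\fk_2\hoplus\fk_{n-2}$; your direct centraliser argument is a reasonable alternative, though the ``propagates along the chain'' step would need to be spelled out.

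There is, however, a genuine confusion in your induction step. You write that iterated commutators ``should manufacture the local operations on all other qubits.'' That cannot be the goal: full local $\su(2)$ control on every site together with a connected Heisenberg-\XX\ drift would force $\fk_n=\su(2^n)$ by Theorem~\ref{thm:bilinear}, contradicting the proposition. The whole point is that only \emph{partial} structure propagates---enough to realise the embedded subalgebra $\fk_2\hoplus\fk_{n-2}$ (or $\fk_{n-4}\hoplus\fh$ in your scheme) plus one coupling term crossing the cut, and nothing more. Getting precisely this, and no more, is the ``delicate point'' you correctly flag at the end, but as written your proposal misstates what is to be extracted from the commutator calculus.
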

\begin{proof}
In the notation of Eqn.~\eqref{eqn:notation} the generators of the dynamic algebra
$\fk_n$ can be written as $f_1=-\tfrac{i}{2}\mathrm{X}_2$, $f_2=-\tfrac{i}{2}\mathrm{Y}_2$,
and $f_3=-\tfrac{i}{2}(\sum_{j=1}^{n-1} \mathrm{X}_j\mathrm{X}_{j+1} + \mathrm{Y}_j\mathrm{Y}_{j+1})$.
We remark that the statements
of the Theorem can be directly verified for $n\in \{2,4,5\}$. 
Assuming $n\geq 6$ from now on, 
we complete the proof by induction. We organize the proof in steps:
first we prove that $\fk_{n} \supsetneq \fk_{2}\hoplus\fk_{n-2}$, second
we show that $\fk_{n}$ is given in an irreducible representation, and
in the end we prove that $\fk_{n}$ is equal to $\so(2^n)$ or $\usp(2^{n-1})$.
By computing sums of commutators we identify certain elements of $\fk_{n}$.
The first elements are $f_4:=[f_1,[f_3,f_1]]+[f_2,[f_3,f_2]]=
-\tfrac{i}{2}(
\mathrm{X}_1\mathrm{X}_2+\mathrm{Y}_1\mathrm{Y}_2+
\mathrm{X}_2\mathrm{X}_3+\mathrm{Y}_2\mathrm{Y}_3)$,
$f_5:=[[f_2,f_1],[f_1,[f_2,f_4]]]=
-\tfrac{i}{2}(
\mathrm{X}_1\mathrm{X}_2+\mathrm{X}_2\mathrm{X}_3)$, and
$f_6:=[[f_1,f_2],[f_2,[f_1,f_4]]]=
-\tfrac{i}{2}(
\mathrm{Y}_1\mathrm{Y}_2+\mathrm{Y}_2\mathrm{Y}_3)$.
Next we compute the 
elements $f_7:=[f_4,[f_3,f_4]]+[f_2,[f_6,[f_2,[f_3,f_4]]]]+[f_1,[f_5,[f_1,[f_3,f_4]]]]
=-\tfrac{i}{2}(
\mathrm{X}_3\mathrm{X}_4+\mathrm{Y}_3\mathrm{Y}_4)$ and
$f_8:=[[f_2,f_1],[[f_1,f_2],[f_3,[f_4,f_3]]]]-f_7+[f_6,[f_7,f_3]]+[f_5,[f_7,f_3]]+[f_2,[f_6,[f_2,[f_3,f_4]]]]+[f_1,[f_5,[f_1,[f_3,f_4]]]]=
-\tfrac{i}{2}(\mathrm{X}_2\mathrm{X}_3+\mathrm{Y}_2\mathrm{Y}_3)$
leading to the elements $f_9:=f_4-f_8=
-\tfrac{i}{2}(\mathrm{X}_1\mathrm{X}_2+\mathrm{Y}_1\mathrm{Y}_2)$,
$f_{10}:=f_3-f_4=
-\tfrac{i}{2}(\sum_{j=3}^{n-1} \mathrm{X}_j\mathrm{X}_{j+1} + \mathrm{Y}_j\mathrm{Y}_{j+1})$, 
and $f_{11}:=f_4+f_7=
-\tfrac{i}{2}(
\mathrm{X}_1\mathrm{X}_2+\mathrm{Y}_1\mathrm{Y}_2+
\mathrm{X}_2\mathrm{X}_3+\mathrm{Y}_2\mathrm{Y}_3+
\mathrm{X}_3\mathrm{X}_4+\mathrm{Y}_3\mathrm{Y}_4)
$. By explicit computations on the first four qubits one can show
that the elements $f_{12}=-\tfrac{i}{2}\mathrm{X}_4$ and $f_{13}=-\tfrac{i}{2}\mathrm{Y}_4$ are contained
in $\fk_4=\langle 
f_1,f_2,f_{11}
\rangle_{\rm Lie}\subseteq \fk_n$. We obtain that $\fk_2=\langle 
f_1,f_2,f_{9}
\rangle_{\rm Lie}\subseteq \fk_n$ and $\fk_{n-2}=\langle 
f_{12},f_{13},f_{10}
\rangle_{\rm Lie}\subseteq \fk_n$. Therefore, $\fk_n=\langle
f_1,f_2,f_9, f_{12}, f_{13}, f_{10}, f_8
\rangle_{\rm Lie} \supsetneq \fk_{2}\hoplus\fk_{n-2}$. This completes the first step of the proof.
By induction $\fk_{2}$ and $\fk_{n-2}$ are given in an irreducible representation. Therefore, this holds also for $\fk_{2}\hoplus\fk_{n-2}$ and $\fk_n$,
which completes the second step of the proof. Using the matrices
$$
S_2:=
\left(\begin{smallmatrix}
0 & 0 & 0 & +1\\
0 & 0 & -1 & 0\\
0 & +1 & 0 & 0\\
-1 & 0 & 0 & 0
\end{smallmatrix}\right)\; \text{ and }\;
S_3:=
\left(\begin{smallmatrix}
0 & 0 & 0 & 0 & 0 & 0 & 0 & +1\\
0 & 0 & 0 & 0 & 0 & 0 & +1 & 0\\
0 & 0 & 0 & 0 & 0 &-1 & 0 & 0\\
0 & 0 & 0 & 0 &-1 & 0 & 0 & 0\\
0 & 0 & 0 & +1 & 0 & 0 & 0 & 0\\
0 & 0 & +1 & 0 & 0 & 0 & 0 & 0\\
0 &-1 & 0 & 0 & 0 & 0 & 0 & 0\\
-1 & 0 & 0 & 0 & 0 & 0 & 0 & 0
\end{smallmatrix}\right)
$$
we define the matrices 
$S_{2k}=(S_2)^{\otimes k}$ and
$S_{2k+1}=(S_2)^{\otimes (k-1)}\otimes S_3$.
We obtain that $S_{2k}\bar{S}_{2k}=(-1)^k \unity_{2^{2k}}$ and
$S_{2k+1}\bar{S}_{2k+1}=(-1)^k \unity_{2^{2k+1}}$. Relying on direct computations in the case of 
$j\in\{2,4,5\}$, one can verify
that $S_j H+H^{t}S_j=0$ holds for all elements $iH$ of $\fk_j$.
Assuming by induction that $S_j H+H^{t}S_j=0$ holds for all elements $iH$ of $\fk_j$ where
$j\in \{2,n-2\}$, we show that $S_n H+H^{t}S_n=0$ holds also for all elements
$iH$ of the algebra $\fk_n=\langle
f_1,f_2,f_9, f_{12}, f_{13}, f_{10}, f_8
\rangle_{\rm Lie} \supsetneq \fk_{2}\hoplus\fk_{n-2}$ by directly verifying
$S_4 f_8+f_8^{t} S_4=0$ on the first four qubits. In summary,
we proved that $\fk_{2}\hoplus\fk_{n-2} \subsetneq \fk_n \subseteq$
$\so(2^n)$ or $\usp(2^{n-1})$ depending on the value of $n$.
But Thm.~1.4 of Ref.~\cite{Dynkin57} says that $\fk_{2}\hoplus\fk_{n-2}$ is a maximal subalgebra
of $\so(2^n)$ or $\usp(2^{n-1})$. Thus, $\fk_n$ is equal to $\so(2^n)$ 
if $(n\bmod{4}) \in \{0,1\}$ and equal to $\usp(2^{n-1})$ otherwise.
This completes the last step of the proof.
\hfill$\blacksquare$
\end{proof}

\subsection{Dynamic Systems with Unitary Algebras}\label{sec:unitary}
We close the series of worked examples by considering  $n$\nb{}spin\nb$\tfrac{1}{2}$ Heisenberg-{\XX} chains with
$n\geq 2$, where the first two qubits can be independently, locally controlled
(see Fig.~\ref{fig:sys-su}). This case was recently studied in Refs.~\cite{KPR09,BP09,Diss-Sander}. We show that these systems are fully controllable for arbitrary $n\geq 2$.
\begin{figure}[Ht!]
\begin{center}
\includegraphics[height=7mm]{ex8}\hspace{9mm}
\includegraphics[height=7mm]{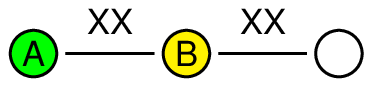}\hspace{9mm}
\includegraphics[height=7mm]{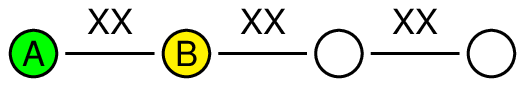}\\[2mm]
\includegraphics[height=7mm]{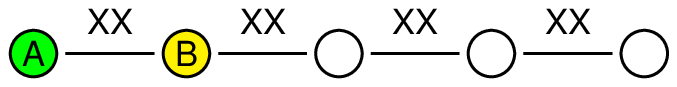}\hspace{2mm}
\end{center}
\caption{$n$-Spin\nb$\tfrac{1}{2}$ Heisenberg-{\XX} chains with $n\geq 2$,
where the first two qubits can be independently, locally controlled
have fully-controllable system algebras $\su(2^n)$. 
A full series can be constructed: the first examples shown 
correspond to the algebras $\so(6)\cong\su(4)$, $\su(8)$, $\su(16)$, and $\su(32)$.
\label{fig:sys-su}}
\end{figure}
\begin{corollary} 
Assume that the first two qubits of
a Heisenberg-{\XX} chain of $n$ spin-$\tfrac{1}{2}$ qubits with $n\geq 2$
can be independently, locally controlled. 
Then, the dynamic algebras is $\fk_n=\su(2^n)$.
\end{corollary}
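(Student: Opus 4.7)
The plan is to leverage Proposition~\ref{prop:sosp}, which has already identified the dynamic algebra of the chain with only qubit $2$ locally controlled as $\fk_n^{\mathrm{old}}\in\{\so(2^n),\,\usp(2^{n-1})\}$, together with the maximality of $\so(2^n)$ and $\usp(2^{n-1})$ as subalgebras of $\su(2^n)$.

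The case $n=2$ is immediate from Theorem~\ref{thm:bilinear}: both $\su(2)$ subsystems are independently fully controllable and the two-vertex coupling graph is connected, so $\fk_2=\su(4)$. For $n\geq 4$, the bootstrap runs as follows. Dropping $iX_1, iY_1$ from the generator set reduces the present system to the chain of Proposition~\ref{prop:sosp}, so $\fk_n\supseteq\fk_n^{\mathrm{old}}$. Since $\so(2^n)$ and $\usp(2^{n-1})$ are maximal subalgebras of $\su(2^n)$ (Thm.~1.4 of \cite{Dynkin57}, as already cited in Proposition~\ref{prop:sosp}), it suffices to exhibit one generator of $\fk_n$ not contained in $\fk_n^{\mathrm{old}}$. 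By Lemma~\ref{lem:ssym2}, membership in $\fk_n^{\mathrm{old}}$ forces $S_n H + H^t S_n = 0$ with the explicit tensor-product matrix $S_n$ constructed in the proof of Proposition~\ref{prop:sosp}. Substituting $H=X_1=X\otimes\unity_{2^{n-1}}$ and exploiting the block factorisation $S_n=S_2\otimes(\cdot)$ localises the test to $\{S_2,\,X\otimes\unity_2\}\neq 0$ on the first four-dimensional block, which is a direct $4\times 4$ check. This rules out $iX_1\in\fk_n^{\mathrm{old}}$ and, by maximality, delivers $\fk_n=\su(2^n)$.

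The leftover case $n=3$ is not covered by Proposition~\ref{prop:sosp} because of a reflection symmetry $1\leftrightarrow 3$ of the three-qubit chain with a single local control. Adding local controls on qubit $1$ manifestly breaks this symmetry, so Algorithm~2 returns a trivial centraliser. The coupling graph being connected, Theorem~\ref{pair_connected} implies $\fk_3$ is a simple subalgebra of $\su(8)$ in an irreducible representation. Since $\su(2^3)$ admits no proper unitary irreducible simple subalgebras (see the discussion in Subsec.~\ref{complex}), Algorithm~3 applied to the full generator set excludes all orthogonal and symplectic embeddings, leaving only $\fk_3=\su(8)$ by inspection of Tab.~\ref{tab:su_subalg}.

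The main obstacle is the tensor bookkeeping needed to certify $\{S_n,X_1\}\neq 0$ in the bootstrap step: one has to track the factorisation of $S_n$ carefully and verify that no accidental cancellation arises. Once this small computation is in hand, maximality concludes the argument for every $n\geq 4$ without any further Lie-closure calculations.
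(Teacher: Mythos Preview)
Your argument is correct and takes a genuinely different route from the paper. The paper bootstraps from the Kronecker-sum inclusion $\fk_n\supsetneq\su(4)\hoplus\su(2^{n-2})$: it shows inductively that with two adjacent local controls the operators $X_3,Y_3$ can be generated, hence all local $\su(2)$'s propagate along the chain, and then invokes Thm.~1.3 of \cite{Dynkin57} (maximality of $\su(p)\hoplus\su(q)$ in $\su(pq)$) to conclude. By contrast, you bootstrap from the single-control result of Proposition~\ref{prop:sosp}, which already delivers the large subalgebra $\fk_n^{\mathrm{old}}\in\{\so(2^n),\usp(2^{n-1})\}$, and then exploit the maximality of $\so(N)$ and $\usp(N/2)$ in $\su(N)$. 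Your version replaces the commutator calculus of the paper (isolating $X_3,Y_3$) with a single $4{\times}4$ anticommutator check, and moreover covers $n\geq 4$ uniformly rather than requiring separate base cases $n=4,5$ as in the paper.

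Two small remarks. First, the maximality you need is not quite Thm.~1.4 of \cite{Dynkin57} as cited in Proposition~\ref{prop:sosp} (there it is used for the inclusion $\fk_2\hoplus\fk_{n-2}\subset\so/\usp$); the relevant fact is that $\so(N)$ and $\usp(N/2)$ are maximal in $\su(N)$, which follows either from Thm.~1.5 of \cite{Dynkin57} (cf.\ Appendix~\ref{incl_rel}) or from the irreducibility of the symmetric spaces $\SU(N)/\SO(N)$ and $\SU(N)/\mathrm{Sp}(N/2)$ mentioned in Sec.~\ref{sec:irred-sub-su}. Second, your treatment of $n=3$ via Algorithms~2 and~3 is no more and no less a ``direct verification'' than the paper's own, so there is nothing to criticise there.
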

\begin{proof}
The Theorem can be directly verified
for $n\in\{2,3,4,5\}$. 
Building on the proof of Proposition~\ref{prop:sosp}, we prove the Theorem
for $n\geq 6$ by induction. We first show that $\fk_n\supsetneq \fk_{2}\hoplus\fk_{n-2}$.
From the proof of Proposition~\ref{prop:sosp} it is only left to show
that the elements $\mathrm{X}_3$ and $\mathrm{Y}_3$ can be generated.
But this can be directly verified by computations on the first four qubits. Thus we proved
that $\su(2^n)\supseteq \fk_n \supsetneq \su(2^2) \hoplus \su(2^{n-2})$.
Thm.~1.3 of Ref.~\cite{Dynkin57} says that  $\su(2^2) \hoplus \su(2^{n-2})$
is a maximal subalgebra of $\su(2^n)$. The Theorem follows immediately.
\hfill$\blacksquare$
\end{proof}

\section{Fermionic Quantum Systems}\label{sec:fermionic}
Fermionic $d$-level systems with any kind of quadratic (pair-interaction) 
Hamiltonians give rise to dynamic system Lie algebras limited to subalgebras
like $\so(2d)$ or $\so(2d+1)$. By making use of the Jordan-Wigner transformation, 
which links the number of levels $d$ with the number of qubits $n$,
we show how these systems can be simulated
by $n$\nb spin\nb $\tfrac{1}{2}$ chains with partial local control. --- 
For keeping the relation to mathematical literature, 
references are more extensive in this section. 

\subsection{Quadratic Hamiltonians}
To fix notations, consider the fermionic creation and annihilation operators
$f^{\dagger}_p$ and $f_p$ which operate on a finite-dimensional quantum system of $d$ levels
and satisfy the anticommutation relations (with $1\leq p,q \leq d$ and
$\{a,b\}_+:=ab+ba$ and the Kronecker function $\delta_{p,q}$)
\begin{equation}\label{anticomm}
\{f^{\dagger}_p, f_q\}_+ = \delta_{p,q} 
\quad\text{and}\quad
\{f^{\dagger}_p, f^{\dagger}_q\}_+ = 0 = \{f_p, f_q\}_+\quad.
\end{equation}
For the $p$-th level of the system, $f^{\dagger}_p$ 
and $f_p$ change the occupation numbers $n_p$ labelling the respective states $\ket{n_p}$
such as to give
$
f^{\dagger}_p \ket{0} = \ket{1} = 
\left(\begin{smallmatrix}
1\\
0
\end{smallmatrix}
\right)
=
\ket{{\uparrow}}
$
and
$ 
f_p \ket{1} = \ket{0} = 
\left(\begin{smallmatrix}
0\\
1
\end{smallmatrix}
\right)
=
\ket{{\downarrow}},
$
where by the Pauli principle $n_p \in \{0,1\}$, $(f^{\dagger}_p)^2\equiv 0$, and 
$(f_p)^2\equiv 0$. 
Now the properties of the usual {\em quadratic Hamiltonians}
(see, e.g., \cite{LSM61,Berezin,BR86,kraus-pra71,ADGH:2010})
\begin{equation}\label{quadHam}
H := \sum_{p,q=1}^d A_{pq} f_p f_q + B_{pq} f_p f^{\dagger}_q 
+C_{pq} f^{\dagger}_p f_q + D_{pq} f^{\dagger}_p f^{\dagger}_q.
\end{equation}
can be discussed in terms of their pair-interaction coupling coefficients $A_{pq}$, $B_{pq}$, $C_{pq}$,
and $D_{pq}$ seen as (possibly complex) entries of the $d{\times}d$-matrices
$A$, $B$, $C$, and $D$. Hermiticity of $H$ requires 
$A=D^{\dagger}$, $B=B^{\dagger}$, and  $C=C^{\dagger}$, while
in addition, the commutator relations of Eqn.~\eqref{anticomm} imply
\begin{align*}
H  = & \sum_{p,q=1}^d -A_{pq} f_q f_p - D_{pq} f^{\dagger}_q f^{\dagger}_p \quad
 + \sum_{p,q=1, p\neq q}^d -B_{pq} f^{\dagger}_q f_p -C_{pq} f_q f^{\dagger}_p \\
 & + \sum_{p=1}^d B_{pp} (1- f^{\dagger}_p f_p) + C_{pp} (1-f_p f^{\dagger}_p)\;,
\end{align*}
which upon identification with Eqn.~\eqref{quadHam} enforces
$A=-A^{t}$, $B=-C^{t}$, and $D=-D^{t}$.
Finally keeping a widely-used custom (see, e.g., p.~452 of Ref.~\cite{LSM61} or p.~173 of Ref.~\cite{Berezin}) 
we also assume the entries of $A$, $B$, $C$, and $D$ are real.
Summing up, $A$ is real skew-symmetric following
$ A=\bar A=-A^t=-D$ and 
$B$ is real symmetric with $B=\bar B=B^t=-C $.
So $H$ of Eqn.~\eqref{quadHam} can be given in `symmetrised' form (see, e.g., p.~2 of Ref.~\cite{OO09}) as
\begin{equation}\label{symquadHam}
H= \sum_{p,q=1}^d (-B_{pq}) \left[f^{\dagger}_p f_q - f_p f^{\dagger}_q\right]
+ (-A_{pq}) \left[f^{\dagger}_p f^{\dagger}_q - f_p f_q\right].
\end{equation}

\subsection{Jordan-Wigner Transformation} 
For simplicity, first recall the map from the non-compact, (real) special linear algebra
$\sll(2,\R)$ to the compact, special unitary algebra $\su(2)$. The generators of
$\sll(2,\R)$ are given by 
$
\mathrm{E}=
\left(\begin{smallmatrix}
0 & 1\\
0 & 0
\end{smallmatrix} \right)
=
\tfrac{1}{2}(\mathrm{X}+i\mathrm{Y}),\;
\mathrm{H}=
\left(\begin{smallmatrix}
1 & 0\\
0 & -1
\end{smallmatrix} \right)
= \mathrm{Z},\; \text{ and }\;
\mathrm{F}=
\left(\begin{smallmatrix}
0 & 0\\
1 & 0
\end{smallmatrix} \right)
= \tfrac{1}{2}(\mathrm{X}-i\mathrm{Y})
$
and the generators of $\su(2)$ can be chosen as $i\mathrm{X}$, $i\mathrm{Z}$, and $i\mathrm{Y}$ where
$
\mathrm{X}=
\left(\begin{smallmatrix}
0 & 1\\
1 & 0
\end{smallmatrix} \right)
=
\mathrm{F}+\mathrm{E},\;
\mathrm{Z}=
\left(\begin{smallmatrix}
1 & 0\\
0 & -1
\end{smallmatrix} \right)
= \mathrm{H},\; \text{ and }\;
\mathrm{Y}=
\left(\begin{smallmatrix}
0 & -i\\
i & 0
\end{smallmatrix} \right)
= i(\mathrm{F}-\mathrm{E}).
$
Obviously, this also defines a map between the Lie algebras (see, e.g.,
p.~127 of Ref.~\cite{Sepanski07} or Chap.~IX, Sec.~3.6 of Ref.~\cite{Bourb08b})
readily serving as a prototype for maps between {\em non-compact} normal real forms
and {\em compact} real forms of Lie algebras (cp.~\cite{Helgason78}). 

\medskip
\noindent
Likewise, the Jordan-Wigner transformation
\cite{JW28} maps the 
fermionic operators (for $1\leq p \leq d$)
\begin{align}
f^{\dagger}_p=\tfrac{1}{2}(c_p+ic_{p+d})\quad \text{and}\quad
&f_p=\tfrac{1}{2}(c_p-ic_{p+d})\quad\text{to the operators}\\
c_p:=f_p + f^{\dagger}_p\quad \text{and}\quad
&c_{p+d}:=i(f_p-f^{\dagger}_p)\;.
\end{align}
Now $c_p$ and $c_{p+d}$ can be given explicitly as $d$-qubit operators
\begin{equation}
c_p=\underbrace{\mathrm{Z}\cdots\mathrm{Z}}_{p-1}\mathrm{X}\underbrace{\mathrm{I}\cdots \mathrm{I}}_{d-p}
\quad \text{ and }\quad
c_{p+d}=\underbrace{\mathrm{Z}\cdots\mathrm{Z}}_{p-1}\mathrm{Y}\underbrace{\mathrm{I}\cdots \mathrm{I}}_{d-p}\;.
\end{equation}
We refer to Chap.~VIII, Sec.~3 of Ref.~\cite{Boerner67},
Sec.~9.6 of Ref.~\cite{Miller72}, or Sec.~44 of Ref.~\cite{SW86},
where more information on this construction can be found and where connections
to Clifford algebras are discussed. In the context of Clifford algebras this
construction is sometimes named after Brauer and Weyl \cite{BW35} (see, e.g., p.~183
of Ref.~\cite{Hladik99}). 

\subsection{Quadratic Hamiltonians in Qubit Form}
Now one can readily apply the Jordan-Wigner transformation to
fermionic quadratic Hamiltonians. Assuming that the number of levels is $d$, the
Hamiltonian of Eqn.~\eqref{symquadHam} is mapped to (see, e.g., Thm.~VI.I of Ref.~\cite{OO09})
\begin{subequations}
\label{qubitform}
\begin{align}
\label{qubitform_A}
H=&-\sum_{p=1}^d  B_{pp} \underbrace{\mathrm{I}\cdots\mathrm{I}}_{p-1}\mathrm{Z}\underbrace{\mathrm{I}\cdots \mathrm{I}}_{d-p}\\
\label{qubitform_B}
&+\sum_{p,q=1, p>q}^d  
B_{pq} \Big(
\underbrace{\mathrm{I}\cdots\mathrm{I}}_{q-1}
\mathrm{X} \underbrace{\mathrm{Z}\cdots\mathrm{Z}}_{p-q-1}
\mathrm{X} \underbrace{\mathrm{I}\cdots \mathrm{I}}_{d-p}
+
\underbrace{\mathrm{I}\cdots\mathrm{I}}_{q-1}
\mathrm{Y} \underbrace{\mathrm{Z}\cdots\mathrm{Z}}_{p-q-1}
\mathrm{Y} \underbrace{\mathrm{I}\cdots \mathrm{I}}_{d-p}
\Big) \\
\label{qubitform_C}
& - \sum_{p,q=1, p>q}^d A_{pq}  \Big(
\underbrace{\mathrm{I}\cdots\mathrm{I}}_{q-1}
\mathrm{X} \underbrace{\mathrm{Z}\cdots\mathrm{Z}}_{p-q-1}
\mathrm{X} \underbrace{\mathrm{I}\cdots \mathrm{I}}_{d-p}
-
\underbrace{\mathrm{I}\cdots\mathrm{I}}_{q-1}
\mathrm{Y} \underbrace{\mathrm{Z}\cdots\mathrm{Z}}_{p-q-1}
\mathrm{Y} \underbrace{\mathrm{I}\cdots \mathrm{I}}_{d-p}
\Big).
\end{align}
\end{subequations}
This determines the dynamic algebra of a general fermionic Hamiltonian containing quadratic terms:

\begin{theorem}\label{thm:gen_quad}
Let the entries of the real antisymmetric matrix $A$ and the real symmetric matrix $B$
denote the control functions of the Hamiltonian  
given in Eqn.~\eqref{qubitform}. We assume $d\geq 2$. The dynamic algebra $\so(2d)$ of the corresponding control system is
embedded in $\su(2^d)$. The centraliser of the dynamic algebra is one-dimensional and is given 
by the $d$-qubit operator $-\tfrac{i}{2}\mathrm{Z}\cdots\mathrm{Z}$. The embedding of $\so(2d)$ into
$\su(2^d)$ splits into two irreducible representations of equal dimension.
\end{theorem}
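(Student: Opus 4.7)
The plan is to recognise the Hamiltonian of Eqn.~\eqref{qubitform} as built entirely from quadratic expressions in the $2d$ Jordan--Wigner operators $c_1,\ldots,c_{2d}$, which obey the Clifford relations $\{c_p,c_q\}_+ = 2\delta_{p,q}\unity$. A direct computation from these relations shows that the commutator of any two products $c_p c_q$ and $c_r c_s$ is again a linear combination of such products, so the $d(2d{-}1)$-dimensional real subspace $\fm := \spanR\{-\tfrac{i}{2}c_pc_q : 1\le p<q\le 2d\}$ is closed under brackets. It realises the spin representation of $\so(2d)$; cf.\ Chap.~VIII, Sec.~3 of Ref.~\cite{Boerner67} or Sec.~9.6 of Ref.~\cite{Miller72}.

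First I would match each term appearing in Eqn.~\eqref{qubitform_A}--\eqref{qubitform_C} with a specific quadratic Clifford element: the diagonal entry $B_{pp}$ yields $c_p c_{p+d}$, while the off-diagonal combinations with $p>q$ produce $c_q c_p \pm c_{q+d} c_{p+d}$. Since each control parameter can be switched on independently, the Lie closure contains every $-\tfrac{i}{2}c_pc_q$, and therefore coincides with $\fm \cong \so(2d)$ embedded in $\su(2^d)$.

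Next I would compute the centraliser. The chirality element $\Gamma := (-i)^d c_1 c_2 \cdots c_{2d}$ anticommutes with every single $c_r$ but commutes with every product $c_p c_q$, so $i\Gamma$ lies in the centraliser of $\fm$ inside $\su(2^d)$. An explicit Jordan--Wigner evaluation reduces $\Gamma$ (up to an absorbable global phase) to $\mathrm{Z}\cdots\mathrm{Z}$, confirming the formula in the statement. That this exhausts the centraliser is equivalent to saying that the associative algebra generated by $\{\unity\}\cup\{c_pc_q\}$, namely the even Clifford subalgebra $\mathrm{Cl}^{0}_{2d}(\C)$, has commutant of dimension exactly two in $\gl(2^d,\C)$; this is a standard fact, since the even Clifford algebra decomposes as $\mathrm{M}_{2^{d-1}}(\C)\oplus \mathrm{M}_{2^{d-1}}(\C)$ with centre spanned by $\unity$ and $\Gamma$.

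Finally, the non-trivial centraliser forces reducibility by Lemma~\ref{lem:centraliser}. Since $\Gamma^2 = \unity$ and $\tr(\Gamma)=0$, the $\pm 1$ eigenspaces of $\mathrm{Z}\cdots\mathrm{Z}$ each have dimension $2^{d-1}$ and are $\fm$-invariant, which gives the claimed equal-dimensional splitting. The main obstacle is the last step, namely that each of the two halves is itself irreducible. I would handle this by noting that the full commutant of $\fm$ is two-dimensional (spanned by $\unity$ and $\Gamma$) and diagonal in the chosen decomposition, so the commutant restricted to each eigenspace collapses to $\C\cdot\unity$; irreducibility then follows by Schur's lemma in the form of Lemma~\ref{lem:centraliser}(3). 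Equivalently, one recognises the two $2^{d-1}$-dimensional summands as the classical half-spin (semi-spin) representations of $\so(2d)$, both known to be faithful and irreducible.
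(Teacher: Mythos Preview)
Your proof is correct and takes a genuinely different route from the paper's.

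The paper argues by induction on $d$: it first bounds $\fk_d$ above by $\so(2d{+}1)$ via comparison with Theorem~\ref{thm:gen_quad_lin}, verifies the small cases $d\in\{2,3,4,5\}$ directly, then uses the inductive structure $\so(2d{-}4)\oplus\so(4)\subsetneq\fk_d$ together with Dynkin's maximal-subalgebra results to pin down $\fk_d=\so(2d)$ (the alternative $\so(2d{+}1)$ being excluded because that embedding would be irreducible). The centraliser is also computed inductively, and the equal-dimension splitting is handled by an eigenvalue argument, again inductive.

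Your approach is instead structural: you identify the dynamic algebra with the quadratic span $\fm=\spanR\{-\tfrac{i}{2}c_pc_q\}$ inside the Clifford algebra, invoke the standard isomorphism $\fm\cong\so(2d)$ in its spin representation, and read off centraliser and splitting from the decomposition $\mathrm{Cl}^{0}_{2d}(\C)\cong\mathrm{M}_{2^{d-1}}(\C)\oplus\mathrm{M}_{2^{d-1}}(\C)$ with centre spanned by $\unity$ and the chirality element~$\Gamma$. This is cleaner and more conceptual; indeed the paper itself acknowledges this viewpoint in the paragraph following the proof, citing Refs.~\cite{Miller72,SW86,Gilmore94,Hladik99}. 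What the paper's route buys is self-containment within its Pauli-basis methodology and independence from Clifford-algebra structure theory.

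One small point: your identification of the off-diagonal contributions as ``$c_qc_p\pm c_{q+d}c_{p+d}$'' is not quite right---the $B_{pq}$ and $A_{pq}$ terms actually produce the combinations $c_{p+d}c_q\pm c_pc_{q+d}$ (equivalently the $\mathrm{X}\mathrm{Z}\cdots\mathrm{Z}\mathrm{X}$ and $\mathrm{Y}\mathrm{Z}\cdots\mathrm{Z}\mathrm{Y}$ operators). Either way you do not get \emph{all} $\binom{2d}{2}$ bilinears $c_pc_q$ directly from the independent controls; you are missing the $\mathrm{X}\mathrm{Z}\cdots\mathrm{Z}\mathrm{Y}$ and $\mathrm{Y}\mathrm{Z}\cdots\mathrm{Z}\mathrm{X}$ type terms. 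These are obtained by a single commutator with the diagonal generators $c_pc_{p+d}$ (e.g.\ $[c_qc_{q+d},\,c_{q+d}c_p]=2c_qc_p$), exactly as the paper does in the first sentence of its proof. You should make this step explicit rather than asserting that independence of the controls alone delivers every $c_pc_q$.
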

\begin{proof}
Let $\fk_d$ denote the dynamic algebra of the control system.
The generators  $-\tfrac{i}{2}\mathrm{I}\cdots \mathrm{I}\mathrm{X}\mathrm{Z}\cdots\mathrm{Z}\mathrm{X}\mathrm{I}\cdots \mathrm{I}$ and  $-\tfrac{i}{2}\mathrm{I}\cdots \mathrm{I}\mathrm{Y}\mathrm{Z}\cdots\mathrm{Z}\mathrm{Y}\mathrm{I}\cdots \mathrm{I}$ arise from linear combinations of Eqns.~\eqref{qubitform_B}-\eqref{qubitform_C}, and
computing commutators with the generators $-\tfrac{i}{2}\mathrm{I}\cdots\mathrm{I}\mathrm{Z}\mathrm{I}\cdots \mathrm{I}$ from Eqn.~\eqref{qubitform_A} reveals the
generators
$-\tfrac{i}{2}\mathrm{I}\cdots \mathrm{I}\mathrm{X}\mathrm{Z}\cdots\mathrm{Z}\mathrm{Y}\mathrm{I}\cdots \mathrm{I}$ and  $-\tfrac{i}{2}\mathrm{I}\cdots \mathrm{I}\mathrm{Y}\mathrm{Z}\cdots\mathrm{Z}\mathrm{X}\mathrm{I}\cdots \mathrm{I}$.
By comparing with the (independent) proof of Theorem~\ref{thm:gen_quad_lin} it follows that
$\fk_d$ is a subalgebra of $\so(2d+1)$. The statements of the Theorem can be directly verified
for $d\in\{2,3,4,5\}$. We assume by induction that the $(d{-}1)$-qubit operator 
$a=-\tfrac{i}{2}\mathrm{Z}\cdots\mathrm{Z}$  is the only element in the centraliser of $\fk_{d-1}$.
Considering $a$ as an $d$-qubit operator operating on the first $d-1$ qubits we obtain
that the centraliser of $\fk_d$ can only contain linear combinations of elements from the set $-\tfrac{i}{2}\{\mathrm{I}\cdots\mathrm{I}\mathrm{X},\mathrm{I}\cdots\mathrm{I}\mathrm{Y},\mathrm{I}\cdots\mathrm{I}\mathrm{Z},\mathrm{Z}\cdots\mathrm{Z}\mathrm{I},
\mathrm{Z}\cdots\mathrm{Z}\mathrm{X},\mathrm{Z}\cdots\mathrm{Z}\mathrm{Y},\mathrm{Z}\cdots\mathrm{Z}\mathrm{Z}\}$. The second statement of the Theorem follows from the
fact that only the last element in the set commutes with all generators.
We obtain from the structure of the generators that $\so(2d-4)\oplus\so(4)\subsetneq \fk_d$.
We remark that $\so(2d-4)\oplus\so(4)$ is a maximal subalgebra of $\so(2d)$ and that
$\so(2d)$ is a maximal subalgebra of $\so(2d+1)$ (see, e.g., p.~219 of Ref.~\cite{BS49} or Table~12 on p.~150 of Ref.~\cite{Dynkin57b}). Therefore, $\fk_d$ is equal to $\so(2d)$ or $\so(2d+1)$. But
$\fk_d\neq\so(2d+1)$ as the corresponding embedding would be irreducible, and the first statement of the Theorem follows.
We already showed that the centraliser is one-dimensional which is equivalent to the fact that
the commutant is two-dimensional. Theorem~1.5 of Ref.~\cite{Ledermann} says that the dimension of the commutant of a representation
$\phi$ is given by $\sum_i m_i^2$ where the $m_i$ are the multiplicities of the irreducible components of $\phi$. Thus, the embedding of $\so(2d)$ into
$\su(2^d)$ splits into two irreducible representations. The third statement of the Theorem
follows now by proving that the simultaneous eigenvalues of the Hamiltonians corresponding to all
generators are given by $\pm 1$ occurring each with multiplicity $2^{d-1}$. This can be directly verified for $d\in\{2,3,4,5\}$. Assuming the statement by induction for all $\fk_{d'}$ with $d' < d$ we obtain the simultaneous eigenvalues of Hamiltonians corresponding to the algebras $\fk_{d-2}$ and $\fk_2$
acting on the first $d-2$ qubits and the last two qubits, respectively. As the eigenvalues of a tensor product of two matrices are given by the product of the eigenvalues of each matrix, we can prove the statement by induction.
\hfill$\blacksquare$
\end{proof}

The first statement of Theorem~\ref{thm:gen_quad} is related to the fact that the
canonical transformations of fermionic systems are given by orthogonal transformations
$\oo(2d)$ (see, e.g., p.~118 of Ref.~\cite{Berezin} and p.~39 of Ref.~\cite{BR86}).
In more mathematical literature, the first statement of Theorem~\ref{thm:gen_quad} can be found
in Sec.~9.6 of Ref.~\cite{Miller72},
pp.~182-186 of Ref.~\cite{SW86},
pp.~499-501 of Ref.~\cite{Gilmore94}, and
pp.~180-186 of Ref.~\cite{Hladik99}. Recently, this came again into focus
\cite{kraus-pra71,ADGH:2010}. The dynamic algebra $\so(2d)$ was also discussed as symmetry
of Hamiltonians in Refs.~\cite{KM04,KM05}.

The polynomial growth (in $d$) of the algebra in Theorem~\ref{thm:gen_quad} 
to the dynamic system of Eqns.\eqref{qubitform}
was identified as the reason why 
efficient classical simulation of quadratic, fermionic
quantum systems is possible (cp., e.g., pp.~9-10 of Ref.~\cite{Knill01},
pp.~5-6 of Ref.~\cite{TD02}, and
pp.~3 of Ref.~\cite{SBOK06}). --- Setting $n:=d-1$ in Proposition~\ref{prop:so2n+2}
we obtain
\begin{corollary}
Heisenberg-{\XX} chains of $d-1$ spin-$\tfrac{1}{2}$ qubits ($d\geq 3$) and {\em two} individually 
locally controllable qubits, one at each end, have the dynamic algebra 
$\so(2d)$ and can thus simulate a general fermionic quadratic Hamiltonian on $d$ levels and vice versa.
\end{corollary}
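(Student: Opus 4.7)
The plan is a direct invocation of results already established in this section. First I would substitute $n:=d-1$ in Proposition~\ref{prop:so2n+2}, noting that the hypothesis $n\geq 2$ there corresponds exactly to $d\geq 3$ here. This immediately yields that the Heisenberg-$\XX$ chain of $d-1$ spin-$\tfrac{1}{2}$ qubits with two individually locally controllable end qubits has dynamic algebra $\so(2(d-1)+2)=\so(2d)$, irreducibly embedded in $\su(2^{d-1})$.

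On the fermionic side, Theorem~\ref{thm:gen_quad} has already identified the dynamic algebra of the general quadratic fermionic Hamiltonian on $d$ levels (after Jordan--Wigner transformation) as $\so(2d)$, embedded in $\su(2^d)$ as the direct sum of two inequivalent irreducible representations of equal dimension $2^{d-1}$, corresponding to the $\pm 1$ eigenspaces of the central parity operator $-\tfrac{i}{2}\mathrm{Z}\cdots \mathrm{Z}$. Both control systems therefore realise the same abstract compact simple Lie algebra $\so(2d)$.

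Mutual simulability then follows from Proposition~\ref{prop:sim-gen}, with the two systems being dynamically equivalent in the sense introduced in Section~\ref{sec:simulability}. The only point that I would spell out is that the fermionic representation is reducible while the spin-chain representation is irreducible: the spin chain therefore reproduces the fermionic dynamics when restricted to a single parity sector of $-\tfrac{i}{2}\mathrm{Z}\cdots \mathrm{Z}$, and conversely each parity sector of the fermionic system reproduces the full spin-chain dynamics. The main technical step, and the only one that does not reduce to a direct citation, is the identification of the unique irreducible $2^{d-1}$-dimensional half-spin representation of $\so(2d)$ realised by the spin chain with each of the two irreducible components appearing in the fermionic picture. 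I expect this to be bookkeeping rather than a genuine obstacle: one matches the explicit Jordan--Wigner generators from Eqn.~\eqref{qubitform} against the Pauli-basis generators listed in the proof of Proposition~\ref{prop:so2n+2}, observing that up to a local change of basis both sets realise the same root-system action, and then invokes uniqueness of the half-spin representation to conclude.
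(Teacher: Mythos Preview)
Your proposal is correct and follows essentially the same approach as the paper: the corollary is obtained by substituting $n:=d-1$ in Proposition~\ref{prop:so2n+2} and combining this with Theorem~\ref{thm:gen_quad}, with simulability following via Proposition~\ref{prop:sim-gen}. The representation-theoretic point you raise---that the fermionic $\so(2d)$ sits reducibly in $\su(2^d)$ as two $2^{d-1}$-dimensional half-spin components while the spin chain realises a single such component irreducibly in $\su(2^{d-1})$---is exactly the content of the paragraph the paper places immediately after the corollary, where it is resolved by comparing the Pauli bases from Eqn.~\eqref{qubitform} and the proof of Proposition~\ref{prop:so2n+2}.
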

By the second and third statement of Theorem~\ref{thm:gen_quad},
the embedding of $\so(2d)$ into $\su(2^d)$ splits into two irreducible representations
of equal dimension, and thus $\so(2d)$ acts simultaneously on both components.
Therefore, this embedding is---besides a doubling of the dimension---equivalent 
to the embedding of  $\so(2d)$ into $\su[2^{(d-1)}]$ via
Proposition~\ref{prop:so2n+2} as readily verifiable by resorting to the Pauli basis for
$\so(2d)$ given there. Referring to Tab.~\ref{tab:two}, we
further remark that the embedding of $\so(2d)$ into $\su[2^{(d-1)}]$ 
can arise from a symplectic representation (for $d=4k+2$), an orthogonal one (for $d=4k$), or
a unitary one (for $d$ odd).

\medskip
Now we illustrate that a controlled spin chain can simulate a quadratic
fermionic system, while the converse does not hold.
To this end, consider the case where the general quadratic (\/`physical\/')
Hamiltonian is supplemented by the (\/`unphysical\/') linear terms
$\sum_{p=1}^d j_p f^{\dagger}_p + k_p f_p$ with $j_p,k_p \in \C$. 
Hermiticity implies $j_p=\bar{k}_p$. Assume again that the coefficients $j_p$ and
$k_p$ are real, i.e.\ $j_p=\bar{j}_p$ and $k_p=\bar{k}_p$ to obtain
$j_p=\bar{j}_p=k_p$. Thus the linear terms can be written as
$\sum_{p=1}^d j_p (f^{\dagger}_p + f_p)$; they are mapped via the Jordan-Wigner 
transformation to the operators
\begin{equation}\label{linH}
H_2 = \sum_{p=1}^d j_p \underbrace{\mathrm{Z}\cdots\mathrm{Z}}_{p-1}
\mathrm{X} \underbrace{\mathrm{I}\cdots \mathrm{I}}_{d-p}\;.
\end{equation}
As will be shown next, this determines the dynamic algebra of a fictitious Hamiltonian system
containing quadratic {\em and} linear terms (see also the Pauli basis for $\so(2d+1)$ given in the proof
to Proposition~\ref{prop:so2n+1}):
\begin{theorem}\label{thm:gen_quad_lin}
Let $j_p\in\R$ ($1\leq p \leq d$) and
the entries of the real antisymmetric matrix $A$ and the real symmetric matrix $B$
denote the control functions in the Hamiltonian $H+H_2$, where
$H$ and $H_2$ are given by
Eqn.~\eqref{qubitform} and Eqn.~\eqref{linH}, respectively. 
The dynamic algebra $\so(2d+1)$ of the corresponding control system 
is irreducibly embedded in $\su(2^d)$. 
\end{theorem}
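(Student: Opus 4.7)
The plan is to build on Theorem~\ref{thm:gen_quad} and use the Clifford/Jordan--Wigner structure to show that the linear terms supply exactly one extra ``direction'' upgrading $\so(2d)$ to $\so(2d+1)$. Set $c_p := \mathrm{Z}^{p-1}\mathrm{X}\mathrm{I}^{d-p}$ and $c_{p+d} := \mathrm{Z}^{p-1}\mathrm{Y}\mathrm{I}^{d-p}$ for $p = 1,\dots,d$, so that the Hermitian operators $\{c_j\}_{j=1}^{2d}$ obey the Clifford relations $\{c_i,c_j\} = 2\delta_{ij}\unity_{2^d}$. By Theorem~\ref{thm:gen_quad}, the quadratic part $H$ alone generates $\so(2d)$, which in this notation has basis $\{c_p c_q : 1\le p<q\le 2d\}$ (each $c_pc_q$ is skew-Hermitian for $p\ne q$) and centraliser spanned by $-\tfrac{i}{2}\mathrm{Z}^{\otimes d}$. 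The linear terms $H_2$ add the skew-Hermitian generators $-\tfrac{i}{2}c_p$ for $p=1,\dots,d$.

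First, I will complete the set of linear generators to all $2d$ indices. Since the single-qubit $\mathrm{Z}_p$ terms appear directly as generators of $\so(2d)$ in Eqn.~\eqref{qubitform_A}, a one-line Pauli computation gives $[\mathrm{Z}_p, c_p] = 2i\,c_{p+d}$, placing $-\tfrac{i}{2}c_{p+d}$ into the Lie closure for every $p = 1,\dots,d$.

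Second, I will show that the real span of $\{ic_p : 1\le p\le 2d\}\cup\{c_pc_q : 1\le p<q\le 2d\}$ is already closed under commutator, hence equals the dynamic algebra $\fk$. Using only Clifford anti-commutation one verifies
\begin{gather*}
[c_p,c_q] = 2c_pc_q, \quad [c_p,c_pc_q] = 2c_q, \quad [c_p,c_rc_s] = 0 \;\text{ for }\, p\notin\{r,s\},\\
[c_pc_q,c_qc_r] = 2c_pc_r, \quad [c_pc_q,c_rc_s] = 0 \;\text{ for }\, \{p,q\}\cap\{r,s\}=\emptyset,
\end{gather*}
so commutators stay inside the span. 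Thus $\dim_{\R}\fk = 2d+d(2d-1) = d(2d+1)$. Under the assignment $c_pc_q \leftrightarrow 2M_{pq}$ (for $p,q\le 2d$) and $ic_p \leftrightarrow 2M_{p,2d+1}$, where $\{M_{ij}\}_{1\le i<j\le 2d+1}$ is the standard antisymmetric basis of $\so(2d+1)$ with structure constants $[M_{ij},M_{kl}] = \delta_{jk}M_{il}+\delta_{il}M_{jk}-\delta_{jl}M_{ik}-\delta_{ik}M_{jl}$, the brackets above match those of $\so(2d+1)$ exactly. Equivalently, $\{c_1,\dots,c_{2d},\mathrm{Z}^{\otimes d}\}$ provides a representation of the generators of the Clifford algebra $\mathrm{Cl}(2d+1)$ on $\C^{2^d}$, from which $\fk \cong \so(2d+1)$ is realised as the spin (Brauer--Weyl) representation (cf.\ Refs.~\cite{BW35,Miller72,Boerner67}).

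Finally, irreducibility follows from Lemma~\ref{lem:centraliser}: any element of the centraliser of $\fk$ in $\su(2^d)$ centralises $\so(2d)\subset \fk$ in particular, so by Theorem~\ref{thm:gen_quad} it must be a scalar multiple of $-\tfrac{i}{2}\mathrm{Z}^{\otimes d}$; but $\mathrm{Z}^{\otimes d}$ anti-commutes with $c_1 \in \fk$ (single $\mathrm{X}\mathrm{Z}$ pairing at position one, with commuting pairings elsewhere), so the only candidate is zero. Hence $\fk = \so(2d+1)$ acts irreducibly on $\C^{2^d}$. The main obstacle is the bracket bookkeeping of the third paragraph: while each of the five commutator cases is a one-line Clifford manipulation, signs and index collisions must be tracked carefully so that the identification with the $\so(2d+1)$ structure constants is unambiguous; an alternative finish, should one wish to avoid writing out the structure constants, invokes simplicity of $\fk$ (Lemma~\ref{lem:semi-simplicity} via the trivial centraliser) together with maximality of $\so(2d)$ in $\so(2d+1)$ (Table~12 of Ref.~\cite{Dynkin57b}), once the ambient $\so(2d+1)$ has been exhibited on $\C^{2^d}$ through the Clifford realisation.
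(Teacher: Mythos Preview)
Your proof is correct and takes a genuinely different route from the paper's. The paper's argument is essentially a Pauli-string matching: it computes a handful of commutators (e.g.\ $[\mathrm{Z}_p,c_p]$ yielding the $c_{p+d}$ terms, and the $\mathrm{X}\ldots\mathrm{Y}$/$\mathrm{Y}\ldots\mathrm{X}$ terms from commutators with $\mathrm{Z}_p$) and then observes that the resulting generating set coincides verbatim with the explicit Pauli basis for $\so(2n+1)$ tabulated in the proof of Proposition~\ref{prop:so2n+1}. Since that proposition already established the irreducible embedding $\so(2n+1)\subset\su(2^n)$, the identification is complete by inspection. Your approach instead stays entirely inside the Clifford picture: you close the span $\{ic_p\}\cup\{c_pc_q\}$ under commutators, read off the $\so(2d+1)$ structure constants directly, and settle irreducibility by a clean centraliser argument (the $\mathrm{Z}^{\otimes d}$ symmetry of the quadratic sector anticommutes with $c_1$). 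The paper's version is shorter because it can borrow the heavy lifting from Proposition~\ref{prop:so2n+1}; yours is more self-contained and makes the spin-representation provenance of the $\so(2d+1)$ explicit, which is arguably more illuminating in the fermionic context and would survive even if Proposition~\ref{prop:so2n+1} were removed.
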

\begin{proof}
Computing commutators of generators 
$-\tfrac{i}{2}\mathrm{Z}\cdots\mathrm{Z}\mathrm{X}\mathrm{I}\cdots \mathrm{I}$
from Eqn.~\eqref{linH} with generators 
$-\tfrac{i}{2}\mathrm{I}\cdots\mathrm{I}\mathrm{Z}\mathrm{I}\cdots \mathrm{I}$
from Eqn.~\eqref{qubitform_A}, we obtain the additional generators 
$-\tfrac{i}{2}\mathrm{Z}\cdots\mathrm{Z}\mathrm{Y}\mathrm{I}\cdots \mathrm{I}$.
Furthermore, the generators  $-\tfrac{i}{2}\mathrm{I}\cdots \mathrm{I}\mathrm{X}\mathrm{Z}\cdots\mathrm{Z}\mathrm{X}\mathrm{I}\cdots \mathrm{I}$ and  $-\tfrac{i}{2}\mathrm{I}\cdots \mathrm{I}\mathrm{Y}\mathrm{Z}\cdots\mathrm{Z}\mathrm{Y}\mathrm{I}\cdots \mathrm{I}$ arise from linear combinations of Eqns.~\eqref{qubitform_B}-\eqref{qubitform_C} and
computing commutators with generators $-\tfrac{i}{2}\mathrm{I}\cdots\mathrm{I}\mathrm{Z}\mathrm{I}\cdots \mathrm{I}$
from Eqn.~\eqref{qubitform_A} reveals the generators
$-\tfrac{i}{2}\mathrm{I}\cdots \mathrm{I}\mathrm{X}\mathrm{Z}\cdots\mathrm{Z}\mathrm{Y}\mathrm{I}\cdots \mathrm{I}$ and  $-\tfrac{i}{2}\mathrm{I}\cdots \mathrm{I}\mathrm{Y}\mathrm{Z}\cdots\mathrm{Z}\mathrm{X}\mathrm{I}\cdots \mathrm{I}$.
Now the Theorem follows by comparing all the generators with the table in the proof of Proposition~\ref{prop:so2n+1}.
\hfill$\blacksquare$
\end{proof}

\noindent
Moreover, setting $n:=d$ in Proposition~\ref{prop:so2n+1} we get
\begin{corollary}
Heisenberg-{\XX} chains of $d$ spin-$\tfrac{1}{2}$ qubits ($d\geq 1$) and a {\em single} locally controllable
qubit at one end have the dynamic algebra $\so(2d+1)$ and
can simulate a general fermionic quadratic Hamiltonian on $d$ levels with its dynamic algebra $\so(2d)$, 
but obviously not vice versa\footnote{as is also illustrated by the unphysical linear terms above}.
\end{corollary}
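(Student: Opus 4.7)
The plan is to assemble the corollary from three ingredients already in place. First, Proposition~\ref{prop:so2n+1} applied with $n = d$ immediately gives that the Heisenberg-{\XX} chain of $d$ qubits with a single local control at one end has dynamic algebra $\fk_a = \so(2d+1)$, irreducibly embedded in $\su(2^d)$. Second, Theorem~\ref{thm:gen_quad} identifies the dynamic algebra of a fully controlled quadratic fermionic system on $d$ levels (via the Jordan-Wigner transformation) as $\fk_b = \so(2d) \subset \su(2^d)$.

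The main step is to verify that these two embeddings are compatible, i.e.\ that $\fk_b \subseteq \fk_a$ as matrix subalgebras of $\su(2^d)$. To this end I would invoke Theorem~\ref{thm:gen_quad_lin}: adjoining the linear Jordan-Wigner images $\mathrm{Z}\cdots \mathrm{Z}\mathrm{X}\mathrm{I}\cdots \mathrm{I}$ from Eqn.~\eqref{linH} to the quadratic generators of Theorem~\ref{thm:gen_quad} yields $\so(2d+1)$ irreducibly embedded in $\su(2^d)$, and the proof of Theorem~\ref{thm:gen_quad_lin} already proceeds by matching the resulting Pauli strings with the basis tabulated in the proof of Proposition~\ref{prop:so2n+1}. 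In particular, the linear term at $p=1$ is the Pauli operator $\mathrm{X}_1$, which coincides with the local control generator at the first qubit of the spin chain; hence the two $\so(2d+1)$ embeddings are literally the same matrix algebra, and the quadratic fermionic $\so(2d)$ sits inside it as the subalgebra obtained by dropping precisely the linear Pauli strings $\mathrm{Z}\cdots \mathrm{Z}\mathrm{X}\mathrm{I}\cdots \mathrm{I}$ together with their commutator companions $\mathrm{Z}\cdots \mathrm{Z}\mathrm{Y}\mathrm{I}\cdots \mathrm{I}$. Proposition~\ref{prop:sim-gen} then delivers the simulation claim.

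For the converse direction, note that $\dim \so(2d+1) = d(2d+1) > d(2d-1) = \dim \so(2d)$, so $\fk_a$ cannot be contained in $\fk_b$; concretely the single-site generators $\mathrm{Z}\cdots \mathrm{Z}\mathrm{X}\mathrm{I}\cdots \mathrm{I}$ of $\fk_a$, i.e.\ the \emph{unphysical} linear fermionic terms highlighted in the footnote, lie outside $\fk_b$ and cannot arise from repeated commutators of purely quadratic fermionic generators, blocking any fermionic simulation of the spin chain. The only real obstacle in executing this plan is the careful bookkeeping that aligns the position of the controlled qubit in Proposition~\ref{prop:so2n+1} with the site where the fermionic linear term is inserted in Theorem~\ref{thm:gen_quad_lin}; with the standard convention of placing both at site~$1$, this alignment is automatic and the Pauli-basis comparison is routine.
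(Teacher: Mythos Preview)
Your proposal is correct and follows essentially the same route as the paper, which simply notes ``setting $n:=d$ in Proposition~\ref{prop:so2n+1}'' and relies on the Pauli-basis comparison already carried out in the proof of Theorem~\ref{thm:gen_quad_lin}. If anything, you are more explicit than the paper in spelling out why the two $\so(2d+1)$ embeddings coincide as matrix subalgebras of $\su(2^d)$ and in invoking Proposition~\ref{prop:sim-gen} for the simulation direction; the paper leaves both points implicit.
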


\subsection{Discussion}
We analyse three further cases of fermionic Hamiltonians.
First, consider quadratic Hamiltonians (without linear terms) which are particle-number preserving, i.e.
$A=0$ in Eqn.~\eqref{qubitform}. Assuming the elements of $B$ are control functions, we obtain
$\uu(d)$ as dynamic algebra (cp.\ p.~501 of Ref.~\cite{Gilmore94}). 
Second, the diagonal normal form
(see, e.g., App.~A of Ref.~\cite{LSM61}, Sec.~III.8 of Ref.~\cite{Berezin}, Sec.~3.3 of Ref.~\cite{BR86}, 
and Theorem~II.1 of Ref.~\cite{OO09})
 for the Hamiltonian $H$ of Eqn.~\eqref{symquadHam} 
\begin{equation}
\sum_{p=1}^d E_p \left(f^{\dagger}_p f_p - \tfrac{1}{2}\right) =
\sum_{p=1}^d \frac{E_p}{2} \left(f^{\dagger}_p f_p -f_p f^{\dagger}_p\right)
\end{equation}
(with $E_p$ positive and real) is mapped by
the Jordan-Wigner transformation to the $d$-qubit operator
\begin{equation}
\sum_{p=1}^d \frac{E_p}{2} \underbrace{\mathrm{I}\cdots\mathrm{I}}_{p-1}
\mathrm{Z} \underbrace{\mathrm{I}\cdots \mathrm{I}}_{d-p}.
\end{equation}
Considering $E_p$ ($1\leq p \leq d$) as controls, we get a $d$-dimensional abelian Lie algebra
as dynamic algebra. Third, if we allow for fermionic operators of arbitrary order 
(less than or equal to $d$), we get $\su(2^d)$ as dynamic algebra.

In summary, the sequence of dynamic Lie algebras
\begin{equation}
\su(2^d) \supseteq \so(2d+1) \,\supset \,\so(2d) \,\supseteq \, \uu(d)
\end{equation}
plays a prominent role for fermionic quantum systems as pointed out in
Chapter~22 of Ref.~\cite{Wybourne}
and
Sec.~IV of Ref.~\cite{WL02}. This sequence of Lie algebras is also widely studied 
in particle physics
\cite{WZ82,Georgi99}.

\subsection{Spinless Hubbard Model with Periodic Boundary Conditions\label{spinless}}
First, specify a spinless version of the Hubbard model (see p.~20 in Ref.~\cite{EFGKK05} 
or p.~61 in Ref.~\cite{AS10}) with periodic boundary conditions where
$d+1=1$ due to the periodicity, $d\geq 2$, and 
$t,u\in \R$:
\begin{subequations}\label{Hubbard}
\begin{align}
H = & -t \sum_{p=1}^d (f^{\dagger}_p f_{p+1} - f_p f^{\dagger}_{p+1})
\label{Hubbard_a} \\
& +u \sum_{p=1}^d (f^{\dagger}_p f_p- \tfrac{1}{2})\;.\label{Hubbard_b}
\end{align}
\end{subequations}
Equation~\eqref{Hubbard_a} resembles a spinless tight-binding model
(see p.~437 in Ref.~\cite{EFGKK05} or p.~59 in Ref.~\cite{AS10})
and equals the quadratic Hamiltonian
of Eqn.~\eqref{symquadHam} with $A=0$ and
$$
B=-\frac{1}{2}
\begin{pmatrix}
0 & 1 &   &   &   &   &   & 1\\
1 & 0 & 1 &   &   &   &   & \\
  & 1 & 0 & \cdot &   &   & \\
  &   & \cdot  & \cdot & \cdot &   &   &\\
  &   &   & \cdot   &  \cdot  & \cdot   &\\
  &   &   &   & \cdot  & 0 & 1 &  \\
  &   &   &   &   & 1 & 0 & 1\\
1 &   &   &   &   &   & 1 & 0
\end{pmatrix}.
$$
The Jordan-Wigner transform of the Hamiltonian $H$
of Eqn.~\eqref{Hubbard} takes the form
\begin{subequations}
\label{unitary_both}
\begin{align}
\label{unitary_A}
H =& \frac{t}{2}
\Big[ \Big( \sum_{p=1}^{d-1} \underbrace{\mathrm{I}\cdots\mathrm{I}}_{p-1}
\mathrm{Y} \mathrm{Y} \underbrace{\mathrm{I}\cdots\mathrm{I}}_{d-1-p} +
\underbrace{\mathrm{I}\cdots\mathrm{I}}_{p-1}
\mathrm{X} \mathrm{X} \underbrace{\mathrm{I}\cdots\mathrm{I}}_{d-1-p}
\Big)+ 
\Big( 
 \mathrm{X} \underbrace{\mathrm{Z}\cdots\mathrm{Z}}_{d-2} \mathrm{X}+
 \mathrm{Y} \underbrace{\mathrm{Z}\cdots\mathrm{Z}}_{d-2} \mathrm{Y}
\Big)
\Big] \\
\label{unitary_B}
&+  u \sum_{p=1}^d
\underbrace{\mathrm{I}\cdots\mathrm{I}}_{p-1}
\mathrm{Z} \underbrace{\mathrm{I}\cdots\mathrm{I}}_{d-p}\;.
\end{align}
\end{subequations}
Extending Eqn.~\eqref{Hubbard_b} to $
\sum_{p=1}^d u_p (f^{\dagger}_p f_p- \tfrac{1}{2})$ such that it contains site-dependent
control functions $u_p\in \R$, we obtain (by building on App.~B of Ref.~\cite{ADGH:2010})
\begin{lemma}
The dynamic control system corresponding to the Hamiltonian 
\begin{equation}
H=-t \sum_{p=1}^d (f^{\dagger}_p f_{p+1} - f_p f^{\dagger}_{p+1})
+ \sum_{p=1}^d u_p (f^{\dagger}_p f_p-\tfrac{1}{2})
\end{equation}
has $\uu(d)$ as dynamic Lie 
algebra assuming that $u_p\in \R$ are controls and $d\geq 2$.
\end{lemma}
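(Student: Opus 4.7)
The plan is to reduce the problem to a controllability question on $\uu(d)$ acting on the $d$-dimensional single-particle space, and then conclude via a connected-graph argument. First, I observe that every generator of the system is a particle-number-preserving quadratic fermionic operator: using $f_p f^\dagger_{p+1} = -f^\dagger_{p+1} f_p$ (valid since $p\neq p+1$), the drift rewrites as $H_d = -t\sum_p (f^\dagger_p f_{p+1} + f^\dagger_{p+1} f_p)$, and each control $N_p = f^\dagger_p f_p - \tfrac{1}{2}$ is already of this form. The operators $\tilde E_{pq} := f^\dagger_p f_q$ satisfy the $\gl(d,\C)$ commutation relations $[\tilde E_{pq},\tilde E_{rs}] = \delta_{qr}\tilde E_{ps} - \delta_{ps}\tilde E_{rq}$, so their anti-Hermitian combinations realise a representation of $\uu(d)$ on Fock space that is faithful, because the defining representation $\C^d$ occurs as the single-particle sector $\Lambda^1 \C^d$. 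Hence the dynamic algebra is contained in this copy of $\uu(d)$.

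For the reverse inclusion, I transfer the question to $d\times d$ matrices: it suffices to show that the anti-Hermitian diagonal matrices $n_p := i(E_{pp} - \tfrac{1}{2}\unity_d)$, together with the cycle hopping $h := -it\,M_{\mathrm{cyc}}$ (where $M_{\mathrm{cyc}}$ is the adjacency matrix of the $d$-cycle), Lie-generate $\uu(d)$. The $\{n_p\}_{p=1}^{d}$ already span the Cartan subalgebra. A short explicit computation shows that $[n_p, h]$ produces a linear combination of the two antisymmetric edge matrices $E_{p,p\pm 1}-E_{p\pm 1,p}$ meeting vertex $p$; a second bracket with an adjacent $n_q$ disentangles them, yielding each nearest-neighbour antisymmetric matrix unit, and a further bracket with the Cartan produces the symmetric partner $i(E_{p,p+1}+E_{p+1,p})$. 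Iterated commutators of the form $[\,i(E_{p,q}+E_{q,p}),\,i(E_{q,r}+E_{r,q})\,] \propto E_{p,r}-E_{r,p}$ then propagate these generators along the connected cycle to every long-range matrix-unit pair, giving a basis of $\uu(d)$.

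The main obstacle is the bookkeeping forced by the periodic boundary: unlike an open chain, no single Cartan bracket with $h$ isolates one edge, since every vertex belongs to two edges. This is resolved by the two-step disentangling just described, which relies crucially on the site-locality of the controls $N_p$. The case $d=2$ requires separate (but trivial) treatment, since there the cycle degenerates to a doubled edge and the single hopping together with $n_1, n_2$ immediately generates $\uu(2)$. Alternatively, the reduction can be phrased as an Albertini--D'Alessandro-type application in the $\uu(d)$-setting, where the connected coupling graph (the $d$-cycle) together with independent diagonal controls at every vertex forces Lie-generation of the full $\uu(d)$; combining this with the faithful embedding of step one yields $\fk_n = \uu(d)$ as subalgebra of $\su(2^d)$.
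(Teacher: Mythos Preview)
Your approach is correct in substance and takes a genuinely different route from the paper. The paper works entirely in the Jordan--Wigner (Pauli) picture: it first invokes the general result that quadratic fermionic Hamiltonians sit inside $\so(2d)$, then observes that the centraliser of $\fk_d$ contains both $-\tfrac{i}{2}\mathrm{Z}\cdots\mathrm{Z}$ and $-\tfrac{i}{2}\sum_p \mathrm{Z}_p$, forcing $\fk_d\subsetneq\so(2d)$; finally it uses the chain of maximal inclusions $\uu(q)\oplus\uu(d-q)\subset\uu(d)\subset\so(2d)$ together with an inductive commutator computation in the Pauli basis to conclude $\fk_d=\uu(d)$. Your argument bypasses the detour through $\so(2d)$ and maximality: by recognising the number-preserving quadratic operators as a faithful copy of $\uu(d)$ from the outset, the upper bound $\fk_d\subseteq\uu(d)$ is immediate, and the lower bound reduces to a transparent connected-graph generation on $d\times d$ matrices. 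This is shorter, more conceptual, and makes the role of site-local controls explicit.

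One technical point needs correcting. Under your map $A\mapsto\sum_{p,q}A_{pq}f^\dagger_p f_q$, the matrix $n_p=i(E_{pp}-\tfrac{1}{2}\unity_d)$ is sent to $i(f^\dagger_p f_p-\tfrac{1}{2}N)$, not to the control $i(f^\dagger_p f_p-\tfrac{1}{2})$; the constant $-\tfrac{1}{2}\unity_{2^d}$ lies outside the image. The clean fix is to use the \emph{symmetrised} second-quantisation map
\[
\rho'(A)\;:=\;\tfrac{1}{2}\sum_{p,q}A_{pq}\bigl(f^\dagger_p f_q-f_q f^\dagger_p\bigr)\;=\;\rho(A)-\tfrac{1}{2}\tr(A)\,\unity,
\]
which is still a faithful Lie-algebra homomorphism of $\uu(d)$ (the trace shift is central and vanishes on commutators) and now satisfies $\rho'(iE_{pp})=i(f^\dagger_p f_p-\tfrac{1}{2})$ exactly. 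With this map the single-particle controls are simply the full Cartan $\{iE_{pp}\}$, the drift is $-it\,M_{\mathrm{cyc}}$, and your two-step edge-isolation and propagation argument goes through unchanged (note that the second bracket with an adjacent Cartan element actually produces the \emph{symmetric} edge $i(E_{p,p+1}+E_{p+1,p})$ first, not the antisymmetric one---but this is a cosmetic reordering).
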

\begin{proof}
Let $\fk_d$ denote the dynamic algebra of the control system.
We obtain from Eqn.~\eqref{unitary_A} one generator $a_1$ and from Eqn.~\eqref{unitary_B}
the generators ($0\leq p \leq d$)
$$\mathrm{z}_p:=-\tfrac{i}{2}\mathrm{Z}_p=-\tfrac{i}{2}\underbrace{\mathrm{I}\cdots\mathrm{I}}_{p-1}\mathrm{Z}\underbrace{\mathrm{I}\cdots \mathrm{I}}_{d-p}.$$
One can verify on the generators that the $d$-qubit operators $-\tfrac{i}{2}\mathrm{Z}\cdots\mathrm{Z}$
and $-\tfrac{i}{2}\sum_{p=1}^d \mathrm{Z}_{p}$ are both elements of the centraliser of $\fk_d$.
By comparison with Theorem~\ref{thm:gen_quad} we obtain that $\fk_d\subseteq \so(2d)$.
As the centraliser in Theorem~\ref{thm:gen_quad} is one-dimensional and the centraliser of $\fk_d$ is at least two-dimensional, it follows that
$\fk_d \subsetneq \so(2d)$. We remark that 
$\uu(d)$ is a maximal subalgebra of $\so(2d)$ and that $\su(q)\oplus \uu(d-q)$ is a maximal subalgebra
of $\su(d)$
(see, e.g., p.~219 of Ref.~\cite{BS49}). In particular,  
$\uu(q)\oplus \uu(d-q)$ is a maximal subalgebra
of $\uu(d)$. The Theorem can be directly verified for $d\in\{2,3,4,5\}$. 
We assume by induction that the Theorem is true for all $\fk_{d'}$ with $d' < d$.
The Theorem follows by induction if we show that $\fk_d \supsetneq \uu(q)\oplus \uu(d-q)$
holds for any $q$. We compute the commutators
$a_2:=[\mathrm{z}_1,[\mathrm{z}_2,a_1]]=-\tfrac{i}{2}(\mathrm{X}_1\mathrm{X}_2+\mathrm{Y}_1\mathrm{Y}_2)$
and $a_3:=[\mathrm{z}_3,[\mathrm{z}_2,a_1]]=-\tfrac{i}{2}(\mathrm{X}_2\mathrm{X}_3+\mathrm{Y}_2\mathrm{Y}_3)$
[using again the notation of Eqn.~\eqref{eqn:notation}]. For $3\leq j \leq d-1$ we have
$g_j:=[\mathrm{z}_{j+1},[\mathrm{z}_j,a_1]]=-\tfrac{i}{2}(\mathrm{X}_j\mathrm{X}_{j+1}+\mathrm{Y}_j\mathrm{Y}_{j+1})$
and by linear combinations we obtain the $d$-qubit operator 
$a_4:=-\tfrac{i}{2}(\mathrm{X}\mathrm{Z}\cdots\mathrm{Z}\mathrm{X}+\mathrm{Y}\mathrm{Z}\cdots\mathrm{Z}\mathrm{Y})$. We further compute the commutators
$a_5:=[\mathrm{z}_2,[a_2,a_4]]=-\tfrac{i}{2}(\mathrm{I}\mathrm{X}\mathrm{Z}\mathrm{Z}\cdots\mathrm{Z}\mathrm{X}+\mathrm{I}\mathrm{Y}\mathrm{Z}\mathrm{Z}\cdots\mathrm{Z}\mathrm{Y})$ and
$a_6:=[\mathrm{z}_3,[a_3,a_5]]=-\tfrac{i}{2}(\mathrm{I}\mathrm{I}\mathrm{X}\mathrm{Z}\cdots\mathrm{Z}\mathrm{X}+\mathrm{I}\mathrm{I}\mathrm{Y}\mathrm{Z}\cdots\mathrm{Z}\mathrm{Y})$. Using the elements
$g_j$ and $a_6$ we can build the element
$a_6-\tfrac{i}{2}(\sum_{j=3}^{d-1} \mathrm{X}_j \mathrm{X}_{j+1} +\mathrm{Y}_j \mathrm{Y}_{j+1})$
which together with $a_2$ and the elements $\mathrm{z}_j$ generates $\uu(2)\oplus \uu(d-2)$. As $a_3$ is not contained in $\uu(2)\oplus \uu(d-2)$ we proved that
$\fk_d \supsetneq \uu(2)\oplus \uu(d-2)$.
\hfill$\blacksquare$
\end{proof}

\subsection{Note on the Hubbard Model with Periodic Boundary Conditions and Spin}
Including the spin $\sigma =\pm$ in the Hubbard model gives
\begin{subequations}
\label{Hubbard_spin}
\begin{align}
H = & -t \left[ \sum_{\sigma=\pm} \sum_{p=1}^d (f^{\dagger}_{p,\sigma} f_{p+1,\sigma} - f_{p,\sigma} f^{\dagger}_{p+1,\sigma})\right]
\label{Hubbard_spinA}
\\
& + \sum_{p=1}^d u_p (f^{\dagger}_{p,+} f_{p,+} - \tfrac{1}{2})(f^{\dagger}_{p,-} f_{p,-} - \tfrac{1}{2})\;,
\label{Hubbard_spinB}
\end{align}
\end{subequations}
where the anticommutation relations of Eq.~\eqref{anticomm} still hold among operators
with {\em equal spin values}, while operators with {\em different spin values} anticommute.
The spin degrees of freedom just split each of the original levels $p$ into two sub-levels.
Thus the image of the Hamiltonian form Eq.~\eqref{Hubbard_spin} under the Jordan-Wigner transformation 
operates on a space of squared dimension as compared to the case without spin
and the dynamic algebra is embedded in $\su(2^{2\,d})$.
The drift Hamiltonian of Eq.~\eqref{Hubbard_spinA} is mapped to $A_0\otimes \unity + \unity \otimes A_0$
with
\begin{equation*}
A_0:= \Big( \sum_{p=1}^{d-1} \underbrace{\mathrm{I}\cdots\mathrm{I}}_{p-1}
\mathrm{Y} \mathrm{Y} \underbrace{\mathrm{I}\cdots\mathrm{I}}_{d-1-p} +
\underbrace{\mathrm{I}\cdots\mathrm{I}}_{p-1}
\mathrm{X} \mathrm{X} \underbrace{\mathrm{I}\cdots\mathrm{I}}_{d-1-p}
\Big)+ 
\Big( 
 \mathrm{X} \underbrace{\mathrm{Z}\cdots\mathrm{Z}}_{d-2} \mathrm{X}+
 \mathrm{Y} \underbrace{\mathrm{Z}\cdots\mathrm{Z}}_{d-2} \mathrm{Y}
\Big)\;.
\end{equation*}
The control Hamltonians of Eq.~\eqref{Hubbard_spinB} are mapped to $A_p\otimes A_p$ where 
\begin{equation*}
A_p:=
\underbrace{\mathrm{I}\cdots\mathrm{I}}_{p-1}
\mathrm{Z} \underbrace{\mathrm{I}\cdots\mathrm{I}}_{d-p}\;.
\end{equation*}
For $d=2$, direct computation using 
the computer algebra system {\sf MAGMA}~\cite{MAGMA} 
gives the system Lie algebra $\su(2)\oplus\su(2)\oplus\uu(1)$
embedded in $\su(2^4)$. The general case appears more intricate and 
goes beyond the scope of this work.

\section{Bosonic Quantum Systems}\label{sec:bosonic}
Finally we comment on the bosonic case.
As opposed to the Pauli principle in the fermionic case, in bosons
the occupation number $n_p$ is not bounded and---even for 
a finite number $d$ of levels---the dynamic algebra of Hamiltonians of arbitrary order need not be finite
unless the particle number is also bounded. 
Yet, the dynamic algebra for {\em quadratic (pair-interaction) Hamiltonians} is given by the real symplectic
algebra $\usp(2d,\R)$ (see, e.g., p.~36 of Ref.~\cite{BR86}, p.~186 of Ref.~\cite{SW86}, or
p.~501 of Ref.~\cite{Gilmore94}). 
We have not yet found an appropriate spin system that would be dynamically
equivalent to the compact real form $\usp(d)$ of a quadratic bosonic system with
algebra $\usp(2d,\R)$.
However, in Secs.~\ref{sec:symplectic}--\ref{sec:alternating}, we have already presented 
spin systems with dynamic algebras $\usp(2^{n-1})$ 
which are actually {\em more powerful than required} and contain
the compact real form $\usp(d)$ of 
a quadratic bosonic system with algebra $\usp(2d,\R)$.
For further analysis of the bosonic case, the Holstein-Primakoff transformation
may be of help (see, e.g., p.~78 of Ref.~\cite{AS10}).---Finally, 
the results of mutually simulating quantum systems are summarised in Tab.~\ref{tab:simu}.

\section[Outlook: Quantum Simulation as an Observed Optimal-Control Problem]
{Outlook: \nopagebreak \\ \nopagebreak Quantum Simulation as an Observed Optimal-Control Problem}\label{sec:outlook}
Clearly, in view of experimental settings, one may take a more specific point of view 
by comparing the {\em time course} of two
{\em observed bilinear control systems} $(\Sigma_\mu)$, $\mu=a,b$
with respect to
(i) a set of Hermitian (and mutually orthogonal) 
observables $C_\nu^{(a)}$ and $C_{\nu'}^{(b)}$ with $\nu,\nu'\in\mathcal I\subseteq\{1,2,\dots,N^2-1\}$,
(ii) the initial states $\rho_0^\mu$, 
(iii) a given time interval $[0,T]$, and
(iv) admissible controls $u^\mu_j(t)\in\mathcal U^\mu\subseteq \R{}$
\begin{eqnarray}\label{eqn:bilinear_contr4}
        \dot \rho^\mu(t) &=& -i\Big[\big(H_0^\mu + \sum_{j=1}^m u^\mu_j(t) H^\mu_j\big) \;,\; \rho^\mu(t)\Big]
			\quad\text{with}\quad \rho^\mu(0)=\rho_0^\mu\\
             \expt{C}_\nu^\mu(t) &=& \tr\{(C_\nu^\mu)^\dagger \rho^\mu(t)\}
			\quad\text{with}\quad \{C_\nu^\mu\},
			\;\nu\in\mathcal I\;.
\end{eqnarray}
Now the comparison resorts to the expectation values $\expt{C}^\mu(t)$ 
via states $\rho^\mu(t)$, drifts $H_0^\mu$, controls $H^\mu_j$,
and control amplitudes $u^\mu_j(t)$. Note that $\{C_\nu^{(a)}\}$ and
$\{C_{\nu'}^{(b)}\}$ need not coincide, but if $\Sigma_a$ shall simulate $\Sigma_b$
it is convenient to require
$|\{C_\nu^{(a)}\}| \geq  |\{C_{\nu'}^{(b)}\}|$
so that (by invoking the above orthogonality of the observables with respect to the Hilbert-Schmidt
scalar product) one can ensure:
$\rank\spanR\{C_\nu^{(a)}\} \geq \rank\spanR\{C_{\nu'}^{(b)}\}$.

\medskip
Now for simultaneous measurement, it is useful to pick several observables $C_\nu^\mu$
as long as they are compatible (mutually commute), or, more generally, as long 
as they are mutually non-disturbing in the sense of the recent findings in 
Ref.~\cite{HeinWolf10}. Simultaneous expectation values are conveniently collected
in the observation vectors
\begin{equation}
[{\sf \expt{C}}^\mu(t)]:=[\expt{C}^\mu_1(t), \expt{C}^\mu_2(t), ...]^t\quad\text{with}\quad \mu=a,b\;.
\end{equation}

\noindent
Likewise, we define the respective dynamic system algebras of $\Sigma_a$ and $\Sigma_b$ as
\begin{equation}
\fk_\mu := \expt{iH_0^\mu, iH_j^\mu | j=1,2,\dots,m_\mu}_{\rm Lie} \quad\text{with}\quad \mu=a,b\;.
\end{equation}

\noindent
Clearly, $\fk_a\supseteq \fk_b$ implies $\Sigma_a$ can simulate $\Sigma_b$.
However, if $\Sigma_a$ comes with a larger set of observables $\{C_\nu^{(a)}\}$, 
the above condition is still sufficient, but it is no longer necessary. 
This is analogous to the fact that in {\em quantum systems} controllability implies observability,
whereas the converse need not hold \cite{dAless08} (for details see \cite{SS09}).
In {\em classical systems}, however, controllability and observability are dual to
one another (see, e.g., \cite{Sontag}), since no observables accounting for the quantum-specific 
measurement process are involved. --- Now the notion of {\em weak simulation}, for which
simulability can be seen as a strong condition, comes naturally:

\begin{proposition}
A dynamic system $\Sigma_a$ can weakly simulate another dynamic
system $\Sigma_b$ in time interval $[0,T]$ and with respect to the
two sets of observables $\{C_\nu^a\}$ and $\{C_\nu^b\}$,
if there exists a pair of initial conditions $\rho_0^a$ and $\rho_0^b$
(reachable form the respective equilibrium states)
and two sets of admissible control vectors $u^a_j(t)$ and $u^b_j(t')$
such that $M [\expt{{\sf C}}^a(t)]=[\expt{{\sf C}}^b(t')]$ for all $t\in[0,T]$ and
$t'\in[\tau(0),\tau(T)]$, where $\tau(t)$ is a bijective function of $t$
for all $t\in[0,T]$ and $M$ is a  
map 
$M: \R^n \to \R^m, [\expt{{\sf C}}^a(t)] \mapsto [\expt{{\sf C}}^b(t)]$
with $n\geq m$.
\end{proposition}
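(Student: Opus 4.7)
The plan is to recognise that the statement is effectively formulating weak simulation as a feasibility question for an observed optimal-control problem, so the "proof" amounts to verifying that the stated conditions are well-posed and non-vacuous; the substantive work lies in identifying the right cost functional and constraint set rather than in a deep deductive argument. This is consistent with the proposition sitting in the outlook section as a structural/definitional claim.

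First, I would rewrite the matching requirement $M[{\sf \expt{C}}^a(t)] = [{\sf \expt{C}}^b(\tau(t))]$ as the vanishing of the cost functional
\[ J(\rho_0^a,\rho_0^b,u^a,u^b,\tau) \; := \; \int_0^T \bigl\| M [{\sf \expt{C}}^a(t)] - [{\sf \expt{C}}^b(\tau(t))] \bigr\|^2 \, dt, \]
with constraints given by Eqn.~\eqref{eqn:bilinear_contr4} for $\mu=a,b$. Using the trace formula $\expt{C}^\mu_\nu(t) = \tr\{(C^\mu_\nu)^\dagger K_\mu(t)\rho_0^\mu K_\mu(t)^\dagger\}$ with $K_\mu(t) \in \bK_\mu := \exp \fk_\mu$ reduces the matching of expectation-value trajectories to matching of the subgroup orbits $\mathcal{O}_{\bK_\mu}(\rho_0^\mu)$ projected through the observables.

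Second, I would verify non-vacuity of the feasible set. The assumption that $\rho_0^\mu$ is reachable from an equilibrium state of $\Sigma_\mu$ places $\rho_0^\mu$ in a subgroup orbit of the form given in Eqn.~\eqref{eqn:reach-k-orbit}, and admissibility $u^\mu_j \in \mathcal{U}^\mu$ is by hypothesis a pre-specified input class. Bijectivity of $\tau$ converts a possibly nonuniform clock mismatch between the two systems into a pointwise identification of observation instants, while the rank inequality $\rank \spanR \{C^{(a)}_\nu\} \geq \rank \spanR \{C^{(b)}_\nu\}$ noted just before the proposition ensures that a full-column-rank linear map $M\colon \R^n \to \R^m$ with $n \geq m$ can be chosen to realise the observational projection from $\Sigma_a$ onto $\Sigma_b$.

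Third, the residual content is to guarantee the existence of admissible controls producing the required trajectories. In the strong regime $\fk_a \supseteq \fk_b$, Proposition~\ref{prop:sim-gen} already supplies the controls without any need for $M$ or $\tau$ beyond the trivial identity, so weak simulability follows a fortiori. The main obstacle, and the reason this proposition appears in the outlook rather than the main text, is that in the genuinely weak regime $\fk_a \not\supseteq \fk_b$ the feasibility question couples the algebraic structure of $\fk_a,\fk_b$ with the dimension and span of the observable sets $\{C^{(\mu)}_\nu\}$; a complete necessary-and-sufficient criterion would require an output-equivalence refinement of the Lie-closure analysis of Sec.~\ref{sec:simulability} together with a careful treatment of the reparametrization freedom encoded in $\tau$, which is flagged as a direction for future work.
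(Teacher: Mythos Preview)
Your reading is correct: the paper does not supply a proof for this proposition at all. It appears in the outlook section as a definitional formulation of weak simulation, immediately followed by the remark ``As will be described elsewhere, the previous proposition motivates to view simulability as a generic precondition to formulate weak quantum simulation as an optimal-control task: minimise $\|M[\expt{{\sf C}}^a(t)]-[\expt{{\sf C}}^b(t')]\|_2^2$ subject to the differential equations of motion.'' So the paper treats the statement as a stipulative definition rather than a theorem requiring proof.

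Your proposal goes further than the paper does: you explicitly write down the cost functional, argue non-vacuity via the reachability and rank conditions, and note that the strong case $\fk_a\supseteq\fk_b$ of Proposition~\ref{prop:sim-gen} trivially implies the weak one. That last observation is a genuine (if easy) sufficiency check the paper leaves implicit. Your third paragraph also correctly diagnoses why the general feasibility question is deferred to future work. In short, there is nothing to compare against---the paper offers no argument here---and your commentary is an appropriate unpacking of what the proposition is doing.
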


As will be described elsewhere, 
the previous proposition motivates to view simulability as a generic precondition to
formulate weak quantum simulation as an optimal-control task: 
minimise $||M [\expt{{\sf C}}^a(t)]-[\expt{{\sf C}}^b(t')]||^2_2$ subject to the
differential equations of motion given in Eqn.~\eqref{eqn:bilinear_contr4}.

\begin{table}[Ht!]
\caption{\label{tab:simu} Summarising Overview on Simulating Quantum Systems}
\begin{center}
\fontencoding{OT1}
\fontfamily{cmr}
\fontseries{m}
\fontshape{n}
\fontsize{8}{9.5}
\selectfont
\begin{tabular}{@{\hspace{2mm}}c@{\hspace{0mm}}}
\hline\hline\\
\begin{minipage}{11.7cm}
\vspace{-1mm}
\renewcommand{\footnoterule}{}
\begin{tabular}{lcccr}
system type & levels\footnote{\hspace{-1.5mm}\tiny{In second quantisation, the number of levels
for the fermionic or bosonic system usually arises
as a map\\ \phantom{X} from the number of qubits 
in the spin system. For fermions, the mapping is given by the 
Jordan-Wigner\\ \phantom{X} transformation.}}
 & fermionic & bosonic & system alg.\ \\[1mm]
 $n$\nb{}spins\nb$\tfrac{1}{2}$  & & \multicolumn{2}{c}{----------- order
 of coupling -----------}\\ \midrule
\raisebox{-2mm}{\includegraphics[width=30mm]{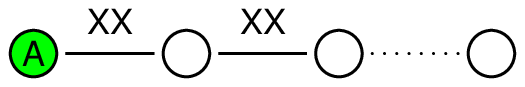}} & $n$ & quadratic (i.e.\ $2$)  & -- 
& $\so(2n+1)$\\[2mm]
\raisebox{-2mm}{\includegraphics[width=30mm]{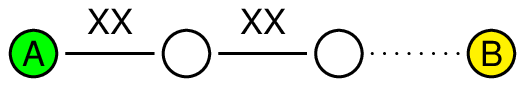}} & $n+1$ & quadratic (i.e.\ $2$)  & --
& $\so(2n+2)$ \\[2mm]
\raisebox{-2mm}{\includegraphics[width=30mm]{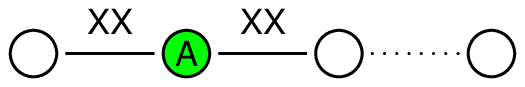}} \\[2mm]
	\; for $(n\bmod{4})\in\{0,1\}$ & $n$ & up to $n$ & -- & $\so(2^n)$ \\[1mm]
	\; for $(n\bmod{4})\in\{2,3\}$ & $n$ & -- & up to $n$ & $\usp(2^{n-1})$\\[2mm]
\raisebox{-2mm}{\includegraphics[width=30mm]{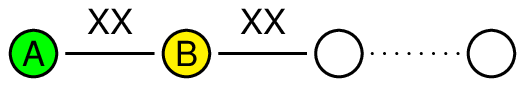}} & $n$ & up to $n$ & up to $n$ &
$\su(2^n)$
\end{tabular}
\vspace{-2mm}
\end{minipage}
\vspace{2.5mm}\\
\hline\hline
\end{tabular}
\end{center}
\end{table}

\section{Conclusion}

Often the presence or absence of symmetries in quantum hardware architectures
can already be assessed by inspection. Given the system Hamiltonian
as well as the control Hamiltonians, 
(i) we have provided a single necessary and sufficient symmetry condition ensuring
full controllability, and
(ii) in view of practical applications we have shown easy means 
(solving systems of homogeneous linear equations)
to determine the symmetry
of the dynamic system algebra $\fk$ merely in terms of its commutant or centraliser $\fk'$.
If the system Hamiltonian corresponds to a connected coupling graph, 
the absence of any symmetry can be further exploited to decide 
full controllability: it means the dynamic system algebra is irreducible and {\em simple}.
Now conjugation to simple orthogonal or symplectic candidate subalgebras can again
be decided solely on the basis of solving systems of homogeneous linear equations. 
The final identification task can now be settled because here we have given a {\em complete} list of 
irreducible simple subalgebras of $\su(N)$ compatible with the physical constituents 
as a dynamic pseudo-spin system. This avoids the usual and significantly more costly 
way of explicitly calculating Lie closures.  
We have thus made precise and easily accessible 
the following four conditions ensuring full controllability of a dynamic
qubit system in terms of its system algebra $\fk\subseteq \su(N)$:\\[1mm]
(1) the system must not show a symmetry ($\fk$ must have a trivial centraliser
		$\fk'$),\\
(2) the coupling graph of the control system must be connected, \\
(3) the system algebra $\fk$ must not be given in a symplectic or an orthogonal\\[-.3mm]
	\phantom{JX} representation, and finally\\
(4) if $\fk$ is given in a unitary representation, it must not be on the list of \\
	\phantom{JX} proper irreducible unitary simple subalgebras of $\su(N)$, in particular,
	$\fk\neq\fe_6$.

\medskip
\noindent
The system algebra completely determines the possible dynamics of controlled Hamiltonian
systems. Therefore, the lattice of irreducible simple subalgebras to $\su(N)$ given 
here also provides an easy means to assess not only the somewhat easier cases of {\em mutual simulability} 
but also the more intricate cases of {\em simulability with least overhead}
of dynamic systems of spin or fermionic or bosonic nature. In a number of 
examples (see also Tab.~\ref{tab:simu}), we have illustrated 
how controlled quadratic fermion and boson systems can be simulated by spin chains and in certain
cases also vice versa.

Finally, since full controllability entails observability (while in the quantum
domain the converse does not necessarily hold),
symmetry constraints immediately pertain to observability
as discussed in more detail in Ref.~\cite{SS09}.

\begin{acknowledgement}
This work was supported in part by the {\sc eu} programmes {\sc qap},
{\sc q-essence}, and the exchange with {\sc coquit}, moreover by the Bavarian excellence network {\sc enb}
via the International Doctorate Programme of Excellence
{\em Quantum Computing, Control, and Communication} ({\sc qccc})
as well as by the {\em Deutsche Forschungsgemeinschaft} ({\sc dfg}) in the
collaborative research centre {\sc sfb}~631. --- We are grateful to
Zolt{\'a}n Zimbor{\'a}s for fruitful discussion and to
Uwe Sander for helpful exchange on algorithmic aspects related to
Ref.~\cite{SS09}. R.Z.\ would like to thank the {\em Institut f{\"u}r 
Kryptographie und Sicherheit} ({\sc iks}) at the {\em Karlsruher Institut 
f{\"u}r Technologie} (Germany) for kindly permitting to use their 
computer resources for the computations. 
\end{acknowledgement}

\bigskip
\appendix
\section*{APPENDIX}
\section{Tensor-Product Structure in Qudit Systems with
	Many-Body Interactions}\label{sec:tensor2}

For quantum simulation, we generalise the discussion such as to
embrace {\em qudit} systems with (effective) many-body interactions.
Treating them as control systems embedded in $\su(N)$, 
now $\su(d_1) \hoplus \su(d_2) \hoplus \cdots \hoplus \su(d_n)$ 
is a \emph{tensor-product structure} of $\su(N)$, 
where $\prod_{j=1}^{n} d_j = N$ and $d_j \geq 2$. 
We consider the subalgebras $\su(d_j)$ as subsystems of the tensor-product structure.
We say that the tensor-product structure 
$\fh_1 \hoplus \fh_2 \hoplus  \cdots \hoplus \fh_n$
is a refinement of the tensor-product structure 
$\su(d_1) \hoplus \su(d_2) \hoplus \cdots \hoplus \su(d_n)$ if
$\fh_j$ is either equal to  $\su(d_j)$ or equal to
$\su(c_{j,1}) \hoplus \su(c_{j,2}) \hoplus \cdots
\hoplus \su(c_{j,m_j})$, where 
$\prod_{k=1}^{m_j} c_{j,k} = d_j$ and $m_j \geq 2$. 
We call $\fh_1 \hoplus \fh_2 \hoplus  \cdots \hoplus 
\fh_n$ a proper refinement if there is one $j$ such that
$\fh_j \neq \su(d_j)$.
For a given quantum system in $\su(N)$, there exists a common 
refinement $\su(p_1) \hoplus \su(p_2) \hoplus \cdots \hoplus \su(p_n)$
of all tensor-product structures, where
$\prod_{j=1}^n p_j$ is the factorization of $N$ into prime numbers.
The common refinement is unique up to permutations of subsystems. 

Now with respect to  tensor-product structure 
$\su(d_1) \hoplus \su(d_2) \hoplus \cdots \hoplus \su(d_n)$, 
again we write Hamiltonians as a linear combination ($c_k \in \R$) 
\begin{equation}\label{Hsum_gen}
H = \sum_{k=1}^m c_k \mathcal{H}_k
\end{equation}
of elements $\mathcal{H}_k=-\tfrac{i}{2} (\mathcal{H}_{k,1} \otimes 
\mathcal{H}_{k,2} \otimes \cdots \otimes \mathcal{H}_{k,n})$ forming a
\emph{tensor basis} of $\su(N)$. 
The elements $\mathcal{H}_{k,\ell} \in
  \mathcal{B}_\ell \cup \{ \unity_{d_\ell} \} $
are choosen relative to bases 
$$\left\{ -i \mathrm{A}\, | \, \mathrm{A} \in 
\mathcal{B}_\ell:=\{\mathrm{B}_{\ell,1}, \mathrm{B}_{\ell,2}, \ldots 
,\mathrm{B}_{\ell,(d_\ell)^2-1}\} \right\}$$
of $\su(d_\ell)$. In addition, we assume that the order $$\ord(\mathcal{H}_k):=
\# \{ \ell\, \colon \, \mathcal{H}_{k,\ell} \neq \unity_{d_\ell} \} \geq 1.$$

Recall  that the Hamiltonian $H$ has a \emph{coupling graph},
if its order $\ord(H)=\ord(\sum_{k=1}^m c_k \mathcal{H}_k)
:=\max(\{\ord( \mathcal{H}_k  )\, | \, k = 1,\ldots, m \})$
is equal to $2$, which is the case in pairwise coupling interactions. 
The vertices $j$ are given by the subsystems
$\su(d_j)$ and we get an edge 
between the nodes $k_1$ and $k_2$ with $k_1\neq k_2$
if there exists a $\mathcal{H}_k$ in Eqn.~\eqref{Hsum_gen}
such that $\{k_1,k_2\}=\{j\, \colon \, \mathcal{H}_{k,j} \neq \unity_{d_j} \}$.
If all control Hamiltonians are local, i.e.\ are contained in 
$\su(d_1) \hoplus \su(d_2) \hoplus \cdots \hoplus \su(d_n)$,
then we say that the coupling graph of the drift Hamiltonian $H_d$ is 
the coupling graph of the control system.

We say a control system on $\su(N)$ is \emph{weakly connected},
if the dynamic algebra $\fk$ contains for each proper partition 
of its tensor-product structure in
($\mathcal{I}_1\cup\mathcal{I}_2 = \{1,2,\ldots,m\}$,
$\mathcal{I}_1\cap \mathcal{I}_2 = \{\}$)
$$\fh_1=\hoplus_{j\in\mathcal{I}_1} \su(d_j)\quad \text{ and }\quad \fh_2=\hoplus_{j\in\mathcal{I}_2} \su(d_j)$$  
an element of
$\su(N)\setminus[\fh_1 \hoplus \fh_2]$.
For Hamiltonians $H$ of $\ord(H)=2$, this
is equivalent to the fact that the coupling graph is connected.
We will also use the stronger notion of a \emph{directly connected} control system
for which the dynamic algebra $\fk$ contains an element of
$\su(d_{j_1} d_{j_2})\setminus
[\su(d_{j_1}) \hoplus \su(d_{j_2})]$ for each pair of 
subsystems $\su(d_{j_1})$ and $\su(d_{j_2})$ with $j_1 \neq j_2$. --- 
With these notions, Theorem~\ref{thm:bilinear} generalises as follows.

\medskip
\begin{theorem}\label{thm:connected}
Consider a bilinear control system on $\su(\prod_{j=1}^{n} d_j)$, where \mbox{$d_j\geq 2$}.
Assume that the subsystems $\su(d_j)$ with $j\in \{1,\ldots,n\}$
are independently fully controllable
so the dynamic algebra 
$\fk \supseteq \su(d_1) \hoplus \su(d_2) \hoplus \cdots \hoplus \su(d_n)$. 
The control system is fully controllable, i.e.\ $\fk=\su(\prod_{j=1}^{n} d_j)$,
if and only if the control system is directly connected.
In particular, $\fk=\su(\prod_{j=1}^{n} d_j)$ is simple.
\end{theorem}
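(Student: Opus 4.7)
The plan is to prove both directions, with the nontrivial content in the backward direction which I would attack by induction on the number $n$ of subsystems. The forward direction is immediate: if $\fk=\su(N)$ with $N=\prod_j d_j$, then for each pair $(j_1,j_2)$ any operator $A_{j_1}\otimes B_{j_2}$ (with $A\in\su(d_{j_1})$ and $B\in\su(d_{j_2})$ both nonzero, identities on the remaining slots) already lies in $\fk$ and witnesses direct connectivity of that pair.

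For the base case $n=2$, I would exploit the vector-space decomposition $\su(d_1d_2)=[\su(d_1)\hoplus\su(d_2)]\oplus W$, where $W$ is the span of pure tensors $A\otimes B$ with $A\in\su(d_1)$ and $B\in\su(d_2)$; it has dimension $(d_1^2-1)(d_2^2-1)$. Any coupling element in $\fk\setminus[\su(d_1)\hoplus\su(d_2)]$ can then have its local part subtracted off (using $\su(d_1)\hoplus\su(d_2)\subseteq\fk$), leaving a nonzero $X'\in W\cap\fk$. The decisive observation is that the adjoint action of $\su(d_1)\hoplus\su(d_2)$ on $W$ factors as the outer tensor product $\adr_{\su(d_1)}\otimes\adr_{\su(d_2)}$ of two nontrivial irreducible representations of simple Lie algebras, and is therefore itself irreducible. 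Consequently the submodule of $W$ generated by iterated commutator brackets with $X'$ exhausts $W$, yielding $\fk\supseteq[\su(d_1)\hoplus\su(d_2)]\oplus W=\su(d_1d_2)$ and hence equality.

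For the inductive step, assume the theorem for fewer than $n$ subsystems. Applying the base case to each pair $(j_1,j_2)$---using $\su(d_{j_1})\hoplus\su(d_{j_2})\subseteq\fk$ together with the prescribed coupling element, all embedded in $\su(N)$ as acting trivially off the pair---yields $\su(d_{j_1}d_{j_2})\subseteq\fk$ for every pair. I would then coarsen the tensor-product structure by merging the first two subsystems, so that $\fk\supseteq\su(d_1d_2)\hoplus\su(d_3)\hoplus\cdots\hoplus\su(d_n)$, a tensor-product structure with $n-1$ factors whose local summands are independently fully controllable. Direct connectivity in the coarsened system is inherited: pairs of unmerged factors are handled by the original assumption, while the pair (merged, $d_k$) with $k\geq 3$ is witnessed by an element $A_1\otimes C_k\in\su(d_1d_k)\subseteq\fk$, which is nonlocal between $\{1,2\}$ and $k$ yet clearly not in $\su(d_1d_2)\hoplus\su(d_k)$. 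The induction hypothesis then gives $\fk=\su(N)$, with simplicity of $\su(N)$ automatic for $N\geq 2$. The main obstacle I anticipate is the irreducibility claim used in the base case; it reduces to the two classical facts that the adjoint of a simple Lie algebra is irreducible and that the outer tensor product of two nontrivial irreducibles of the summands of a direct-sum algebra remains irreducible, but the translation between the abstract outer tensor product on $\su(d_1)\otimes\su(d_2)$ and the concrete commutator-adjoint action on $W\subset\su(d_1d_2)$ merits a short verification.
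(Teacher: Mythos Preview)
Your argument is correct and follows the same overall architecture as the paper: both prove the backward direction by settling the base case $n=2$ and then inducting on the number of subsystems. The difference lies entirely in how the base case is handled. The paper simply invokes Dynkin's result (Thm.~1.3 of \cite{Dynkin57}) that $\su(d_1)\hoplus\su(d_2)$ is a maximal subalgebra of $\su(d_1d_2)$, so any proper overalgebra must be everything. You instead prove this maximality from scratch: you decompose $\su(d_1d_2)=[\su(d_1)\hoplus\su(d_2)]\oplus W$ with $W\cong\su(d_1)\otimes\su(d_2)$, observe that the adjoint action of $\su(d_1)\hoplus\su(d_2)$ on $W$ is the outer tensor product of two adjoint representations and hence irreducible, and conclude that any nonzero element of $W\cap\fk$ generates all of $W$ under commutators with local elements. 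This is exactly the content of the relevant special case of Dynkin's theorem, so your route is more self-contained and transparent at the cost of a short extra verification; the paper's route is shorter but relies on an external citation. Your inductive step, including the witness $A_1\otimes C_k$ for direct connectivity between the merged block and each remaining factor, is correct and matches the paper's (terse) ``the general case follows by induction on the number of subsystems.''
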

\begin{proof}
The `only if'-direction is obvious. We prove the `if'-direction.
First, we assume that $n=2$. As the subsystems are independently fully controllable, 
we obtain $\fk \supseteq \su(d_1) \hoplus \su(d_2)$. The dynamic algebra $\fk$ contains an element of
$\su(d_1 d_2)\setminus[\su(d_1) \hoplus \su(d_2)]$, as
the control system is directly connected.
It follows from Thm.~1.3 of \cite{Dynkin57}  that
$\su(d_1) \hoplus \su(d_2)$ is a maximal subalgebra of $\su(d_1 d_2)$.
As $\fk \supsetneq \su(d_1) \hoplus \su(d_2)$, 
this proves $\fk=\su(d_1 d_2)$. The general case follows by induction
on the number of subsystems.
We remark that $\su(\prod_{j=1}^{n} d_j)$ is simple so the last assertion follows.
\hfill$\blacksquare$
\end{proof}

This complements results on the controllability of quantum circuits \cite{Barenco},
where the controllability of continuous and discrete sets of unitary transformations 
is considered.
In particular, Theorems~4.1 and 4.2 of Ref.~\cite{BryBry02} (see also \cite{BDD02})
rely also on the maximality of the subgroup of local operations
on two qudits [i.e. on $\SU(d^2)\supset \SU(d)\otimes\SU(d)$].
Our controllability proof can be compared
to proofs relying on Cartan decompositions (see Thm.~5 of Ref.~\cite{NMRJOGO06} and
Prop.~2.4 of Ref.~\cite{DiHeGAMM08}). Unfortunately, one cannot 
substitute `directly connected' with `weakly connected' in 
Theorem~\ref{thm:connected}:

\medskip
\begin{example}
Consider a bilinear control system on $\su(8)$ with 
the tensor-product structure $\su(2)\hoplus\su(2)\hoplus\su(2)$.
We assume that the subsystems are independently fully controllable,
i.e.\ $$\fk \supseteq \langle i\mathrm{XII}, i\mathrm{YII}, i\mathrm{ZII},
 i\mathrm{IXI}, i\mathrm{IYI}, i\mathrm{IZI},
 i\mathrm{IIX}, i\mathrm{IIY}, i\mathrm{IIZ} \rangle_{\rm Lie}.$$
In addition, we have a drift Hamiltonian $H_d=\mathrm{ZZZ}$.
The control system is weakly connected and $\fk$ acts irreducibly.
The dynamic algebra is $\fk=\usp(4)\neq \su(8)$ and hence the system is
not fully controllable.
\end{example}


\section{Connected Control Systems in Qudit Systems with
	Many-Body Interactions\label{sec:simple}}

In this Appendix we build on Sec.~\ref{sec:sym_cons} and discuss a more general notion of
\emph{connected} control systems in qudit systems with
many-body interactions which do not necessarily have a natural coupling graph.
We freely use the notation of Appendix~\ref{sec:tensor2}.

Recall Example~\ref{ex1} of Sec.~\ref{sec:sym_cons}.
Motivated by this example 
one might conjecture that the dynamic algebra $\fk$ is simple 
if the control system is weakly connected and $\fk$ acts irreducibly.
Unfortunately, this is not true.

\begin{example}\label{ex2}
Assume that we have a bilinear control system on $\su(8)$ with two subsystems corresponding
to the tensor-product structure $\su(4)\hoplus \su(2)$. On the first subsystem we pick 
$\fh_1=\langle i\mathrm{XII}, i\mathrm{YII}, i\mathrm{ZII},
 i\mathrm{IXI}, i\mathrm{IYI}, i\mathrm{IZI} \rangle_{\rm Lie}$ as
the local dynamic Lie algebra.
On the second subsystem we pick the local dynamic Lie algebra
$\fh_2=\langle i\mathrm{IIX}, i\mathrm{IIY}, i\mathrm{IIZ} \rangle_{\rm Lie}$.
In addition, we have a drift Hamiltonian $H_d=\mathrm{IZZ}$.
The control system is weakly connected and $\fk$ acts irreducibly.
We obtain that the dynamic Lie algebra is $\fk=\su(2)\hoplus \su(4)$. It is neither 
simple and nor fully controllable. In particular, the dynamic Lie algebra
does not respect our chosen tensor-product structure.
The problem is that the
control system 
\begin{center}
\includegraphics{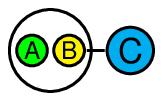}
\end{center}
is not weakly connected w.r.t.\ the tensor-product structures
$\su(2)\hoplus \su(4)$ and
$\su(2)\hoplus \su(2) \hoplus \su(2)$.
\end{example}

Generalising Sec.~\ref{sec:sym_cons},
we say that a control system is \emph{connected},
if the dynamic algebra $\fk$ contains an element of
$\su(N)\setminus[\su(e_1) \hoplus \su(e_2)]$
for each tensor-product structure $\su(e_1) \hoplus \su(e_2)$ with
$N=e_1 e_2$ and $e_1, e_2 \geq 2$. For control systems with pair interactions
this definition is equivalent to the one given in Sec.~\ref{sec:sym_cons}.

\begin{lemma}\label{lem:refine}
The following are equivalent:\\
(1) The control system is connected.\\
(2) The control system is weakly connected w.r.t.\ the 
common unique refinement\\ \phantom{FX} of its tensor-product structure.\\[0.3mm]
(3) The control system is weakly connected w.r.t.\ any
tensor-product structure. \phantom{XX}
\hfill$\blacksquare$
\end{lemma}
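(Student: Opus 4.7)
The plan is to establish the cycle of implications $(1)\Rightarrow(3)\Rightarrow(2)\Rightarrow(1)$. The technical thread running through all three is the identification of the bipartite subalgebra $\fh_1\hoplus\fh_2$ attached to a proper partition $\mathcal{I}_1\cup\mathcal{I}_2$ of a tensor-product structure $\su(d_1)\hoplus\cdots\hoplus\su(d_n)$ with the coarser two-factor structure $\su(e_1)\hoplus\su(e_2)$, where $e_i:=\prod_{j\in\mathcal{I}_i}d_j$; i.e.\ the two index blocks are merged into a pair of grouped subsystems. Once this identification is granted, the remainder is elementary combinatorics of divisors of $N$ versus subsets of its prime factors.

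The implication $(3)\Rightarrow(2)$ is immediate, since the common unique refinement is one particular tensor-product structure of $\su(N)$. For $(1)\Rightarrow(3)$, I would fix an arbitrary tensor-product structure $\su(d_1)\hoplus\cdots\hoplus\su(d_n)$ and a proper partition $\mathcal{I}_1\cup\mathcal{I}_2$, set $e_i:=\prod_{j\in\mathcal{I}_i}d_j$, and observe that $e_1 e_2=N$ with $e_1,e_2\geq 2$, so that $\su(e_1)\hoplus\su(e_2)$ is a legitimate two-factor tensor-product structure of $\su(N)$. The connectedness hypothesis then supplies an element of $\fk$ outside $\su(e_1)\hoplus\su(e_2)=\fh_1\hoplus\fh_2$, which is exactly weak connectedness for the chosen structure and partition.

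For the converse $(2)\Rightarrow(1)$, let $N=p_1\cdots p_n$ be the prime factorisation, so that the common refinement is $\su(p_1)\hoplus\cdots\hoplus\su(p_n)$. Given any factorisation $N=e_1 e_2$ with $e_1,e_2\geq 2$, uniqueness of prime factorisation selects a subset $\mathcal{I}_1\subseteq\{1,\ldots,n\}$ with $e_1=\prod_{j\in\mathcal{I}_1}p_j$ and $e_2=\prod_{j\notin\mathcal{I}_1}p_j$, and the bound $e_1,e_2\geq 2$ guarantees that this is a proper partition. Applying assumption (2) to this partition produces an element of $\fk$ outside $\fh_1\hoplus\fh_2=\su(e_1)\hoplus\su(e_2)$, which is precisely the condition demanded for the two-factor structure $\su(e_1)\hoplus\su(e_2)$ in the definition of a connected system; since the factorisation was arbitrary, $\fk$ is connected.

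The one real subtlety is justifying the identification $\fh_1\hoplus\fh_2=\su(e_1)\hoplus\su(e_2)$ inside $\su(N)$, rather than the smaller purely local subalgebra $\hoplus_{j=1}^{n}\su(d_j)$ one might naively extract from the notation. This grouped reading is forced by Example~\ref{ex2}, where the dynamic algebra $\fk=\su(2)\hoplus\su(4)$ fails weak connectedness w.r.t.\ the common refinement $\su(2)\hoplus\su(2)\hoplus\su(2)$ precisely through the partition $\{1\}\cup\{2,3\}$ which reassembles the latter two qubits into an $\su(4)$-block, coinciding with the non-trivial factor of $\fk$. With this reading in place, Lemma~\ref{lem:refine} reduces to the transparent translation between two-factor decompositions of $N$ and partitions of its multiset of prime factors, and no further obstruction appears.
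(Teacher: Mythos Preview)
The paper states Lemma~\ref{lem:refine} without proof (the $\blacksquare$ immediately follows the statement), so there is no paper proof to compare against. Your cycle $(1)\Rightarrow(3)\Rightarrow(2)\Rightarrow(1)$ is sound, and you are right that the whole content reduces to matching two-factor tensor-product structures of $\su(N)$ with bipartitions of its prime subsystems.

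Your flagging of the ``grouped'' reading of $\fh_1\hoplus\fh_2$ as $\su(e_1)\hoplus\su(e_2)$ rather than the literal Kronecker sum $\hoplus_{j}\su(d_j)$ is not merely a subtlety but is essential: under the literal reading, $\fh_1\hoplus\fh_2$ collapses to the full local algebra independently of the partition, so condition~(2) would degenerate to ``$\fk$ contains some non-local element,'' which Example~\ref{ex2} satisfies while failing~(1). Your appeal to Example~\ref{ex2} is therefore exactly the right diagnostic, and the grouped reading is the only one under which the lemma is true.

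One small sharpening in your $(2)\Rightarrow(1)$ step: the phrase ``uniqueness of prime factorisation selects a subset $\mathcal{I}_1$'' is not quite the operative fact when primes repeat (for $N=8$, say, the integer factorisation $2\cdot 4$ does not by itself tell you \emph{which} qubit is the $\su(2)$-factor). The correct justification is already implicit in the paper's definition of the common refinement: it is a refinement of \emph{every} tensor-product structure, so each prime subsystem $\su(p_j)$ lies inside exactly one of the two given factors $\su(e_1)$, $\su(e_2)$, and this assignment \emph{is} the partition $\mathcal{I}_1\cup\mathcal{I}_2$. With that phrasing the argument is complete.
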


We now generalise Theorem~\ref{pair_connected} and 
prove that the dynamic algebra $\fk$
is simple if its centraliser is trivial 
and the corresponding control system is connected.

\begin{theorem}\label{lem:simplicity}
Assume that the dynamic algebra $\fk$ of a bilinear control system on $\su(N)$ has a 
trivial centraliser $\fk'$. Then one finds:\\
(1) The dynamic algebra $\fk$ is given in an irreducible representation.\\
(2) If $\fk$ is semi-simple but not simple, then $\fk \neq \su(N)$ and the control system 
is\\ \phantom{IX} not fully controllable.\\
(3) The dynamic algebra $\fk$ is simple iff
the control system is connected.
\end{theorem}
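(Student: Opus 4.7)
The plan is to dispatch parts (1) and (2) as near-immediate corollaries of earlier lemmas and concentrate the real work on part (3). For part (1), Lemma~\ref{lem:centraliser} already gives the equivalence between triviality of the centraliser of $\fk$ in $\su(N)$ and irreducibility of the standard representation restricted to $\fk$, so there is nothing new to prove. For part (2), Lemma~\ref{lem:semi-simplicity} forces $\fk$ to be semi-simple or simple; if semi-simple but not simple, $\fk$ is a non-trivial direct sum of at least two simple ideals, so $\fk$ cannot equal the simple Lie algebra $\su(N)$, immediately giving $\fk \neq \su(N)$ and the failure of full controllability.

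For part (3) I would proceed by contrapositive in both directions. First, for the direction ``connected $\Rightarrow$ simple,'' suppose $\fk$ is not simple. By part~(1) and Lemma~\ref{lem:semi-simplicity} one may write $\fk = \fs_1 \oplus \cdots \oplus \fs_r$ with $r \geq 2$ simple ideals. The classical fact cited in the excerpt as Thm.~11.6.II of Ref.~\cite{Cornwell:1984} then realises the irreducible action of $\fk$ on $\C^N$ as an outer tensor product $V_1 \otimes \cdots \otimes V_r$ of irreducible $\fs_i$-modules. Faithfulness of the embedding $\fk \hookrightarrow \su(N)$ forces each $\fs_i$ to act non-trivially, hence $\dim V_i \geq 2$. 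Setting $e_1 := \dim V_1$ and $e_2 := \prod_{i \geq 2} \dim V_i$ then yields $\fk \subseteq \su(e_1) \hoplus \su(e_2)$ with $e_1, e_2 \geq 2$, contradicting connectedness as defined in Appendix~\ref{sec:simple}.

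For the direction ``simple $\Rightarrow$ connected,'' I would assume $\fk$ is simple but the system is not connected, so $\fk \subseteq \su(e_1) \hoplus \su(e_2)$ for some $N = e_1 e_2$ with $e_i \geq 2$. The kernels of the projections $\pi_i : \fk \to \su(e_i)$ are ideals of the simple $\fk$, hence each $\pi_i$ is either zero or injective. If $\pi_j = 0$ for some $j$, then $\fk$ acts on $\C^{e_1} \otimes \C^{e_2}$ as $\phi_{3-j} \otimes \unity_{e_j}$, which is reducible because every $\C^{e_{3-j}} \otimes v$ (for $0 \neq v \in \C^{e_j}$) is a proper invariant subspace---contradicting part~(1). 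Otherwise both $\pi_i$ are injective, and the action is the diagonal tensor product $X \mapsto \phi_1(X) \otimes \unity + \unity \otimes \phi_2(X)$ with faithful $\phi_i$; if either $\phi_i$ is $\fk$-reducible on $\C^{e_i}$, a non-scalar element $A \in \mathrm{End}_\fk(\C^{e_i})$ yields the non-scalar commutant element $A \otimes \unity$ (or $\unity \otimes A$), again contradicting part~(1) via Lemma~\ref{lem:centraliser}.

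The hard part will be the remaining sub-case: both $\phi_i$ are faithful \emph{irreducible} representations of the simple algebra $\fk$, and one must rule out that their diagonal tensor product is again irreducible. I would argue this by the standard highest-weight machinery: the $\fk$-submodule generated by the tensor of the two highest-weight vectors is isomorphic to $V_{\lambda_1 + \lambda_2}$ and is proper inside $\phi_1 \otimes \phi_2$ whenever both $\lambda_i \neq 0$, because the weight space at $\lambda_1 + \lambda_2 - \alpha$ for a simple root $\alpha$ with $\langle \lambda_1, \alpha^\vee \rangle, \langle \lambda_2, \alpha^\vee \rangle > 0$ is two-dimensional in $\phi_1 \otimes \phi_2$ but one-dimensional in $V_{\lambda_1 + \lambda_2}$. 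This reducibility once more contradicts part~(1) and finishes part~(3). Apart from this classical tensor-product lemma, everything is bookkeeping with projections, Schur's lemma, and the structure theorem for irreducible representations of direct sums.
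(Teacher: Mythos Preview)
Your treatment of parts~(1), (2), and the direction ``connected $\Rightarrow$ simple'' of part~(3) is essentially identical to the paper's: the paper cites Thm.~2.1 of Dynkin where you cite Thm.~11.6.II of Cornwell, but the content---an irreducible representation of a direct sum is an outer tensor product of irreducibles of the summands---is the same, and the contradiction with connectedness is reached in the same way.

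Where you differ is in the converse ``simple $\Rightarrow$ connected.'' The paper dispatches this in a single sentence, asserting that if the system is not connected then $\fk$ ``has to be a (non-trivial) direct sum'' and hence is not simple. You rightly observe that this is not literally immediate: a simple $\fk$ can sit inside $\su(e_1)\hoplus\su(e_2)$ diagonally without decomposing. Your case analysis via the two projections $\pi_i$ and the final appeal to reducibility of the diagonal tensor product $\phi_1\otimes\phi_2$ supplies exactly the argument the paper elides, so in this respect your proof is more complete.

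There is, however, a small gap in your last step. You claim the existence of a \emph{simple} root $\alpha$ with $\langle\lambda_1,\alpha^\vee\rangle>0$ and $\langle\lambda_2,\alpha^\vee\rangle>0$, but no such simple root need exist: already for $\fk=\su(3)$ with $\lambda_1=\omega_1$ and $\lambda_2=\omega_2$ the two fundamental weights are supported on disjoint simple roots. The cleanest repair is to abandon the weight-space count and compare dimensions directly via Weyl's formula. For each positive root $\alpha$, writing $a=(\lambda_1,\alpha)$, $b=(\lambda_2,\alpha)$, $r=(\rho,\alpha)$, one has
\[
\frac{(a+r)(b+r)}{r(a+b+r)}=1+\frac{ab}{r(a+b+r)}\geq 1,
\]
with strict inequality iff $ab>0$. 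Taking $\alpha=\theta$ the highest root (for which $(\lambda,\theta)>0$ whenever $\lambda\neq 0$, since $\theta=\sum_i n_i\alpha_i$ with all $n_i>0$ for simple $\fk$) gives a strict factor, so $\dim V_{\lambda_1}\cdot\dim V_{\lambda_2}>\dim V_{\lambda_1+\lambda_2}$ and $\phi_1\otimes\phi_2$ is reducible. With this correction your argument goes through.
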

\begin{proof}
(1) immediately follows from $\fk'$ being trivial and Lemma~\ref{lem:centraliser}, 
while (2) is obvious, as $\su(N)$ is simple. We now prove the `if'-part of (3).
We obtain from Lemma~\ref{lem:semi-simplicity} that $\fk$ is simple or semi-simple.
In the following, we assume that $\fk$ is not simple. Thus
$\fk$ is a irreducible semi-simple (but not simple)  subalgebra of $\su(N)$.
Using Thm.~2.1 of Ref.~\cite{Dynkin57}, it follows that
$\fk=\fk_1 \hoplus \fk_2 \hoplus \cdots \hoplus \fk_m$ and that the
$\fk_j$ are irreducible simple subalgebras of some $\su(d_j)$ such that
$\fk \subseteq \su(d_1) \hoplus \su(d_2) \hoplus \cdots \hoplus \su(d_m)$, 
$\prod_{j=1}^m d_j = N$, and $m\geq 2$. In particular, we can choose 
two non-zero algebras 
$$\fh_1=\hoplus_{j\in\mathcal{I}_1} \fk_j\quad \text{ and }\quad \fh_2=\hoplus_{j\in\mathcal{I}_2} \fk_j,$$ 
where $\fk=\fh_1 \hoplus \fh_2\subseteq \su(c_1) \hoplus \su(c_2)$,
$\mathcal{I}_1\cup\mathcal{I}_2 = \{1,2,\ldots,m\}$, 
$\mathcal{I}_1\cap \mathcal{I}_2 = \{\}$, and $c_1 c_2=N$.
As the control system is connected, the dynamic algebra $\fk$
contains an element of  $\su(N)\setminus [\su(c_1) \hoplus \su(c_2)]$ for 
each tensor-product structure
$\su(c_1) \hoplus \su(c_2)$.
This is a contradiction to $\fk\subseteq \su(c_1) \hoplus \su(c_2)$ and 
the `if'-part of (3) follows. To prove the `only if'-part of (3) we assume that the control system is not connected.
It immediately follows that the dynamic algebra has to be a (non-trivial) direct sum.
Thus it cannot be simple, which proves the `only if'-part by contradiction.
\hfill$\blacksquare$
\end{proof}

In important special cases more convenient conditions hold:
\begin{corollary}\label{cor:connected}
Given a bilinear control system on $\su(N)$, where the centraliser $\fk'$
of the dynamic algebra $\fk$
is trivial.
We obtain:\\
(1) Assume that the subsystems of the tensor-product structure  are independently
fully controllable. The dynamic algebra $\fk$ is simple if and only if
the control system is weakly connected. \\
(2) Assume
that 
the tensor-product structure of the control system
is given by $\su(p_1) \hoplus \su(p_2) \hoplus \cdots \hoplus \su(p_n)$,
where $\prod_{j=1}^n p_j$ is a factorization of $N$ into prime numbers.
For example, $p_j=2$ for all $j$. The following are equivalent:\\
(a) The dynamic algebra $\fk$ is simple.\\
(b) The control system is weakly connected.\\
(c) The control system is connected.
\end{corollary}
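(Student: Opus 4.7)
The plan is to reduce both parts of the corollary to Theorem~\ref{lem:simplicity}(3) combined with Lemma~\ref{lem:refine}. Theorem~\ref{lem:simplicity}(3) already says that $\fk$ simple is equivalent to the control system being \emph{connected} (in the strong sense), while Lemma~\ref{lem:refine}(2)--(3) links the connected property to weak connectedness with respect to various tensor-product refinements. So the whole task is to relate weak connectedness with respect to the given tensor-product structure to connectedness in the sense of Lemma~\ref{lem:refine}.

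Part~(2) should fall out almost immediately: by assumption the tensor-product structure $\su(p_1)\hoplus\cdots\hoplus\su(p_n)$ with $p_j$ prime already is the common unique refinement of any tensor-product structure of $\su(N)$. Hence by Lemma~\ref{lem:refine}, weak connectedness with respect to this prime structure coincides with connectedness, which in turn, by Theorem~\ref{lem:simplicity}(3), coincides with $\fk$ being simple. So (a), (b), (c) are equivalent essentially by invocation.

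Part~(1) is the substantive one. One direction is trivial: by Theorem~\ref{lem:simplicity}(3), $\fk$ simple implies connected, and by Lemma~\ref{lem:refine}(3) connected implies weakly connected with respect to \emph{any} tensor-product structure, so in particular with respect to $\su(d_1)\hoplus\cdots\hoplus\su(d_n)$. For the converse I would proceed as follows. Let $\su(p_1)\hoplus\cdots\hoplus\su(p_m)$ be the prime refinement. I need to show that weak connectedness with respect to $\su(d_1)\hoplus\cdots\hoplus\su(d_n)$ implies weak connectedness with respect to the prime refinement (which by Lemma~\ref{lem:refine} is the same as being connected). Take any proper partition $\mathcal{I}_1\cup\mathcal{I}_2=\{1,\dots,m\}$ of the prime factors. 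I would distinguish two cases. If some factor $d_\ell$ has its own primes split across $\mathcal{I}_1$ and $\mathcal{I}_2$, then $\su(d_\ell)$, which is simple and by assumption contained in $\fk$, cannot lie inside the block-diagonal subalgebra $\bigoplus_{j\in\mathcal{I}_1}\su(p_j)\hoplus\bigoplus_{j\in\mathcal{I}_2}\su(p_j)$, because the latter is a proper semisimple subalgebra of $\su(d_\ell)$ when intersected with $\su(d_\ell)$. Hence $\fk$ contains an element outside the partition. Otherwise every $d_\ell$ has all its primes on one side, so the prime partition descends to a genuine partition $\mathcal{J}_1\cup\mathcal{J}_2$ of $\{1,\dots,n\}$ with
\[
\bigoplus_{j\in\mathcal{I}_1}\su(p_j)\hoplus\bigoplus_{j\in\mathcal{I}_2}\su(p_j)=\bigoplus_{\ell\in\mathcal{J}_1}\su(d_\ell)\hoplus\bigoplus_{\ell\in\mathcal{J}_2}\su(d_\ell),
\]
and weak connectedness with respect to the given structure directly supplies the required element of $\fk$.

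The only subtle point in this plan is the first case, where one needs that the simple algebra $\su(d_\ell)$, viewed inside $\su(N)$ via the embedding $\su(d_\ell)\hookrightarrow\su(d_1)\hoplus\cdots\hoplus\su(d_n)$, truly leaves the candidate block-diagonal subalgebra at the prime level. I expect this to be the main obstacle, but it follows because $\su(d_\ell)$ is simple whereas its intersection with the block-diagonal prime subalgebra is a proper semisimple (hence not full) subalgebra, so there must be elements of $\su(d_\ell)$ (and thus of $\fk$) outside. Once this is in place, Theorem~\ref{lem:simplicity}(3) closes the argument and (1) is established.
\hfill$\blacksquare$
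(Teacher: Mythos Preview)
Your proof is correct. Part~(2) coincides exactly with the paper's argument. For Part~(1), however, you take a genuinely different route.

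The paper argues Part~(1) by contradiction, essentially re-running the proof of Theorem~\ref{lem:simplicity}(3) with the extra hypothesis $\fk\supseteq\su(d_1)\hoplus\cdots\hoplus\su(d_n)$: if $\fk$ were irreducible, semisimple, but not simple, then its Dynkin decomposition $\fk=\fk_1\hoplus\cdots\hoplus\fk_{m'}$ sits inside some $\su(d'_1)\hoplus\cdots\hoplus\su(d'_{m'})$, and the paper asserts (rather tersely) that the given $d$-structure must refine this $d'$-structure; weak connectedness w.r.t.\ the $d$-structure then yields the contradiction. The implication ``$d$ refines $d'$'' is where full local controllability enters, but the paper does not spell this out.

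Your approach instead shows directly that weak connectedness w.r.t.\ the $d$-structure implies weak connectedness w.r.t.\ the prime refinement, and then invokes Lemma~\ref{lem:refine} and Theorem~\ref{lem:simplicity}(3) as black boxes. The two-case split (some $d_\ell$ is cut by the prime partition versus not) is clean, and the use of $\su(d_\ell)\subset\fk$ in the first case is precisely where full local controllability enters---very transparently. This is arguably tidier than the paper's version, since it avoids reopening the Dynkin argument.

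One minor quibble: your displayed equation in the second case is not literally an equality of subalgebras of $\su(N)$ (when some $d_\ell$ is composite, the prime Kronecker sum on the left is strictly smaller than the $d$-Kronecker sum on the right). What you actually need---and what holds---is that the two partitions determine the \emph{same} bipartite algebra $\su(e_1)\hoplus\su(e_2)$ to be escaped, since $\prod_{j\in\mathcal I_1}p_j=\prod_{\ell\in\mathcal J_1}d_\ell$. Phrased that way, the argument goes through unchanged.
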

\begin{proof} We first prove the case of (1).
As  
the subsystems are independently fully controllable,
any irreducible semi-simple (but not simple) dynamic algebra $\fk\supseteq \su(d_1) \hoplus \su(d_2) \hoplus \cdots \hoplus \su(d_m)$ has to be (irreducibly) contained
in  the algebra $\fh=\su(d'_1) \hoplus \su(d'_2) \hoplus \cdots \hoplus \su(d'_m)$ where 
$\su(d_1) \hoplus \su(d_2) \hoplus \cdots \hoplus \su(d_m)$ 
is a refinement of the tensor-product structure $\fh$.
All these cases are excluded as the control system is weakly connected, and (1) follows along the same lines as Theorem~\ref{lem:simplicity}.
As $\su(p_1) \hoplus \su(p_2) \hoplus \cdots \hoplus \su(p_n)$ is the common
unique refinement of all tensor-product structures
in the case of (2), the control system is weakly connected 
if and only if it is connected
(by Lemma~\ref{lem:refine}) 
and (2) follows by Theorem~\ref{lem:simplicity}. 
\hfill$\blacksquare$
\end{proof}

\section{Computational Techniques for Representation Theory}\label{app:repr}
For computationally exploiting Lie theory to list all irreducible representations of
a given dimension $N$ for all irreducible simple subalgebras of $\su(N)$, a
{\em self-consistent frame} is indispensable. It requires the highest weights and the
dimensions of their respective representations to be linked to the classification
by the standard Dynkin diagrams. Here we explicitly give all the details 
in such a consistent frame, since combining different literature sources 
runs the risk of arriving at erroneous results due to
possibly inconsistent conventions.

In particular, the appendix is meant to complement Sec.~\ref{sec:irred-sub-su}.
It describes the methods we used to compute the irreducible simple
subalgebras of $\su(N)$ and their inclusion relations.

\subsection{Highest Weights and Dimension Formulas\label{highestweigths}}
The irreducible simple subalgebras of $\su(N)$ are found by enumerating 
for all simple Lie algebras all their irreducible representations of dimension $N$.
The irreducible representations can be enumerated using highest weights
$(x_1,\ldots, x_{\ell})$ which are (non-negative) integer vectors.
The length $\ell$ of the highest weight is given by 
the rank (i.e.\ dimension of the maximal abelian subalgebra)
of the considered Lie algebra. Details on the theory of
highest weights can be found in Chap.~IX, Sec.~7 of Ref.~\cite{Bourb08b}.

Different orderings for the coefficients $x_i$ of the highest weights
are used in the literature. We use the so-called Bourbaki ordering which
is detailed in Tab.~\ref{tab:diag} by numbering the nodes
of the Dynkin diagrams (see Chap.~VI, Sec.~4.2, Thm.~3 of Ref.~\cite{Bourb08a}) 
for the compact simple Lie algebras. In 
Tab.~\ref{tab:standard} we present the highest weights and dimensions
for the standard representation of each compact simple Lie algebra. 
We put highest weights together if they differ only w.r.t.\
an outer automorphism, i.e.~an permutation which leaves the Dynkin diagram invariant.
Note that the standard representation is the lowest-dimensional (non-trivial) representation 
[with the exception of $\so(3)$, $\so(5)$, and $\so(6)$]
and is typically used to introduce the corresponding Lie algebra in  matrix
form.

\begin{table}[Ht!]
\caption{\label{tab:diag}
The Compact Simple Lie Algebras and their Dynkin Diagrams} 
\begin{center}
\begin{tabular}{c}
\hline\hline\\[-1mm]
\includegraphics{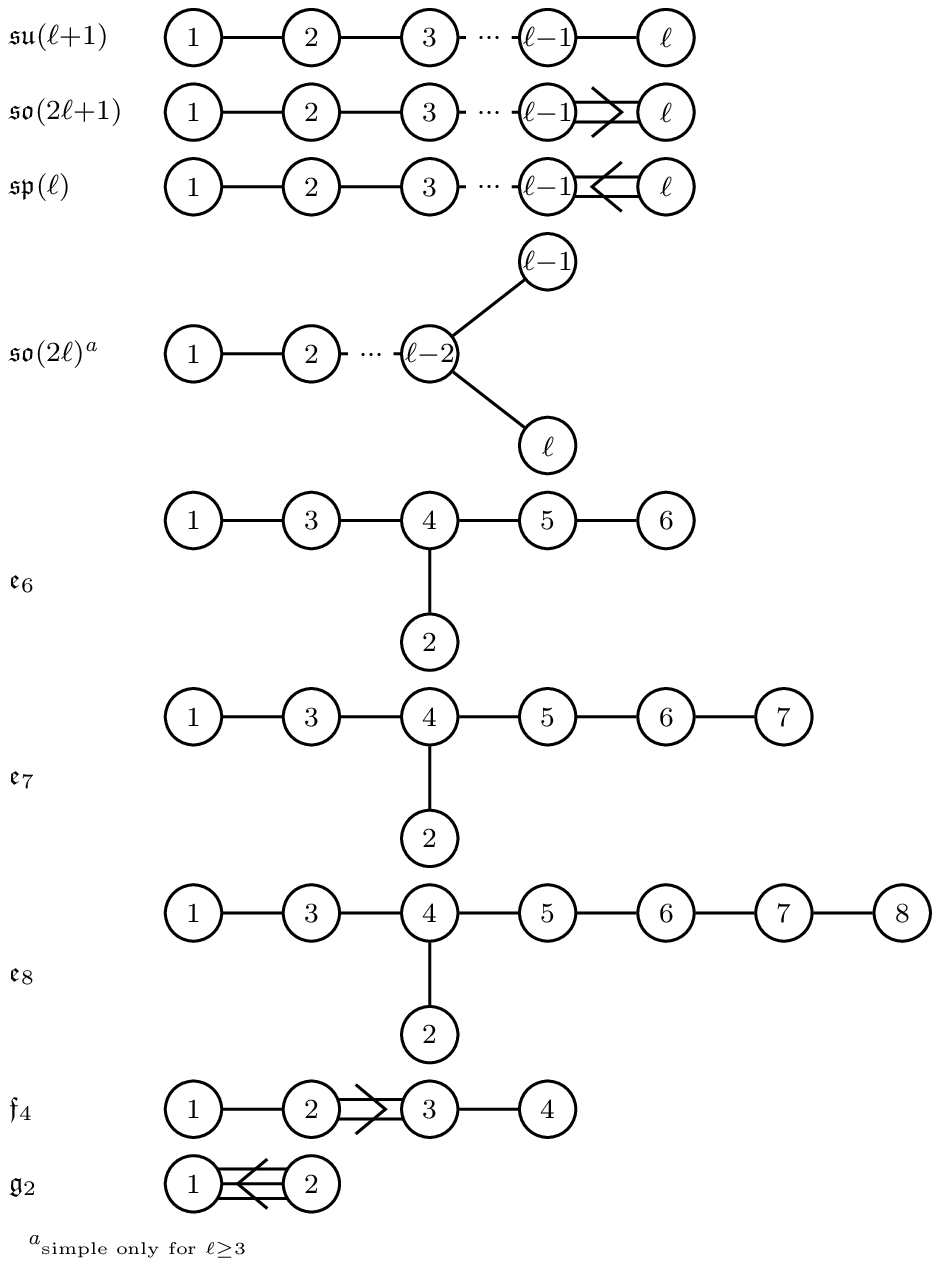}\\[1mm]
\hline\hline
\end{tabular}
\end{center}
\end{table}

\begin{table}[Ht!]
\caption{\label{tab:standard}
The Compact Simple Lie Algebras and their Standard Representations (with the corresponding
dimensions)}
\begin{center}
\fontencoding{OT1}
\fontfamily{cmr}
\fontseries{m}
\fontshape{n}
\fontsize{8}{9.5}
\selectfont
\begin{tabular}{@{\hspace{1mm}}c@{\hspace{-1.5mm}}}
\hline\hline\\[-0.3mm]
\begin{minipage}{10.3cm}
\renewcommand{\footnoterule}{}
\begin{tabular}[t]{c|c}
\begin{tabular}[t]{c@{\hspace{4pt}}l@{\hspace{4pt}}r}
algebra & highest weight(s) & dim\\ \midrule
$\su(\ell{+}1)$ & $\uebertwo{(1,\overbrace{0,\ldots,0}^{\ell{-}1}),}{(\underbrace{0,\ldots,0}_{\ell{-}1},1)\phantom{,}}$ & $\ell{+}1$ \\ \midrule
$\so(3)$ & $(2)$ & $3$\\ \midrule
$\uebertwo{\so(2\ell{+}1)}{\ell{\geq}2}$ & $(1,\underbrace{0,\ldots,0}_{\ell{-}1})$ & $2\ell{+}1$\\ \midrule
$\usp(\ell)$ & $(1,\underbrace{0,\ldots,0}_{\ell{-}1})$ & $2\ell$\\ \midrule
$\so(2)$\footnote{\hspace{-1.5mm}not simple} & $(1)$ & $2$\\ \midrule
$\so(4)$\tmark & $(1,1)$ & $4$\\ \midrule
\end{tabular}
&
\begin{tabular}[t]{c@{\hspace{4pt}}l@{\hspace{4pt}}r}
algebra & highest weight(s) & dim\\ \midrule
$\so(6)$ & $(1,0,0)$ & $6$\\ \midrule
$\so(8)$ & $
\begin{matrix}
(1,0,0,0),\\
(0,0,1,0),\\
(0,0,0,1)\phantom{,}
\end{matrix}$ & $8$\\ \midrule
$\uebertwo{\so(2\ell)}{\ell{\geq}5}$ & $(1,\underbrace{0,\ldots,0}_{\ell{-}1})$ & $2\ell$\\ \midrule
$\fe_6$ & $\uebertwo{(1,0,0,0,0,0),}{(0,0,0,0,0,1)\phantom{,}}$ & $27$ \\ \midrule
$\fe_7$ & $(0,0,0,0,0,0,1)$ & $56$ \\ \midrule
$\fe_8$ & $(0,0,0,0,0,0,0,1)$ & $248$ \\ \midrule
$\ff_4$ & $(0,0,0,1)$ & $26$ \\ \midrule
$\fg_2$ & $(1,0)$ & $7$ \\ \midrule
\end{tabular}
\end{tabular}
\vspace{-6mm}
\end{minipage}
\vspace{2mm}\\
\hline\hline
\end{tabular}
\end{center}
\end{table}

We already remarked in Sec.~\ref{sec:irred-sub-su} that the dimensions of
irreducible representations can 
be  efficiently computed using computer algebra systems such as {\sf LiE}~\cite{LIE222} 
and {\sf MAGMA}~\cite{MAGMA} via Weyl's dimension formula.
Now we present explicit formulas for the dimensions, 
which allowed us to speed up the computation of the dimensions
considerably. While for $\su(\ell+1)$, $\so(2\ell+1)$, and $\usp(\ell)$  
these formulas can readily be found  on pp.~340-341 
of Ref.~\cite{GW09}, we had to correct the one for $\so(2\ell)$,
since we could not find a reference with the proper formula either.
\begin{lemma}[Classical Lie algebras]\label{irred_dim}
Given the highest weight $(x_1,\ldots,x_{\ell})$
the dimensions of 
the corresponding
irreducible representations are:
\begin{enumerate}
\item $\su(\ell+1):\, \dim
= \prod_{1\leq i < j \leq \ell+1} \left\{ 1+ \frac{x_i+\cdots +x_{j-1}}{j-i} \right\}$
\item $\so(2\ell+1):\,
\dim = \prod_{1\leq i<j\leq \ell} \left\{ 1+ \frac{x_i+\cdots +x_{j-1}+2(x_j+\cdots 
+ x_{\ell-1})+x_{\ell}}{2\ell+1-i-j} \right\}
$\\
$\times
\prod_{1\leq i < j \leq \ell} \left\{ 1+ \frac{x_i+\cdots +x_{j-1}}{j-i} \right\}
\times
\prod_{1\leq i \leq \ell}
\left\{
1+ \frac{2(x_i+\cdots + x_{\ell-1}) + x_{\ell}}{2\ell+1-2i}
\right\}
$
\item $\usp(\ell):\,
\dim=
\prod_{1\leq i < j \leq \ell} \left\{ 1+ \frac{x_i+\cdots +x_{j-1}}{j-i} \right\}
\times
\prod_{1\leq i \leq \ell}
\left\{
1+ \frac{x_i+\cdots + x_{\ell}}{\ell+1-i}
\right\}
$\\
$\times
\prod_{1\leq i<j\leq \ell} \left\{ 1+ \frac{x_i+\cdots +x_{j-1}+2(x_j+\cdots 
+ x_{\ell})}{2\ell+2-i-j} \right\}
$
\item $\so(2\ell):\,
\dim =
\prod_{1\leq i < j \leq \ell} \left\{ 1+ \frac{x_i+\cdots +x_{j-1}}{j-i} \right\}
\times
\prod_{1\leq i \leq \ell-1}
\left\{
1+ \frac{x_i+\cdots + x_{\ell-2}+x_{\ell}}{\ell-i}
\right\}
$\\
$\times
\prod_{1\leq i<j\leq \ell-1} \left\{ 1+ \frac{x_i+\cdots +x_{j-1}+2(x_j+\cdots 
+ x_{\ell-2})+x_{\ell-1}+x_{\ell}}{2\ell-i-j} \right\}
$
$\hfill\blacksquare$
\end{enumerate}
\end{lemma}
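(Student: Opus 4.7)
The plan is to derive each of the four formulas as a direct consequence of Weyl's dimension formula
$$\dim V(\lambda) \;=\; \prod_{\alpha\in\Phi^{+}} \frac{\langle\lambda+\rho,\alpha\rangle}{\langle\rho,\alpha\rangle},$$
applied in the standard orthonormal realisation of each root system. For this I would fix the ambient Euclidean space with orthonormal basis $e_1,\dots,e_\ell$ (respectively $e_1,\dots,e_{\ell+1}$ for the $A$-case) and list the positive roots in the classical way: $e_i-e_j$ with $i<j$ for $A_\ell$; $e_i\pm e_j$ together with $e_i$ for $B_\ell$; $e_i\pm e_j$ together with $2e_i$ for $C_\ell$; and $e_i\pm e_j$ with $i<j$ for $D_\ell$. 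In each case $\rho$ is half the sum of the positive roots, which is a well-known and easily computed vector.

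The next step is to translate the highest weight $\lambda=(x_1,\dots,x_\ell)$ given in Bourbaki coordinates (the coefficients in the basis of fundamental weights fixed by Tab.~\ref{tab:diag}) into the basis $e_1,\dots$. This is the routine but essential bookkeeping: the fundamental weights of $A_\ell$, $B_\ell$, $C_\ell$, $D_\ell$ are recorded in standard references (e.g., the plates of \cite{Bourb08a}), so one obtains explicit expressions for the coordinates $(\lambda+\rho)_k$ in the $e_k$ basis. For instance, in the $A_\ell$ case one gets $(\lambda+\rho)_k - (\lambda+\rho)_{k+1} = x_k+1$, which turns each factor $\langle\lambda+\rho,e_i-e_j\rangle = (\lambda+\rho)_i-(\lambda+\rho)_j$ into a telescoping sum $x_i+\cdots+x_{j-1}+(j-i)$. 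Dividing by $\langle\rho,e_i-e_j\rangle=j-i$ gives the single-product formula for $\su(\ell+1)$ in statement~(1).

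For $B_\ell$, $C_\ell$, $D_\ell$ the positive roots split into the three (respectively two) families listed above, and this split is exactly what produces the triple-product structure in statements (2)--(4). The three products correspond to the roots $e_i-e_j$, $e_i+e_j$, and (for $B_\ell$) $e_i$ or (for $C_\ell$) $2e_i$; for $D_\ell$ only the first two families occur, which is why statement~(4) has a different shape. In each product one again writes $\langle\lambda+\rho,\alpha\rangle$ and $\langle\rho,\alpha\rangle$ as explicit linear combinations of the $x_k$ and constants, so each factor becomes of the form $1+(\mathrm{sum\ of\ }x_k)/(\mathrm{integer})$. The three sums of $x_k$ appearing in statements~(2)--(4) are exactly what one reads off after plugging in the coordinates of the fundamental weights.

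The step I expect to be the main obstacle is the $D_\ell$ case (statement~(4)), which is precisely the formula the authors remark they had to correct. The subtlety is that the last two fundamental weights $\omega_{\ell-1}=\tfrac12(e_1+\cdots+e_{\ell-1}-e_\ell)$ and $\omega_\ell=\tfrac12(e_1+\cdots+e_{\ell-1}+e_\ell)$ contribute half-integer coefficients and must be combined carefully so that $x_{\ell-1}$ and $x_\ell$ appear symmetrically in the factors coming from $e_i+e_j$ but asymmetrically in the factors coming from $e_i-e_j$ and from the $\ell$-th coordinate. Once the coordinates of $\lambda+\rho$ are written correctly, the telescoping and factorisation proceed exactly as in the other classical cases and one reads off the claimed product; a brief sanity check against the standard representation (where $x_1=1$ and all other $x_k=0$) gives $\dim=2\ell$, confirming the formula. \hfill$\blacksquare$
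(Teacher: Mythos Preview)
Your approach is correct: the formulas are indeed direct specialisations of Weyl's dimension formula once one writes $\lambda+\rho$ in the standard $e_k$-basis and groups the positive roots into their natural families, and the bookkeeping you outline (including the half-integer coordinates of $\omega_{\ell-1},\omega_\ell$ in the $D_\ell$ case) is exactly what is required.

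By way of comparison, the paper itself does not give a proof at all. For $\su(\ell+1)$, $\so(2\ell+1)$, and $\usp(\ell)$ it simply cites pp.~340--341 of Goodman--Wallach \cite{GW09}, and for $\so(2\ell)$ it merely states that the formula there had to be corrected, without supplying a derivation. So your proposal is strictly more informative than the paper: you actually indicate how to obtain the formulas from first principles, whereas the paper treats the lemma as a reference/assertion. The only thing the paper adds that you do not is the explicit pointer to \cite{GW09} for the first three cases.
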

Here we present the dimension formulas 
for the exceptional Lie algebras only for 
$\fg_2$ (cp.\ Ref.~\cite{Jacobson79}, pp.~257-258)
and $\ff_4$, 
ommitting the even longer and more complicated ones
for  $\fe_6$, $\fe_7$, and $\fe_8$.
We remark that these formulas are---in principle---well known but are usually not given
in the literature due to their complexity.
\begin{lemma}[$\fg_2$ and $\ff_4$]
Given the highest weight $(x_1,\ldots,x_{\ell})$
the dimensions of 
the corresponding
irreducible representations are:
\begin{enumerate}
\item $\fg_2:\,
\dim = 
(1+x_2)
(1+x_1)
\left(1{+}\frac{x_1 + x_2}{2}\right)
\left(1{+}\frac{x_1 + 2x_2}{3}\right)
\left(1{+}\frac{x_1 + 3x_2}{4}\right)
\left(1{+}\frac{2x_1 + 3x_2}{5}\right)
$
\item $\ff_4:\,
\dim = 
(1+x_4)
(1+x_3)
(1+x_2)
(1+x_1)
\left(1{+}\frac{x_3 + x_4}{2}\right)
\left(1{+}\frac{x_2 + x_3}{2}\right)
\left(1{+}\frac{x_1 + x_2}{2}\right)
$\\
$\times
\left(1{+}\frac{x_2 + x_3 + x_4}{3}\right)
\left(1{+}\frac{2x_2 + x_3}{3}\right)
\left(1{+}\frac{x_1 + x_2 + x_3}{3}\right)
\left(1{+}\frac{2x_2 + x_3 + x_4}{4}\right)
\left(1{+}\frac{x_1 + x_2 + x_3 + x_4}{4}\right)
$\\
$\times
\left(1{+}\frac{x_1 + 2x_2 + x_3}{4}\right)
\left(1{+}\frac{2x_2 + 2x_3 + x_4}{5}\right)
\left(1{+}\frac{x_1 + 2x_2 + x_3 + x_4}{5}\right)
\left(1{+}\frac{2x_1 + 2x_2 + x_3}{5}\right)
$\\
$\times 
\left(1{+}\frac{x_1 + 2x_2 + 2x_3 + x_4}{6}\right)
\left(1{+}\frac{2x_1 + 2x_2 + x_3 + x_4}{6}\right)
\left(1{+}\frac{x_1 + 3x_2 + 2x_3 + x_4}{7}\right)
$\\
$\times
\left(1{+}\frac{2x_1 + 2x_2 + 2x_3 + x_4}{7}\right)
\left(1{+}\frac{2x_1 + 3x_2 + 2x_3 + x_4}{8}\right)
\left(1{+}\frac{2x_1 + 4x_2 + 2x_3 + x_4}{9}\right)
$\\
$\times
\left(1{+}\frac{2x_1 + 4x_2 + 3x_3 + x_4}{10}\right)
\left(1{+}\frac{2x_1 + 4x_2 + 3x_3 + 2x_4}{11}\right)
$ 
\end{enumerate}
\end{lemma}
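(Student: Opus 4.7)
The proof is a direct application of Weyl's dimension formula
\begin{equation*}
\dim V_\lambda \;=\; \prod_{\alpha \in \Phi^+} \frac{\langle \lambda+\rho,\,\alpha^\vee\rangle}{\langle \rho,\,\alpha^\vee\rangle},
\end{equation*}
in the normalisation by fundamental weights. With $\lambda=\sum_i x_i \omega_i$, $\rho=\sum_i \omega_i$, and $\langle \omega_i,\alpha_j^\vee\rangle=\delta_{ij}$, expressing each positive coroot in the simple-coroot basis as $\alpha^\vee=\sum_i m_i\alpha_i^\vee$ gives $\langle\rho,\alpha^\vee\rangle=m_1+\cdots+m_\ell$ (the coroot-height) and $\langle\lambda,\alpha^\vee\rangle=m_1 x_1+\cdots+m_\ell x_\ell$. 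Every factor of Weyl's product therefore has the shape
\begin{equation*}
1 \,+\, \frac{m_1 x_1 + \cdots + m_\ell x_\ell}{m_1+\cdots+m_\ell},
\end{equation*}
which matches the form of every factor in the two stated formulas. The proof thus reduces to enumerating the positive coroots and reading off the tuples $(m_1,\ldots,m_\ell)$.

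For $\fg_2$ (rank~$2$, six positive roots) I would compute each positive coroot via $\alpha^\vee=2\alpha/(\alpha,\alpha)$ using the Cartan matrix in the Bourbaki labelling ($\alpha_1$ short, $\alpha_2$ long). The resulting positive coroots in the basis $(\alpha_1^\vee,\alpha_2^\vee)$ are $\alpha_1^\vee$, $\alpha_2^\vee$, $\alpha_1^\vee+\alpha_2^\vee$, $\alpha_1^\vee+2\alpha_2^\vee$, $\alpha_1^\vee+3\alpha_2^\vee$, and $2\alpha_1^\vee+3\alpha_2^\vee$, with coroot-heights $1,1,2,3,4,5$. Reading these off yields exactly the six stated factors.

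For $\ff_4$ the strategy is identical, but the bookkeeping is heavier since there are $24$ positive coroots in rank~$4$. I would generate them systematically, for instance by iteratively applying simple reflections to the simple coroots and recording every new positive vector obtained, and then tabulate each as a tuple $(m_1,m_2,m_3,m_4)$ of height $h=\sum_i m_i\in\{1,\ldots,11\}$. Matching the resulting $24$ factors against the $24$ factors in the stated product concludes the verification.

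The conceptual content is minimal; the main obstacle is the $\ff_4$ enumeration itself. Because the coroot of a short root is long (and vice versa), maintaining the correct Bourbaki normalisation throughout the map $\alpha\mapsto\alpha^\vee$ is error-prone, and any slip will propagate through many of the $24$ factors. As sanity checks, substituting the standard-representation highest weights $(1,0)$ for $\fg_2$ and $(0,0,0,1)$ for $\ff_4$ should return $7$ and $26$, respectively, and substituting the adjoint-representation highest weights $(0,1)$ for $\fg_2$ and $(1,0,0,0)$ for $\ff_4$ should return $14$ and $52$.
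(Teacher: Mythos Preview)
Your approach is correct and is essentially the same as the paper's: both derive the formulas from Weyl's dimension formula. The paper's proof is in fact just the one-line statement that the explicit formulas were obtained by evaluating Weyl's formula in the computer algebra system {\sf MAGMA}, whereas you spell out how to carry out the same computation by hand via the positive coroots; your $\fg_2$ enumeration and sanity checks are on target.
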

\begin{proof}
Computational explicit dimension formulas for the exceptional Lie algebras
were obtained using the computer algebra system {\sf MAGMA}~\cite{MAGMA} via Weyl's dimension formula.
$\hfill\blacksquare$
\end{proof}

We emphasize that in order to compute the dimensions efficiently,
one has to use the dimension formulas in the given factorized 
form. That is, one has to evaluate each factor and 
multiply the results. The alternative of evaluating the multiplied formula
is considerably less efficient. 

\subsection{Enumerating Representations\label{enum_repr}}
The aim of determining the irreducible simple subalgebras of $\su(N)$ 
for a given $N$ is reached by enumerating for all simple Lie algebras all 
their irreducible representations of dimension $N$. Therefore, we
have to enumerate for all simple Lie algebras all highest weights
$(x_1,\ldots, x_{\ell})$ corresponding to irreducible representations of dimension $N$. 
In doing so, how can one reduce the combined search space of Lie algebras and highest weights?

To this end, recall that the standard representation is the  lowest-dimensional (non-trivial) representation
[with the exception of $\so(3)$, $\so(5)$, and $\so(6)$].
It follows from the dimension formulas for the standard representations
in Tab.~\ref{tab:standard} that only a finite number of Lie algebras
have irreducible representations of dimension equal (or less than or equal) to a given $N$.  
Thus we have to search only through a finite set of Lie algebras.
In addition, we have to consider merely one instance of isomorphic Lie algebras 
[$\su(2) \cong \so(3) \cong \usp(1)$, $\so(5) \cong \usp(2)$, and 
$\su(4) \cong \so(6)$] and can neglect $\so(2)$ and $\so(4)$ as they are not simple. 
It follows from Chap.~IX, Sec.~8.5, Cor.~2 of Ref.~\cite{Bourb08b}
that for a Lie algebras the set of irreducible representations of dimension less than
or equal to $N$ is finite. We obtain:
\begin{lemma}\label{finite_number}
Each Lie algebra can only have a finite number of irreducible
representations of dimension equal (or less than or equal) to a given $N$.
Furthermore, only a finite number of Lie algebras have any irreducible
representations of dimension equal (or less than or equal) to a given $N$.
$\hfill\blacksquare$
\end{lemma}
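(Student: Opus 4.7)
The plan is to prove the two assertions separately, using the explicit dimension data assembled earlier in the appendix (Tab.~\ref{tab:standard} and Lemma~\ref{irred_dim}, together with the corresponding formulas for the exceptional types).

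For the first assertion, I would argue as follows. Fix a compact simple Lie algebra $\fg$ of rank $\ell$. By Lemma~\ref{irred_dim} (and the analogous formulas for $\fe_6$, $\fe_7$, $\fe_8$, $\ff_4$, $\fg_2$), the dimension of the irreducible representation with highest weight $(x_1,\dots,x_\ell)$ is a product of finitely many factors, each of the form $1 + L(x_1,\dots,x_\ell)/k$ where $L$ is a non-negative integer linear combination of the $x_i$ and $k\in\N\setminus\{0\}$. Each factor is $\geq 1$, and for every coordinate $x_i$ there is at least one factor in which $x_i$ appears with a positive coefficient. Consequently, the dimension is bounded below by an affine function of each $x_i$ individually; if any $x_i$ exceeds a threshold depending only on $\fg$ and $N$, the dimension must exceed $N$. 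Hence only finitely many highest weights $(x_1,\dots,x_\ell)$ satisfy $\dim\leq N$, which gives the claim. (Alternatively, one may simply cite Chap.~IX, Sec.~8.5, Cor.~2 of Ref.~\cite{Bourb08b}, as indicated in the lemma statement.)

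For the second assertion, I would reduce to a bound on the rank. The five exceptional compact simple Lie algebras contribute at most five algebras, so it suffices to treat the four classical families $\su(\ell{+}1)$, $\so(2\ell{+}1)$, $\usp(\ell)$, $\so(2\ell)$. Tab.~\ref{tab:standard} shows that the standard representations of these algebras have dimensions $\ell{+}1$, $2\ell{+}1$, $2\ell$, $2\ell$, respectively, all of which tend to infinity with $\ell$. The key auxiliary fact is that, up to the finitely many exceptional isomorphisms $\su(2)\cong\so(3)\cong\usp(1)$, $\so(5)\cong\usp(2)$, $\so(6)\cong\su(4)$ (which are already identified and thus cause no double counting), the standard representation is the lowest-dimensional non-trivial irreducible representation of the algebra. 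Hence once $\ell$ is large enough that the standard dimension exceeds $N$, the algebra admits no non-trivial irreducible representation of dimension $\leq N$. This bounds $\ell$ in each family, and the finite number of admissible classical algebras together with the five exceptional ones yields the assertion.

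The main technical obstacle is the clean statement that the standard representation is indeed the minimal non-trivial irreducible representation outside the listed exceptional isomorphisms; this is classical (see, e.g., the minuscule/fundamental weight analysis in Chap.~VIII of Ref.~\cite{Bourb08b}) but must be applied with care to the small-rank exceptional cases $\so(3)$, $\so(5)$, $\so(6)$ already flagged in Tab.~\ref{tab:standard}. Once these are absorbed into the other series via the stated isomorphisms, the rest of the argument is a direct combinatorial bookkeeping exercise using the dimension formulas.
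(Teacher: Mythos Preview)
Your proposal is correct and follows essentially the same approach as the paper. Note that the paper's ``proof'' is actually the paragraph immediately preceding the lemma: it cites Chap.~IX, Sec.~8.5, Cor.~2 of Ref.~\cite{Bourb08b} for the first assertion and invokes the standard-representation dimensions from Tab.~\ref{tab:standard} (together with the observation that the standard representation is minimal outside the listed low-rank isomorphisms) for the second. Your direct monotonicity argument via the factorised dimension formulas of Lemma~\ref{irred_dim} is a slightly more explicit alternative to the Bourbaki citation for the first part, but the logic of both parts coincides with the paper's.
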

It follows from Lemma~\ref{finite_number} that the search space for 
the highest weights is finite. Using the following Lemma, one can obtain 
stopping criteria for the search for the highest weights with dimension equal 
(or less than or equal) to a given $N$.
\begin{lemma}\label{prop_irred}
For a given Lie algebra, let $\dim[x]$
denote the dimension of an irreducible representation with
highest weight $x=(x_1,\ldots,x_{\ell})$.
\begin{enumerate}
\item The dimension is strongly monotonic ascending in each entry $x_i$ of the highest weight: 
$\dim[(x_1,\ldots,x_i+1,\ldots,x_{\ell})] > \dim[(x_1,\ldots,x_i,\ldots,x_{\ell})]$.
\item Let $e^i$ denote the vector such that $(e^i)_j=\delta_{i,j}$.
If $\sum x_i > 1$, then
$$\dim[x]\geq \min[\{\max[\dim(e^i),\dim(e^j)]\}_{i\neq j} \cup 
\{\dim(2 \e^i)\}_{1\leq i \leq \ell} ].$$
\end{enumerate}
\end{lemma}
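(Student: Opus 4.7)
The two parts of the lemma are closely linked: part~(2) will be an essentially immediate consequence of the monotonicity established in part~(1), so the bulk of the work concentrates on part~(1).

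For part~(1), I would invoke the Weyl dimension formula (equivalently, the explicit product formulas given in Lemma~\ref{irred_dim} and their counterparts for the exceptional algebras displayed above). In every such formula, $\dim[x]$ appears as a product of factors of the shape $1 + c/d$, where $c$ is a nonnegative integer linear combination of the $x_k$ and $d$ is a positive integer. The key observation is that each coordinate $x_i$ occurs with strictly positive coefficient in at least one factor: this is visible by inspection in the classical formulas of Lemma~\ref{irred_dim} (and for $\fg_2$ and $\ff_4$ above), and follows in general from the fact that every simple root $\alpha_i$ is itself a positive root and thus contributes its own factor. Incrementing $x_i$ by one therefore strictly increases at least one factor while leaving all other (already positive) factors unchanged, so the whole product strictly increases.

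For part~(2), the integer hypothesis $\sum_k x_k > 1$ forces $\sum_k x_k \geq 2$. Two cases exhaust the possibilities: either (A) there exist $i\neq j$ with $x_i, x_j \geq 1$, in which case $x \geq e^i + e^j$ componentwise, or (B) there exists some $i$ with $x_i \geq 2$, in which case $x \geq 2 e^i$ componentwise. Iterating the strict monotonicity of part~(1) along each coordinate yields, in case~(A), the chain $\dim[x] \geq \dim[e^i + e^j] \geq \max(\dim[e^i], \dim[e^j])$, and in case~(B), directly $\dim[x] \geq \dim[2 e^i]$. In both cases $\dim[x]$ is bounded below by an element of the set over which the minimum in the statement is taken, hence by the minimum itself.

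The only mildly delicate point is the universal verification in part~(1) that every coordinate $x_i$ appears in some factor with strictly positive coefficient. For the four classical series this is transparent from the formulas in Lemma~\ref{irred_dim}; for the five exceptional algebras it can either be read off the explicit factorisations (spelled out above for $\fg_2$ and $\ff_4$, and routinely generated by computer algebra for $\fe_6$, $\fe_7$, $\fe_8$) or argued uniformly via the structure of the root system. Beyond this bookkeeping, no substantive obstacle is anticipated.
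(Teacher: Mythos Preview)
Your argument is correct in substance and is more detailed than what the paper provides: the paper's own proof consists only of a citation (``See, e.g., Cor.~5.2 and Cor.~5.4 of Ref.~\cite{Stuck91}''), so you have supplied the content that the reference presumably contains. The route via the Weyl dimension formula is the natural one and matches the explicit product formulas in Lemma~\ref{irred_dim}.

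One small wording slip in part~(1): you write that incrementing $x_i$ ``strictly increases at least one factor while leaving all other (already positive) factors unchanged''. In fact many factors typically contain $x_i$ (for instance, in the $\su(\ell+1)$ formula, $x_i$ occurs in every factor indexed by a pair $p<q$ with $p\leq i<q$), so many factors increase. What matters, and what your setup already guarantees, is that \emph{no} factor decreases (all coefficients of all $x_k$ are nonnegative, since for a positive root $\alpha=\sum_j n_j\alpha_j$ one has $(\omega_k,\alpha)=n_k(\alpha_k,\alpha_k)/2\geq 0$) and at least one factor strictly increases (the one coming from the simple root $\alpha_i$ itself). Replacing ``unchanged'' by ``non-decreasing'' makes the sentence accurate.

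Part~(2) is handled cleanly; the case split is exhaustive and the iterated use of~(1) gives the componentwise monotonicity you need.
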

\begin{proof}
See, e.g., Cor.~5.2 and Cor.~5.4 of Ref.~\cite{Stuck91}.
$\hfill\blacksquare$
\end{proof}

Let us fix a Lie algebra. We start our search for the highest weights of dimension less than or equal to
$N$ with all vectors $x=(x_1,\ldots,x_{\ell})$ such that $\sum x_i =1$.
In each step, we compute the dimension corresponding to the highest weight
$x=(x_1,\ldots,x_{\ell})$. If $\dim[x]\leq N$,
we include $x$ into the list of highest weights with dimension $\dim[x]$
and we branch our search to all $\tilde{x}=(x_1,\ldots,x_i+1,\ldots,x_{\ell})$.
If $\dim[x]>N$, we prune this branch in our search tree
(Part 1 of Lemma~\ref{prop_irred}).
We have to search through all Lie algebras such that the lowest-dimensional (non-trivial) 
irreducible representation is less than or equal to $N$.
One can further reduce the search space with respect to potential Lie algebras 
by using knowledge on the second-lowest-dimensional (non-trivial) irreducible representations:
\begin{theorem} For a given Lie algebra,
let $y=(y_1,\ldots,y_{\ell})$ denote the highest weight
of the second-lowest-dimensional (non-trivial) irreducible representation.
\begin{enumerate}
\item For $\su(\ell+1)$ and $\ell\geq 3$, we obtain 
$y=(0,1,0,\ldots,0)$ or $y=(0,\ldots,0,1,0)$. In addition, $\dim[y]=\ell(\ell+1)/2$.
\item For $\so(2\ell+1)$ and $\ell\geq 7$, we obtain 
$y=(0,1,0,\ldots,0)$ and $\dim[y]=(2\ell+1)\ell$.
\item For $\usp(\ell)$ and $\ell\geq 4$, we obtain 
$y=(0,1,0,\ldots,0)$ and $\dim[y]=2\ell^2-\ell-1$.
\item For $\so(2\ell)$ and $\ell\geq 8$, we obtain 
$y=(0,1,0,\ldots,0)$ and $\dim[y]=2\ell^2-\ell$.
\end{enumerate}
\end{theorem}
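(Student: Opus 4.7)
The plan is to exploit Lemma~\ref{prop_irred} to reduce the enumeration of candidate second-lowest representations to a small explicit list, then verify dimensions using Lemma~\ref{irred_dim}. By part~(1) of Lemma~\ref{prop_irred}, the dimension is strictly increasing in every coordinate of the highest weight, so any highest weight $x$ with $x_i\geq 1$ satisfies $\dim[x]\geq \dim[e^i]$. By part~(2), if $\sum x_i >1$, then $\dim[x]$ is bounded below by the minimum over pairs $\max(\dim[e^i],\dim[e^j])$ with $i\neq j$ and over $\dim[2e^i]$. Consequently, the only candidates competing for second-lowest are: (a) the fundamental weights $e^k$ distinct from the one giving the standard representation, and (b) the doubled weights $2e^i$. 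The strategy is therefore to enumerate just these finitely many candidates.

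First I would compute the dimensions of all fundamental representations from the explicit product formulas of Lemma~\ref{irred_dim}: for $\su(\ell{+}1)$ one obtains $\dim[e^k]=\binom{\ell+1}{k}$; for $\so(2\ell{+}1)$ one has $\dim[e^k]=\binom{2\ell+1}{k}$ for $k<\ell$ together with the spin representation $\dim[e^\ell]=2^\ell$; for $\usp(\ell)$ the values $\dim[e^k]=\binom{2\ell}{k}-\binom{2\ell}{k-2}$; and for $\so(2\ell)$ the values $\dim[e^k]=\binom{2\ell}{k}$ for $k<\ell-1$ together with the half-spin representations $\dim[e^{\ell-1}]=\dim[e^\ell]=2^{\ell-1}$. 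Combined with the known standard-representation weights in Tab.~\ref{tab:standard}, a direct comparison singles out $e^2$ (and, in the $\su$-case, also $e^{\ell-1}$ by the outer Dynkin automorphism) as the candidate giving the announced $y$, yielding exactly the dimensions $\ell(\ell+1)/2$, $(2\ell+1)\ell$, $2\ell^2-\ell-1$, and $2\ell^2-\ell$.

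Next I would dispose of the $2e^i$ candidates and of the spinor representations. The doubled standard weight gives the symmetric/antisymmetric squares of the standard representation, whose dimensions grow quadratically in the matrix size and are strictly larger than $\dim[e^2]$ in every case (e.g.\ for $\so(2\ell{+}1)$, $\dim[2e^1]=\ell(2\ell+3)>\ell(2\ell+1)$). Dimensions $\dim[2e^k]$ for $k\geq 2$ are larger still by Lemma~\ref{prop_irred}(1). For the spinor representations, which scale as $2^{\ell}$ or $2^{\ell-1}$, one compares against the polynomial quantity $\ell(2\ell\pm 1)$; the threshold $\ell\geq 7$ for $\so(2\ell{+}1)$ is chosen precisely so that $2^\ell>\ell(2\ell+1)$ (since $2^7=128>105$), and similarly $\ell\geq 8$ for $\so(2\ell)$ ensures $2^{\ell-1}>\ell(2\ell-1)$ (since $2^7=128>120$). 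Below these thresholds, the spin representations do compete, which explains the hypotheses.

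The main obstacle will be the exponential growth of the spinor representations in the orthogonal cases, because these lie \emph{outside} the tensor-power hierarchy of the standard representation and must be compared by hand. Away from the spin representations the fundamental weights give $\dim[e^k]$ that is increasing in $k$ up to $\lfloor (2\ell\pm 1)/2\rfloor$ and decreasing afterwards, so after fixing the thresholds $\ell\geq 7$ and $\ell\geq 8$ the remaining comparisons reduce to checking the small inequalities listed above. Finally, uniqueness of $y$ (modulo the outer automorphism in the $\su$-case) follows from the strict monotonicity in Lemma~\ref{prop_irred}(1), which rules out any non-fundamental weight reaching the second-lowest value.
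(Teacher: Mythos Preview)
Your overall strategy coincides with the paper's: reduce via Lemma~\ref{prop_irred} to fundamental weights, a few doubled weights, and the spinor cases, then compare explicitly. However, your reduction step has a genuine gap in the $\su(\ell{+}1)$ case. Lemma~\ref{prop_irred}(2) gives only the bound $\dim[x]\geq \min\big[\{\max(\dim e^i,\dim e^j)\}_{i\neq j}\cup\{\dim 2e^i\}\big]$, and in type $\fa_\ell$ this minimum is attained at the pair $(i,j)=(1,\ell)$, yielding merely $\max(\dim e^1,\dim e^\ell)=\ell{+}1<\dim[e^2]$. Hence your conclusion that ``the only candidates competing for second-lowest are the fundamental weights and the doubled weights $2e^i$'' is not justified: a weight supported only on $\{1,\ell\}$ can slip through. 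Concretely, the adjoint weight $(1,0,\ldots,0,1)=e^1+e^\ell$ is neither fundamental nor of the form $2e^i$, yet is not excluded by your argument. The paper closes this gap by computing $\dim[(1,0,\ldots,0,1)]=\ell(\ell{+}2)$ directly and verifying $\dim[e^2]<\dim[2e^1]<\dim[(1,0,\ldots,0,1)]$.

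A smaller point concerns $\usp(\ell)$: the formula $\dim[e^r]=\binom{2\ell}{r}-\binom{2\ell}{r-2}$ is a \emph{difference} of binomials and is not manifestly increasing in $r$, so ``a direct comparison singles out $e^2$'' is too quick. The paper handles this by computing $\dim[e^3]$ explicitly and, for $r\geq 4$, using the closed form $\dim[e^r]=\tfrac{2\ell+2-2r}{2\ell+2}\binom{2\ell+2}{r}$ together with an elementary inequality to show $\dim[e^r]>\dim[e^2]$. Your treatments of the spinor thresholds ($\ell\geq 7$ for $\fb_\ell$ and $\ell\geq 8$ for $\fd_\ell$) and of $\dim[2e^1]$ agree with the paper. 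Also note that your sentence ``$\dim[2e^k]$ for $k\geq 2$ are larger still by Lemma~\ref{prop_irred}(1)'' should be rephrased: part~(1) only gives $\dim[2e^k]>\dim[e^k]$, which combined with $\dim[e^k]\geq\dim[e^2]$ for $k\neq 1,\ell$ yields what you need.
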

\begin{proof}
Again, let $e^i$ denote the vector such that $(e^i)_j=\delta_{i,j}$.
We apply Lemma~\ref{prop_irred} in each of the following cases:
\begin{enumerate}
\item $\su(\ell+1)$: Recall that $\dim[e^r]=\tbinom{\ell+1}{r}$ for $1\leq r \leq \ell$. 
It follows that $\dim[e^2]=\ell(\ell+1)/2$ for $\ell\geq 3$.
One can deduce from Lemma~\ref{irred_dim} that $\dim[2e^1]=(\ell+1)(\ell+2)/2$ and
that $\dim[(1,0,\ldots,0,1)]=\ell(\ell+2)$ for $\ell\geq 2$. 
We obtain that $\dim[e^2]<\dim[2e^1]<\dim[(1,0,\ldots,0,1)]$
for $\ell\geq 3$. The first part follows.
\item $\so(2\ell+1)$: Recall that $\dim[e^{\ell}]=2^{\ell}$ and
$\dim[e^r]=\tbinom{2\ell+1}{r}$ for $1\leq r \leq \ell-1$
(see, e.g., p.~340 of Ref.~\cite{GW09}). It follows that $\dim[e^2]=(2\ell+1)\ell$ for $\ell\geq 3$. 
We obtain that $\dim[e^2]<\dim[e^{\ell}]$ for $\ell\geq 7$. One can deduce from Lemma~\ref{irred_dim} 
that $\dim[2e^1]=\ell(2\ell+3)>\dim[e^2]$ for $\ell\geq 3$. The second part follows.
\item $\usp(\ell)$: Recall that 
$\dim[e^1]=2\ell$ and
$\dim[e^r]=\tbinom{2\ell}{r}-\tbinom{2\ell}{r-2}$ for $2\leq r \leq \ell$
(see, e.g., p.~341 of Ref.~\cite{GW09}). It follows that
$\dim[e^2]=2\ell^2-\ell-1$. We obtain that $\dim[e^r]=\frac{2\ell+2-2r}{2\ell+2}\tbinom{2\ell+2}{r}$. One can deduce that
$\dim[e^3]=\frac{4}{3}\ell^3-2\ell^2-\frac{4}{3}\ell>\dim[e^2]$ for $\ell\geq 4$.
If $r\geq 4$, it follows that $\dim[e^r]-\dim[e^2]=\frac{2\ell+2-2r}{2\ell+2}\tbinom{2\ell+2}{r}-2\ell^2+\ell+1\geq \frac{1}{\ell+1} \tbinom{2\ell+2}{4}-2\ell^2+\ell+1=\frac{2}{3} \ell^3 -2\ell^2+\frac{5}{6}\ell+1>0$ for $\ell\geq 4$. One can obtain from Lemma~\ref{irred_dim} that $\dim[2e^1]=2\ell^2+\ell>\dim[e^2]$. The third part follows.
\item $\so(2\ell)$: Recall that 
$\dim[e^{\ell-1}]=\dim[e^{\ell}]=2^{\ell-1}$ and
$\dim[e^r]=\tbinom{2\ell}{r}$ for $1\leq r \leq \ell-2$
(see, e.g., p.~341 of Ref.~\cite{GW09}). It follows that $\dim[e^2]=2\ell^2-\ell$ for $\ell\ge 4$. 
We obtain that $\dim[e^2]<\dim[e^\ell]$ for $\ell \geq 8$. One can deduce from Lemma~\ref{irred_dim} 
that $\dim[2e^1]=2\ell^2+\ell-1>\dim[e^2]$ for $\ell\ge 4$. The fourth part follows.$\hfill\blacksquare$
\end{enumerate}
\end{proof}
Now one obtains bounds on $\ell$ such that the dimension of the second-lowest-dimensional 
(non-trivial) irreducible representation is greater than $N$:
\begin{corollary}\label{second_lowest}
For a given Lie algebra,
let $y=(y_1,\ldots,y_{\ell})$ denote the highest weight
of the second-lowest-dimensional (non-trivial) irreducible representation.
\begin{enumerate}
\item For $\su(\ell+1)$ and $\ell\geq 3$, we obtain:
$\dim[y]>N \Leftrightarrow \ell > \sqrt{1/4+2N}-1/2
$
\item For $\so(2\ell+1)$ and $\ell\geq 7$, we obtain:
$\dim[y]>N \Leftrightarrow \ell > (\sqrt{1+8N}-1)/4
$
\item For $\usp(\ell)$ and $\ell\geq 4$, we obtain:
$\dim[y]>N \Leftrightarrow \ell > (\sqrt{9+8N}+1)/4
$
\item For $\so(2\ell)$ and $\ell\geq 8$, we obtain:
$\dim[y]>N \Leftrightarrow \ell > (\sqrt{1+8N}+1)/4
$$\hfill\blacksquare$
\end{enumerate}
\end{corollary}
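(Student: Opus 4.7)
The plan is to deduce each of the four statements directly from the corresponding dimension formula already established in the preceding theorem, by inverting a single quadratic inequality in $\ell$. Since each dimension $\dim[y]$ is a monic (up to a factor $2$) quadratic polynomial in $\ell$ with positive leading coefficient, and since $\ell$ is a positive integer (in each case well above the vertex of the parabola), the inequality $\dim[y]>N$ is equivalent to $\ell$ exceeding the larger root of the associated quadratic. The ranges of $\ell$ stated ($\ell\geq 3,7,4,8$) are already covered by the preceding theorem, so they need not be re-derived; the corollary is purely an algebraic rewriting.

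Concretely, I would proceed case by case. For $\su(\ell+1)$, the theorem gives $\dim[y]=\ell(\ell+1)/2$, so $\dim[y]>N$ becomes $\ell^{2}+\ell-2N>0$; applying the quadratic formula and discarding the negative root yields $\ell>\tfrac{-1+\sqrt{1+8N}}{2}=\sqrt{1/4+2N}-1/2$. For $\so(2\ell+1)$, $\dim[y]=(2\ell+1)\ell=2\ell^{2}+\ell$, giving $2\ell^{2}+\ell-N>0$ and hence $\ell>\tfrac{-1+\sqrt{1+8N}}{4}$. For $\usp(\ell)$, $\dim[y]=2\ell^{2}-\ell-1$, so $2\ell^{2}-\ell-(N+1)>0$ and $\ell>\tfrac{1+\sqrt{1+8(N+1)}}{4}=\tfrac{1+\sqrt{9+8N}}{4}$. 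For $\so(2\ell)$, $\dim[y]=2\ell^{2}-\ell$, so $2\ell^{2}-\ell-N>0$ and $\ell>\tfrac{1+\sqrt{1+8N}}{4}$. In each case, monotonicity of the quadratic on the relevant half-line together with the discriminant computation makes the equivalence ``$\Leftrightarrow$'' automatic.

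There is essentially no obstacle here: the content of the corollary sits entirely in the preceding theorem, and what remains is a bookkeeping exercise with the quadratic formula. The only small care needed is to verify that the lower bounds on $\ell$ ($3,7,4,8$) lie to the right of the parabola's vertex, so that the stated larger root is the only relevant root; this is immediate since in each formula the vertex is at $\ell\leq 1/2$ or $\ell\leq 1/4$. Thus I would present the proof as four short lines, one per case, quoting the dimension formula and writing out the resulting quadratic inversion.
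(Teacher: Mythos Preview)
Your proposal is correct and matches the paper's intent exactly: the paper states this corollary without proof (marked only with a $\blacksquare$), treating it as the obvious algebraic inversion of the quadratic dimension formulas from the preceding theorem. Your case-by-case quadratic-formula computations are precisely the content that the paper leaves implicit.
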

Now we explain how to use Corollary~\ref{second_lowest} in order to reduce the search space. 
Consider  $\su(k+1)$ and $k\geq 3$. If $N\geq k+1$ but $k > \sqrt{1/4+2N}-1/2$ then the standard 
representation of $\su(k+1)$ occurs with dimension less than or equal to $N$. But no other
(non-trivial) 
irreducible representation of $\su(k+1)$ has dimension less than or equal to $N$. We include the 
highest weight of the standard representation in the list corresponding to the dimension $k+1$. 
But we do not have to search for other irreducible representations. The search space is reduced 
from a size linear in $N$ to a square-root in $N$.

\subsection{Inclusion Relations\label{incl_rel}}
Once having obtained all irreducible simple subalgebras of $\su(N)$,
one can determine their inclusion relations
following the work of 
Dynkin~\cite{Dynkin57} (see, e.g., Chap.~6, Sec.~3.2 of Ref.~\cite{GOV94}).
Refer to \cite{Onishchik00} for related literature.
For example, Refs.~\cite{Seitz87,Testerman88}
generalise the work of Dynkin \cite{Dynkin57} 
to classical and exceptional Lie algebras over prime fields.
References~\cite{Min06,DM10,deGraaf11} contain most recent findings.
It follows from Theorem~1.5 in \cite{Dynkin57} that almost all irreducible simple subalgebras
(of dimension $\dim$) are maximal in $\su(\dim)$, $\usp(\dim/2)$, or $\so(\dim)$. 
Relying on Tab.~I of \cite{Dynkin57}, the exceptions
are listed in Tab.~\ref{tab:Dynkin}, which contains 
irreducible simple subalgebras of $\su(\dim)$ including the algebra in which the subalgebra is maximal. 
In addition, the highest weights of the corresponding representations as well as the type of the 
subalgebra (s for symplectic, o for orthogonal, and u for unitary) are given. 
For reference, we give the Malcev classification~\cite{Malcev50} (see also, e.g., \cite{Dynkin57,BP70,McKay81})
of symplectic, orthogonal, and unitary representations:
\begin{theorem}[Malcev]\label{Malcev}
Let $x=(x_1,\ldots,x_{\ell})$ denote the highest weight corresponding to an irreducible representation $\phi$
of a Lie algebra $\fk$. As $\phi$ is irreducible, the different cases
of symplectic, orthogonal, and unitary representations are mutually exclusive:
\begin{enumerate}
\item $\fk=\su(\ell+1)$:\footnote{Recall that $\mathrm{div}$ denotes integer division, e.g., $(5\, \mathrm{div}\, 2)=2$.}
\begin{enumerate}
\item $\phi$ is symplectic if $x$ is symmetric, $(\ell\bmod{4}) = 1$,
and $x_{((\ell-1)\, \mathrm{div}\, 2)+1}$ is odd.
\item $\phi$ is orthogonal if $x$ is symmetric
as well as either (i) $(\ell\bmod{4}) = 1$ and $x_{((\ell-1)\, \mathrm{div}\, 2)+1}$ is even or (ii)
$(\ell\bmod{4}) \neq 1$.
\item $\phi$ is unitary if $x$ is not symmetric.
\end{enumerate}
\item $\fk=\so(2\ell+1)$ for $\ell \geq 2$:
\begin{enumerate}
\item $\phi$ is symplectic if $(\ell\bmod{4}) \in \{1,2\}$ and $x_\ell$ is odd.
\item $\phi$ is orthogonal if either $(\ell\bmod{4}) \in \{0,3\}$ or $x_\ell$ is even.
\end{enumerate}
\item $\fk=\usp(\ell)$ for $\ell \geq 2$:
$\phi$ is symplectic if $\sum_{1\leq 2j+1 \leq \ell} x_{2j+1}$ is odd ($j\in\N\cup\{0\}$). Otherwise, $\phi$ is orthogonal.
\item $\fk=\so(2\ell)$ for $\ell \geq 3$:
\begin{enumerate}
\item $\phi$ is symplectic if $(\ell\bmod{4})=2$ and $x_{\ell-1}+x_{\ell}$ is odd.
\item $\phi$ is orthogonal if either (i)
$(\ell\bmod{4})=2$ and $x_{\ell-1}+x_{\ell}$ is even,\\
(ii) $(\ell\bmod{4})=0$, or (iii)
$\ell$ is odd and $x_{\ell-1}= x_{\ell}$.
\item $\phi$ is unitary if $\ell$ is odd and $x_{\ell-1}\neq x_{\ell}$.
\end{enumerate}
\item $\fk=\fg_2$, $\fk=\ff_4$, or $\fk=\fe_8$: $\phi$ is always orthogonal.
\item $\fk=\fe_6$: $\phi$ is orthogonal if $x_1=x_6$ and $x_3=x_5$. Otherwise, $\phi$ is unitary.
\item $\fk=\fe_7$: $\phi$ is symplectic if $x_2+x_5+x_7$ is odd. Otherwise, $\phi$ is orthogonal. $\hfill\blacksquare$
\end{enumerate}
\end{theorem}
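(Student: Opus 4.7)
The plan is to follow Malcev's original strategy, structured around two independent dichotomies: self-duality versus non-self-duality (which separates the unitary case from the orthogonal/symplectic cases), and the sign of the Frobenius--Schur indicator (which separates orthogonal from symplectic). The equivalence of the three cases with the existence of an invariant symmetric/skew-symmetric/no bilinear form was already recorded in the discussion preceding Lemma~\ref{exclusive}, so for an irreducible $\phi_\lambda$ it suffices to compute (a) whether $\phi_\lambda \cong \phi_\lambda^{\ast}$, and (b) when this holds, the sign $\mathrm{FS}(\phi_\lambda) \in \{+1,-1\}$ of the invariant form.

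First I would dispose of (a) using the standard fact that $\phi_\lambda^{\ast}$ has highest weight $-w_0(\lambda)$, where $w_0$ is the longest element of the Weyl group; thus $\phi_\lambda$ is self-dual iff $-w_0(\lambda)=\lambda$. The action of $-w_0$ on the fundamental weights is read off from the Dynkin diagram in Tab.~\ref{tab:diag}. Concretely: for $A_\ell$, $-w_0$ reverses the diagram and so self-duality amounts to $x$ being palindromic; for $D_{2\ell+1}$, $-w_0$ swaps the last two fundamental weights (giving $x_{\ell-1}=x_\ell$); for $E_6$, $-w_0$ is the non-trivial diagram involution (giving $x_1=x_6$ and $x_3=x_5$); and for $B_\ell,C_\ell,D_{2\ell},E_7,E_8,F_4,G_2$ one has $-w_0=\mathrm{id}$, so every irreducible representation is self-dual. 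This already accounts for every ``unitary'' branch in the statement.

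For (b) I would use the standard identity $\mathrm{FS}(\phi_\lambda)=\phi_\lambda(z)$ for a distinguished central element $z$ of the simply connected group, namely $z=\exp(2\pi i \rho^{\vee})$ where $\rho^{\vee}$ is half the sum of the positive coroots. Since $\phi_\lambda$ is irreducible, $\phi_\lambda(z)$ is a scalar equal to $e^{2\pi i \langle\lambda,\rho^{\vee}\rangle}$; for self-dual representations this scalar is $\pm 1$, and the sign equals $+1$ in the orthogonal case and $-1$ in the symplectic case. Carrying this out amounts to expressing $\langle\lambda,\rho^{\vee}\rangle\bmod\mathbb Z$ as a linear form in the Dynkin coordinates $(x_1,\dots,x_\ell)$ using the inverse Cartan matrix of each type. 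Doing so yields exactly the parity conditions in items~(1)--(7) of the statement: in type $A_\ell$ the element $-I\in SU(\ell+1)$ exists only for $\ell$ odd and acts trivially on $\phi_\lambda$ unless $\ell\equiv 1\pmod 4$ and the middle coordinate $x_{(\ell+1)/2}$ is odd, which gives the $\su$ case; in type $C_\ell$ one checks that $2\langle\lambda,\rho^{\vee}\rangle\equiv \sum_{j\ \mathrm{odd}} x_j\pmod 2$, reproducing the rule for $\usp(\ell)$; and analogous explicit computations handle $B_\ell$, $D_\ell$, and the exceptional cases. For $G_2,F_4,E_8$ the center is trivial, forcing $\mathrm{FS}=+1$ (orthogonal) uniformly; for $E_7$ the center is $\mathbb Z/2$ and one evaluates $\lambda$ on its generator, producing the parity of $x_2+x_5+x_7$.

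The main obstacle, and the only part that is not conceptual, will be step (b): the bookkeeping converting $\langle\lambda,\rho^{\vee}\rangle\bmod\mathbb Z$ into the specific integer-parity conditions on the $x_i$'s, together with the case splits on $\ell\bmod 4$ that arise from the orders of the relevant central elements. The mutual exclusivity claim at the top of the theorem is immediate from Lemma~\ref{exclusive}, and the alternative of checking irreducibility case by case via Weyl's character formula is entirely avoided by this route. A detailed exposition following essentially this outline can be found in Bourbaki (Chap.~VIII, \S7, and Chap.~IX, \S7) and goes back to Malcev~\cite{Malcev50}; our computer-assisted verification in Sec.~\ref{sec:irred-sub-su} provides an independent consistency check on the formulas for all $N\le 2^{15}$.
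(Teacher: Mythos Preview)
Your proof outline is correct and follows the standard modern route (self-duality via $-w_0$, then Frobenius--Schur indicator via the central element $\exp(2\pi i\rho^{\vee})$); this is essentially the argument in Bourbaki, Chap.~VIII--IX, which you cite. Note, however, that the paper itself does \emph{not} prove this theorem: it is stated with a $\blacksquare$ immediately after the enumeration and attributed to Malcev~\cite{Malcev50} (with further pointers to \cite{Dynkin57,BP70,McKay81}), so there is no ``paper's own proof'' to compare against---you are supplying what the authors chose to cite.

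One small point of care in your sketch: in the $A_\ell$ discussion you phrase the condition as ``$-I$ acts trivially unless $\ell\equiv 1\pmod 4$ \dots'', but when $\ell\equiv 3\pmod 4$ the element $-I$ still lies in $\mathrm{SU}(\ell{+}1)$; the correct statement is that on any self-dual irreducible it then acts as $+1$, which is what forces the orthogonal branch. More cleanly, the relevant central element is $\exp(2\pi i\rho^{\vee})$ rather than $-I$ per se, and these need not coincide; keeping the computation uniformly in terms of $\langle\lambda,\rho^{\vee}\rangle\bmod\mathbb{Z}$ (via the inverse Cartan matrix, as you indicate) avoids this ambiguity and delivers the $\ell\bmod 4$ dichotomy directly.
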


\begin{table}[Ht!]
\caption{\label{tab:Dynkin} 
Irreducible Simple Subalgebras not Maximal in $\su(\dim)$, $\usp(\dim/2)$, or $\so(\dim)$}
\begin{center}
\fontencoding{OT1}
\fontfamily{cmr}
\fontseries{m}
\fontshape{n}
\fontsize{8}{9.5}
\selectfont
\begin{tabular}{@{\hspace{3mm}}c@{\hspace{1mm}}}
\hline\hline\\[-0.9mm]
\begin{minipage}{11.2cm}
\renewcommand{\footnoterule}{}
\begin{tabular}[t]{c@{\hspace{4pt}}c@{\hspace{4pt}}l@{\hspace{-10pt}}r@{\hspace{4pt}}l@{\hspace{-10pt}}r}
subalgebra & type & highest weight(s) & algebra & highest weight(s) & dim\\ \midrule
$\ueber{\su(\ell+1)}{\ell\geq 4}$  & u & $\ueber{(1,0,1,0,\ldots,0),}{(0,\ldots,0,1,0,1)\phantom{,}}$ & $\su\hspace{-0.05cm}\left[\ell(\ell{+}1)/2\right]$ & $\ueber{(0,1,0,\ldots,0),}{(0,\ldots,0,1,0)\phantom{,}}$  & $3\tbinom{\ell+2}{4}$\\ \midrule[0pt]
$\ueber{\su(\ell+1)}{\ell\geq 3}$  & u & $\ueber{(2,1,0,\ldots,0),}{(0,\ldots,0,1,2)\phantom{,}}$ & $\su\hspace{-0.075cm}\left[\tfrac{\ell(\ell+3)}{2}{+}1\right]$ & $\ueber{(0,1,0,\ldots,0),}{(0,\ldots,0,1,0)\phantom{,}}$  & $3\tbinom{\ell+3}{4}$\\ \midrule[0pt]
$\su(2)$  & o & $(6)$ & $\fg_2$ & $(1,0)$  & $7$\\ \midrule[0pt]
$\su(6)$  & o & $(0,1,0,1,0)$ & $\usp(10)$ & $(0,1,0,\ldots,0)$  & $189$\\ \midrule[0pt]
$\ueber{\so(4k{+}3)}{\ueber{k\geq 1, m\geq 1}{(\text{but not}\hspace{1mm} k{=}m{=}1)\footnote{\hspace{-1.5mm}\tiny corrected, for $k{=}m{=}1$ we have $\so(7){\subset}\so(8){\subset}\su(8)$}
}}$  
& s/o\footnote{\hspace{-1.5mm}\tiny if $(k+1)m$ is odd then s else o} & $(0,\ldots,0,m)$  & $\so(4k{+}4)$ & $\ueber{(0,\ldots,0,m,0),}{(0,\ldots,0,0,m)\phantom{,}}$  & \footnote{\hspace{-1.5mm}\tiny corrected, $\prod_{s=1}^{2k+1} \left[ \tbinom{m+2s-1}{m}/\tbinom{m+s-1}{m} \right]$}\\ \midrule[0pt]
$\so(9)$  & o & $(1,0,0,1)$ & $\so(16)$ & $\ueber{(0,\ldots,0,1,0),}{(0,\ldots,0,0,1)\phantom{,}}$  & $128$\\ \midrule[0pt]
$\usp(3)$  & o & $(0,2,0)$ & $\usp(7)$ & $(0,1,0,0,0,0,0)$  & $90$\\ \midrule[0pt]
$\usp(3)$  & s & $(0,2,1)$ & $\usp(7)$ & $(0,0,1,0,0,0,0)$  & $350$\\ \midrule[0pt]
$\so(10)$  & u & $\ueber{(0,1,0,1,0)}{(0,1,0,0,1)}$ & $\su(16)$ & $\ueber{(0,0,1,0,\ldots,0)}{(0,\ldots,0,1,0,0)}$  & $560$\\ \midrule[0pt]
$\so(12)$  & o & $(0,0,0,1,0,0)$ & $\usp(16)$ & $(0,1,0,0,\ldots,0)$  & $495$\\ 
\midrule[0pt]
$\so(12)$  & s & $\ueber{(0,0,1,0,1,0)}{(0,0,1,0,0,1)}$ & $\usp(16)$ & $(0,0,1,0,\ldots,0)$  & $4928$\\ \midrule[0pt]
$\fe_6$  & u & $\ueber{(0,0,1,0,0,0)}{(0,0,0,0,1,0)}$ & $\su(27)$ & $(0,1,0,0,0,\ldots,0)$  & $351$\\ \midrule[0pt]
$\fe_6$  & u & $\ueber{(0,1,1,0,0,0)}{(0,1,0,0,1,0)}$ & $\su(27)$ & $(0,0,0,1,0,\ldots,0)$  & $17550$\\ \midrule[0pt]
$\fe_7$  & o & $(0,0,0,0,0,1,0)$ & $\usp(28)$ & $(0,1,0,\ldots,0)$  & $1539$\\ \midrule[0pt]
$\fe_7$  & s & $(0,0,0,0,1,0,0)$ & $\usp(28)$ & $(0,0,1,0,\ldots,0)$  & $27664$\\ \midrule[0pt]
$\fe_7$  & o & $(0,0,0,1,0,0,0)$ & $\usp(28)$ & $(0,0,0,1,0,\ldots,0)$  & $365750$\\ \midrule[0pt]
$\fe_7$  & s & $(0,1,1,0,0,0,0)$ & $\usp(28)$ & $(0,0,0,0,1,0,\ldots,0)$  & $3792096$\\ \midrule[0pt]
$\ueber{\fg_2}{m\geq 2}$  & o & $(m,0)$ & $\so(7)$ & $(m,0,0)$  & $\tfrac{2m+5}{5}\tbinom{m+4}{4}$\\ \midrule[0pt]
\end{tabular}
\vspace{-0.3cm}
\end{minipage}
\vspace{3mm}\\
\hline\hline
\end{tabular}
\end{center}
\end{table}

Results for dimension $\dim \leq 16$ can be found
in Tab.~\ref{tab:repr}, where the irreducible simple subalgebras
of $\su(\dim)$ are given again with their type 
(s for symplectic, o for orthogonal, and u for unitary) 
plus the highest weight of the corresponding 
irreducible representation. This information is essential
for deriving Tab.~\ref{tab:su_subalg}.

\begin{table}[Hp!]
\caption{\label{tab:repr} 
Highest Weights of the Irreducible Representations up to Dimension $16$ 
}
\begin{center}
\fontencoding{OT1}
\fontfamily{cmr}
\fontseries{m}
\fontshape{n}
\fontsize{8}{9.5}
\selectfont
\begin{tabular}{@{\hspace{1mm}}c@{\hspace{0mm}}}
\hline\hline\\[-0.8mm]
\begin{tabular}[t]{l|l}
\begin{tabular}[t]{c@{\hspace{4pt}}c@{\hspace{4pt}}c@{\hspace{4pt}}l}
$\dim$ & algebra & type & highest weight(s)\\ \addlinespace[0.1cm] \midrule \addlinespace[0.2cm]
$2$ & $\su(2)$ & s & $(1)$\\ \addlinespace[0.1cm] \midrule \addlinespace[0.2cm]
$3$ & $\su(3)$ & u & $(1,0)$, $(0,1)$\\
& $\su(2)$ & o & $(2)$\\ \addlinespace[0.1cm] \midrule \addlinespace[0.2cm]
$4$ & $\su(4)$ & u & $(1,0,0)$, $(0,0,1)$\\
& $\usp(2)$ & s & $(1,0)$\\
& $\su(2)$ & s &$(3)$\\ \addlinespace[0.1cm] \midrule \addlinespace[0.2cm]
$5$ & $\su(5)$ & u & $(1,0,0,0)$, $(0,0,0,1)$\\
& $\so(5)$ & o & $(1,0)$\\
& $\su(2)$ & o & $(4)$\\ \addlinespace[0.1cm] \midrule \addlinespace[0.2cm]
$6$ & $\su(6)$ & u & $(1,0,0,0,0)$,\\
& & & $(0,0,0,0,1)$\\
& $\usp(3)$ & s & $(1,0,0)$\\
& $\su(2)$ & s & $(5)$\\
& $\so(6)$ & o & $(1,0,0)$\\
& $\su(3)$ & u & $(2,0)$, $(0,2)$\\ \addlinespace[0.1cm] \midrule \addlinespace[0.2cm]
$7$ & $\su(7)$ & u & $(1,0,0,0,0,0)$,\\
& & & $(0,0,0,0,0,1)$\\
& $\so(7)$ & o & $(1,0,0)$\\
& $\fg_2$ & o & $(1,0)$\\
& $\su(2)$ & o & $(6)$\\ \addlinespace[0.1cm] \midrule \addlinespace[0.2cm]
$8$ & $\su(8)$ & u & $(1,0,0,0,0,0,0)$,\\
& & & $(0,0,0,0,0,0,1)$\\
& $\usp(4)$ & s & $(1,0,0,0)$\\
& $\su(2)$ & s & $(7)$\\
& $\so(8)$ & o & $(1,0,0,0)$, $(0,0,1,0)$\\
& & & $(0,0,0,1)$\\
& $\su(3)$ & o & $(1,1)$\\
& $\so(7)$ & o & $(0,0,1)$\\ \addlinespace[0.1cm] \midrule \addlinespace[0.2cm]
$9$ & $\su(9)$ & u & $(1,0,0,0,0,0,0,0)$,\\
& & & $(0,0,0,0,0,0,0,1)$\\
& $\so(9)$ & o & $(1,0,0,0)$\\
& $\su(2)$ & o & $(8)$\\ \addlinespace[0.1cm] \midrule \addlinespace[0.2cm]
$10$ & $\su(10)$ & u & $(1,0,0,0,0,0,0,0,0)$,\\
& & & $(0,0,0,0,0,0,0,0,1)$\\
& $\usp(5)$ & s & $(1,0,0,0,0)$\\
& $\su(2)$ & s & $(9)$\\
& $\so(10)$ & o & $(1,0,0,0,0)$\\
& $\so(5)$ & o & $(0,2)$\\
& $\su(3)$ & u & $(3,0)$, $(0,3)$\\
& $\su(4)$ & u & $(2,0,0)$, $(0,0,2)$\\
& $\su(5)$ & u & $(0,1,0,0)$, $(0,0,1,0)$
\end{tabular}
&
\begin{tabular}[t]{c@{\hspace{4pt}}c@{\hspace{4pt}}c@{\hspace{4pt}}l}
$\dim$ & algebra & type & highest weight(s)\\ \addlinespace[0.1cm] \midrule \addlinespace[0.2cm]
$11$ & $\su(11)$ & u & $(1,0,\ldots,0)$,\\
& & &  $(0,\ldots,0,1)$\\
& $\so(11)$ & o & $(1,0,0,0,0)$\\
& $\su(2)$ & o & $(10)$\\ \addlinespace[0.1cm] \midrule \addlinespace[0.2cm]
$12$ & $\su(12)$ & u & $(1,0,\ldots,0)$,\\
& & & $(0,\ldots,0,1)$\\
& $\usp(6)$ & s & $(1,0,0,0,0,0)$\\
& $\su(2)$ & s & $(11)$\\
& $\so(12)$ & o & $(1,0,0,0,0,0)$\\ \addlinespace[0.1cm] \midrule \addlinespace[0.2cm]
$13$ & $\su(13)$ & u & $(1,0,\ldots,0)$,\\
& & & $(0,\ldots,0,1)$\\
& $\so(13)$ & o & $(1,0,0,0,0,0)$\\
& $\su(2)$ & o & $(12)$\\ \addlinespace[0.1cm] \midrule \addlinespace[0.2cm]
$14$ & $\su(14)$ & u & $(1,0,\ldots,0)$,\\
& & & $(0,\ldots,0,1)$\\
& $\usp(7)$ & s & $(1,0,0,0,0,0,0)$\\
& $\su(2)$ & s & $(13)$\\
& $\usp(3)$ & s & $(0,0,1)$\\
& $\so(14)$ & o & $(1,0,0,0,0,0,0)$\\
& $\so(5)$ & o & $(2,0)$\\
& $\usp(3)$ & o & $(0,1,0)$\\
& $\fg_2$ & o & $(0,1)$\\ \addlinespace[0.1cm] \midrule \addlinespace[0.2cm]
$15$ & $\su(15)$ & u & $(1,0,\ldots,0)$,\\
& & & $(0,\ldots,0,1)$\\
& $\so(15)$ & o & $(1,0,0,0,0,0,0)$\\
& $\su(2)$ & o & $(14)$\\
& $\so(6)$ & o & $(0,1,1)$\\
& $\su(3)$ & u & $(4,0)$, $(0,4)$\\
& $\su(5)$ & u & $(2,0,0,0)$,\\
& & & $(0,0,0,2)$\\
& $\su(6)$ & u & $(0,1,0,0,0)$,\\
& & & $(0,0,0,1,0)$\\
& $\su(3)$ & u & $(2,1)$, $(1,2)$\\ \addlinespace[0.1cm] \midrule \addlinespace[0.2cm]
$16$ & $\su(16)$ & u & $(1,0,\ldots,0)$,\\
& & & $(0,\ldots,0,1)$\\
& $\usp(8)$ & s & $(1,0,0,0,0,0,0,0)$\\
& $\su(2)$ & s & $(15)$\\
& $\usp(2)$ & s & $(1,1)$\\
& $\so(16)$ & o & $(1,0,0,0,0,0,0,0)$\\
& $\so(9)$ & o & $(0,0,0,1)$\\
& $\so(10)$ & u & $(0,0,0,1,0)$,\\
& & & $(0,0,0,0,1)$
\end{tabular}
\end{tabular}
\vspace{3mm}\\
\hline\hline
\end{tabular}
\end{center}
\end{table}

\subsection{Examples}
We illustrate our
methods by  two concrete examples:
\begin{example}
We use the methods of Appendix~\ref{enum_repr} 
in the case of dimension $N=7$
and compute the irreducible simple subalgebras of $\su(7)$, where the corresponding irreducible representations
are specified by highest weights.\footnote{The definition of the highest weight is discussed in
Appendix~\ref{highestweigths}} We find the following  irreducible simple
subalgebras (see Tab.~\ref{tab:repr}): 
$\su(7)$ with highest weights\footnote{The two irreducible representations of $\su(7)$ 
are related by an inner automorphism.} 
$(1,0,0,0,0,0)$ and $(0,0,0,0,0,1)$,
$\so(7)$ with $(1,0,0)$, $\fg_2$ with $(1,0)$, as well as
$\su(2)$ with $(6)$. We conclude from Theorem~\ref{Malcev} that the given irreducible representations of $\su(7)$ are unitary and that all the other ones  are orthogonal (see Tab.~\ref{tab:repr}). It follows that
$\so(7)$ is directly embedded in $\su(7)$:
\begin{center}
\includegraphics{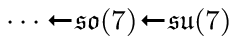}
\end{center}
The algebras $\fg_2$ and $\su(2)$ are embedded in $\so(7)$, but we still have to determine the inclusion relations. 
All algebras not listed
with the corresponding highest weight in Tab.~\ref{tab:Dynkin} 
are {\em directly contained} either in $\su(N)$, $\usp(N/2)$ [for $N$ even], or in
$\so(N)$ depending on whether the irreducible representation is unitary, symplectic, or orthogonal.
We find the algebra $\su(2)$ with the highest weight $(6)$ in the third row of Tab.~\ref{tab:Dynkin}. 
Thus the algebra $\su(2)$ is contained in $\so(7)$ but only indirectly so---via $\fg_2$:
\begin{center}
\includegraphics{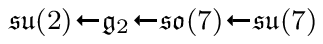}
\end{center}
\end{example}

\begin{example}
Consider the case of $N=16$. First, we obtain all the irreducible simple subalgebras of
$\su(16)$ (see Tab.~\ref{tab:repr}):
$\su(16)$ with highest weights $(1,0,\ldots,0)$ and $(0,\ldots,0,1)$ as well as
$\so(10)$ with $(0,0,0,1,0)$. The cases of irreducible symplectic representations are
$\usp(8)$ with highest weight $(1,0,0,0,0,0,0,0)$,
$\su(2)$ with $(15)$, and $\usp(2)$ with $(1,1)$. The irreducible orthogonal representations
are given by $\so(16)$ with highest weight $(1,0,0,0,0,0,0,0)$ and
$\so(9)$ with $(0,0,0,1)$. We immediately conclude that the algebras
$\usp(8)$, $\so(16)$, and $\so(10)$ are directly embedded in $\su(16)$:
\begin{center}
\includegraphics{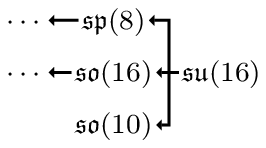}
\end{center}
It follows that all the other cases are {\em directly contained} either
in $\usp(8)$ or $\so(16)$ as they are not listed in Tab.~\ref{tab:Dynkin}: 
\begin{center}
\includegraphics{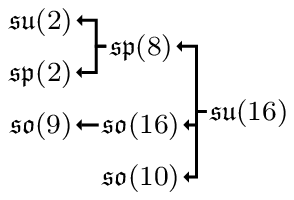}
\end{center}
\end{example}

\section{Alternating and Symmetric Squares of Representations\label{AltSym}}
In this Appendix, we enumerate for all compact semisimple Lie algebras
those representations whose alternating and symmetric squares
are both irreducible. We obtain that only the standard representation
of $\su(\ell+1)$  with $\ell \geq 0$ has this property.
This result is used in Sec.~\ref{sufficient}. We freely use the notation
of Appendix~\ref{app:repr}.

Assume that $\phi$ is a representation of a compact semisimple Lie algebra $\fg$ 
on a finite-dimensional vector space $V$ with basis $\{ v_1,\ldots ,v_k\}$. 
The representation $\phi$ is given as a map from $\fg$ to
the set of complex $k\times k$ matrices $\gl(k,\C)$. 
Starting from the representation $\phi$ we construct its
tensor square $\phi^{\otimes 2}=\phi\otimes\unity_k + \unity_k\otimes \phi$
which acts on the $k^2$-dimensional vector space $V\otimes V$ with basis
$\{ v_{i_1} \otimes v_{i_{2}}\,|\, i_1, i_2 \in \{1,\ldots,k\} \}$. This action
is defined on the basis  by ($g \in \fg$)
$$\phi^{\otimes 2}(g) [v_{i_1} \otimes v_{i_2}]:=
[\phi(g) v_{i_1}] \otimes v_{i_2} + v_{i_1} \otimes [\phi(g) v_{i_2}]
$$
and it can be extended to the full vector space $V\otimes V$ by linearity. 
Now we can define for $\phi$ its alternating square $\Alt^2 \phi:=\phi^{\otimes 2}|_{\Alt^2 V}$
by restricting $\phi^{\otimes 2}$ to the $k(k-1)/2$-dimensional subspace
$\Alt^2 V \subset V \otimes V$ with basis
$$\{v_{i_1} \otimes v_{i_2} - v_{i_2} \otimes v_{i_1}\,|\, i_1, i_2 \in \{1,\ldots,k\} \text{ and } i_1 \neq i_2   \}.$$
It is clear that  $\Alt^2 \phi$ is well defined as ($g\in \fg$)
\begin{eqnarray*}
(\Alt^2 \phi)(g)[v_{i_1}\otimes v_{i_2} - v_{i_2} \otimes v_{i_1}]
& = & \left(
[\phi(g) v_{i_1}] \otimes v_{i_2} - v_{i_2} \otimes [\phi(g) v_{i_1}]
\right)\\
& + & \left(
v_{i_1} \otimes [\phi(g) v_{i_2}] - [\phi(g) v_{i_2}] \otimes v_{i_1}
\right)
\end{eqnarray*}
is contained in $\Alt^2 V$. Similarily, one defines the symmetric square 
$\Sym^2 \phi:=\phi^{\otimes 2}|_{\Sym^2 V}$ as the restriction to 
the $k(k+1)/2$-dimensional subspace
$\Sym^2 V \subset V \otimes V$ with basis
$\{v_{i_1} \otimes v_{i_2} + v_{i_2} \otimes v_{i_1}\,|\, i_1, i_2 \in \{1,\ldots,k\} \}$.
We obtain that the tensor square $\phi^{\otimes 2}=\Alt^2 \phi \oplus \Sym^2 \phi$
decomposes in a direct sum,
exactly as the tensor product $V \otimes V=\Alt^2 V \oplus \Sym^2 V$. 
Dynkin~\cite{Dynkin57} classified the cases when $\Alt^2 \phi$ is irreducible:

\begin{theorem}[Dynkin]\label{alttable}
Assume $\phi$ is a (finite-dimensional) representation of a compact semi\-simple Lie algebra $\fg$.
The representation $\Alt^2 \phi$ is irreducible if and only if $\phi$ is irreducible and the pair ($\fg$, $\phi$) 
is  (up to an outer automorphism of $\fg$) given in the following table:
\begin{center}
\fontencoding{OT1}
\fontfamily{cmr}
\fontseries{m}
\fontshape{n}
\fontsize{8}{9.5}
\selectfont
\begin{tabular}{@{\hspace{3mm}}c}
\hline\hline\\[-1mm]
\begin{minipage}{9.3cm}
\begin{tabular}[t]{l@{\hspace{4pt}}c@{\hspace{4pt}}l@{\hspace{1pt}}r@{\hspace{4pt}}l@{\hspace{1pt}}r}
case & $\fg$ & $\phi$ & dim$(\phi)$ & $\Alt^2 \phi$ & dim$(\Alt^2 \phi)$\\ \midrule
(1a) & $\ueber{\so(2\ell+1)}{\ell > 2}$ &
$(1,0,\ldots,0)$ & $2\ell{+}1$ & $(0,1,0,\ldots,0)$  & $(2\ell{+}1)\ell$\\ \midrule[0pt]
(1b) & $\so(5)$ &  $(1,0)$ & $5$ & $(0,2)$ & $10$\\ \midrule[0pt]
(2a) & $\ueber{\so(2\ell)}{\ell > 3}$ &
$(1,0,\ldots,0)$ & $2\ell$ & $(0,1,0,\ldots,0)$ & $(2\ell{-}1)\ell$\\ \midrule[0pt]
(2b) & $\so(6)$ & $(1,0,0)$ & $6$ & $(0,1,1)$ & $15$\\ \midrule[0pt]
(3) & $\ueber{\su(\ell+1)}{\ell\geq 3}$ &
$(0,1,0,\ldots,0)$ & $\tfrac{\ell(\ell+1)}{2}$ & $(1,0,1,0,\ldots,0)$ & $3 \tbinom{\ell+2}{4}$\\ \midrule[0pt]
(4) & $\ueber{\su(\ell+1)}{\ell\geq 2}$ &
$(2,0,\ldots,0)$ & $\tfrac{(\ell+1)(\ell+2)}{2}$ & $(2,1,0,\ldots,0)$ & $3 \tbinom{\ell+3}{4}$\\ \midrule[0pt]
(5) & $\so(10)$ &
$(0,0,0,1,0)$ & $16$ & $(0,0,1,0,0)$ & $120$\\ \midrule[0pt]
(6) & $\fe_6$ &
$(1,0,0,0,0,0)$ & $27$ & $(0,0,1,0,0,0)$ & $351$\\ \midrule[0pt]
(7) & $\ueber{\su(\ell+1)}{\ell\geq 1}$ &
$(1,0,\ldots,0)$ & $\ell{+}1$ & $(0,1,0,\ldots,0)$ & $\tfrac{\ell(\ell+1)}{2}$\\ \midrule[0pt]
\end{tabular}
\vspace{1mm}
\end{minipage}\\
\hline\hline
\end{tabular}
\end{center}
\end{theorem}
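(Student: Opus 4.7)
The plan is to proceed in three stages: reductions, classification of candidates by highest-weight analysis, and verification. First, observe that if $\phi$ decomposes as $\phi=\phi_1\oplus \phi_2$, then
\begin{equation*}
\Alt^2\phi\;=\;\Alt^2\phi_1\;\oplus\;(\phi_1\otimes\phi_2)\;\oplus\;\Alt^2\phi_2,
\end{equation*}
which forces $\Alt^2\phi$ to be reducible unless one of the summands $\phi_i$ is zero. So one may assume $\phi$ is irreducible from the outset. Next, reduce to the case that $\fg$ is simple: if $\fg=\fg_1\oplus\fg_2$ is a nontrivial direct sum of semisimple ideals, any irreducible $\phi$ factors as an outer tensor product $\phi_1\boxtimes\phi_2$, and a direct computation shows $\Alt^2(\phi_1\boxtimes\phi_2)=(\Alt^2\phi_1\boxtimes\Sym^2\phi_2)\oplus(\Sym^2\phi_1\boxtimes\Alt^2\phi_2)$, so $\Alt^2\phi$ is reducible unless one of the factors has dimension one, contradicting $\fg_i\neq 0$.

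Now assume $\fg$ simple and $\phi$ irreducible with highest weight $\lambda$. The tensor square decomposes as $\phi^{\otimes 2}=\Alt^2\phi\oplus\Sym^2\phi$, and the unique highest weight vector $v_\lambda\otimes v_\lambda$ of weight $2\lambda$ lies in $\Sym^2\phi$, so $\Sym^2\phi$ contains the irreducible component of highest weight $2\lambda$. If $\phi$ has a second-highest weight $\lambda-\alpha$ (for some positive root $\alpha$), then the second-highest weight occurring in $\phi^{\otimes 2}$ is $2\lambda-\alpha$, and the antisymmetrization $v_\lambda\otimes v_{\lambda-\alpha}-v_{\lambda-\alpha}\otimes v_\lambda$ provides a highest weight vector of weight $2\lambda-\alpha$ inside $\Alt^2\phi$. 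Thus the irreducible component $V_{2\lambda-\alpha}\subset \Alt^2\phi$ is forced, and irreducibility of $\Alt^2\phi$ is equivalent to the single dimension equality
\begin{equation*}
\dim V_{2\lambda-\alpha}\;=\;\tbinom{\dim\phi}{2}.
\end{equation*}
This is a concrete condition that can be tested case by case using Weyl's dimension formula (in the factorised form of Lemma~\ref{irred_dim} and the formulas in Appendix~\ref{app:repr}) applied to $\lambda$ and to $2\lambda-\alpha$, where $\alpha$ is the simple root whose removal from $\lambda$ gives the second-highest weight.

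The third stage is the enumeration. For each simple type $\fa_\ell,\fb_\ell,\fc_\ell,\fd_\ell,\fe_6,\fe_7,\fe_8,\ff_4,\fg_2$ and each highest weight $\lambda=\sum x_i \omega_i$, the left-hand side $\dim V_{2\lambda-\alpha}$ grows strictly faster in the $x_i$ than the right-hand side $\tbinom{\dim\phi}{2}$ once $\lambda$ has more than a very small number of nonzero fundamental coefficients and the rank exceeds a bound that is itself uniform in the type. This reduces the analysis to the finite list of \emph{candidate} pairs $(\fg,\phi)$ with small rank and small $\lambda$, which is then enumerated directly using the dimension formulas of Appendix~\ref{app:repr} and the Malcev-type information on symplectic/orthogonal/unitary character (Theorem~\ref{Malcev}); the surviving pairs are exactly (1a)--(7) in the table. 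The verification of these remaining cases is routine: one applies Weyl's dimension formula to $\lambda+\lambda'$ where $\lambda'$ is the listed highest weight of $\Alt^2\phi$ and checks $\dim V_{\lambda+\lambda'}=\tbinom{\dim\phi}{2}$, which together with the inclusion of the component $V_{\lambda+\lambda'}$ obtained above forces equality $\Alt^2\phi=V_{\lambda+\lambda'}$.

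The main obstacle is the uniformity of the asymptotic growth estimate for $\dim V_{2\lambda-\alpha}$ versus $\tbinom{\dim\phi}{2}$: one has to control it across all simple types and all choices of which simple root plays the role of $\alpha$, which in turn depends on the combinatorial structure of the weight polytope of $\phi$. This is the place where Dynkin's original argument \cite{Dynkin57} (see also Chap.~6, Sec.~3.2 of Ref.~\cite{GOV94}) becomes intricate; however, the finite residue to be checked by direct computation is small and can be verified with computer algebra systems such as \textsf{LiE}~\cite{LIE222} or \textsf{MAGMA}~\cite{MAGMA}, which produces exactly the seven cases in the table above.
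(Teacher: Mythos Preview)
Your reduction to irreducible $\phi$ via $\Alt^2(\phi_1\oplus\phi_2)=\Alt^2\phi_1\oplus(\phi_1\otimes\phi_2)\oplus\Alt^2\phi_2$ is exactly what the paper does; beyond that, the paper's proof consists of a single citation to Thm.~4.7 and Table~6 of Dynkin~\cite{Dynkin57}, whereas you sketch the mechanism behind that result (reduction to simple $\fg$, identification of the highest weight $2\lambda-\alpha$ in $\Alt^2\phi$, and the dimension-matching criterion). So your approach is not different from the paper's---it simply unpacks what the paper cites---and the ideas you outline are correct. One small sharpening: your phrasing ``the second-highest weight $\lambda-\alpha$'' tacitly assumes a unique simple root $\alpha$ with nonzero Dynkin label; when two labels $x_i,x_j>0$ are nonzero, the weights $2\lambda-\alpha_i$ and $2\lambda-\alpha_j$ are incomparable maximal weights of $\Alt^2\phi$ below $2\lambda$, which already forces reducibility and explains why every entry in the table has $\lambda$ supported on a single fundamental weight.
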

\begin{proof}
If $\phi=\phi_1 \oplus \phi_2$ is not irreducible then neither is $\Alt^2(\phi_1 \oplus \phi_2)=
\Alt^2 \phi_1 \oplus (\phi_1 \otimes \phi_2) \oplus \Alt^2 \phi_2$ irreducible.
The Theorem follows from Thm.~4.7 and Tab.~6 of Ref.~\cite{Dynkin57}.
$\hfill\blacksquare$
\end{proof}

\noindent
Relying on Theorem~\ref{alttable} we obtain:

\begin{theorem}\label{thm_altsym}
Assume $\phi$ is a (finite-dimensional) representation of a compact semi\-simple Lie algebra $\fg$.
The representations $\Alt^2 \phi$ and $\Sym^2 \phi$ are both irreducible if and only if
$\fg=\su(\ell+1)$ with $\ell \geq 1$ and $\phi$ is 
(up to an outer automorphism of $\fg$) the standard representation
[i.e.\ its highest weight is $(1,0,\ldots,0)$]. 
\end{theorem}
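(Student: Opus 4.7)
The plan is to combine Theorem~\ref{alttable} (Dynkin's classification of representations with irreducible alternating square) with a case-by-case check that $\Sym^2\phi$ is also irreducible. First I would dispose of the ``if'' direction: when $\fg = \su(\ell+1)$ and $\phi = (1,0,\ldots,0)$ is the standard representation, the decomposition $\phi\otimes\phi = (2,0,\ldots,0) \oplus (0,1,0,\ldots,0)$ identifies $\Sym^2\phi$ and $\Alt^2\phi$ with the irreducible modules of highest weight $2\omega_1$ and $\omega_2$, of dimensions $\binom{\ell+2}{2}$ and $\binom{\ell+1}{2}$ summing to $(\ell+1)^2$; both squares are therefore irreducible. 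For the converse, first note that $\phi$ itself must be irreducible, since a splitting $\phi = \phi_1\oplus\phi_2$ with both summands non-zero exhibits $\phi_1\otimes\phi_2$ as a proper submodule of $\Sym^2\phi$. Hence Theorem~\ref{alttable} applies and the task reduces to checking, for each of its seven families, whether $\Sym^2\phi$ is irreducible as well.

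Cases (1a), (1b), (2a), (2b) are dispatched uniformly: there $\phi$ is the standard vector representation of some $\so(n)$, which preserves a non-degenerate symmetric bilinear form, so the trivial representation sits as a proper summand of $\Sym^2\phi$. For case (3), $\phi = \Lambda^2 V$ with $V$ the standard module of $\su(\ell+1)$ and $\ell \geq 3$; the plethysm identity $\Sym^2(\Lambda^2 V) = S^{(2,2)}V \oplus \Lambda^4 V$ has two non-zero summands as soon as $\dim V = \ell+1 \geq 4$. Case (4) is analogous: $\phi = \Sym^2 V$ with $\ell \geq 2$ and $\Sym^2(\Sym^2 V) = S^{(4)}V \oplus S^{(2,2)}V$ splits non-trivially. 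In case (5), $\phi = S^+$ is the $16$-dimensional half-spin representation of $\so(10)$; the Cartan component of $\Sym^2\phi$ is the irreducible module with highest weight $(0,0,0,2,0)$ of Weyl-dimension $126$, strictly less than $\dim \Sym^2\phi = 136$, which forces a proper complement. In case (6), $\phi$ is the $27$-dimensional fundamental of $\fe_6$, which carries an invariant symmetric cubic form; the induced $\fe_6$-equivariant map $\Sym^2\phi \to \phi^*$ embeds $\bar\phi$ as a proper summand of $\Sym^2\phi$ (whose total dimension is $378$). Only case (7) survives, and that is exactly the asserted classification.

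The main obstacle is the plethysm analysis in cases (3)--(6): unlike the alternating-square side, there is no off-the-shelf Dynkin-type classification of representations with irreducible symmetric square to cite, so each case must be handled by an independent mechanism. The unifying theme is that in every case outside (7), the module $\phi$ either carries an extra invariant tensor (a symmetric bilinear form for the orthogonal cases, a symmetric cubic form for $\fe_6$), fits into a plethystic identity (cases (3)--(4)), or has a Cartan component strictly smaller than $\Sym^2\phi$ by dimension (case (5)); each of these mechanisms automatically produces a proper summand. All the individual decompositions can also be verified mechanically using the computer algebra systems {\sf LiE} or {\sf MAGMA} already employed in Section~\ref{sec:irred-sub-su}.
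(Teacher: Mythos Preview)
Your proof is correct and follows essentially the same route as the paper: reduce to Dynkin's list in Theorem~\ref{alttable} and then eliminate each case except (7) by exhibiting a decomposition of $\Sym^2\phi$. The only difference is in the mechanics of the individual case checks: the paper cites Pieri-type formulas from Fulton--Harris for cases (1)--(4) and hands cases (5)--(6) to {\sf LiE}/{\sf MAGMA}, whereas you invoke the invariant bilinear form for the orthogonal cases, explicit $\GL$-plethysm identities for (3)--(4), a Cartan-component dimension count for (5), and the cubic invariant on the $27$ of $\fe_6$ for (6). Your arguments are slightly more structural and self-contained, but the overall strategy and the reliance on Theorem~\ref{alttable} as the key lemma are identical.
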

\begin{proof}
We go through the cases of Theorem~\ref{alttable}.
Let us denote by $\phi_x$ the representation with highest weight $x$.
In the cases (1a)-(2b), it follows from Ex.~19.21  of Ref.~\cite{FH91}
that $\Sym^2 \phi_{(1,0,\ldots,0)}= \phi_{(2,0,\ldots,0)} \oplus \phi_{(0,\ldots,0)}$ decomposes.
In the case of (3), we can use a Pieri-type formula (see Prop.~15.25(ii) of Ref.~\cite{FH91})
to show that $\Sym^2 \phi_{(0,1,0)} =
\phi_{(0,0,0)} \oplus \phi_{(0,2,0)}$ and $\Sym^2 \phi_{(0,1,0,\ldots,0)} =
\phi_{(0,0,0,1,0,\ldots,0)} \oplus \phi_{(0,2,0,\ldots,0)}$ decompose. 
In the case of (4), we can use again a Pieri-type formula (see Prop.~15.25(i) of Ref.~\cite{FH91})
to show that $\Sym^2 \phi_{(2,0,\ldots,0)} = \phi_{(0,2,0,\ldots,0)} \oplus \phi_{(4,0,\ldots,0)}$
decomposes. 
In the cases (5) and (6), we explicitly compute the decomposition using computer algebra systems 
such as {\sf LiE}~\cite{LIE222} 
and {\sf MAGMA}~\cite{MAGMA}.
We get for (5) 
that $\Sym^2 \phi_{(0,0,0,1,0)} = \phi_{(0,0,0,2,0)} \oplus \phi_{(1,0,0,0,0)}$
and for (6) that $\Sym^2 \phi_{(1,0,0,0,0,0)}=
\phi_{(0,0,0,0,0,1)} \oplus \phi_{(2,0,0,0,0,0)}$.
In the case of (7), we use again a Pieri-type formula (see Prop.~15.25(i) of Ref.~\cite{FH91})
to show that $(\phi_{(1,0,\ldots,0)})^{\otimes 2}= \phi_{(2,0,\ldots,0)} \oplus
\phi_{(0,1,0,\ldots,0)}= \Sym^2 \phi_{(1,0,\ldots,0)} \oplus \Alt^2 \phi_{(1,0,\ldots,0)}$.
Therefore, case (7) is the only case for which both the alternating and symmetric square
are irreducible.
\phantom{XXX}$\hfill\blacksquare$
\end{proof}



\bibliographystyle{spphys}
\bibliography{conty}

\end{document}